\newcommand{\listoflan}{List of Possible Language Mistakes}
\newcommand{\listtoexplore}{List of Things to Explore Further}
\newcommand{\expl}[1]{%
\refstepcounter{expl}%
\addcontentsline{expl}{expl}{\protect\numberline{\theexpl}#1}}
\newcommand{\listofquestions}{List of Questions}
\newcommand{\que}[1]{%
\refstepcounter{que}%
\addcontentsline{que}{que}{\protect\numberline{\theque}#1}}
\newcommand{\listoflattech}{Latex Technical Problems}
\newcommand{\lattech}[1]{%
\refstepcounter{lattech}%
\addcontentsline{lattech}{lattech}{\protect\numberline{\thelattech}#1}}
\theoremstyle{theorem}
\newtheorem{thm}{Theorem}[chapter]
\newtheorem{prop}[thm]{Proposition}
\newtheorem{cor}[thm]{Corollary}
\newtheorem{claim}{Claim}[chapter]
\newtheorem{assumption}{Assumption}[section]
\newtheorem{case}{Case}[assumption]
\newtheorem{lemma}[thm]{Lemma}
\theoremstyle{definition}
\newtheorem{defn}{Definition}[chapter]
\theoremstyle{remark}
\newtheorem{example}{Example}[chapter]
\newtheorem*{remark}{Remark}
\newcommand{\rp}{\right)}
\newcommand{\lp}{\left(}
\newcommand{\rb}{\right]}
\newcommand{\lb}{\left[}
\newcommand{\rbr}{\right\rbrace}
\newcommand{\lbr}{\left\lbrace}
\newcommand{\bra}{\right>}
\newcommand{\ket}{\left<}
\newcommand{\ra}{\rightarrow}
\newcommand{\pr}{\partial}
\renewcommand{\uline}{\textit} 
\DeclareMathOperator{\Lie}{Lie}
\DeclareMathOperator{\Aut}{Aut}
\DeclareMathOperator{\End}{End}
\DeclareMathOperator{\Ad}{Ad}
\DeclareMathOperator{\ad}{ad}
\DeclareMathOperator{\GL}{GL}
\DeclareMathOperator{\gl}{\mathfrak{gl}}
\DeclareMathOperator{\tr}{tr}
\DeclareMathOperator{\dir}{\mathcal{D}}
\DeclareMathOperator{\rmt}{Rm}
\DeclareMathOperator{\ric}{Ric}
\DeclareMathOperator{\diag}{diag}
\DeclareMathOperator{\diff}{Diff}
\DeclareMathOperator{\bdiff}{BDiff}
\DeclareMathOperator{\SO}{SO}
\DeclareMathOperator{\iso}{iso}
\newcommand{\cg}{\, , \quad}
\newcommand{\cgap}{\cg} 
\newcommand{\sgd}{\, .}
\newcommand{\sgc}{\, ,}
\newcommand{\gag}{\quad \mbox{and} \quad}
\newcommand{\gwg}{\quad \mbox{where} \quad}
\newcommand{\gig}{\quad \mbox{if} \quad}
\newcommand{\gstg}{\quad \mbox{such that} \quad}
\newcommand{\st}{\, : \,}
\newcommand{\symb}[3][]{%
  \glsxtrnewsymbol[#1]{#2}{#3}%
 \expandafter\newcommand\csname #2\endcsname{\gls{#2}}%
 }
\newcommand{\G}{\mathcal{G}}
\newcommand{\e}{\textit{e}}
\newcommand{\1}{\mathbb{1}}
\newcommand{\E}{\mathcal{E}}
\newcommand{\m}{\mu}
\newcommand{\n}{\nu}
\newcommand{\dket}{\ket \ket}
\newcommand{\dbra}{\bra \bra}
\newcommand{\volk}{\ensuremath{\mathlarger{\epsilon}_k}}
\newcommand{\vol}{\ensuremath{\mathlarger{\epsilon}}}
\newcommand{\derham}{H_{dR}}
\newcommand{\rtil}{\tilde{R}}
\newcommand{\partang}{\delta_{\parallel} \varphi}
\newcommand{\partan}[2]{\delta_{\parallel} \varphi^{(#1)#2}}
\newcommand{\dpartan}[2]{\delta_{\parallel} \dot{\varphi}^{(#1)#2}}
\newcommand{\orthtang}{\delta_{\perp} \varphi}
\newcommand{\orthtan}[2]{\delta_{\perp} \varphi^{(#1)#2}}
\newcommand{\dorthtan}[2]{\delta_{\perp} \dot{\varphi}^{(#1)#2}}
\newcommand{\ez}{e_{0}}
\newcommand{\ei}{e_{i}}
\newcommand{\ej}{e_{j}}
\newcommand{\ek}{e_{k}}
\newcommand{\emm}{e_{m}}
\newcommand{\wz}{\omega^{0}}
\newcommand{\wi}{\omega^{i}}
\newcommand{\wj}{\omega^{j}}
\newcommand{\conambz}{\conamb_{\ez}}
\newcommand{\conambi}{\conamb_{\ei}}
\newcommand{\conambj}{\conamb_{\ej}}
\newcommand{\conambk}{\conamb_{\ek}}
\newcommand{\consubmi}{\consubm_{\ei}}
\newcommand{\consubmj}{\consubm_{\ej}}
\newcommand{\alm}{\mathfrak{m}}
\newcommand{\alk}{\mathfrak{k}}
\newcommand{\mom}{\mathrm{P}}
\newcommand{\chip}{\chi^{\perp}}
\newcommand{\xip}{\xi^{\perp}}
\newcommand{\chit}{\chi^{\parallel}}
\newcommand{\xit}{\xi^{\parallel}}
\newcommand{\chia}{\chi^a}
\newcommand{\chib}{\chi^b}
\newcommand{\ea}{e_{a}}
\newcommand{\eb}{e_{b}}
\def\uk{{\underline{k}}}
\def\um{{\underline{m}}}
\def\tx{{\tilde{x}}}
\newcommand{\sqg}{\sqrt{\gamma}}
\newcommand{\lpa}{\lp}
\newcommand{\rpa}{\rp}
\newcommand{\bdiffc}{\bdiff_C}
\newcommand{\abdiff}{\mathfrak{bdiff}}
\newcommand{\abdiffz}{\mathfrak{bdiff}_0}
\def\vn{\vec{\nabla}}
\def\xr{\xi^\perp_{lm}}
\def\xd{\xi^{L}_{lm}}
\def\xt{\xi^{T}_{lm}}
\newcommand{\acryn}[3]{%
  \newacronym{#1}{#2}{#3}%
  \expandafter\newcommand\csname #1\endcsname{\acs{#1}}}%
\begin{document}

\pagenumbering{gobble}

\begin{titlepage}
\begin{center}
\vspace*{1cm}
 
\LARGE
ADIABATIC SOLUTIONS IN GENERAL RELATIVITY \\ AND BOUNDARY SYMMETRIES
 
\large

\vspace{2.5cm}

\normalsize
A THESIS PRESENTED FOR THE FULFILLMENT \\ 
OF THE REQUIREMENTS FOR \\
THE DEGREE OF DOCTOR OF PHILOSOPHY

\vspace{1cm}
 
 \Large
       BY EMİNE ŞEYMA KUTLUK\\
       \vspace{0.3cm}
       \normalsize
       TO

\vspace{1cm}
 
       \includegraphics[width=0.6\textwidth]{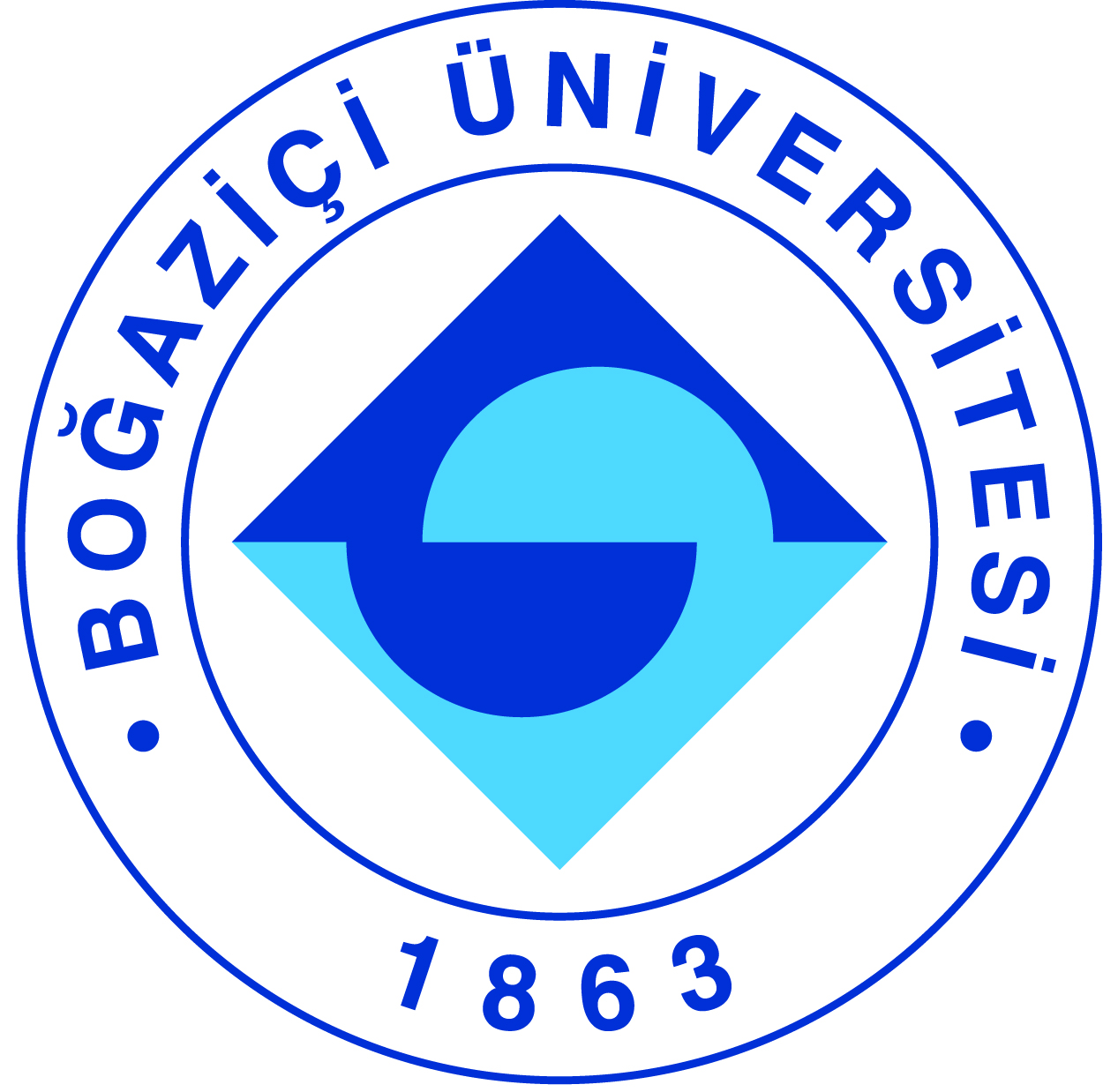}       
       
 \vspace{0.5cm}

       PHYSICS DEPARTMENT\\ 
 
  		\vspace{1cm}
     
       07/01/2020

   \end{center}
\end{titlepage}

\pagenumbering{roman}
\thispagestyle{empty}
\setcounter{page}{0}
\chapter*{}
\vspace*{\fill}
\begin{flushright}

\textit{ Man is hidden behind his words,\\
his tongue is a curtain over the door of his soul. \\
When a gust of wind lifts the curtain\\
the secret of the interior is exposed,\\
you can see if there is gold or snakes,\\
pearls or scorpions hidden inside.\\
Thoughtless speech spills easily out of man\\
while the wise ones keep silent.\\
Fault eyes see the moon double\\
and that gazing in perplexity is like a question;\\
once you connect with Divine Light\\
the question and the answer become one.\\
\hfill\\
Meulana Jalaluddin Rumi\\
(translated by Maryam Mafi and Azima Melita Kolin \\
Rumi's Little Book of Life)}
\end{flushright}
\vspace{5cm}

\chapter*{Acknowledgements}
This has been a long and bumpy road for me, that had crossed many others'. I would like to start by thanking my family, especially my mother and father, who had been long waiting to see the completion of this study and bear its psychological burdens with me. I would like to thank all of my former teachers throughout my educational life that built the road to this point, for their devotion, care and belief in their students.

I would like to thank my former PhD advisor Ali Kaya. I have learned not only quantum cosmology which was my the starting point of PhD, but much of the physics I know from him. I appreciate his student-friendliness, clarity and consistency as a physicist.

Most importantly for this thesis, I would like to thank my current PhD advisor, Dieter Van den Bleeken, for taking me as his student and suggesting this problem, the backbone of this work mostly belongs to him.
His enthusiasm combined with his gentlemanliness were very valuable to me and elevated my PhD experience. I would also take this chance to thank him for all the departmental activity that much benefited the students of theoretical physics.

I would like to thank many other faculty members of the theoretical physics community in Turkey I have learned much from: Sadık Değer, Bayram Tekin, Levent Akant, Can Kozçaz, Ilmar Gahramanov, Teoman Turgut, Metin Arık and others I might have forgotten. I would like to thank my remaining jury members, Mehmet Özkan and especially İbrahim Semiz, for his many corrections and suggestions on the draft of this thesis.

I would like to thank my friends at the office of High Energy Theory group at Boğaziçi University: Zainab Nazari, Birses Debir, Delaram Mirfendereski, Cihan Pazarbaşı, Canberk Şanlı, Emine Ertuğrul, Şelale Şahin, Sema Seymen, Narçiçeği Kıran, Kerem Kurşun, Metin Güner and my friends from my previous research group Merve Tarman Algan and Merve Uzun, I have greatly enjoyed and learned from their company. I would like to thank former postdocs of our group: Ilies Messamah, George Moutsopoulos, Nicolo Petri. I also would like to thank Dr. Ali Seraj which we collaborated with for the work that will be topic of this thesis, for many useful discussions.

I would like to thank my master's institute Columbia University, for their alumni benefits, especially the access to online sources of the library, and Feza Gürsey institute of Boğaziçi University where I have performed some part of this study in. I would like to thank the secretariat of Boğaziçi University Physics department, Mehtap Tınaz and Yılmaz Kurşun for their technical help. During the work of this thesis I was supported by TUBITAK initially with 2211-A Fellowship and later with grant 117F376.

\begingroup
\renewcommand{\cleardoublepage}{}
\renewcommand{\clearpage}{}
\chapter*{Preface to This Version}
This version of the thesis is prepared for submission to arxiv repository, and differs with the version submitted to the Bogazici University Institute of Graduate Studies in Science and Engineering \cite{tez-yok} only in the printing style and correction of minor grammatical and punctuation errors. Purpose of this submission is to make an easier-to-read version of the thesis available. References to the text itself can be reached by clicking on the relevant link-shown with gray color-in suitable pdf readers: not only equations and sections are cross-referenced this way, but also certain technical terms. Errata and other comments can be communicated via the email \includegraphics[height=0.8em]{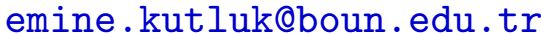}.

\chapter*{Abstract}
\endgroup
We investigate adiabatic solutions to general relativity for a spacetime with spatial slices with boundary, by Manton approximation. This approximation tells us for a theory with a Lagrangian in the natural form, a motion that is described as a slow motion on the space of vacua-static solutions that minimize the energy- is a good approximate solution. To apply this to the case of general relativity we first bring it to the natural form by splitting space and time and choosing Gaussian normal coordinates, where a spacetime is described by the metric on its spatial slices. Then following Manton we propose slow solutions such that each slice is a slowly changing diffeomorphism of a reference slice, and thus each solution is described by a vector field on the spatial slice. 

These solutions will have the property that the action will become a functional of the vector fields on the boundaries of the spatial slices. Moreover using the Hodge-Morrey-Friedrichs decomposition we will show that the constraints of general relativity will identify a unique solution for a given boundary value. Then we comment on the structure of the space of vacua which we show to be a (pseudo)-Riemannian homogeneous space. We illustrate our procedure for a specific reference slice we choose: the 3d Euclidean round ball.

\clearpage
\setcounter{tocdepth}{1}
\tableofcontents

\glsaddall
\glossarystyle{long}
\printglossaries
\clearpage
\pagenumbering{arabic}

\part{INTRODUCTION AND SETUP}




\chapter{Introduction: Symmetries There, Softness Here}

A physicist's aim is to describe the fundamental principles governing the involuntary events in the universe. A symmetry is yet another higher principle that governs these laws \cite{Wigner956}. The power of this higher principle has increasingly manifested itself as fundamental interactions are better understood i.e. as the theory of quantum fields, conformal fields and the like were developed.

Existence of a symmetry principle allows one to extract certain information about the physical events from the theory without knowing the details of it. For example, if one knows that the motion of a particle is governed by a theory that doesn't depend on one of the spatial coordinates; then one can say, without having further information about the theory, that the particle will have a constant speed along that coordinate. This reflects a fact known as Noether's theorem: for every symmetry there exists a conserved quantity. In quantum field theory Noether's theorem takes the form of Ward-Takahashi (\wt) identities, which gives certain conservation like constraints on the expectation values. 

\wt~identities have been used in the literature many times to extract information about the expectation values. The most prominent example of this is in conformal field theory. There it has been shown that conformal symmetries-especially in 2D where they are an infinite tower- fix the form of all expectation values to a great extent \cite{Blumenhagen:2009zz}.

However not every theory has a set of exact symmetries. Still one can talk about the asymptotic symmetries the theory has. For these, one can also write \wt~identities, which results in what are called soft theorems. While these \wt~identites were studied by Strominger \cite{stromingersoftgraviton,stromingersoftphoton} to derive soft photon and soft graviton theorems; independently in the area of quantum cosmology a similiar type of theorems were derived using the ``adiabatic mode" condition on ``large gauge symmetries". The relationship between these two lines of research has been a question since then. There exists certain elements that appear in one of the analyses while it does not appear in the other. While a discussion of fall-off conditions is crucial in the asymptotic symmetry analysis, it was not considered at all in the original cosmology research, except the attempts that came after Strominger's work  \cite{Kehagias:2016zry,Ferreira:2016hee,flatland}. Same is true for adiabatic modes in the opposite direction.

In this thesis we detail the work by Van den Bleeken, Seraj and myself in \cite{grmanton}, which was a continuation of a previous work by Bleeken and Seraj in \cite{ymmanton}. With motivation from the concept of adiabatic modes in cosmology, we take a third look at the question of asymptotic symmetries and soft theorems. We apply the Manton approximation \cite{manton,WeinbergMono} to general relativity to derive solutions that are slow in time\que{But we never use slowness actually!!}. Manton approximation gives these solutions as geodesics on the space of vacua of the theory.  These works are purely classical and instead of the IR/infinite volume limit where the study of asymptotic symmetries and soft theorems takes place, they study spacetimes with bounded spatial volume. The study of an infinite volume limit and precise connection to the asymptotic symmetries and soft theorems are important questions left for future work.

We will review the aspects of the thesis at the end of this introduction, but let us first discuss the underlying research mentioned above that motivated us, in more detail.

\section{Asymptotic Symmetries and Soft Theorems}

These type of symmetries were first discussed in the context of gravitational waves \cite{sachs,bms}, where it was concluded that, to produce gravitational waves a spacetime should satisfy certain asymptotic conditions at future null infinity. Coordinate transformations that leave these conditions invariant were shown to form a group, called the Bondi-Metzner-Sachs (BMS) group. This group is interestingly larger than the Poincare group, it additionally includes an infinite set of symmetries that are called supertranslations. 

One can define different types of asymptotic symmetries: asymptotic symmetries of spacetimes that are asymptotically flat at spatial infinity is a different group known as the \textit{Spi} group \cite{ashtekarhansen}, whereas definitions of asymptotic symmetries of the spacetimes that are asymptotically de Sitter were also made \cite{ashtekarDS, anninosDS}.

One advantage of the definition of asymptotic flatness is that, through it one can define the charges of general relativity. As it is well known, see e.g.\cite{wald}, since there is no prescribed background for \gr, the concept of total energy is non-trivial there. However for asymptotically flat spacetimes, such energy can be defined since these spacetimes can be thought of as isolated systems.

Roughly around the same time with these studies, the soft photon and graviton theorems were proposed by Weinberg \cite{weinbergsoft}, restricting the form that the quantum expectation values with a soft photon/graviton should take. To be more specific, for example the soft photon theorem says that
\begin{equation}
\mathcal{M}^{\mu}(p^{in}_1,...,p^{out}_1,...,q) \xrightarrow[q \ra 0]{} \mathcal{M}(p^{in}_1,...,p^{out}_1,...) \lp \sum_n \frac{e_n {p_n^{out}}^\mu}{p^{out}_n \cdot q - i \epsilon} - \frac{e_n {p_n^{in}}^\mu}{p^{in}_n \cdot q + i \epsilon} \rp \sgc
\end{equation} 
where the $\mathcal{M}$' s represent the S-matrix for the process with related incoming and outgoing particles without the energy-momentum conservation delta function and $e_n$ is the electrodynamical charge of the nth particle. Contracting with the soft momentum gives
\begin{equation}
q_\mu \mathcal{M}^{\mu}(p^{in}_1,...,p^{out}_1,...,q) \xrightarrow[q \ra 0]{} \mathcal{M}(p^{in}_1,...,p^{out}_1,...) \lp Q_{out}^{total}- Q_{in}^{total} \rp \sgd
\end{equation}
For Lorentz invariance to hold, the left hand side should go to zero, see \cite{weinbergQFT} pg.537 for the reason behind this. Because of this the right hand side gives the charge conservation. This calculation is then interpreted as showing that the Lorentz invariance necessitates the electrodynamic charge conservation. A similar story can be repeated for the graviton, which shows that ``Lorentz invariance requires that low energy massless particles of spin two couple in the same way to all forms of energy and momentum" \cite{weinbergQFT}, which essentially is Einstein's equivalence principle.

The soft-limit formula written above is shown, much later by Strominger to be the \wt~identity for asymptotic symmetries. More specifically in \cite{stromingersoftphoton}, an infinite number of symmetries at null infinity, just like supertranslations, of electrodynamics was found; and Weinberg's soft photon theorem was shown to be the \wt~identity related to this. Similarly in \cite{stromingersoftgraviton} the Weinberg soft graviton was shown to be the \wt~identity for supertranslations.

\section{Cosmological Consistency Relations}

Cosmic Microwave Background (\cmb) observations are arguably the most important physical data about our universe yet: they provide information about the early phases of the cosmological history, and on top of that, as these observations are understood to be the relics of the expectation values of quantized gravitational fields, they can be considered as data for a quantum theory of gravity \cite{woodard}. Because of this it is crucial to get the most out of these observations as a test of our theories.

Any perturbative quantum field theory is expected to give a Gaussian spectrum for the correlation functions, also called N-point functions, and this is indeed the case for the \cmb. The next step is to check our theories for the corrections to this Gaussianity -which was measured to be small- by means of three-point functions. In \cite{maldacena}, it was proposed that a three point function should satisfy
\begin{equation}\label{eq:consis}
\underset{q \rightarrow 0}{lim} \frac{1}{\left< \zeta_q \zeta_{-q} \right>}\left< \zeta_q \zeta_{k_1} \zeta_{k_2} \right> = -  \vec{k_1}.\frac{\partial}{\partial \vec{k_1}} \left< \zeta_{k_1} \zeta_{k_2} \right> 
\end{equation}
at the soft limit, for a single-field inflationary model. This type of relations are called the cosmological consistency relations, see also \cite{creminelli-1} for some generalization of this to a larger class of theories. Consistency relations are particularly important since, because of the highly non-linear nature of gravity theories, it is hard to calculate expectation values with high number of fields. These relations help us to alleviate our lack of knowledge of the theory by our knowledge of the symmetries.

The relation we wrote down above was later understood to be a part of the \wt~identities related to the ``adiabatic modes" subclass of ``residual gauge symmetries" of the cosmological spacetimes studied in a specific gauge \cite{creminelli-2,hui-1}. Adiabatic modes are solutions to Einstein equations linearized around a cosmological background, such that the physical degrees of freedom of the metric remain constant in time in the limit of small momenta/large wavelengths-wavelenghts that are of cosmological interest. In cosmology it is important to show that adiabatic modes exist for an arbitrary matter content of the universe, because this tells us that the large wavelength modes ``freeze out" through the cosmic evolution following the inflation. Only if this is true we can have inflationary predictions without going into the details of the following cosmic history, often modeled as a period called reheating, of which not much is known. The existence of adiabatic modes for an arbitrary content of matter, was proven by Weinberg in \cite{Weinberg:2003sw}, under mild assumptions. In his proof he uses a trick to extend pure gauge solutions into physical solutions by imposing the constraint equations of general relativity . The work of \cite{creminelli-2,hui-1} shows that the relation in equation \eqref{eq:consis} and an infinite set of other relations of the same form are \wt~identities for adiabatic modes. Extending Weinberg's trick, they obtain an infinite set of adiabatic modes and write down \wt~identities for these solutions. Here we note that our starting point will be very much in the same spirit with Weinberg's work, and because of this we will be reviewing this argument in the Section \ref{ch:admodes}.

This realization of consistency relations as \wt~identities is very similar to Strominger's realization of Weinberg's soft theorems as \wt~identities. A connection between these two has been made in \cite{mirbabayi}, where an analogue of Weinberg's adiabatic mode, identified as a residual symmetry of the U(1) theory, gives Weinberg's soft photon theorem and some peculiar technicalities of Strominger's work \cite{stromingersoftphoton} were shown to follow from this construction. Similarly one expects to see the cosmological consistency relations obtained by Weinberg's adiabatic mode argument to be related to the asymptotic symmetries of the cosmological space-times, see e.g. \cite{Kehagias:2016zry,Ferreira:2016hee} for attempts at this, and especially also \cite{flatland}. Although one can be encouraged by the aforementioned U(1) gauge theory study to say this generalization should be straightforward, symmetries and constraints of gravity have somewhat different structure \cite{LeeWald}; and as it can be seen from the soft graviton case of Strominger \cite{stromingersoftgraviton}, a more detailed description of charges related to these symmetries \cite{barnich} had to be used. Moreover, a mathematically more rigorous study of asymptotic symmetries of the flat spacetime was also needed. 
 
\section{Review of this Thesis}

With this background in our mind, what we explore in this thesis is the Manton/adiabatic approximation for general relativity (\gr) on a manifold with a spatial boundary. The Manton approximation tells that a proposed solution described as a slow motion on the configuration space of vacuum solutions to a theory is a good approximation. Similarity of this to Weinberg's proof of the existence of adiabatic modes lies in the fact that a space of vacuum solutions of \gr~is generated by gauge transformations. However, as we will show, where he starts from a space independent solution and then to get the physical modes assumes a small space dependency, we start from a time independent solution and then assume small time dependency. In our analysis constraints play an important role in determining the structure of the space of vacua. 

In the following we start by going over Weinberg's proof for the existence of adiabatic modes in Section \ref{ch:admodes}. Then in Section \ref{sec:setup} we develop the Manton approximation procedure for a generic Lagrangian in the natural form, and show that if the motion is 1) adiabatic, 2) on the space of vacua at zeroth order, then at first order there is no motion off the space of vacua, and the motion is described by the geodesics on the space of vacua with respect to a metric specified by the Lagrangian. We give a simple example to illustrate the procedure, and then switch to summarizing the work \cite{ymmanton}, the Manton approximation for the case of \ym~on a Minkowski space with spatial slices with boundary in Section \ref{sec:ymills}. This section will show us how an adiabatic mode like argument can be used to apply the Manton approximation to a gauge theory. Moreover the study of adiabatic solutions will give us information about the shape of the space of vacua, which will be a Riemannian homogeneous space as we will discuss.

To apply the procedure to the case of \gr, several mathematical tools will be needed. To bring the Einstein-Hilbert (\eh) action to the natural form, one needs to foliate the spacetime. While we discuss the theory of foliations on a differentiable manifold in Appendix \ref{sec:appfol}, we use this theory for (pseudo)-Riemannian manifolds in Section \ref{sec:Fol} to have a decomposition of quantities of interest into directions orthogonal and tangent to the foliation. Since we will take each leaf of foliation of the spacetime to be a spatial manifold with boundary just as the case of \ym, in Section \ref{sec:manbnd} we review the theory of manifolds with boundary. Then in Section \ref{sec:hmf}, we discuss the generalization of the Hodge decomposition to manifolds with boundary: the Hodge-Morrey-Friedrichs (\hmf) decomposition. We will use this theorem later to solve the momentum constraint of general relativity for our adiabatic solutions.

As discussed in the \ym~case where it turned out to be a Riemannian homogeneous space, the structure of the space of vacua will be investigated by adiabatic solutions. For this reason in Section \ref{sec:homsp} we explore the general setting where such manifolds exist: Homogeneous spaces with a geometry. These will be spaces that have a transitive action of a group on them and a metric, and if the metric is invariant under the group action the space will be called a Riemannian homogeneous space. For our case the action will be the action of diffeomorphisms, and metric will be the metric induced on the space of vacua by the Lagrangian.

Equipped with the theory of foliations, in Chapter \ref{ch:natgr} we take on the mission to bring \gr~into to the natural form. First we show how foliations bring the \eh~action into the Arnowitt-Deser-Misner (\adm) form. Then we discuss the constraints of general relativity, since these can be easily seen in the \adm~form; and some peculiarities involving them. Then we discuss the Gaussian Normal Coordinates (\gnc), the coordinates in which \gr~ is in the natural form: kinetic and potential energy separated. Then we discuss the remaining gauge transformations of this gauge choice, as via a Weinberg like argument, these are supposed to be forming our adiabatic solutions. We will see not only that a part of these transformations is subtle and has commutation relations that are not in compliance with the rules of a Lie algebra, reflecting the commutation relations of the constraints of \gr; but also that this part has a conceptual difference compared to other parts.

Choosing a simpler subset of these transformations that do form a Lie algebra, in Chapter \ref{ch:adgr} finally we perform the Manton approximation for \gr. After explicitly constructing the specific form of the proposed solutions in Section \ref{sec:adgr}, using the constraint equations for these solutions and orthogonal decomposition of quantities on the spatial manifold we simplify the action in Section \ref{sec:constaction}. Indeed it turns out as an action on the boundary of the spatial manifold, as we prove in Section \ref{sec:boundarydata} by using the \hmf~decomposition. In Section \ref{sec:decconst}, in an attempt to explicitly solve the constraints for the solutions we propose, we study them for our type of solutions by decomposing them orthogonally on the spatial manifold and then simplifying them under some assumptions.

Having studied the spacetime properties of solutions we have defined, we move onto a study of the space of vacua. First we compare our metric on the the space of vacua to the Wheeler-deWitt (\wdw) metric, and discuss its signature. In the next section we explicitly show that our space of vacua is a homogeneous space under the action of boundary diffeomorphisms with an isotropy group that is equivalent to the group of isometries of the spatial slices. Moreover we argue that the metric we have defined is invariant under this action, so that the space of vacua is indeed a (pseudo)-Riemannian homogeneous space. 

In Chapter \ref{ch:ball}, we study a specific example. We consider solutions on 4d spacetime with slices that are 3d round balls with flat metric inside. We employ vector spherical harmonics as a tool to study these, as these are well adapted to a round boundary and also help us to better visualize the solutions. We find explicit solutions to the momentum constraint, and write down the action as a summation over the spherical harmonic coefficients of the solutions. We complete the chapter by discussing the Hamiltonian constraint and the homogeneous space structure. 

We finalize the thesis with Chapter \ref{ch:discussion}, where we summarize the results and discuss open issues and future directions. Our accompanying appendix is designed as a summary of relevant mathematical concepts. I hope it to bring together many definitions and theorems used often in theoretical physics under a unified language. It should be consulted whenever the need occurs while reading the text, but can also be studied quite independently for general purposes.\\

\begin{remark}
In addition to the paper \cite{grmanton} which will be the subject of this thesis, during my doctoral studies I have also contributed to the work in \cite{kaya1,kaya2,kaya3}, that deals with the correlation functions during the reheating period mentioned above.
\end{remark}

\chapter{Motivation and Setup}

\section{Adiabatic Modes in Cosmology}\label{ch:admodes}
The universe is thought and partially observed to be homogeneous and isotropic at large scales. From Cosmic Microwave Background (\cmb) observations, we see there also exist very small inhomogeneity and anisotropy \cite{baumann}. The origin of these perturbations are thought to be the quantum perturbations during the inflationary era. However many eras of cosmic history have passed from the time of inflation till today where these observations are made. If one is to test a theory of inflation by these observations, one should find a way to connect the quantities in the inflationary period to those of today.

Even though later eras of radiation, matter and dark energy domination that were proposed are relatively established, much less is known about what follows the inflation. What Weinberg showed was whatever the constituents of the universe, there always exist solutions to perturbed Einstein equations such that the physical degrees of freedom are constant for large wavelengths \cite{Weinberg:2003sw}, see also \cite{baumann} which we will also benefit from for the following arguments.

To show this statement one starts with a general metric perturbed around the spatially flat Friedmann–Lemaître–Robertson–Walker  (\frw) background:
\begin{equation}
ds^2= - (1+2 \Delta N) dt^2 + 2 N_i dx^i dt + \left( a^2 \delta_{ij} + a^2 \Delta h_{ij} \right) dx^i dx^j \, .
\end{equation}
Here, $\Delta N,N_i,\Delta h_{ij}$ are to be perturbations. It is conventional to split these into components that are in the irreducible representations of the $SO(2)$ little group:
\begin{align}
N_i &= \partial_i \psi + N_{i}^{T} \, , \\
\Delta h_{ij} &=  \left(  2 \zeta \, \delta_{ij} + \left( \partial_i \partial_j - \frac{\delta_{ij}}{3} \partial^2 \right) \gamma + { \partial_{ (i }{\gamma  }_{ j) } }^{T}  + \gamma_{ij}^{TT} \right) \, . 
\end{align}
where
\begin{align}
\partial_i {\gamma  }_{ i }^{T}=0 \, ,  \quad \gamma_{ii}^{TT}=0 \, , \quad \partial_i \gamma_{ij}^{TT} =0 \sgd
\end{align}

$\Delta N, \psi, \zeta, \gamma$ are scalars (i.e has helicity 0) under $SO(2)$ transformations, while $N_i^T,\gamma_j^T$ are vectors and $\gamma_{ij}^{TT}$ is tensor. Because of the $SO(2)$ symmetry of the background different helicity components become decoupled in the Einstein equations \cite{baumann}. Similarly we write down the energy-momentum tensor as
\begin{align}
T^0_0&=-\lp \bar{\rho} + \Delta \rho \rp \cgap T^0_i= \lp \bar{\rho} + \bar{p} \rp \lp \pr_i v + v_i^{T} \rp \cgap \\
T^i_j&= \delta_{ij} \lp \bar{p} + \Delta p \rp + \left( \partial_i \partial_j - \frac{\delta_{ij}}{3} \partial^2 \right) \sigma + { \partial_{ (i }{\sigma  }_{ j) } }^{T}  + \sigma_{ij}^{TT} \cgap
\end{align}
where $\sigma$ quantities carry the same properties as $\gamma$s; and an arbitrary infinitesimal coordinate transformation as
\begin{equation}
\xi_{\mu}=g_{\mu \nu} \xi^{\mu} = \lp \xi_0, \pr_i \xi + \xi_i^T \rp \sgd
\end{equation}

Because of decoupling of the scalar, vector, tensor modes; in the following we only consider the scalar modes. Under an infinitesimal coordinate transformation they transform as
\begin{alignat}{6}
& \Delta N && \rightarrow && \delta N - \dot{\xi_0}\sgc &\quad \quad  & \zeta && \rightarrow && \zeta +  \left( -H \xi_0 + \frac{1}{3} \frac{\partial^2 \xi}{a^2}  \right) \sgc \nonumber \\
& \gamma && \rightarrow && \gamma + 2 \frac{\xi}{a^2} \sgc &\quad \quad & \psi && \rightarrow && \psi + \left( \dot{\xi} + \xi_0 - 2H \xi \right) \sgc \nonumber \\
& \Delta \rho && \ra && \Delta \rho  - \dot{\bar{\rho}} {\xi}_0  \sgc & \quad \quad &  \Delta p && \ra && \Delta p - \dot{\bar{p}} {\xi}_0  \sgc \nonumber \\
& \sigma && \ra && \sigma \sgc & \quad \quad &  v && \ra && v + {\xi}_0 \sgd
\end{alignat}

Using this freedom one can set two scalars of the metric to zero. If one chooses $\gamma=0, \psi=0$ this is called the Newtonian gauge. With this choice there is no gauge freedom left, except for the case of spatial homogeneity of the metric. For this case the remaining gauge freedom is
\begin{equation}
\xi_0= -\epsilon(t) \cgap \frac{\xi}{a^2}= f(t)+ c_i x^i + c x^2 \cgap
\end{equation} 
where $c,c_i$ are constants.

Now $g=\lie_{\xi} \bar{g}$ , $T=\lie_{\xi} \bar{T}$ are solutions to Einstein equations,where $\bar{g},\bar{T}$ are background solutions. Thus we say
\begin{align}\label{eq:ad sln}
\Delta N= \dot{\epsilon} \,,\,\, \zeta= H \epsilon -2 c \,,\,\, \gamma=0 \,,\,\, \psi=0 \,,\,\, \Delta \rho = \dot{\bar{\rho}} \epsilon \,,\,\, \Delta p = \dot{\bar{p}} \epsilon \,,\,\, \sigma=0 \,,\,\, v= - \epsilon
\end{align}
is a solution to Einstein equations. This is gauge equivalent to a trivial solution, but we use the following trick to get a physically nontrivial solution: Introduce a small space dependency into the gauge parameter $\epsilon$. Einstein equations will be no longer trivially satisfied, instead one should solve $\epsilon$ for them.  Now these solutions will be non trivial solutions, as introducing the small space dependency took us out of the complete homogeneity case, and for us now the gauge is completely fixed. 

So let us try to solve Einstein equations with \eqref{eq:ad sln} where now we also introduce a small spatial dependency into $\epsilon$. We will first write down the linearized equations for the scalar modes in Planck units:
\begin{align}
\Delta p - \Delta \rho + \frac{2 \pr^2 \sigma}{3} &= H \Delta \dot{N} + \lp 6H^2 + 2 \dot{H} \rp \Delta N + \frac{\pr^2 \tilde{\zeta}}{a^2} - \ddot{\tilde{\zeta}} - 6 H \dot{\tilde{\zeta}} - H \frac{ \pr^2 \tilde{\psi}}{2 a^2} \cgap \\
\Delta \rho + 3 \Delta p &= \frac{\pr^2 \Delta N}{a^2} + 3 H \Delta \dot{N} + 6 ( \dot{H} + H^2 ) \Delta N + 3 \ddot{\tilde{\zeta}} + 6 H \dot{\tilde{\zeta}} + \frac{\pr^2 \dot{\tilde{\psi} }}{2a^2}  \cgap \\
( \bar{\rho}+ \bar{p} ) \pr_i v &= -  H \pr_i \Delta N + \pr_i \tilde{\zeta} \cgap \\
4 \pr_i \pr_j \sigma &=  \pr_i \pr_j \lp  2 \Delta N + 2 \tilde{\zeta} - \frac{1}{a} \frac{d ( a \tilde{\psi} ) }{dt}  \rp \cgap 
\end{align}
where we introduced
\begin{equation}
\tilde{\zeta}= \zeta - \frac{1}{6} \partial^2 \gamma \cgap \quad \tilde{\psi}= a^2 \dot{\gamma}-2 \psi \cgap
\end{equation}
for the simplicity of the notation. If one plugs in \eqref{eq:ad sln} into these equations, they will be trivially satisfied. However if we introduce a small space dependency into $\epsilon$, the equations become
\begin{align}
\frac{\pr^2 g(x) }{a^2} H \epsilon &= 0 \cgap \\
\frac{\pr^2 g(x)}{a^2} H \dot{\epsilon} &= 0 \cgap \\
\lp \bar{\rho} + \bar{p} + \dot{H} \rp \epsilon \, \pr_i g(x)  &= 0 \cgap \\ 
\lp \dot{\epsilon} + H \epsilon \rp \pr_i \pr_j g(x) &=0 \sgd
\end{align}

We wrote the small spatial dependency in $\epsilon$ explicitly by taking $\epsilon \ra \epsilon(t) g(x)$. The 3rd equation is automatically satisfied by virtue of background equations. Now let us impose
\begin{equation}\label{eq:eps eqn}
\dot{\epsilon} + H \epsilon=0 \sgd
\end{equation}
then the last equation will be exactly satisfied. We can now take $g$ arbitrarily small so that it satisfies the remaining Einstein equations- but also vanishing at infinity.  Since there is no gauge freedom left in this case we conclude that we have found a solution that is not gauge equivalent to a zero solution.

In the next section we will see that the Manton approximation will be much in the same spirit, with a small difference: We will start with a pure gauge solution that has only space dependency, then introducing small time dependency and imposing the constraint equation we will get some physical solutions. We will see for this solution the theory will reduce to a boundary theory.\que{For further investigation: Does Weinberg adiabatic modes make the solution a boundary theory?}

\section{Manton Approximation for a System with Natural Lagrangian}\label{sec:setup}

The approximation that given a system with a Lagrangian that can be decomposed into kinetic and potential parts whose potential has a continuous minima, dynamics will be such that the motion off the continuous minima is ignorable if the system has started with a small velocity around the minima is called the Manton approximation, see \cite{manton,WeinbergMono,Stuart2007}. 

Now we move on to proving the statement above. Assume we have field theory defined on some spacetime $\stm$, with fields $\phi^I$, with the action
\begin{equation}
\act= \int  \lp \frac{1}{2} g_{I J} \lp \phi^K(t,x) \rp \dot{\phi}^{I}(t,x)\dot{\phi}^{J}(t,x)- V \lp \phi^K(t,x) \rp \rp d^n x \cgap
\end{equation}
where $(t,x) \in \stm$ and $n$ is the dimension of $\stm$. Its equations of motion are
\begin{equation} \label{eq:geod on config}
g_{I K} \lp \ddot{\phi}^{I} +  \Gamma^{I}_{M J} \dot{\phi}^{M} \dot{\phi}^{J} \rp = - \pr_{K} V \sgd
\end{equation}

We will now adopt a geometric picture as follows: Let us assume that the set of all allowed field configurations $\phi^I$ that depend only on space coordinates $x$ form a smooth manifold $\consp$. Here ``allowed" means satisfying a given set of spatial boundary conditions. Dynamical fields that describe the time evolution of the system will formally be \hyperlink{flow}{flows} on $\consp$. 
\begin{figure}[!ht]
\centering
\includegraphics[width=\linewidth]{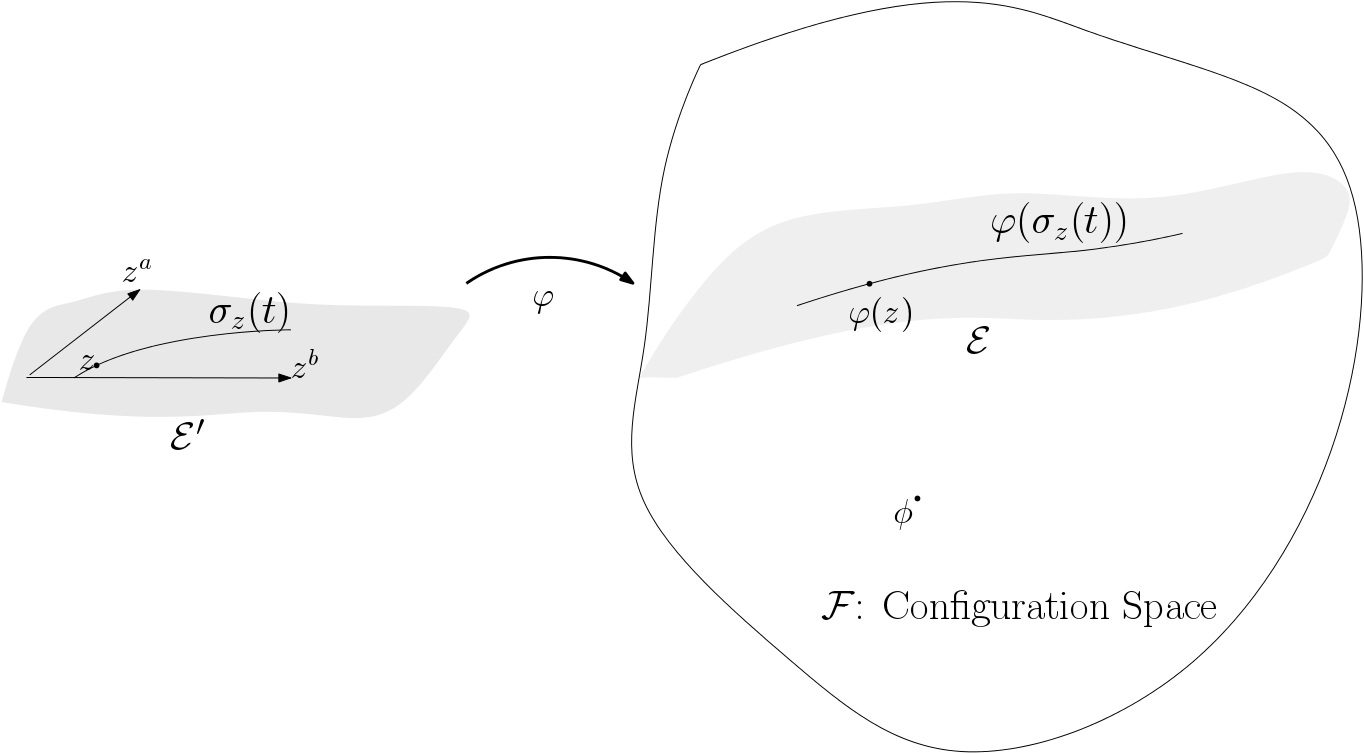}
\caption{Configuration Space and Minima Space}
\vskip\baselineskip
\label{fig:mantonapp}
\centering
\end{figure}

We also assume that the potential $V$ has set of minima $\minsp$ that forms a submanifold of $\consp$. Let this set be an embedding of a manifold $\minsp'$ into $\consp$, with map $\minf : \minsp' \ra \consp$ so that $\minf(\minsp')=\minsp$, then we have not only
\que{After this there is the question: what are coordinates on $\consp$-Witten paper might help}
\begin{equation} \label{eq:1stconst}
\left. \pr_I V \right|_{\minsp}=0 \cg
\end{equation}
but also since the above holds along $\minsp$,
\begin{equation} \label{eq:2ndconst}
\pr_z \pr_I V \lp \minf(z) \rp = 0 \quad \ra \quad \pr_I \pr_J V \lp \minf(z) \rp \cdot \frac{\pr \minf^J(z)}{\pr z^a} =0 \sgd
\end{equation}
Here $z^a$ are coordinates on $\minsp$. Since we have a metric $g_{IJ}$ defined on $\consp$, now we can split the tangent bundle of $\consp$, $T\consp$ into two orthogonal parts $T\consp=T\minsp+T\minsp^{\perp}$. Note that $ \partang= \frac{\pr \minf^J(z)}{\pr z^a} \frac{\pr}{\pr \phi^J} $ span $T\minsp$ and a tangent vector $\orthtang$ is element of $T\minsp^{\perp}$ if and only if $g_{IJ} \orthtang^I  \partang^J=0$. Also note \eqref{eq:2ndconst} means parallel tangent vectors are zero eigenvectors of $\hes_{IJ}=\left. \pr_I \pr_J V \right|_{\minsp}$, the Hessian matrix of the potential V on $\minsp$.

Our approximation will be that the solution to our theory is a adiabatic motion around the minima. For this we conjecture a solution: 
\begin{equation} \label{eq:field exp}
\phi^I(x,t)= \minf^I(\sigma(z,t)) + \epsilon \lp \partan1{I} + \orthtan1{I} \rp + \epsilon^2 \lp \partan2 {I} + \orthtan2{I} \rp + \ord(\epsilon^3)\sgd 
\end{equation}
where $\sigma$ is a flow on $\minsp$.
To account for the adiabaticity, we let each time derivative bring in a perturbation parameter $\epsilon$, i.e. 
\begin{equation}\label{eq:field der exp}
\dot{\phi}^I(x,t)= \epsilon \, \dot{\minf}^I(\sigma(z,t)) + \epsilon^2 \lp \dpartan1{I} + \dorthtan1{I} \rp + \epsilon^3 \lp \dpartan2{I} + \dorthtan2{I} \rp + \ord(\epsilon^4) \sgd
\end{equation}
Plugging these into the equations of motion \eqref{eq:geod on config} we get
\begin{align}\label{eq:geod exp}
g_{I K} ( \minf + \epsilon \, \delta \phi ) & \lp \epsilon^2 \, \ddot{\minf}^{I} + \epsilon^2 \, \Gamma^{I}_{M J} ( \minf + \epsilon \, \delta \phi ) \dot{\minf}^{M} \dot{\minf}^{J} \rp + \mathcal{O}(\epsilon^3)  \nonumber \\ &= - \left. \pr_{K} V \right|_{\minf} - \epsilon \, \hes_{IJ}  \lp \partan1{I} + \orthtan1{I} \rp + \mathcal{O}(\epsilon^2) \cgap
\end{align}
where in writing we omit the dependency of $\minf$ on $\sigma$ and collect all the perturbations into the term $\delta \phi$ wherever needed. First two term on the right hand side drops due to the conditions \eqref{eq:1stconst}, \eqref{eq:2ndconst}. Thus there will be no new equation in $\ord(\epsilon^0)$. 

At order $\ord(\epsilon)$ we have
\begin{equation}
\hes_{IJ} \cdot \orthtan1{I}= 0 \sgd
\end{equation}
Contracting this with $\partang^I$ gives an identity via \eqref{eq:2ndconst}, whereas contracting with $\orthtang^I$ will give
\begin{equation}
\hes_{IJ} \, \orthtan1{I} \, \orthtan1{J}=0 \sgd
\end{equation}
Since on $\minsp$ space V is minimal, its Hessian should be positive semidefinite there, but $\orthtan1{I}$ is orthogonal to the zero direction and thus the equation above implies
\begin{equation} \label{eq:1st orth pert}
\boxed{ \orthtan1{I}=0 \sgd }
\end{equation}

Moving on to the $\ord(\epsilon^2)$ we first take a look at the right hand side. For that we expand \eqref{eq:geod exp} to $\epsilon^2$ order. Taking into account \eqref{eq:2ndconst} and $\orthtan1{I}=0$ we get
\begin{equation} \label{eq:2ndorderRHS}
\ord(\epsilon^2) \, \mbox{of R.H.S of \eqref{eq:geod exp}}= \epsilon^2 \hes_{IK} \orthtan2{I} + \frac{\epsilon^2}{2} \left. \pr_J \pr_I \pr_K V \right|_{\minsp} \partan1{J} \partan1{I} \sgd
\end{equation}
To simplify this we take the z derivative of \eqref{eq:2ndconst} to get
\begin{equation}
\left. \pr_K \pr_I \pr_J V \right|_{\minsp} \frac{\pr \minf^J(z)}{\pr z^a} \frac{\pr \minf^K(z)}{\pr z^b} + \left. \pr_I \pr_J V \right|_{\minsp} \frac{\pr^2 \minf^J(z)}{\pr z^a \pr z^b}= 0 \sgd
\end{equation}
contracting this with $\partang^K= \frac{\pr \minf^K(z)}{\pr z^a}$ gives
\begin{equation}
\left. \pr_K \pr_I \pr_J V \right|_{\minsp} \partang^K \partang^I \partang^J =0 \sgd
\end{equation}

So contracting \eqref{eq:2ndorderRHS} with a parallel tangent vector $\partan1{}$ one gets identically zero. Thus taking the $\ord(\epsilon^2)$ of the left hand side of \eqref{eq:geod exp} and contracting it with $\partan1{K}$ one gets
\begin{equation}
g_{IK}(\minf) \lp  \ddot{\minf}^{I} + \Gamma^{I}_{M J} ( \minf ) \dot{\minf}^{M} \dot{\minf}^{J} \rp \partan1{K} =0 \sgd
\end{equation}
Now we note
\begin{align}
\dot{\minf}^{I}\lp \sigma \rp &= \pr_a \minf^I \dot{\sigma}^a \cg \\
\ddot{\minf}^{I} \lp \sigma \rp &= \pr_a \pr_b \minf^I \dot{\sigma}^a \dot{\sigma}^b + \pr_a \minf^I \ddot{\sigma}^a \sgd
\end{align} 
Plugging these in we have
\begin{equation}
g_{IK} \lp \pr_a \pr_b \minf^I \dot{\sigma}^a \dot{\sigma}^b + \pr_a \minf^I \ddot{\sigma}^a + \Gamma^{I}_{M J} ( \minf ) \pr_a \minf^M \dot{\sigma}^a \pr_b \minf^J \dot{\sigma}^b \rp \pr_c \minf^K=0 \cg
\end{equation}
which becomes
\begin{equation} \label{eq:geod on vacua}
\boxed{ \tilde{g}_{cd} \lp \ddot{\sigma}^d + \tilde{\Gamma}^d_{ab} \dot{\sigma}^a \dot{\sigma}^b \rp =0} \cg
\end{equation}
where $\tilde{g},\tilde{\Gamma}$ are induced metric on $\minsp$ and its corresponding Christoffel symbol respectively. Thus we see the adiabatic approximation \eqref{eq:field exp} and \eqref{eq:field der exp} gives a solution to the system such that at the first order in perturbation, $\orthtan1{}=0$ i.e. there exists no motion in directions orthogonal to the manifold $\minsp$ of the minima, and the motion along $\minsp$ is described by the geodesic equation on it with respect to the metric $\tilde{g}$ induced by $g$.

Now we illustrate this approximation via a simple example.
\begin{example}
We consider a point particle in three dimensional flat space, with the following Lagrangian:
\begin{equation}
\lag= \frac{1}{2} m \dot{X}^I(t) \dot{X}^I(t) - \frac{1}{4} \lambda \lp X^I X^I - R^2 \rp^2 \sgd
\end{equation}

In the language of our above prescription fields that depend on spacetime now becomes positions $X^I$ that depend on time and the metric is $g_{IJ}=m \delta_{IJ}$. The potential is more simply expressed however, if one goes to the spherical coordinates such that
\begin{equation}
L= \frac{m}{2} \dot{\rho}^2 + \frac{m}{2} \rho^2 \lp \dot{\theta}^2 + sin^2(\theta) \dot{\phi}^2 \rp - \frac{1}{4} \lambda \left(\rho^2 - R^2 \right)^2 \sgc
\end{equation}
where now $g_{IJ}=m \, \mbox{diag}(1, \rho^2, \rho^2 sin^2(\theta))$ and $X^I=(\rho,\theta,\phi)$. The configuration space $\consp$ is the three dimensional flat space itself and $\minsp=S^2_R$, round sphere with radius $R$. Note that $z^a=\lbr \theta, \phi \rbr$ parametrize $\minsp$.

We now go onto write down the exact equations of motion, and then will apply the approximation \eqref{eq:field exp}, \eqref{eq:field der exp}; so that we will see a confirmation of our conclusions \eqref{eq:1st orth pert}, \eqref{eq:geod on vacua}. Equations of motions read 
\begin{align}
& \rho\left(\ddot z^a+ \tilde{\Gamma}_{b c }^{a} \dot{z}^{b} \dot{z}^{c} \right)+ \dot{\rho}\dot{ z}^{a} = 0\,,\\
& \ddot \rho+\frac{4}{m}(\rho^2-R^2)\rho - \rho \tilde{g}_{a b}\dot{z}^{a} \dot{z}^{b}=0 \, ,
\end{align}
where $\tilde{g},\tilde{\Gamma}$ are induced metric and corresponding Levi-Civita symbol on $\minsp$. Now we write down the approximation of \eqref{eq:field exp}:
\begin{align}
\rho &= R + \epsilon \rho^{(1)}(t) + \epsilon^2 \rho^{(2)}(t)+ \cdots \cgap \\
z^{a} &= \bar{z}^{a}(t) + \epsilon z^{(1) a}(t) + \epsilon^2 z^{(2) a}(t) + \cdots \sgd
\end{align}
Note that approximation of being around the vacua is captured in $\rho^{(0)}=R$. By also employing  \eqref{eq:field der exp} together with the expansion above in the exact equations of motion we get for the first order in $\epsilon$:
\begin{equation}
\rho^{(1)}=0 \sgc
\end{equation}
as we have conjectured. For the second order:
\begin{align}
& \frac{8}{m} R \rho^{(2)}- R g_{a b} \dot{\bar{z}}^{a} \dot{\bar{z}}^{b} = 0 \cgap \\
& \ddot{\bar{z}}^{a} + \tilde{\Gamma}_{b c}^{a} \dot{\bar{z}}^{b} \dot{\bar{z}}^{c} =0 \sgd
\end{align}
Here first equation is the orthogonal part of the second order equation which we had omitted in our proof above, and the second equation is the geodesic equation on $\minsp$ again as was our conclusion above.
\end{example}

\section{Manton Approximation for Yang Mills}\label{sec:ymills}
The purpose of this thesis will be to apply the Manton approximation described above to the theory of general relativity. In the accompanying paper this was performed for Yang-Mills theory \cite{ymmanton}. Here we summarize this paper, to illustrate certain points we will be making in a different setting.

Yang-Mills theory on the flat spacetime can be considered to be defined by Lie algebra valued one forms $A$ on the spacetime with Minkowski metric. The action for the theory is
\begin{equation}
S_{YM}=- \frac{1}{2} \int d^4x \tr \lp F_{\mu \nu} F^{\mu \nu} \rp \sgc
\end{equation}
where
\begin{equation}
F=dA
\end{equation}
is a two form, and $\tr$ is over the generators of the related Lie algebra. To bring this Lagrangian into the natural form one makes the gauge choice
\begin{equation}
A_0=0 \sgd
\end{equation}

As $A_0$ is in the Lagrange multiplier form in the action, with this choice the constraint coming from this Lagrange multiplier should be externally imposed in addition to the reduced action, i.e. in this gauge the theory is described by the action
\begin{equation}
S= K - V = \frac{1}{2} \int d^4x \tr \lp \dot{A}_i \dot{A}^i \rp - \frac{1}{2} \int d^4x \tr \lp F_{ij} F^{ij} \rp
\end{equation}
together with the constraint
\begin{equation}
D_i \dot{A}^i= \pr_i \dot{A}^i + \lb A_i, \dot{A}^i \rb =0 \sgc
\end{equation}
where $D_i$ is the Yang-Mills covariant derivative, and the bracket is the Lie bracket of the associated Lie algebra. Thus the configuration space $\consp$ will be described by fields $A_i(x)$ with spatial dependency. On the right hand side the matrix representation is implied and multiplications are matrix multiplications. On a manifold with trivial homology the vacuum solutions, the solutions with absolute minimum energy, are pure gauge solutions. Let us denote the gauge transformation of the fields $A$ as $g \cdot A$ where
\begin{equation}
g \cdot A= g A g^{-1} + gdg^{-1} \sgc
\end{equation}
where $g$ is the element of the set of maps from Lie group to the spacetime manifold, i.e. Lie group elements with spacetime dependency. Then the space of minima fields $\minsp$ is fully described by
\begin{equation}
\mathcal{A}= g_z \cdot A_o \sgc
\end{equation}
where $A_o$ is a given reference point on $\minsp$, which can be chosen as $A_o=0$ and $z$ parametrizes the set of all transformations that can be made. Note that since we have restricted our selves to the gauge $A_0=0$, the remaining gauge transformations will be group elements with only spatial dependency, i.e.
\begin{equation}\label{eq:vac}
\mathcal{A}= g_z(x) \cdot A_o \sgd
\end{equation}

Now to study the adiabatic solutions we consider flows on $\minsp$. This will amount to introducing a time dependency into the solutions above:
\begin{equation}\label{eq:ad}
\mathcal{A}= g_{z(t)}(x) \cdot A_o \sgd
\end{equation}
Note that introducing this time dependency will take us out of the physical equivalence class of the pure gauge solutions, i.e. solutions above are not gauge equivalent to $A_o=0$ solution. We will stress and explain this point more in the chapters to come. Now we attempt to solve the Yang-Mills equations for this ansatz. First we consider the constraint equation. For this one needs to calculate
\begin{equation}
\dot{\mathcal{A}}_i= -D_i \gamma_z \gwg \gamma_z=\dot{g}_z g^{-1}
\end{equation}
where $\gamma_z$ is an element of the Lie algebra, and $D_i$ is the covariant derivative with the connection $\mathcal{A}$. Using this the constraint equation becomes
\begin{equation}
D_i D^i \gamma_z=0 \sgd
\end{equation}
This equation might be complicated due to the fact that $\mathcal{A}$'s are also present in the covariant derivative $D_i$. However since at each instant $t$, $A_i$ is gauge equivalent to $A_o$ at an instant this equation can be written as
\begin{equation}\label{eq:constr}
D_o^2 \sigma_z=0 \gwg \sigma_z=g^{-1} dg \sgd
\end{equation}
Using the form of the solution we propose, after some manipulations and constraint equation in the action, and also the fact that the potential is zero on vacuum, action becomes
\begin{equation}\label{eq:ymaction}
S= \int dt \int_{\pr M} \tr \lp \sigma_z D^{\perp} \sigma_z \rp
\end{equation} 
where $D^{\perp}\sigma_z = \left. n^i D_i \sigma_z \right|_{\pr M}$.

\paragraph{Homogeneous Space Structure:} Now we move onto exploring the structure of the space of vacua. Note that solutions to the constraint equation \eqref{eq:constr} are uniquely determined by their values on the boundary, for $\sigma$ that vanish on the boundary they are zero. Thus the solutions are identified by the Lie algebra quotient
\begin{equation}
\mathfrak{s}= \frac{\mathfrak{g}}{\mathfrak{g}_0} \sgc
\end{equation}
where $\mathfrak{g}$ is the set of Lie algebra valued functions with spatial dependency and $\mathfrak{g}_0$ is the subset of it that vanishes on the spatial boundary $\pr M$. One can show that $\mathfrak{g}_0$ is an ideal in $\mathfrak{g}$, thus the quotient will be a Lie algebra itself. The corresponding Lie group is
\begin{equation}
\mathcal{S}= \frac{\mathcal{G}}{\mathcal{G}_0} \sgc
\end{equation}
where $\mathcal{G}_0$ will be the set of transformations that are identity on the boundary and it is a normal subgroup of $\mathcal{G}$. However $\mathcal{S}$ actually does not identify the space of vacua accurately, there will be members of its algebra $\mathfrak{s}$ that are in the kernel of $D^{\perp}$, making the action vanish. In the paper \cite{ymmanton} kernel of $D^{\perp}$ was shown to be equal to the kernel of the operator $D_o$, thus they are indeed the isotropy subgroup of the set of transformations acting on the gauge fields $A$. The corresponding group elements $\mathcal{K}$ are subgroup of $\mathcal{S}$, but not a normal subgroup. As a result the final space of vacua
\begin{equation}
\mathcal{V}= \mathcal{K} \backslash \lp \mathcal{G} / \mathcal{G}_0 \rp
\end{equation}
will not be a Lie group, but rather a ``homogeneous space". Moreover since the action \eqref{eq:ymaction} imposes a metric on this space via
\begin{equation}
S= \int dt \frac{1}{2} \metonvacua( \delta_{\gamma_z} A, \delta_{\gamma_z} A) \sgc
\end{equation}
where $\metonvacua$ is the metric on the space of vacua given by
\begin{equation}
\metonvacua( \delta_{\gamma^1_z} A , \delta_{\gamma^2_z} A) = \int_{\pr M} \tr \lp \sigma^1_z \ortder \sigma^2_z \rp \sgc
\end{equation}
that is ``invariant under left translations of the Lie group $\mathcal{S}$", the space of vacua will be a ``Riemannian homogeneous space". The concepts inside the quotation marks will be explained in detail in the following chapters of the thesis.

\part{APPLICATION TO GENERAL RELATIVITY}

\chapter{Mathematics for General Relativity}
In this chapter we explore the mathematical apparatus necessary to implement the program we have prescribed, to general relativity. As we have seen previously, Manton approximation heavily relies on a natural form of the Lagrangian where the kinetic and potential parts are separated, and thus the time coordinate is split from the others. The Einstein-Hilbert action is given by the curvature of the spacetime and hence this split is not visible there. In the context of general relativity splitting of time from space will mean foliating the spacetime into spacelike surfaces, and expressing all objects on the spacetime, i.e. tensors, in terms of their pullbacks to the spacelike surfaces and the remaining components. 

Splitting of spacetime into space and time was famously performed by Arnowitt-Deser-Misner \cite{adm}, and became a standard treatment in many of general relativity books, see e.g. \cite{mtw,wald} and also \cite{gour}, a nice review for the purposes of numerical relativity. However many of the treatments were performed in a coordinate basis, which automatically gives a foliation and satisfies extra conditions other than those required by the foliation. Instead of this approach in Section \ref{sec:Fol}, we will start from a general non-coordinate basis to get the most general expressions for fundamental quantities such as Ricci tensor and Ricci scalar in therms of their split parts. The theory of foliations was discussed in Appendix \ref{sec:appfol} where the requirements for a differentiable manifold to be foliated into well defined leaves are discussed explicitly.

Moreover we will perform this operation of foliation for a general (pseudo)-Riemannian manifold, taking care of the Lorentzian and Euclidean signatures simultaneously. The reason for this will be as follows: after foliating the spacetime manifold, our construction will become a theory on a spatial manifold. We will take this manifold to be a manifold with boundary. For this reduced theory we will furthermore split the quantities in terms of quantities orthogonal and tangent to boundary, and more specifically we will be imposing certain conditions on our quantities on the boundary. For ease of operation we will take this spatial manifold to be also foliated such that the boundary is an \hyperlink{integman}{integral manifold} of the spatial manifold. All of the tools developed in Section \ref{sec:Fol} will be also used for this second foliation, however the specifics will be performed in the next chapter, not here. 

On this spatial manifold we will need more machinery. First in Section \ref{sec:manbnd}, we will define what a manifold with boundary is, and some useful relations by following mostly \cite{Gunter}. As our theory reduces to a theory on the spatial manifold, the constraints of general relativity will become differential equations on a spatial manifold; moreover the momentum constraint becomes an exterior calculus equation for our solutions. To attack these differential equations we will use the Hodge decomposition. However the Hodge decomposition on a manifold with boundary is different than its usual treatment in manifolds without a boundary, e.g. while there is no harmonic form on a manifold without boundary with trivial homology, there will be an infinite number of them on a manifold with boundary. Hodge decomposition on a manifold with boundary is called Hodge-Morrey-Friedrichs (\hmf) decomposition, and we will treat it in Section \ref{sec:hmf}.

As we have discussed in Section \ref{sec:ymills}, our analysis will give us information about the space of vacua which will turn out to be a Riemannian homogeneous space. In Section \ref{sec:homsp} we review homogenous spaces with geometry where we follow \cite{arvan}. As a natural place to start, first the subject of Lie groups with a geometry is reviewed, and what peculiarities this geometry has if the metric has some invariance properties under the group actions. Then this study is extended to homogeneous spaces in the following subsection, where we define what a Riemannian homogeneous space is and conclude by a nice proposition on equivalence of various constructions on these manifolds we will be using later on.

\section{Foliating a Riemannian Manifold}\label{sec:Fol}
Let $(\amb,\metamb)$ be a (pseudo) Riemannian manifold of $\dimamb$ dimensions with the corresponding Riemannian connection $\conamb$. For generality we will consider both Euclidean and Lorentzian signature metrics. We would like to foliate this manifold into proper hypersurfaces i.e. we assume there exists an \hyperlink{integrabledistribution}{integrable distribution}. By \hyperlink{frob}{Frobenius Theorem} this means it is enough to assume the existence of an involutive distribution. Let the set of vector fields $\{ \ei \}$ , $i=1,...,\dimsubm$, be a basis for the distribution we are considering, where $\dimsubm=\dimamb-1$ is the dimension of the distribution and the integral manifolds. As explained in the appendix, Section \ref{sec:appfol}, a distribution can equivalently be defined by defining forms, and for our case since we are considering a distribution of codimension-1, we need a single defining one form which we call $\wz$.

Assumption of involutivity is equivalent to saying
\begin{equation}
\wz \lp \lb \ei,\ej \rb \rp=0 
\end{equation}
or equivalently
\begin{equation}
\wz \wedge d \wz=0 \sgd
\end{equation}

Let us complete the basis $\lbr \ei \rbr$ with the dual vector field to $\wz$, called $\ez$. Note that here we consider a basis that is not necessarily coordinate, thus basis vectors fields do not commute. We also make the unconventional choice for dual vectors that
\begin{equation}
\wz(\ez)= \pm 1 \sgc
\end{equation}
where we use the convention that upper sign corresponds to metrics with Euclidean signature and lower sign corresponds to metrics with Lorentzian signature, a convention to be used throughout the document. The reason for the unconventional minus choice for the second case comes from the desire to make two choices simultaneously: we want $\ez$ to be equivalent to the vector obtained by raising the index of $\wz$ and we want $\ez$ to have a negative norm for the Lorentzian case: 
\begin{equation}\label{eq:nrm1}
\metamb(\ez,\ez)= \pm 1 \sgd
\end{equation}
Note that this choice also produces orthogonality between the normal and tangential basis vector fields, i.e.
\begin{equation}\label{eq:nrm2}
\metamb(\ez,\ei)=0 \sgc
\end{equation}
since $\wz(\ei)=0$, and involutivity condition becomes
\begin{equation}
\metamb(\ez, \lb \ei, \ej \rb)=0 \sgd
\end{equation}

Whenever necessary we will also have duals to $ \ei $, called $\wi$, such that
\begin{equation}
\wi(\ej)=\delta^i_j \sgd
\end{equation}
Note, unlike $\wz$ we do not choose $\wi$ to be the raised $\ei$, so there will be no normalization condition on $\lbr \ei \rbr$ and we will call
\begin{equation}
\metamb( \ei, \ej) = \metsubm_{ij} \sgd
\end{equation}

To summarize our conditions, we say we consider a manifold with metric $\metamb$, with a basis $\{ \ez, \ei \}$ such that
\begin{equation}\label{eq:cond on metric}
\metamb(\ez,\ez)= \pm 1 \cg \metamb(\ez,\ei)=0 \cg \metamb(\ez, \lb \ei, \ej \rb)=0 \sgd
\end{equation}

By these choices we have fixed our foliation, now we can move onto splitting the objects on our manifold, i.e. tensor fields in accordance with the foliation. Consider e.g. a vector field, we can decompose it into
\begin{equation}
V=V^{\mu} e_{\mu}=V^0 \ez+V^i \ei \sgd
\end{equation}
Thus a vector field can be decomposed into a scalar $V^0$ and a hyperspace vector. Similarly a type (2,0) tensor can be decomposed into
\begin{equation}
T= T^{00} \ez \otimes \ez + T^{0i} \ez \otimes \ei + T^{i0} \ei \otimes \ez + T^{ij} \ei \otimes \ej \sgc
\end{equation}
i.e. into a scalar,vector and a tensor on hypersurfaces. Applying this to metric and inverse metric,they can be written as
\begin{align}
\metamb &= \pm \wz \otimes \wz + \metsubm_{ij} \wi \otimes \wj \sgc \\
\metamb^{-1} &= \pm \ez \otimes \ez + \metsubm^{ij} \ei \otimes \ej \sgd
\end{align}
where $\metsubm^{ij}$ is the inverse of the matrix $\metsubm_{ij}$. 

After tensors themselves, we also use their covariant derivatives, and also should decompose covariant derivatives of tensors. The most useful way to do this is writing out the covariant derivatives of the basis vectors with respect to each other. Note that since our conditions \eqref{eq:cond on metric} hold through out the manifold, one can show that the following also do by using metric compatibility and equation \eqref{eq:Cov distr over tensor}:
\begin{align}
& \metamb(\conambz \ez,\ez) =0 \cg  & \metamb(\conambz \ez,\ei) + \metamb(\ez,\conambz \ei) =0 \cg\\
& \metamb(\conambi \ez,\ez) =0 \cg  & \metamb(\conambj \ez,\ei) + \metamb(\ez,\conambj \ei) =0 \sgd
\end{align}
The covariant derivative of the involutivity condition turns out not to be giving an independent constraint. Using these constraints we identify the covariant derivatives of the basis vectors by the free quantities of $\acc^i,{\twoconn_i}^j,\ext_i^j$ and $\consubm$- the Riemannian connection of the induced metric on each integral manifold- as follows:
\begin{align}
\conambz \ez & = \acc^i \ei \sgc\\
\conambi \ez &= \ext_i^j \ej \sgc \label{eq:mywein}\\ 
\conambz \ei &= \mp \acc_i \ez + {\twoconn_i}^j \ej \sgc \\
\conambj \ei &= \consubmj \ei \mp \ext_{ji} \ez \sgc \label{eq:mygauss}
\end{align}
where $\ext_{ji}= \ext_j^k \metsubm_{ki},\acc_i= \acc^j \metsubm_{ji}$. Note that \eqref{eq:mywein} is the Weingarten equation \eqref{eq:wein}, and \eqref{eq:mygauss} is the Gauss equation \eqref{eq:gauss}. $\ext_{ij}$ is known as the extrinsic curvature and its related to \hyperlink{second fundamental form}{second fundamental form} via
\begin{equation}
\ext_{ij}= - \wz( \Pi(\ei,\ej) ) \sgd
\end{equation}
The condition of involutivity of the distribution makes the extrinsic curvature symmetric:
\begin{equation}
\metamb(\ez, \lb \ei, \ej \rb)=0 \ra \ext_{ij}=\ext_{ji} \sgd
\end{equation}
Except the symmetry of the extrinsic curvature, no other condition is imposed on the quantities we have defined. We also remind here that, even if a connection is torsionless, the corresponding Christoffel symbols are symmetric only in a coordinate basis, but not necessarily so in a non-coordinate basis, i.e.
\begin{equation}
\consubmi \ej - \consubmj \ei = \lb \ei,\ej \rb = 2 \Gamma_{[ij]}^k e_k \sgd
\end{equation}

It is useful also to write down the following commutator:
\begin{equation}
\lb \ez,\ei \rb= \conambz \ei - \conambi \ez= \mp \acc_i \ez + \lp {\twoconn_i}^j-{\ext_i}^j \rp \ej \sgd
\end{equation}
The quantity $\acc^i= \wi(\conambz \ez)$ is called acceleration. Note that for the case where $\ez$ is $\conamb$ geodesic in the ambient manifold the acceleration is zero. \expl{What is $\twoconn$} 

Note that in the way we introduced the quantities $\acc^i,\twoconn^i_j, \ext^i_j, \Gamma^k_{ij}$ are simply numbers. However one can define hypersurface tensor fields
\begin{equation}
\acc=\acc^i e_i \cg \ext= \ext_{ij} \wi \otimes \wj \cg \twoconn= {\twoconn_i}^j \wi  \otimes \ej \cg \Gamma= \Gamma^k_{ij} \ek \otimes \wi \otimes \wj \sgd
\end{equation}
One might be surprised that we call $\Gamma$ a tensor field, but viewed this way it will be. The point is that if one has two sets of vector fields $\{ \ei \}$ and $\{ \tilde{e}_i \}$ and makes the definitions
\begin{align}
\tilde{\Gamma}&= \tilde{\Gamma}^k_{ij} \tilde{e}_k \otimes \tilde{\omega}^i \otimes \tilde{\omega}^j \gwg \tilde{\Gamma}^k_{ij}=\tilde{\omega}^k\lp \conambi \ej \rp \sgc \\
\Gamma &= {\Gamma}^k_{ij} {e}_k \otimes {\omega}^i \otimes {\omega}^j \gwg {\Gamma}^k_{ij}={\omega}^k\lp \conambi \ej \rp \sgc
\end{align}
then these tensor fields will not be equal. Similar argument is true for other quantities we defined. As long as we fix the foliation this will not be an important issue however. For the rest of the document we will call these tensors sometimes as \hypertarget{slicing tensors}{slicing tensors}.

Now we move onto splitting the Ricci curvature tensor of this manifold into normal and tangential parts. Remember that the Ricci curvature is expressed as
\begin{equation}
\ricamb= \rtil_{\mu \nu} \omega^{\mu} \otimes \omega^{\nu} = {\rtil_{\sigma \mu \nu}}^{\sigma}= \metamb \lp \rieamb(e_{\alpha},e_{\mu})e_{\nu},e_{\beta} \rp {\metamb}^{\alpha \beta} \sgd
\end{equation}
First calculate
\begin{align}
\rtil_{00} &=\metamb \lp \conambi \conambz \ez - \conambz \conambi \ez- \conamb_{\lb \ei, \ez \rb} \ez, \ej  \rp \metsubm^{ij} \\
&= \metamb \lp \conambi \conambz \ez - \conambz \conambi \ez, \ej  \rp \metsubm^{ij} \mp \acc_i \acc^i - \lp \ext_i^j-\twoconn_i^j \rp \ext_j^i\sgc
\end{align}
so that 
\begin{equation}
\boxed{\rtil_{00}= - \ez(\ext) - \ext^{ij} \ext_{ij} + (\consubmi \acc)^i \mp \acc^i \acc_i  \sgc }
\end{equation}
where we use the notation $ (\consubmi \acc)^i= \wi \lp \consubmi \acc \rp $, to distinguish it from the covariant derivative acting on a labeled scalar i.e. $ \consubmi \acc^i = e_i (\acc^i)$. Moving on to $\rtil_{0i}$ we have
\begin{equation}
\rtil_{0i}={\rtil_{j0i}}^j= \metsubm^{jk} \rtil_{ikj0} = \pm \metsubm^{jk} \metamb \lp \ez,\rieamb(\ei,\ek)\ej \rp
\end{equation}
where we have used the symmetries of the Riemannian tensor. Then
\begin{align}
\rtil_{0i} &= \pm \metamb \lp \conambi \conambk \ej - \conambk \conambi \ej- \conamb_{\lb \ei, \ek \rb} \ej, \ez  \rp \metsubm^{jk}  \\
&=\pm \metsubm^{jk} \lp \metamb \lp \lp \mp \Gamma^m_{kj} \ext_{im} \ez \mp \ei\lp \ext_{mj} \rp \ez \rp - \lp i \leftrightarrow k \rp \rp - \metamb \lp \ez, \mp 2 \Gamma^m_{[ik]} \ext_{mj} \ez \rp \rp 
\end{align}
so that one gets
\begin{equation}
\boxed{ \rtil_{0i}= \pm \lp \consubm^j \ext_{ji} - \consubm_i K \rp \sgd }
\end{equation}
\noindent
Similarly,
\begin{align}
\rtil_{ij} &= \pm \metamb\lp \rieamb(\ez,\ei)\ej,\ez \rp + \metamb \lp \rieamb(\ek,\ei) \ej, \emm \rp \metsubm^{km} \equiv \rtil_{ij}^{(1)} + \rtil_{ij}^{(2)}  \sgd
\end{align}
Focusing on the first part,
\begin{align}
\rtil_{ij}^{(1)}&= \pm \metamb \lp \conambz \conambi \ej - \conambi \conambz \ej- \conamb_{\lb \ez, \ei \rb} \ej, \ez \rp \\
&= \mp \Gamma^k_{ij} \acc_k \pm \twoconn_j^k \ext_{ik} \mp \conambz(\ext_{ij}) \pm \conambi (\acc_j) \mp \metamb \lp \conamb_{\lb \ez,\ei \rb} \ej, \ez \rp \\
&=  \mp \Gamma^k_{ij} \acc_k \pm \twoconn_j^k \ext_{ik} \mp \conambz(\ext_{ij}) \pm \conambi (\acc_j) - \acc_i \acc_j \pm \lp \twoconn_i^k - \ext_i^k \rp \ext_{kj} \\
&=  \pm (\consubm_i \acc)_j - \acc_i \acc_j \mp \lp \ez(\ext_{ij}) + \ext_i^k \ext_{kj} \rp \pm 2 f_i^k \ext_{kj} \sgd
\end{align}
For the second part we simply use the Gauss formula:
\begin{align}
\rtil_{ij}^{(2)} &= R_{ij} -\metamb \lp \ez \ext( \ek, \emm ) ,  \ez  \ext(\ei,\ej) \rp \metsubm^{km} + \metamb \lp \ext( \ek, \ej ) \ez , \ext(\ei,\emm) \ez \rp \metsubm^{km}  \\
&= R_{ij} \mp \ext \ext_{ij} \pm \ext_j^m \ext_{ij} \sgd
\end{align}
Combining the two we get
\begin{equation}
\boxed{ \rtil_{ij}= R_{ij} \mp \ext \ext_{ij} \pm 2 {\twoconn_i}^k \ext_{kj} \mp \ez( \ext_{ij} ) \pm (\consubm_i \acc)_j - \acc_i \acc_j \sgd}
\end{equation}
The Ricci Scalar of the ambient manifold is
\begin{equation}
\rtil = \pm \rtil_{00} + \rtil_{ij} \metsubm^{ij} \sgc
\end{equation}
so that
\begin{equation}\label{eq:ricsc}
\boxed{ \rtil = R \mp \ext^2 \mp \ext^{ij} \ext_{ij} \mp 2 \ez(\ext) \pm 2 \consubm_i \acc^i - 2 \acc_i \acc^i \sgd }
\end{equation}

\section{Manifold with a Boundary}\label{sec:manbnd}
As we will see in the following chapters our adiabatic solutions will be described by the data on the boundary of the spatial manifold. Because of this we now develop the language necessary to study these manifolds, using some elements from the book of G.Schwarz \cite{Gunter}, thus a good deal of the following will be some varied versions of definitions and theorems there. Smooth manifolds are defined as spaces that are locally diffeomorphic to the Euclidean space of the same dimension. This definition will fail for a manifold with boundary, so in place of the full Euclidean space one puts the requirement of being locally diffeomorphic to the closed upper half Euclidean space $\Rnp$, defined as
\begin{equation}
\Rnp= \lbr (x^1, \cdots, x^n) \in \Rnn \, | \, x^n \ge 0 \rbr \sgd
\end{equation}
So the formal definition is as follows:
\begin{defn}\cite{LeeISM}
An n-dimensional smooth manifold with boundary is a second-countable
Hausdorff space in which every point has a neighborhood diffeomorphic either to an open subset of $\Rnn$ or to a relatively open subset of $\Rnp$.
\end{defn}

For our definition it is crucial to indicate that, differentiability on $\Rnp$ around the boundary is \uline{defined} by extensions of functions to be differentiated on the boundary to $\Rnn$ such that the extension agrees with the function on the boundary. Because of this definition of differentiability, we see that the tangent space of a manifold with boundary is the full n-dimensional vector space isomorphic to tangent spaces at the interior. Note that the tangent space of $\sm$ at the boundary, $\left. TM \right|_{\bnd} $ is this space we mentioned and includes vectors that are not ``tangent" to $\bnd$, in contrast to $T \bnd$ which only includes vectors tangent to $\bnd$. We refer to members of $\left. TM \right|_{\bnd} $ as \hyperlink{vector field along}{vector fields along} $\bnd$, while members of $ T\bnd $ are vector fields on $\bnd$. 

Now let $n$ be the unique vector field on $\sm$ that is on the boundary orthogonal to the boundary, i.e.
\begin{equation}
\left. \metsubm(n,n) \right|_{\bnd} = 1 \sgd
\end{equation}
In fact we will let this to be true off $\bnd$ as well, actually we will make the assumption that $\bnd$ is a part of a Frobenious foliation, so that we use the formulae of the preceding section. In a neighbourhood of the boundary we will decompose any tensor into components in the direction of $n$ and ones that are orthogonal to that. For the case of a vector field on $\sm$ this is realized as:
\begin{equation}
\xi(x)= \xi^{\perp}(x) n(x) + \xi^{\parallel}(x) \sgd
\end{equation}
We will use a special notation for restriction of vector fields to the boundary. We define
\begin{equation}
\ta \xi \equiv \left. \xi^{\parallel} \right|_{\bnd} \cg \no \xi \equiv \left. \xi^{\perp} n \right|_{\bnd} \sgd
\end{equation}
Generalization to covector fields and tensors is done similarly. Now we show two important results following from the definition:
\begin{prop}
Let $\omega$ be a p-form in $\sm$, then
\begin{equation}
*(\no \omega)= \ta (* \omega) \cg *(\ta \omega)= \no (* \omega)
\end{equation}
where $*$ is Hodge dual in $\sm$ with respect to $\metsubm$ and equations are understood to be holding on $\bnd$.
\end{prop}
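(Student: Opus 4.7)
The plan is to reduce the statement to a pointwise algebraic fact about the Hodge star in an adapted orthonormal frame on the boundary. Since both sides of the identities are tensorial and evaluated on $\bnd$, it suffices to work pointwise at an arbitrary $p \in \bnd$.

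At such a point, choose an orthonormal frame $\{n, e_1, \ldots, e_{d-1}\}$ where $n$ is the unit normal and $\{e_i\}$ span $T_p \bnd$, with dual coframe $\{n^\flat, e^1, \ldots, e^{d-1}\}$. Every $p$-form at $p$ expands uniquely as a sum of wedge products of coframe elements, and the basis monomials split into two disjoint classes: those containing $n^\flat$ as a factor, and those that do not. The natural decomposition $\omega = \omega^\perp + \omega^\parallel$ used throughout the section is exactly this split, so that $\ta \omega = \omega^\parallel\big|_{\bnd}$ and $\no \omega = \omega^\perp\big|_{\bnd}$, the latter being the part with an $n^\flat$ factor.

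Next, I would invoke the standard formula for the Hodge star on an orthonormal basis: for any multi-index $I=(i_1<\cdots<i_p)\subset\{0,1,\ldots,d-1\}$ (writing the normal index as $0$),
\begin{equation}
*(e^{i_1}\wedge \cdots \wedge e^{i_p}) = \pm\, e^{j_1} \wedge \cdots \wedge e^{j_{d-p}},
\end{equation}
where $J=(j_1<\cdots<j_{d-p})$ is the complementary multi-index. The crucial observation is that $0\in I$ if and only if $0\notin J$. Hence $*$ maps basis monomials containing $n^\flat$ to basis monomials not containing $n^\flat$, and vice versa. By linearity this extends to arbitrary forms, giving at the point $p$
\begin{equation}
(*\omega)^\parallel = *(\omega^\perp), \qquad (*\omega)^\perp = *(\omega^\parallel).
\end{equation}

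Restricting these identities to $\bnd$ yields exactly $*(\no\omega) = \ta(*\omega)$ and $*(\ta\omega) = \no(*\omega)$. The main conceptual point — really the only thing one must be careful about — is checking that the ``normal/tangential'' decomposition for forms described in the text coincides with the ``contains $n^\flat$ / does not contain $n^\flat$'' split in the orthonormal coframe; once that is granted, the argument is purely combinatorial and there is no analytic obstacle.
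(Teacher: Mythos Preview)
Your argument is correct. The paper does not actually prove this proposition: it explicitly refers the reader to \cite{Gunter} for the proof and instead illustrates the statement with the example of the three-ball in spherical coordinates. Your pointwise reduction to an adapted orthonormal coframe, together with the observation that the Hodge star sends a basis monomial to its complementary monomial (so that the presence or absence of the $n^\flat$ factor is exchanged), is exactly the standard argument found in Schwarz and is more informative than what the paper provides. There is nothing to add; your identification of the $\ta/\no$ split with the ``contains $n^\flat$ / does not'' split is precisely how the paper defines the decomposition.
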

\noindent
We refer to \cite{Gunter} for the proof, and content ourselves with an example. 
\begin{example}
Consider the three dimensional ball with its boundary $\bar{B}_R^3$ with the Euclidean metric. Using the spherical coordinates let
\begin{equation}
v= v_r dr + v_{\theta} d\theta + v_{\phi} d\phi \sgd
\end{equation}
Note that the vector field orthogonal to the boundary is
\begin{equation}
n=\frac{\pr}{\pr r} \sgd
\end{equation}
So one has
\begin{align}
\ta v = v_{\theta}(R,\theta,\phi) d \theta + v_{\phi} (R,\theta,\phi) d \phi \cg
\no v = v_r(R,\theta,\phi) dr \sgd
\end{align}
Also note
\begin{equation}
* v= \frac{v_{\phi}}{\sin \theta} dr \wedge d\theta - \sin \theta \, v_{\theta} \, dr \wedge d\phi + r^2 \sin \theta \, v_r \, d\theta \wedge d\phi \sgd
\end{equation}
Using these equations one can see
\begin{equation}
* (\ta v) = \frac{v_{\phi}}{\sin \theta} dr \wedge d\theta - \sin \theta \, v_{\theta} \, dr \wedge d\phi = \no (* v) \sgd
\end{equation}
Note the difference between this and \uline{boundary hodge dual} of the pullback of a one-form on $\bnd$
\begin{equation}
*_{\bnd} \lp j^* v \rp = \frac{v_{\theta}}{\sin \theta} d\theta - \sin \theta \, v_{\theta} \, d\phi
\end{equation}
where $j: \bnd \ra \sm$ is the inclusion map.
\end{example}

A second significant result is the commutation of $\ta$ with exterior derivative $d$ and $\no$ with $\dad$: 
\begin{prop}\label{prop:der_tan_comm}
Let $\omega$ be a p-form and $j: \bnd \ra \sm$ inclusion map. Then
\begin{equation}
j^*(\ta (d \omega))=d(j^*(\ta \omega)) \cg j^*(*(\no \dad \omega))=(-1)^{(p+1)(d-p+2)} d(j^*(* \no \omega)) \sgd
\end{equation}
\end{prop}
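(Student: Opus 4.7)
The plan is to prove Part 1 directly from naturality of $d$ under pullback combined with the fact that $j^*$ already annihilates the normal part of any form, and then to reduce Part 2 to Part 1 by using the previous proposition ($*\no = \ta *$) and the defining relation $\dad = \pm\, *\, d\, *$.

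\textbf{Part 1.} The key observation is that $j^*\omega = j^*(\ta\omega)$. Indeed, in a neighborhood of $\bnd$ one can uniquely decompose
\begin{equation*}
\omega = \omega^{\parallel} + n^{\flat} \wedge \iota_n \omega,
\end{equation*}
where $n^\flat = \metsubm(n,\cdot)$. Since $n$ is orthogonal to every vector in $T\bnd$, we have $j^*(n^\flat) = 0$, so the second term pulls back to zero on $\bnd$; hence $j^*\omega = j^*\omega^{\parallel} = j^*(\ta\omega)$. Applying this first to $d\omega$ and then to $\omega$, together with naturality $j^* \circ d = d \circ j^*$, gives
\begin{equation*}
j^*(\ta(d\omega)) \;=\; j^*(d\omega) \;=\; d(j^*\omega) \;=\; d(j^*(\ta\omega)).
\end{equation*}

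\textbf{Part 2.} Starting from the left-hand side and using $*(\no\alpha) = \ta(*\alpha)$ from the previous proposition with $\alpha = \dad\omega$,
\begin{equation*}
j^*(*(\no \dad \omega)) \;=\; j^*(\ta(* \dad \omega)).
\end{equation*}
Next I would substitute the standard Riemannian identity $\dad = (-1)^{d(p+1)+1}\, *d\,*$ on $p$-forms, together with $**\alpha = (-1)^{k(d-k)}\alpha$ on $k$-forms (applied to the $(d-p+1)$-form $d*\omega$), to rewrite
\begin{equation*}
*\dad\omega \;=\; \epsilon_{p,d}\; d(*\omega),
\end{equation*}
for an explicit sign $\epsilon_{p,d}$ depending only on $p$ and $d$. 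Since $*\omega$ is a $(d-p)$-form, Part 1 applied to it gives $j^*(\ta(d(*\omega))) = d(j^*(\ta(*\omega)))$, and one more use of $\ta* = *\no$ on $\bnd$ turns this into $d(j^*(*\no\omega))$. Assembling the signs then yields
\begin{equation*}
j^*(*(\no\dad\omega)) \;=\; \epsilon_{p,d}\; d(j^*(*\no\omega)),
\end{equation*}
and the content of the proposition is that $\epsilon_{p,d} = (-1)^{(p+1)(d-p+2)}$.

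\textbf{Main obstacle.} Conceptually the proof is short: Part 1 is just naturality plus ``$n^\flat$ dies under $j^*$'', and Part 2 pushes Part 1 through the Hodge star. The genuine labor is the sign accounting in Part 2: one must combine the sign in $\dad = (-1)^{d(p+1)+1}*d*$, the sign from $**$ on the $(d-p+1)$-form $d*\omega$, namely $(-1)^{(d-p+1)(p-1)}$, and keep track of the degrees at each step to reproduce precisely $(-1)^{(p+1)(d-p+2)}$. One should also be mindful that the statement is being made on the spatial slice $\sm$ (Riemannian), so the Lorentzian sign convention flagged earlier in the chapter does not enter here; otherwise an extra overall sign would need to be carried.
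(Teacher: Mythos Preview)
The paper does not supply its own proof of this proposition; like the preceding one, it is quoted from Schwarz \cite{Gunter}. Your strategy is exactly the standard one and is sound: Part~1 is immediate from $j^*(n^\flat)=0$ and naturality of $d$, and Part~2 is obtained from Part~1 by conjugating with the Hodge star using the previous proposition $*(\no\,\cdot\,)=\ta(*\,\cdot\,)$.

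One word of caution on the step you flag as ``the genuine labor.'' If you actually carry out the sign bookkeeping with the paper's convention $\dad=(-1)^{dp+d+1}*d*$ and $**=(-1)^{k(d-k)}$ on $k$-forms, you find
\[
*\dad\omega=(-1)^{d(p+1)+1}\,**\,d*\omega=(-1)^{d(p+1)+1+(d-p+1)(p-1)}\,d(*\omega)=(-1)^{p}\,d(*\omega),
\]
so the argument produces $\epsilon_{p,d}=(-1)^{p}$, not $(-1)^{(p+1)(d-p+2)}$. These differ already for $d=3$, $p=2$. This does not reflect a gap in your reasoning; rather, the exponent printed in the thesis appears to be either a transcription slip or a convention mismatch with Schwarz. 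When you write it up, state and verify the sign directly from the conventions actually in force rather than matching the displayed exponent.
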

A difference between $d \ta \omega$ and $d j^* \omega$ is again significant here. Let $\omega$ be a two-form on a three-dimensional manifold with boundary. Then while $d j^* \omega$ is necessarily zero, $d \ta \omega$ need not be.

Another consequence of considering manifolds with boundary is related to inner products of k-forms. Remember-or see appendix \ref{ch:apprie}- that a natural inner product for k-forms is defined via the Hodge star operation as
\begin{equation}
\dket \omega,\eta \dbra = \int_{M} \omega \wedge * \eta
\end{equation}
where $\omega,\eta$ are k-forms. The following shows that $d$ and $\dad$ are not adjoint with respect to this inner product if $M$ has a boundary.
\begin{prop}[Green's Formula]
Let $\omega,\eta$ be (k-1) and k-forms respectively that are square integrable with respect to the inner product $\dket, \dbra$ defined above. Then
\begin{equation}
\dket d\omega, \eta \dbra = \dket \omega, \dad \eta \dbra + \int_{\pr M} \ta \omega \wedge * \no \eta
\end{equation}
\end{prop}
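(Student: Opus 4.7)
The plan is to reduce the statement to Stokes' theorem applied to the $(d-1)$-form $\omega \wedge * \eta$. First I would compute
\[
d( \omega \wedge * \eta ) = d\omega \wedge * \eta + (-1)^{k-1} \omega \wedge d * \eta \sgd
\]
The first term is already of the form $\dket d\omega, \eta \dbra$ after integration, since $d\omega$ and $\eta$ are both $k$-forms. For the second term I would use the definition of the codifferential, namely $\dad = \pm * d *$ on $k$-forms with a degree- and dimension-dependent sign, and the Riemannian identity $* * = \pm \mathbb{1}$, to rewrite $d * \eta$ as $\pm * \dad \eta$ with a sign that is precisely calibrated to match $\dket \omega, \dad \eta \dbra = \int_M \omega \wedge * \dad \eta$. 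Integrating over $M$ and invoking Stokes gives
\[
\dket d\omega, \eta \dbra - \dket \omega, \dad \eta \dbra = \int_{\pr M} j^* (\omega \wedge * \eta) \sgd
\]

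Next I would handle the boundary integral by decomposing the integrand into tangential and normal parts near $\pr M$. Write $\omega = \ta \omega + \no \omega$ and use the proposition $* \ta \omega = \no (* \omega)$, $* \no \omega = \ta (* \omega)$ to decompose $* \eta = * \ta \eta + * \no \eta$, where $* \ta \eta$ is the normal part of $* \eta$ and $* \no \eta$ is its tangential part. Expanding
\[
\omega \wedge * \eta = \ta \omega \wedge * \no \eta + \ta \omega \wedge * \ta \eta + \no \omega \wedge * \no \eta + \no \omega \wedge * \ta \eta \cg
\]
I would argue that each term except the first contains at least one factor of the normal covector on $\pr M$, and hence vanishes under pullback $j^*$: the second and fourth carry a normal factor from $* \ta \eta$, the third from $\no \omega$, while the first term is purely tangential and of top degree on $\pr M$. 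Thus $j^* (\omega \wedge * \eta) = j^* (\ta \omega \wedge * \no \eta)$, which by abuse of notation is the boundary integrand appearing in Green's formula.

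I expect the main obstacle to be bookkeeping of signs: keeping the definition of $\dad$, the identity $* * = \pm \mathbb{1}$, and the Leibniz rule $(-1)^{k-1}$ all consistent so that the cross term coming from $\omega \wedge d * \eta$ lines up with $- \dket \omega, \dad \eta \dbra$ rather than $+ \dket \omega, \dad \eta \dbra$. A clean way to bypass ambiguity is to take the identity $\int_M \alpha \wedge * \dad \eta = \int_M d\alpha \wedge * \eta - \int_{\pr M} j^*(\alpha \wedge * \eta)$ as the definitional boundary-free adjointness on compactly supported $\alpha$, and then reinstate the boundary contribution by the decomposition argument above. The tangential/normal decomposition and its compatibility with $*$ from the preceding proposition is what makes the boundary term take the clean form stated.
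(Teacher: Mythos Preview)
The paper does not actually supply a proof of Green's Formula; it is stated as a proposition drawn from Schwarz \cite{Gunter} with no accompanying argument. So there is no paper-proof to compare against.

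That said, your plan is the standard derivation and is correct: Leibniz on $d(\omega\wedge*\eta)$, Stokes, then the $\ta/\no$ decomposition on $\pr M$ together with the identity $*\ta = \no*$, $*\no = \ta*$ (the preceding proposition in the paper) to isolate $\ta\omega\wedge*\no\eta$ as the only term surviving pullback by $j$. Your remark about sign bookkeeping is the only real hazard, and your suggested workaround---fixing $\dad$ by the boundary-free adjointness on compactly supported forms and then reinstating the boundary---is a clean way to avoid it. One small tightening: in your four-term expansion, the term $\no\omega\wedge*\ta\eta$ actually vanishes already as a form (it contains two normal factors), not merely under $j^*$; but this does not affect the argument.
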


\section{Hodge-Morrey-Friedrichs Decomposition}\label{sec:hmf}
The usual Hodge decomposition is done on a compact manifold. Here we generalize this to a compact manifold with a boundary. The following definitions are important:
\begin{defn}[Harmonic, Dirichlet, Neumann fields]
Let $\Lkform(M)$ be space of square integrable k-forms on $M$, then we define
\begin{align}
H^1\Omega^k_D(M) &= \lbr \omega \in \Lkform(M) \, | \, \ta \omega=0 \rbr \sgc \\
H^1\Omega^k_N(M)&= \lbr \omega \in \Lkform(M) \, | \, \no \omega=0 \rbr \sgc \\
\harmk(M) &= \lbr \omega \in \Lkform(M) \, | \, d\omega=0 \,\, \& \,\, \dad \omega=0 \rbr \sgd
\end{align}
An element of $\harmk(M)$ is called a harmonic field. Elements of its subspaces
\begin{equation}
\harmk_D(M)=H^1\Omega^k_D(M) \cap \harmk(M) \quad \mbox{and} \quad \harmk_N(M)=H^1\Omega^k_N(M) \cap \harmk(M)
\end{equation}
are called Dirichlet and Neumann fields respectively.
\end{defn}
On a compact manifold without boundary a harmonic field and a harmonic form are equivalent but not so on a manifold with boundary. To illustrate this one defines the Dirichlet integral:
\begin{equation}
\dir(\omega, \eta) = \dket d\omega, d\eta \dbra + \dket \dad \omega , \dad \eta \dbra \sgd
\end{equation}
This integral is positive semi-definite inner product, more specifically both  $\dket d\lambda, d\lambda \dbra$ and $\dket \dad \lambda , \dad \lambda \dbra$ are both positive semi-definite. If $\dir( \lambda,\lambda)=0$ this implies $d\lambda=0$ and $\dad \lambda=0$. However one can show
\begin{equation}
\dir(\lambda, \lambda)= \dket \Delta \lambda , \lambda \dbra + \int_{\bnd} \ta \lambda \wedge *\no d \lambda - \int_{\bnd} t \dad \lambda \wedge * \no \lambda \sgc
\end{equation}  
so that only for the case there is no boundary $\Delta \lambda=0$ implies $\dir(\lambda,\lambda)=0$ and thus $d\lambda=0$ and $\dad \lambda=0$.
\begin{thm}[Hodge-Morrey Decomposition]
A k-form that is square integrable on a compact manifold $M$ with boundary can be uniquely decomposed as
\begin{equation}
\omega=d\alpha + \dad \beta + \kappa \sgc
\end{equation}
where $\alpha \in H^1\Omega^{k-1}_D(M)$,$\beta \in H^1\Omega^{k+1}_N(M)$ and $\kappa \in \harmk(M)$.
\end{thm}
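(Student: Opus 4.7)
The plan is to prove the decomposition as an $L^2$-orthogonal direct sum, first showing that the three summands are pairwise orthogonal (which yields uniqueness and sets up a Hilbert-space framework), and then establishing existence by an orthogonal projection argument whose hard content is closedness of the two ranges of differential operators.

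First I would verify the pairwise orthogonality using Green's formula, which is essentially why the boundary conditions on $\alpha$ and $\beta$ were built into the definitions of $H^1\Omega^{k-1}_D$ and $H^1\Omega^{k+1}_N$. Take $\alpha\in H^1\Omega^{k-1}_D$ (so $\ta\alpha=0$), $\beta\in H^1\Omega^{k+1}_N$ (so $\no\beta=0$), and $\kappa\in\harmk(M)$. Green's formula gives
\begin{equation*}
\dket d\alpha,\dad\beta\dbra=\dket\alpha,\dad\dad\beta\dbra+\int_{\bnd}\ta\alpha\wedge *\no\dad\beta=0,
\end{equation*}
since $\dad\dad=0$ and $\ta\alpha=0$; similarly $\dket d\alpha,\kappa\dbra=\dket\alpha,\dad\kappa\dbra+\int_{\bnd}\ta\alpha\wedge *\no\kappa=0$ using $\dad\kappa=0$; and $\dket\kappa,\dad\beta\dbra=\dket d\kappa,\beta\dbra-\int_{\bnd}\ta\kappa\wedge *\no\beta=0$ using $d\kappa=0$ and $\no\beta=0$. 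This pairwise orthogonality already forces uniqueness of any decomposition of the claimed form.

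For existence, let $E^k$ and $cE^k$ denote the $L^2$-closures of $d(H^1\Omega^{k-1}_D)$ and $\dad(H^1\Omega^{k+1}_N)$ respectively. By the Hilbert projection theorem, $\Lkform(M)=E^k\oplus cE^k\oplus W$, where $W=(E^k\oplus cE^k)^\perp$. I would then identify $W$ with $\harmk(M)$: for any $\omega\in W$ and any Dirichlet $\alpha$, the equation $\dket\omega,d\alpha\dbra=0$ reduces via Green's formula to $\dket\dad\omega,\alpha\dbra=0$ (the boundary term vanishes because $\ta\alpha=0$). Testing against $\alpha$ compactly supported in the interior gives $\dad\omega=0$ weakly, and the parallel argument with $\beta\in H^1\Omega^{k+1}_N$ gives $d\omega=0$ weakly. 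Elliptic regularity for the Hodge Laplacian $\Delta=d\dad+\dad d$ then upgrades $\omega$ to an honest harmonic field. The reverse inclusion $\harmk(M)\subset W$ is immediate from the orthogonality computation of the previous paragraph.

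The main obstacle is the closedness step needed to actually realize the projections onto $E^k$ and $cE^k$ as $d\alpha$ and $\dad\beta$ with $\alpha,\beta$ in the stated spaces, rather than merely as $L^2$-limits of such forms. What is required is the assertion that $d(H^1\Omega^{k-1}_D)$ and $\dad(H^1\Omega^{k+1}_N)$ are already $L^2$-closed, which in turn amounts to a Poincaré-type estimate $\|\alpha\|_{L^2}\le C\|d\alpha\|_{L^2}$ on the orthogonal complement of $\ker d$ inside $H^1\Omega^{k-1}_D$, and analogously for $\dad$ on Neumann forms. The standard route is through the Gaffney inequality together with Fredholm theory of the Hodge Laplacian with relative and absolute boundary conditions; once this analytic input is in place, the bounded inverse theorem produces genuine $\alpha\in H^1\Omega^{k-1}_D$ and $\beta\in H^1\Omega^{k+1}_N$ whose derivatives recover the projections, and the orthogonal decomposition above then delivers $\omega=d\alpha+\dad\beta+\kappa$ as claimed.
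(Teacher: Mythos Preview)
The paper does not actually prove this theorem: in Section~\ref{sec:hmf} the Hodge--Morrey decomposition is stated as a known result, with the analytic content deferred to Schwarz's monograph \cite{Gunter}. So there is no in-paper proof to compare against; your proposal is supplying an argument where the paper simply quotes the statement.

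That said, your sketch is the standard route taken in \cite{Gunter} and is essentially correct. The orthogonality step is clean and exactly uses the boundary conditions built into $H^1\Omega^{k-1}_D$ and $H^1\Omega^{k+1}_N$ via Green's formula as the paper states it. Your identification of the orthogonal complement with $\harmk(M)$ is right in spirit, though the regularity bootstrap deserves a bit more care: from $\dket\omega,d\alpha\dbra=0$ for all $\alpha\in H^1\Omega^{k-1}_D$ and $\dket\omega,\dad\beta\dbra=0$ for all $\beta\in H^1\Omega^{k+1}_N$ you get $d\omega=0$ and $\dad\omega=0$ in the weak sense, and then interior elliptic regularity for $\Delta$ gives smoothness in the interior; boundary regularity is a separate (and harder) matter, but for membership in $\harmk(M)$ as defined here only $d\omega=\dad\omega=0$ in $L^2$ is required. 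You have also correctly isolated the genuine analytic core---closedness of the ranges $d(H^1\Omega^{k-1}_D)$ and $\dad(H^1\Omega^{k+1}_N)$---and named the right tools (Gaffney's inequality and the Fredholm theory of $\Delta$ with relative/absolute boundary conditions). In Schwarz's treatment this closedness is obtained precisely through those coercivity estimates, so your outline matches the reference the paper relies on.
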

Note that each piece in the decomposition above is orthogonal to each other with respect to the $\dket,\dbra$ inner product. One can further decompose the harmonic part:
\begin{thm}[Friedrichs decomposition]
A harmonic k-field that is square integrable on a compact manifold $M$ with boundary can be uniquely decomposed as:
\begin{enumerate}
\item $\kappa= \kappa_N + d\lambda$ where $\kappa_N \in \harmk_N(M)$ and $\dad d \lambda=0$,
\item $\kappa= \kappa_D + \dad \lambda$ where $\kappa_D \in \harmk_D(M)$ and $d \dad \lambda=0$.
\end{enumerate}
\end{thm}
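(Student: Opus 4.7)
My strategy is to prove (1) directly by orthogonal projection in the $L^2$ inner product $\dket \cdot,\cdot \dbra$, and then deduce (2) from (1) by Hodge duality. The central observation is that, because $\dad \kappa = 0$, Green's formula gives
$$
\dket \kappa, d\mu \dbra = \dket \dad \kappa, \mu \dbra + \int_{\bnd} \ta \mu \wedge * \no \kappa = \int_{\bnd} \ta \mu \wedge * \no \kappa
$$
for every test form $\mu$, so the $L^2$ pairing of $\kappa$ with any exact form is controlled entirely by the boundary datum $\no \kappa$.

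I would then invoke the standard elliptic-analytic fact that the space of exact $k$-forms $\{\, d\mu : \mu \in H^1\Omega^{k-1}(M) \,\}$ is a closed subspace of $\Lkform(M)$; this is the principal technical obstacle and rests on a Poincar\'e-type inequality on the $L^2$-orthogonal complement of the kernel of $d$. Granting this, I project $\kappa$ orthogonally onto the subspace of exact forms to obtain a representative $d\lambda$, and set $\kappa_N := \kappa - d\lambda$. By construction $\dket \kappa_N, d\mu \dbra = 0$ for every $\mu$, and rewriting this through Green's formula yields
$$
\dket \dad \kappa_N, \mu \dbra + \int_{\bnd} \ta \mu \wedge * \no \kappa_N = 0
$$
for all admissible $\mu$. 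Restricting first to $\mu$ compactly supported in the interior forces $\dad \kappa_N = 0$ in the bulk; then varying $\ta \mu$ freely on $\bnd$ forces $\no \kappa_N = 0$. Combined with the automatic $d \kappa_N = d \kappa = 0$, this places $\kappa_N \in \harmk_N(M)$, and the identity $\dad d\lambda = \dad(\kappa - \kappa_N) = 0$ is then immediate.

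Uniqueness is handled by pairing: if $\kappa_N + d\lambda = \kappa_N' + d\lambda'$ are two such decompositions, then $\kappa_N - \kappa_N' = d(\lambda' - \lambda)$ is simultaneously a Neumann field and exact; applying Green's formula to $\dket \kappa_N - \kappa_N', d(\lambda' - \lambda) \dbra$ and using $\dad(\kappa_N - \kappa_N') = 0$ together with $\no(\kappa_N - \kappa_N') = 0$ annihilates both the bulk and boundary contributions, forcing $\ket \kappa_N - \kappa_N', \kappa_N - \kappa_N' \bra = 0$.

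For part (2), I would simply apply part (1) to the harmonic form $* \kappa$, producing $*\kappa = \tilde\kappa_N + d \tilde \lambda$ with $\tilde \kappa_N \in \harmk_N$ and $\dad d \tilde \lambda = 0$. Applying $*$ once more and using $** = \pm \mathrm{id}$ together with $*d = \pm \dad \, *$ (up to signs depending on the degree), and the earlier proposition $*(\no \omega) = \ta(* \omega)$ (which shows that $*$ carries $\harmk_N$ onto $\harmk_D$), converts this identity into the decomposition $\kappa = \kappa_D + \dad \lambda$ with $d \dad \lambda = 0$, as desired.
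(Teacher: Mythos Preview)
The paper does not give its own proof of the Friedrichs decomposition; the theorem is simply stated and attributed to Schwarz's monograph \cite{Gunter}. Your argument is correct and is in fact the standard one found there: orthogonal projection onto the (closed) space of exact forms, identification of the complement via Green's formula, and then transport to the Dirichlet version by Hodge $*$. There is nothing to contrast against the paper itself.

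Two small points worth tightening. First, after projecting in $L^2$ you obtain $\dad \kappa_N = 0$ and $\no \kappa_N = 0$ only in the weak sense; to speak of $\no \kappa_N$ pointwise you need a regularity step (elliptic regularity for the associated boundary value problem, or the trace theorems in Schwarz) which you have folded into the ``standard elliptic-analytic fact''. Second, in the duality step be careful that $*$ sends $k$-forms to $(d-k)$-forms, so $*\kappa$ is a harmonic $(d-k)$-field and the resulting $\kappa_D$ lives in $\harmk_D(M)$ as claimed, but the $\lambda$ appearing in part (2) is $*\tilde\lambda$ up to sign and has a different degree from the $\lambda$ in part (1); your write-up already signals this but it is worth stating explicitly.
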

\noindent
Using these two theorems one arrives at a third result:
\begin{cor}
A k-form that is square integrable on a compact manifold $M$ with boundary can be uniquely decomposed as:
\begin{enumerate}
\item $\omega= d \phi + \psi$ where $\dad \psi=0$ and $\no \psi=0$,
\item $\omega= \dad \rho + \sigma$ where $d \sigma=0$ and $\ta \sigma=0$.
\end{enumerate}
\end{cor}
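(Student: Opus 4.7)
The natural plan is to derive each statement by combining the Hodge--Morrey decomposition with the appropriate half of the Friedrichs decomposition, and then regrouping the pieces so that the boundary condition and co-closedness (resp.\ closedness) fall out.

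For part 1, start from Hodge--Morrey: $\omega = d\alpha + \dad\beta + \kappa$ with $\alpha \in H^1\Omega^{k-1}_D$, $\beta \in H^1\Omega^{k+1}_N$ and $\kappa \in \harmk$. Apply the first version of the Friedrichs decomposition to split $\kappa = \kappa_N + d\lambda$ with $\kappa_N \in \harmk_N$ and $\dad d\lambda = 0$. Setting $\phi := \alpha + \lambda$ and $\psi := \dad\beta + \kappa_N$ gives $\omega = d\phi + \psi$. Then $\dad\psi = \dad^2\beta + \dad\kappa_N = 0$ since $\kappa_N$ is harmonic. For the boundary condition, $\no\kappa_N = 0$ by definition, and for $\no(\dad\beta)$ I would invoke Proposition~\ref{prop:der_tan_comm}: since $\no\beta = 0$, the identity $j^*(*(\no\dad\beta)) = \pm\, d\bigl(j^*(*\no\beta)\bigr)$ forces $\no(\dad\beta) = 0$ on $\bnd$, and hence $\no\psi = 0$. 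Part 2 is entirely analogous, applied symmetrically: use the second Friedrichs splitting $\kappa = \kappa_D + \dad\lambda'$, set $\rho := \beta + \lambda'$ and $\sigma := d\alpha + \kappa_D$; then $d\sigma = 0$ automatically, and $\ta\sigma = 0$ follows from $\ta\alpha = 0$ via the companion identity $j^*(\ta d\alpha) = d(j^*\ta\alpha) = 0$ together with $\ta\kappa_D = 0$.

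For uniqueness in part 1, suppose $d\phi_1 + \psi_1 = d\phi_2 + \psi_2$ with both pairs satisfying $\dad\psi_i = 0$ and $\no\psi_i = 0$. Set $\eta := \psi_1 - \psi_2 = d(\phi_2 - \phi_1)$. Using Green's formula,
\begin{equation}
\dket \eta, \eta \dbra = \dket d(\phi_2 - \phi_1), \eta \dbra = \dket \phi_2 - \phi_1, \dad\eta \dbra + \int_{\bnd} \ta(\phi_2 - \phi_1) \wedge * \no\eta \sgd
\end{equation}
Both $\dad\eta = 0$ and $\no\eta = 0$, so the right-hand side vanishes and $\eta = 0$; uniqueness of part 2 is obtained by the same argument with the roles of $d,\ta$ and $\dad,\no$ exchanged.

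The only delicate step is the verification that the boundary conditions survive the regrouping, i.e.\ that $\no(\dad\beta) = 0$ whenever $\no\beta = 0$ (and dually for $\ta(d\alpha)$). This is the point where Proposition~\ref{prop:der_tan_comm} is essential — everything else is either a direct invocation of the two prior theorems or a short Green's-formula argument. Once these boundary identities are in hand, orthogonality in $\dket\,\cdot\,,\,\cdot\,\dbra$ of the two summands also follows immediately from Green's formula, which is a convenient bonus though not strictly required for the statement.
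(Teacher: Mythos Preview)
Your proposal is correct and follows exactly the route the paper indicates: the corollary is stated there simply as arising ``using these two theorems'' (Hodge--Morrey and Friedrichs), with no further details given, and your argument is precisely the natural way to combine them, including the Green's-formula uniqueness step and the use of Proposition~\ref{prop:der_tan_comm} to carry the boundary conditions through the regrouping.
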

\noindent
Using this decomposition one can show a version of Hodge theorem so that
\begin{equation}
\harmk_N(M) \cong H_p(M) \gag \harmk_D(M) \cong H_{d-p}(M) \sgc
\end{equation} 
Thus for a manifold with trivial homology, e.g. a ball $\harmk_N(M)=\harmk_D(M)=\lbr 0 \rbr$. 

With the use of \hmf~decomposition Dirichlet and Neumann problems can be solved. We will express them for the cases of trivial homology where there are no Dirichlet and Neumann fields.
\begin{thm}[Dirichlet (Neumann) problem] \label{thm:dir}
On a manifold with boundary and with trivial homology, given $\chi,\rho$ and $\psi (\phi)$, solution to the boundary problem
\begin{equation}
d\omega= \chi \cg \dad \omega = \rho \cg \ta \omega = \ta \psi \, \lp \no \omega = \no \phi \rp
\end{equation}
exists if and only if
\begin{equation}
d\chi=0 \cg \dad \rho=0 \cg \ta \chi = \ta d\psi \, \lp \no \rho = \no \dad \phi \rp \sgc
\end{equation}
and unique.
\end{thm}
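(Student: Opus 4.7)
I would treat the Dirichlet version only; the Neumann statement then follows by applying the Hodge star, which exchanges $(d,\dad,\ta)$ with $(\pm\dad,\pm d,\no)$ and preserves the structure of the problem.

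\emph{Necessity.} Applying $d$ to the first equation and $\dad$ to the second gives $d\chi = d^2\omega = 0$ and $\dad\rho = (\dad)^2\omega = 0$. For the boundary compatibility I would use Proposition~\ref{prop:der_tan_comm} applied to $\chi = d\omega$:
\begin{equation*}
j^*(\ta\chi) \;=\; j^*(\ta d\omega) \;=\; d\,j^*(\ta\omega) \;=\; d\,j^*(\ta\psi) \;=\; j^*(\ta d\psi),
\end{equation*}
which forces $\ta\chi = \ta d\psi$.

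\emph{Uniqueness.} The difference $\eta := \omega_1 - \omega_2$ of two solutions satisfies $d\eta = \dad\eta = 0$ and $\ta\eta = 0$, so $\eta \in \harmk_D(M)$. The isomorphism $\harmk_D(M)\cong H_{d-p}(M)$ stated after Friedrichs' theorem, combined with the trivial-homology hypothesis, forces $\eta = 0$.

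\emph{Existence.} I would first homogenize the boundary datum by writing $\omega = \tilde\omega + \psi$. The system for $\tilde\omega$ becomes $d\tilde\omega = \tilde\chi := \chi - d\psi$, $\dad\tilde\omega = \tilde\rho := \rho - \dad\psi$, and $\ta\tilde\omega = 0$, with the reduced compatibility conditions $d\tilde\chi = 0$, $\dad\tilde\rho = 0$, and (crucially) $\ta\tilde\chi = \ta\chi - \ta d\psi = 0$. I would then seek $\tilde\omega$ in Hodge--Morrey--Friedrichs form
\begin{equation*}
\tilde\omega = dA + \dad B,\qquad A \in H^1\Omega^{p-1}_D(M),\ B \in H^1\Omega^{p+1}_N(M),
\end{equation*}
where the harmonic-field remainder has been dropped using $\harmk_N = \harmk_D = 0$. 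Applying $d$ and $\dad$ to this ansatz decouples the problem into two Laplace-type boundary-value problems, $d\dad B = \tilde\chi$ on $H^1\Omega^{p+1}_N(M)$ and $\dad d A = \tilde\rho$ on $H^1\Omega^{p-1}_D(M)$, whose solvability under the hypotheses $d\tilde\chi = 0$, $\dad\tilde\rho = 0$ is a standard consequence of the Hodge--Morrey theorem together with the vanishing of the Neumann and Dirichlet harmonic obstructions.

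\emph{Main obstacle.} The delicate point is verifying the tangential-trace boundary condition $\ta\tilde\omega = 0$ for the ansatz: the piece $\ta\,dA = 0$ follows from $\ta A = 0$ via Proposition~\ref{prop:der_tan_comm}, but $\ta\,\dad B = 0$ is not implied by the Neumann condition $\no B = 0$ alone. Arranging this requires exploiting the residual gauge freedom in $B$ (modulo forms in $\ker\dad$) together with the auxiliary compatibility condition $\ta\tilde\chi = 0$, which, I anticipate, is exactly what permits the required tangential-trace cancellation. The detailed execution of this step would follow the analysis of mixed boundary conditions for the Hodge Laplacian carried out in \cite{Gunter}.
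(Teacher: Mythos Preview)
The paper does not actually prove this theorem; it is stated without proof in Section~\ref{sec:hmf} as a result imported from Schwarz~\cite{Gunter}, and the section moves on immediately afterward. So there is no ``paper's own proof'' to compare against.

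That said, your sketch is the standard route and is essentially what one finds in Schwarz. Your necessity and uniqueness arguments are correct and complete as written. For existence, the homogenization step and the HMF ansatz are the right moves, and you have correctly isolated the genuine difficulty: controlling $\ta\,\dad B$. Your anticipation that the compatibility condition $\ta\tilde\chi = 0$ is exactly what closes this gap is right in spirit, but the actual argument in Schwarz does not proceed by ``gauge-fixing'' $B$; rather, one solves the two decoupled Poisson-type problems for $A$ and $B$ with appropriate boundary conditions built into the function spaces, and then checks a posteriori that the remainder $\tilde\omega - dA - \dad B$ is a Dirichlet harmonic field, hence zero by trivial homology. The point is that the boundary condition $\ta\tilde\omega = 0$ is not imposed term-by-term on the ansatz but recovered at the end from the vanishing of $\harmk_D$. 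If you want to carry this out in full, consult Schwarz, Theorem~3.2.5, where this bookkeeping is done carefully.
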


\section{Homogeneous Spaces with a Geometry}\label{sec:homsp}
A Lie group is a smooth manifold, thus one can define a Riemannian metric on it. Metrics that connect to the group structure will be useful to study, and will appear in our construction. In the following we collect some definitions and results from the book of Arvanitogeorgos \cite{arvan} we will be using later on when we discuss the structure of the space of vacua. For a discussion of Lie groups see Chapter \ref{ch:applie} in the appendix.

\subsection{Geometry on a Lie Group}
\begin{defn}
A Riemannian metric $g$ on a Lie group $\G$ is called left invariant if
\begin{equation}
g(x)(u,v) = g \lp L_a x \rp \lp (L_a)_* u,(L_a)_* v \rp
\end{equation}
for all $x,a \in \G$, i.e. if $L_a$ is an isometry of the metric $\forall a \in \G$.
\end{defn} 

\begin{prop}
There is a one-to-one correspondence between left-invariant metrics on a Lie group $\G$ and scalar products on its Lie algebra $\g$.
\end{prop}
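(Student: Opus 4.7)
The plan is to exhibit an explicit bijection between the two sets by using the fact that left translations $L_a: \G \to \G$ are diffeomorphisms whose differential at the identity identifies $T_e\G$ with $\g$, and at every other point identifies $T_x\G$ with $T_{ax}\G$. The natural forward map sends a left-invariant metric $g$ on $\G$ to its value at the identity, namely the scalar product $\langle u, v \rangle := g(e)(u,v)$ for $u,v \in T_e \G \cong \g$. This is well-defined and visibly a positive-definite symmetric bilinear form on $\g$.

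For the inverse direction, given a scalar product $\langle \cdot , \cdot \rangle$ on $\g$, I would define for each $x \in \G$ and $u,v \in T_x \G$
\begin{equation}
g(x)(u,v) \; := \; \langle (L_{x^{-1}})_* u, \, (L_{x^{-1}})_* v \rangle \sgd
\end{equation}
The first step is to check that $g$ is a smooth tensor field: since $(a,x) \mapsto L_a x$ is smooth, so is the map $(x, u) \mapsto (L_{x^{-1}})_* u$ on $T\G$, and bilinearity in $(u,v)$ combined with smoothness of $\langle \cdot, \cdot \rangle$ then gives smoothness of $g$. Positive-definiteness and symmetry at each $x$ transfer immediately from $\g$ via the linear isomorphism $(L_{x^{-1}})_*$.

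Next I would verify left-invariance. For any $a, x \in \G$ and $u, v \in T_x \G$, the chain rule gives $(L_{(ax)^{-1}})_* (L_a)_* = (L_{x^{-1}} \circ L_{a^{-1}} \circ L_a)_* = (L_{x^{-1}})_*$, so
\begin{equation}
g(ax)\bigl((L_a)_* u, (L_a)_* v\bigr) \; = \; \langle (L_{x^{-1}})_* u, (L_{x^{-1}})_* v\rangle \; = \; g(x)(u,v) \sgc
\end{equation}
which is exactly the defining left-invariance condition.

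Finally I would show that the two maps are mutually inverse. Starting with a scalar product on $\g$, restricting the constructed $g$ back to $T_e \G$ yields $g(e)(u,v) = \langle (L_e)_* u, (L_e)_* v \rangle = \langle u, v \rangle$ since $L_e = \mathrm{id}$. Conversely, starting with a left-invariant metric $g$ and then reconstructing from its restriction at $e$, left-invariance gives $g(x)(u,v) = g(e)((L_{x^{-1}})_* u, (L_{x^{-1}})_* v)$, which is exactly the reconstruction formula. I do not expect any serious obstacle here; the only mildly technical point is the smoothness verification in the reconstruction step, which is a direct consequence of the smoothness of the group multiplication and inversion.
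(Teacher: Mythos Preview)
Your proof is correct and follows essentially the same idea as the paper's: both use that left translation $L_a$ is a diffeomorphism to transport the metric between $T_e\G$ and $T_x\G$, so a left-invariant metric is completely determined by its value at the identity. The paper phrases this via left-invariant vector fields forming a global frame and observing that $g$ evaluated on such fields is constant, whereas you construct the inverse map explicitly with $(L_{x^{-1}})_*$ and verify smoothness and left-invariance directly; your version is more detailed but the underlying argument is the same.
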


\begin{proof}
Let $g$ be a left invariant metric and $X,Y \in \g$ i.e. $X,Y$ are left-invariant vector fields. Let $e$ be the unit element of $\G$.\expl{After this I switch to $dL$ for a pushforward, otherwise i need to write things like ${{L_a}_*}_e$.} Then
\begin{equation}
g_{L_a(x)} \lp X_a, Y_a \rp= g_{L_a(x)} \lp {dL_a}_e X_e,{dL_a}_e Y_e \rp =g_e(X_e,Y_e) \sgc
\end{equation}
i.e. there is a one-to-one correspondence between metric evaluated at any point on left-invariant vector fields and a scalar product on $\g$. Since left-invariant vector fields constitute a global frame, this completes the proof. \expl{explain more?}
\end{proof}

A metric on $\G$ that is both left-invariant and right-invariant is called bi-invariant and there is a one-to-one correspondence between bi-invariant metrics on $\G$ and $\Ad$- invariant scalar products on $\g$, see subsection \ref{subsec:adrep} for defintion of adjoint representation and $\Ad$-invariance.

If $\G$ is a Lie group that is compact and semi-simple, the \hyperlink{killing form}{Killing form} provides a bi-invariant Riemann metric.

If $\G$ is a Lie group with a bi-invariant metric $g$, $\nabla$ covariant connection with respect to $g$ and $X,Y,Z \in \g$ then:
\begin{enumerate}
\item $\nabla_X Y = \frac{1}{2} \lb X,Y \rb$;
\item Geodesics of $\G$ starting at $e$, the identity element, are the one-parameter subgroups of $\G$;
\item $R(X,Y)Z= \frac{1}{4} \lb \lb X,Y \rb Z \rb$;
\item If $\G$ is also semi-simple and compact, then $Ric(X,Y)=-\frac{1}{4} B(X,Y)$, where B is the Killing form.
\end{enumerate}

\subsection{Geometry of Homogeneous Spaces}
In the previous subsection we have covered Lie Groups with a metric on, and have seen how a left invariant or bi-invariant metric reflects the group structure. Now we will consider homogeneous spaces, which we have defined in Section \ref{sec:apphom}, and the special case where one can define metrics that has invariance properties on them. 
\begin{defn}
A Riemannian homogeneous space is a Riemannian manifold $(M,g)$ on which its isometry group $\iso(M,g)$ acts transitively.
\end{defn}
\hyperlink{isot}{Isotropy group} of a given point on this manifold is a compact subgroup of $\iso(M,g)$. $\iso(M,g)$ is compact if and only if $M$ is compact.
\begin{defn}
Let $M$ be a homogeneous space with the action of a Lie group $\G$, and let $K$ be the isotropy subgroup. Recall, or see Section \ref{sec:apphom}, that $M$ is diffeomorphic to $\G/K$. $M$ is called a reductive homogeneous space if there exists a subspace $\alm$ of $\g$ such that $\g=\alm \oplus \alk$ and $\Ad(k) \alm \subset \alm$ for all $k \in K$.
\end{defn} 

If $M$ is the reductive homogeneous space described above, then
\begin{equation}
\alm \cong T_o \lp \G/K \rp .
\end{equation}
A distinguishing property of reductive spaces manifests itself in isotropy representations: Even though homogeneous spaces are not Lie groups, one can define a representation on them by virtue of the group action. For the case of reductive homogeneous spaces, this representation will be decomposable. 
\begin{defn}
Let $\tau_a : \G/K \ra \G/K$ be a diffeomorphism such that
\begin{equation}
\tau_a \lp \lb g \rb \rp = \lb a.g \rb \sgc
\end{equation}
where $.$ is the group multiplication of $\G$ and $\lb g \rb$ means the equivalence class of $g$ in $\G/K$. Then isotropy representation of $\G/K$ is the map $\Ad^{\G/K}: K \ra \GL\lp T_o \G/K \rp$ such that
\begin{equation}
\Ad^{\G/K}(k)=\lp d\tau_k \rp_o \sgd
\end{equation}
\end{defn}
For the case of reductive homogeneous space one has $\alm=T_o \G/K$ and the following proposition.
\begin{prop}
Let $M \cong \G/K$ be a reductive homogeneous space and $k \in K$,$X \in \alk$,$Y \in \alm$. Then
\begin{equation}
\Ad^{\G}(k)\lp X + Y \rp = \Ad^K(k) X + \Ad^{\G/K}(k)Y \sgd
\end{equation}
\end{prop}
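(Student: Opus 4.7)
The statement decomposes cleanly once one observes that $\Ad^{\G}(k)$ is linear, so it suffices to treat the pieces $X \in \alk$ and $Y \in \alm$ separately and show $\Ad^{\G}(k)X = \Ad^K(k)X$ and $\Ad^{\G}(k)Y = \Ad^{\G/K}(k)Y$ individually. The reductive hypothesis $\Ad(k)\alm \subset \alm$ guarantees that the second statement makes sense: the right-hand side lives in $T_o(\G/K) \cong \alm$, so we need the left-hand side to remain in $\alm$ as well.

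For the $\alk$ piece, the plan is to note that conjugation by $k \in K$ restricts from an automorphism $C_k^{\G}: \G \to \G$ to the automorphism $C_k^K : K \to K$, because $K$ is a subgroup. Differentiating at the identity, the restriction of $dC_k^{\G}|_e = \Ad^{\G}(k)$ to $T_eK = \alk$ equals $dC_k^K|_e = \Ad^K(k)$. This takes care of the first summand essentially by definition.

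For the $\alm$ piece, the key observation is the intertwining identity
\begin{equation}
\tau_k \circ \pi = \pi \circ C_k^{\G}\sgc
\end{equation}
where $\pi: \G \to \G/K$ is the natural projection. Indeed, $\pi(kgk^{-1}) = [kgk^{-1}] = [kg] = \tau_k([g])$ since $k^{-1} \in K$. Differentiating this relation at $e \in \G$ and using that $d\pi_e$ sends $\g \to T_o(\G/K)$ with kernel $\alk$ (so it restricts to an isomorphism $\alm \xrightarrow{\sim} T_o(\G/K)$), one obtains
\begin{equation}
d\pi_e \circ \Ad^{\G}(k) = \Ad^{\G/K}(k) \circ d\pi_e\sgd
\end{equation}
Applying both sides to $Y \in \alm$ and invoking the reductive property $\Ad^{\G}(k)Y \in \alm$ allows us to cancel the isomorphism $d\pi_e|_\alm$ from both sides, yielding $\Ad^{\G}(k)Y = \Ad^{\G/K}(k)Y$ under the identification $\alm \cong T_o(\G/K)$.

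The main conceptual point—and the only place where the reductivity enters essentially—is the last step: without $\Ad^{\G}(k)\alm \subset \alm$, the equation $d\pi_e \circ \Ad^{\G}(k) = \Ad^{\G/K}(k) \circ d\pi_e$ would still hold, but the right-hand side of the proposition would no longer be well-defined as a statement inside $\g$ with its fixed splitting $\alk \oplus \alm$. Everything else is formal bookkeeping with differentials of commuting diagrams, so I expect no real obstacle beyond making the identification $T_o(\G/K) \cong \alm$ explicit and being careful that $d\pi_e$ kills $\alk$ while restricting to an isomorphism on $\alm$.
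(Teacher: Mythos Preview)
Your proof is correct and is precisely the standard argument one finds in Arvanitogeorgos (the paper's cited source for this section). The paper itself does not supply a proof of this proposition; it simply states it as a result collected from \cite{arvan}. So there is nothing to compare against, but your treatment of both summands is sound: the $\alk$ piece follows from the restriction of conjugation to the subgroup, and the $\alm$ piece from differentiating the intertwining relation $\tau_k \circ \pi = \pi \circ C_k$ at $e$, with reductivity used exactly where you say---to ensure $\Ad^{\G}(k)Y$ stays in $\alm$ so that $d\pi_e|_{\alm}$ can be inverted.
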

Let $M \cong \G/K$ be a homogeneous space. A metric $g$ on $M$ is called $\G$-invariant if $\tau_a$ defined above is an isometry for it for all $a \in \G$.
\begin{prop}\label{prop:RieHom}
Let $M \cong \G/K$ be a homogeneous space. Then there is a one-to-one correspondence between
\begin{enumerate}
\item $\G$ invariant Riemannian metrics $g$ on $\G/K$.
\item $\Ad^{\G/K}$-invariant scalar products on $\alm$.
\end{enumerate}
If in addition $K$ is compact and $\alm=\alk^{\perp}$ with respect to the negative of the Killing form $B$ of $\G$, following is also equivalent to above: 
\begin{enumerate}[resume]
\item $\Ad^{\G/K}$-equivariant and $B$-symmetric operators $A: \alm \ra \alm$ such that $\ket X,Y \bra = B \lp A X, Y \rp$.
\end{enumerate}
\end{prop}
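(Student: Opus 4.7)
The plan is to establish $(1) \Leftrightarrow (2)$ by transporting the tensor between the origin $o = [e] \in \G/K$ and arbitrary points via $\tau_a$, and then to deduce $(2) \Leftrightarrow (3)$ as a purely algebraic reformulation using the non-degeneracy of $B$ on $\alm$.

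For $(2) \Rightarrow (1)$, given an $\Ad^{\G/K}$-invariant scalar product $\ket \cdot,\cdot \bra$ on $\alm \cong T_o(\G/K)$, I would define
\begin{equation}
g_{[a]}(X,Y) := \ket (d\tau_{a^{-1}})_{[a]} X,\, (d\tau_{a^{-1}})_{[a]} Y \bra \sgd
\end{equation}
The only nontrivial check is independence of the representative: if $a' = ak$ with $k \in K$, the extra factor $(d\tau_{k^{-1}})_o = \Ad^{\G/K}(k^{-1})$ must preserve the scalar product, which is precisely the invariance hypothesis. $\G$-invariance of $g$ then follows from $\tau_a \circ \tau_b = \tau_{ab}$. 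The converse $(1) \Rightarrow (2)$ is simply restriction to $T_o(\G/K) = \alm$: since $\tau_k$ fixes $o$ with $(d\tau_k)_o = \Ad^{\G/K}(k)$, the isometry property for $k \in K$ translates directly into $\Ad^{\G/K}$-invariance.

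For $(2) \Leftrightarrow (3)$, the additional hypotheses guarantee $B|_{\alm \times \alm}$ is non-degenerate, so any symmetric bilinear form on $\alm$ admits a unique representation $\ket X, Y \bra = B(AX, Y)$ with $A$ a $B$-symmetric operator. To convert $\Ad^{\G/K}$-invariance of the scalar product into $\Ad^{\G/K}$-equivariance of $A$, I would exploit the $\Ad^{\G}$-invariance of $B$ via
\begin{align*}
B\bigl(A\, \Ad^{\G/K}(k) X,\, \Ad^{\G/K}(k) Y\bigr) &= \ket \Ad^{\G/K}(k) X,\, \Ad^{\G/K}(k) Y \bra \\
&= \ket X, Y \bra \;=\; B(AX, Y) \\
&= B\bigl(\Ad^{\G/K}(k) AX,\, \Ad^{\G/K}(k) Y\bigr).
\end{align*}
Non-degeneracy of $B$ on $\alm$ then forces $A \circ \Ad^{\G/K}(k) = \Ad^{\G/K}(k) \circ A$; the reverse direction reads this identity backwards.

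The main obstacle I anticipate lies in the proper handling of the hypotheses for $(3)$: compactness of $K$ together with the reductive splitting $\alm = \alk^{\perp}$ (with respect to $-B$) is needed both to ensure $\alm$ is $\Ad(K)$-stable and to make $B|_{\alm}$ non-degenerate, so that the operator $A$ exists and is unique. A more delicate point is that the correspondence with a \emph{Riemannian} (positive-definite) metric requires $A$ to be positive with respect to a reference inner product built from $-B$; this is not automatic from equivariance alone and must be verified separately.
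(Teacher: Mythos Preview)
Your proof is correct and follows the standard argument (as in Arvanitogeorgos' book, which the paper cites). Note, however, that the paper does not actually supply a proof of this proposition: it is stated as a quoted result from \cite{arvan} and left unproved, so there is no ``paper's own proof'' to compare against. Your handling of both equivalences is the expected one, and your closing remark about positivity of $A$ (needed for the metric to be genuinely Riemannian rather than merely pseudo-Riemannian) is a valid observation that the stated proposition glosses over.
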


\chapter{Natural Form of General Relativity}\label{ch:natgr}

Having developed the tools to apply the Manton approximation to \gr, we move onto using these tools. We first study general relativity on a foliated spacetime, and recover the Arnowitt-Deser-Misner (\adm) action in Section \ref{sec:splitting} and write down the slicing tensors we have defined in a coordinate basis. In Section \ref{sec:grconstr} we study the \adm~action as a constrained system, and discuss the constraints of \gr. As we will have seen in Section  \ref{sec:splitting} the splitting of space and time will not be enough to bring \gr~into the natural form, since we will still have space and time mixing. To get to the natural form we will choose Gaussian Normal Coordinates in Section \ref{sec:gnc}. In the next chapter we will see that the space of vacua will be set of diffeomorphisms of a reference solution: the remaining gauge transformations of the \gnc. However the remaining gauge transformations of \gnc~will have a complicated structure that we explore in Section \ref{sec:boosts}.

\section{Time and Space Split form of General Relativity} \label{sec:splitting}

General relativity on a spacetime that is a pseudo-Riemannian manifold $(\amb,\metamb)$ of dimension $\dimamb$, is most frequently described by the Einstein-Hilbert action given by
\begin{equation}
S_{EH}= \int_{\amb} \vol_{\metamb} \, \tilde{R}
\end{equation}
where $\vol_{\metamb}$ is the \hyperlink{rievol}{Riemannian volume form} for metric $\metamb$, see Appendix \ref{ch:apprie} for its definition, and $\tilde{R}$ Ricci scalar. Equations of motion belonging to this action give vacuum Einstein equations. Now we perform the foliation we have learned from Section \ref{sec:Fol} for our spacetime. 

By using equation \eqref{eq:ricsc} with the Lorentzian signature we write down the Ricci Scalar of the spacetime as
\begin{equation}
\tilde{R} = R + \ext^2 + \ext^{ij} \ext_{ij} + 2 \ez(\ext)- 2 (\consubm_i \acc)^i - 2 \acc_i \acc^i \sgd
\end{equation}
where $\dimsubm=\dimamb-1$ is the dimension of the spatial slices. Some of these terms are divergences and will yield a boundary term and not effect the equations of motion. To extract out those terms, we should know how they look like in the split form:
\begin{align}
\conamb_{\mu} v^{\mu} &= \lp \conamb_{e_{\mu}} v \rp^{\mu}  \nonumber \\
&= \eta^{\mu}_{\alpha} \omega^{\alpha} \lp \conamb_{e_{\mu}} v \rp \nonumber  \\
&= - \wz( \conambz v ) + \wi ( \conambi v) \sgd
\end{align}
Using this one can show
\begin{equation}
\conamb_{\mu} ( \ext \ez )^{\mu}= \ez(\ext) + \ext^2
\end{equation}
and
\begin{equation}
\conamb_{\mu} ( \acc^j e_j)^{\mu} = \pm \lp \pm (\consubm_i \acc)^i - \acc_i \acc^i \rp \sgd
\end{equation}
Suppressing these boundary terms the Einstein Hilbert action takes the form
\begin{equation}
S_{EH}= \int_{\amb} \lp R - \ext^2 + \ext^{ij} \ext_{ij} \rp \wz \wedge \epsilon_{\metsubm}  + S_{\pr \amb} \sgc
\end{equation}
where $\metsubm$ is the induced metric on the spatial slice and we used equation \eqref{eq:appvol}. This action is known in the literature as the \adm~action \cite{adm}. Now we rewrite our basis in terms of coordinate basis to get the conventional ADM picture. For this we let (\cite{mtw})
\begin{align}
\ez &= \frac{1}{N} \lp \frac{\pr}{\pr t} - N^i \frac{\pr}{\pr x^i} \rp, \\
\ei &= \frac{\pr}{\pr x^i}, \\
\wz &= - N dt, \\
\wi &= dx^i + N^i dt \sgc
\end{align} 
so that the metric becomes
\begin{equation}
\metamb = (N^k N_k-N^2) dt^2 + 2 N_i dx^i dt + h_{ij} dx^i dx^j \sgd
\end{equation}
From the commutation relations of the basis vector fields, we read off the \hyperlink{slicing tensors}{slicing tensors}:
\begin{align}
\lb \ei, \ej \rb &= 0 \sgc \\
\lb \ez, \ei \rb &= - \frac{\pr_i N}{N} \ez + \frac{\pr_i N^j}{N} \ej \sgc
\end{align}
so that
\begin{equation}
\Gamma^k_{[ij]}=0 \cg \acc_i = -\frac{\pr_i N}{N} \cg \twoconn_i^j - K_i^j= \frac{\pr_i N^j}{N} \sgd
\end{equation}
Noting that 
\begin{align}
2 \twoconn_{(ij)} &= \metamb( \conambz \ei, \ej ) + \metamb( \conambz \ej, \ei) \\
&= \ez(\metsubm_{ij}) \sgc
\end{align}
one gets for the extrinsic curvature
\begin{equation}\label{eq:extincoo}
\ext_{ij}= \frac{1}{2 N} \dot{\metsubm}_{ij} - \frac{1}{N} \consubm_{(i} N_{j)}
\end{equation}
or more compactly written
\begin{equation}
\ext_{ij}= \frac{1}{2} \lp \lie_{\frac{1}{N} \lp \frac{\pr}{\pr t} - N^i \frac{\pr}{\pr x^i} \rp} \metamb \rp_{ij}=\frac{1}{2} \lp \lie_{\ez} \metamb \rp_{ij} \sgd
\end{equation}
From this expression it is clear that the extrinsic curvature of the spatial integral manifolds measures how much the metric on them changes as you move in the direction orthogonal to them. To have more sense of what is going on, we illustrate these with an example.
\begin{example}
Consider the spatially flat \frw~space. The metric is:
\begin{equation}
ds^2= -dt^2 + a^2(t) (dx^i)^2 \sgd
\end{equation}
Let
\begin{equation}
\ez= \frac{\pr}{\pr t} \cg \ei= \frac{\pr}{\pr x^i} \sgc
\end{equation}
then
\begin{equation}
\lb \ez, \ei \rb=0 \sgc
\end{equation}
so that
\begin{equation}
\acc_i=0 \cg \twoconn_i^j=\ext_i^j \sgd
\end{equation}
Thus
\begin{equation}
K_{ij}= \frac{1}{2} \ez( a^2 \delta_{ij} ) = \dot{a} a \delta_{ij} \sgd
\end{equation}

\end{example}

\section{General Relativity as a Constrained System}\label{sec:grconstr}
In the previous section we have seen \eh~action in the time splitted form in a coordinate basis reduces to the \adm~action
\begin{equation}
S_{ADM}= \int_{\amb} \lp R(\metsubm) - K^2 + K^{ij} K_{ij} \rp \wz \wedge \epsilon_{\metsubm} \sgc 
\end{equation}
plus some surface terms that do not affect the equations of motion, where
\begin{equation}
\ext_{ij}= \frac{1}{2 N} \dot{\metsubm}_{ij} - \frac{1}{N} \consubm_{(i} N_{j)} \sgd
\end{equation}
One can think of this shift from \eh~action to \adm~action as changing variables of the theory from $\metamb$ to the new set $\metsubm,N,N_i$. Writing $\wz$ explicitly, and letting the spacetime manifold to be $\amb= \R \times \subm$ the action in terms these quantities is
\begin{equation}
S_{ADM}= \int dt \int_{\amb} N  \lp  R(\metsubm) - K^2 + K^{ij} K_{ij} \rp \epsilon_{\metsubm} \sgd
\end{equation}
The quantities $N,N_i$ are known in the literature as lapse and the shift. In the coordinate basis $\{ t,x^i \}$ we have introduced above, lapse is simply a measure of how much further the next spatial slice is, and shift measures how much spatially a coordinate $x^i$ is shifted in the next slice \cite{mtw}.

One must note that the lapse and shift in the above Lagrangian are not dynamical variables, no time derivative of them enters the action. Because of this property, their variations will not give us dynamical equations but rather constraint equations. One of them is directly obtained by variation of $N_i$ :
\begin{equation}
\momi = \consubm_j \ext^{ji} - \consubm^i \ext=0
\end{equation}
and actually corresponds to the Einstein equation $\tilde{R}_{0i}=0$. This equation is called the momentum constraint. The other is obtained via the variation of the shift $N$:
\begin{equation}
\ham= ^{(d)}R + \ext^2 - \ext^{ij} \ext_{ij} = 0 \sgc
\end{equation}
and corresponds to the equation $+ \tilde{R}_{00} + \metsubm^{ij} \tilde{R}_{ij}=0$ coming from Einstein equations. This constraint is called the Hamiltonian constraint. The remaining equation is the dynamical equation involving the double time derivative of the $\metsubm$, which we will not express here.

As discussed in Section \ref{sec:appconstr} the set of constraints are closely related to the set of gauge symmetries. In the Hamiltonian formalism the Hamiltonian vector fields corresponding to the constraints are generators of the gauge transformations. Since one has \expl{is this true for gr?}
\begin{equation}
\lb X_f, X_g \rb= - X_{\{f,g\}} \sgc
\end{equation}
the algebra of generators of gauge transformations on the phase space will be directly related to the Poisson bracket algebra of the constraints. Since for our case the homogeneous space structure of the solution space is directly related to algebra of gauge transformations, these will be important for us. For this reason let us note down the Poisson bracket algebra of the constraints of General Relativity \cite{ishamkuchar1,wald}. For the proper formulation we need to define \expl{why?-see Witten paper-how to define phase space etc on field thy}
\begin{align}
C_{\chi} &=\int_{\subm} \chi_i(x) \lp \consubm_j \ext^{ji} - \consubm^i \ext \rp \vol_{\metsubm} \sgc\\
C_f &= \int_{\subm} f(x) \lp R + \ext^2 - \ext^{ij} \ext_{ij} \rp \vol_{\metsubm} \sgc
\end{align}
where $\chi$ is an arbitrary spatial vector field and $f(x)$ is an arbitrary function. To calculate the Poisson bracket we should pick a phase space for the theory. The dynamical variable for the theory is $\metsubm$, and the corresponding canonical momentum for it is
\begin{equation}
\pi^{ij}= \sqrt{h} \lp \ext^{ij} - \metsubm^{ij} \ext \rp \sgd
\end{equation}
The corresponding symplectic form in Darboux coordinates will be
\begin{equation}
\omega = \int \delta \pi^{ij} \wedge \delta h_{ij} \sgc
\end{equation}
where $\delta$ and $\wedge$ denote the exterior derivative and the exterior product on the phase space. The Poisson brackets of the constraints then are calculated as
 \begin{align}
\lbr C_{\chi_1},C_{\chi_2} \rbr &= C_{\lb \chi_1, \chi_2 \rb} \sgc \\
\lbr C_{\chi}, C_{f}  \rbr &= C_{\chi(f)} \sgc \\
\lbr C_f,C_g \rbr &= C_{\eta} \sgc
\end{align}
where $\eta$ is also a spatial vector field such that
\begin{equation}
\eta^i=\metsubm^{ij} \lp f \pr_j g - g \pr_j f \rp \sgd
\end{equation}
This set of relations are known as Dirac relations. Since there is field dependency-i.e. they depend on the $\metsubm$, the set of constraints do not form a Lie algebra, e.g. they do not satisfy the Jacobi identity \cite{ishamkuchar1,weinstein}. This property is considered to have important implications, e.g. if the constraints would form a Lie algebra the the general relativity could have been quantized. For a discussion of this point see e.g. \cite{LeeWald,ishamkuchar1,ishamkuchar2}.

\section{General Relativity in Gaussian Normal Coordinates}\label{sec:gnc}
\noindent
The \adm~form of the action 
\begin{equation}
S_{ADM}= \int dt \int_{\amb} N  \lp R(\metsubm) - K^2 + K^{ij} K_{ij} \rp \epsilon_{\metsubm} 
\end{equation}
illustrates the time splitting, however it is not still in the natural form: there are terms that are first order in time derivatives. However as one can easily see from the expression for the extrinsic curvature \eqref{eq:extincoo}, it can be brought into the natural form by choosing coordinates such that $N=1$ and $N_i=0$. This choice is known as Gaussian Normal Coordinate (\gnc) \cite{wald}. In our notation this is equivalent to choosing $[\ez,\ei]=0$, i.e. choosing $\ez,\ei$ to be coordinates. Note that this choice implies
\begin{equation}
\acc^i=0 \ra \conambz \ez=0 \sgc
\end{equation}
i.e. $\ez$ is Riemannian geodesic in the ambient manifold and
\begin{equation}
f_i^j=K_i^j \sgd
\end{equation}
Note that since this choice will eliminate the lapse and shift from the action, one needs to impose now the constraints by hand-they will not come from the variation of the new action. With this in mind, in the new coordinate frame our theory is defined by the action in the natural form:\\
\hfill\\
\fbox{\begin{minipage}{\linewidth}
\vspace{3mm}
\begin{equation}
S=\int dt \lp \frac{1}{2} \metonvacua_{\metsubm}(\dot{\metsubm}_{ij},\dot{\metsubm}_{kl}) - V(\metsubm) \rp
\end{equation}
where
\begin{equation}
\metonvacua_{\metsubm}(\delta_1 h,\delta_2 h) = \frac{1}{2}\int_{\subm} d^dx \sqrt{\det \metsubm}  \left( h^{ik} h^{jl} -  h^{ij} h^{kl} \right) \delta_1 h_{ij} \delta_2 h_{kl} \sgc
\end{equation}
and
\begin{equation}
V(\metsubm) =-\int_{\subm} d^dx \sqrt{\det \metsubm} \, {R(\metsubm)} \sgc
\end{equation}
\end{minipage}}\\ 
\hfill\\
together with the constraints imposed by hand:
\begin{align}
\momi &= \consubm_j \ext^{ji} - \consubm^i \ext=0 \sgc \\
\ham &= ^{(d)}R + \ext^2 - \ext^{ij} \ext_{ij} = 0 \sgc
\end{align}
where
\begin{equation}
\ext_{ij}= \frac{\dot{h}_{ij}}{2} \sgd
\end{equation}
The remaining gauge transformations after this choice are interesting and can be found by requiring that the change of metric under them does not violate the Gaussian Normal condition:
\begin{align*}
\delta_{\xi} \metamb_{00} = \lie_{\xi} \metamb_{00}=0 &\ra \conamb_0 \xi_0 =0 \sgc\\
  \lie_{\xi} \metamb_{0i}=0 & \ra \conamb_0 \xi_i + \conamb_i \xi_0 = 0 \sgc
\end{align*}
where we have used the infinitesimal form of the transformation equations so that they can be explicitly solved for the diffeomorphisms $\xi$. Doing this results in one set that is the 3-spatial diffeomorphisms $\chi(x)$ -argument depends only on space- and another set given by the \uline{field-dependent} diffeomorphisms
\begin{equation}\label{Normal tfm}
\xi_f[h]= \lp f(x), \partial_j f(x) \int_{t_0}^{t} \metsubm^{ji}(x,t') dt' \rp = \lp f(x), \zeta^i_f[h]\rp \sgd
\end{equation}
We will call this second set local boosts. Can we get the geometric insight behind these transformations? First the geometric insight behind the Gaussian Normal Coordinates \cite{wald}: take a spatial hypersurface, define $Q$ along this hypersurface, the unique orthogonal and normalized vector field. 
\begin{figure}[!hb]
  \begin{center}
    \includegraphics[width=\textwidth]{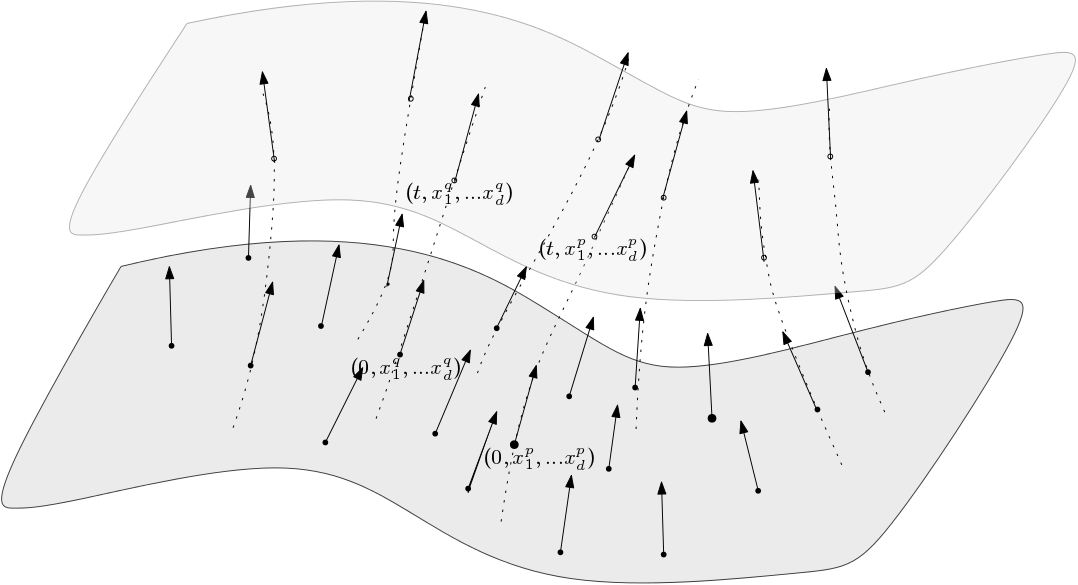}
  \end{center}
  \caption{Gaussian Normal Coordinates}
\end{figure}
Construct the \uline{the unique geodesic} going through each point on the hypersurface such that at that point the tangent to the geodesic is $Q$ evaluated at that point.(See Appendix \ref{ch:apprie} for the discussion of uniqueness) \\
Now choose some spatial coordinates $(x^1,...,x^D)$ for each point on the hypersurface.\\
Label all geodesics with parameter t.\\
Let $(x^1,...,x^D,t)$ to be the point on the geodesic that started from $(x^1,...,x^D)$ and on the geodesic ``t"-away from it.\\
Now, the remaining freedom to choose coordinates are:
\begin{itemize}
\item Arbitrary relabeling of $(x^1,...,x^D)$,
\item Choice of initial hypersurface.
\end{itemize} 
The second freedom in the choice corresponds to local boosts and might also be considered as redefining the orthogonal vector field $Q$: for this to work one should make an accompanying spatial transformation so that vector field remains orthogonal to the hypersurface. Reconsider the transformation \eqref{Normal tfm} in the light of this: change in the time component is arbitrary (except that it should be independent of time to make new orthogonal vector unit norm), but that is the all freedom you have, change in other coordinates are uniquely determined such that your new hypersurface is orthogonal to your new $Q$.
To get a better intuition of GNC and local boosts, and to support to fact that the role of local boosts is to change the choice of initial hypersurface in way to keep properties of the \gnc, let us consider the following simple examples.
\begin{example}
Consider 2-dimensional flat Euclidean space such that the metric is
\begin{equation*}
ds^2= dx^2 + dy^2 \sgd
\end{equation*}
These coordinates are Gaussian normal. Now consider the coordinate transformations that preserves the Gaussian normal condition. They should satisfy
\begin{align*}
\partial_x \xi_x =0 \quad \partial_x \xi_y + \partial_y \xi_x =0 \sgd
\end{align*}
One set of solution to this gives local boosts,
\begin{equation*}
\xi= \lp f(y), - f'(y) x \rp \sgd
\end{equation*}
Let us make some simple choices for the function $f$: when $f=\mbox{constant}$ , we get translations in $x$ direction, which simply shifts the initial hypersurface in the $x$ direction, whereas for $f=y$ we get rotations.
Now I would like to show that the coordinates after making this transformation are orthogonal coordinates:
\begin{equation*}
x'=x+f(y) \quad y'=y-f'x \sgd
\end{equation*}
Slope of the constant $x'$ and $y'$ lines are respectively
\begin{align*}
\frac{dx}{dy}=-f' \gag \frac{dx}{dy}= \frac{1}{f'}- \frac{f''}{f'} x \sgd
\end{align*}
Thus to first order in $f$, and this is the order we should keep ourselves in since we are considering infinitesimal coordinate transformations, they are orthogonal.
\end{example}
\begin{example}
Consider the flat 3-dimensional Euclidean surface in spherical coordinates:
\begin{equation}
ds^2=dr^2+r^2 (d\theta^2+ sin^2\theta d\phi^2) \sgc
\end{equation}
so that
\begin{equation}
Q= \frac{\partial}{\partial r} \cg K_{ij}=2r \, \diag(0,1,sin^2\theta) \cg K=\frac{4}{r} \sgd
\end{equation}
Part of remaining gauge transformation will be again local boosts:
\begin{align*}
r' &=r+f(\theta,\phi) \sgc \\
\theta' &= \theta + \frac{\partial_{\theta}f}{r} \sgc \\
\phi' &= \phi + \frac{\partial_{\phi}f}{r \sin^2\theta } \sgd
\end{align*}
Try the case $f= \epsilon \, \sin\theta$. As we have claimed before to keep the properties of \gnc~new coordinate $r'$ should be orthogonal to others, e.g. $r'=\mbox{constant}$ and $\theta'=\mbox{constant}$ lines are orthogonal. If you check one of them satisfies $ \frac{dy}{dx} = - \frac{x}{y} \left( 1 - \frac{\epsilon}{y} \right) $ and other $ \frac{dy}{dx}= \frac{y}{x} \left( 1 + \frac{\epsilon}{y} \right) $ so they are orthogonal at the order $\epsilon$.
\end{example}

\section{Local Boosts and Dirac Algebra}\label{sec:boosts}
In the next chapter we will apply Manton approximation to general relativity in the natural form, in a spirit similar to that of Weinberg's proof for adiabatic modes: We will consider solutions that are diffeomorphisms of a vacuum solution, then adding a parameter dependence we will argue that these are no longer equivalent to the vacuum solution and physically new. As we have shown above in the natural form, the remaining gauge transformations are composed of spatial diffeomorphisms and local boosts. The space of solutions we will consider will be gauge transformations of a reference metric under these transformations and the tangent space of the space of solutions will be given by an infinitesimal transformation of the reference point under these gauge transformations.  However since local boosts are field dependent transformations, variations of fields under them and especially algebra of these variations are highly non-trivial. As we signaled in Section \ref{sec:grconstr} this algebra will turn out to be the Dirac relations, the algebra of constraints of \gr:
\que{what does dirac relations tells us about space of vacua? why did we do this calculation?what is the difference between this and $[\delta_1 h, \delta_2 h]$ Is tangent vector $\delta h$ or $\delta$ itself?}
\begin{align}
\lb \delta_{\chi_1} , \delta_{\chi_2} \rb h = \delta_{[\chi_1,\chi_2]} h \sgc \label{rln1} \\ 
\lb \delta_{\chi}, \delta_f \rb h = \delta_{\lie_{\chi}f} h \sgc \label{rln2} \\ 
\lb \delta_{f_1}, \delta_{f_2} \rb h = \delta_{\eta(f_1,f_2,h(t_0))} h \sgc \label{rln3}
\end{align}
where
\begin{equation}
\eta(f_1,f_2,h(t_0))^i = h^{ij}(x,t_0) \lp f_2(x)  \partial_j f_1(x) - f_1(x) \partial_j f_2(x) \rp
\end{equation} 
and $\delta_f h$ means the variation of $h$ \textit{as a tensor in 4-dim} under local boosts \eqref{Normal tfm} and $\delta_{\chi} h$ is just usual spatial variation:
\begin{align}
\delta_f h &= \lie_{\xi_f[h]} h = f(x) \dot{h} + \lie_{\zeta_f[h]} h \sgc \\
\delta_{\chi} h &= \lie_{\chi}h \sgd
\end{align}
Note that this would mean that the ``tangent space" of the space of vacua generated by these transformations would not be a proper tangent space, since the tangent space of a differentiable manifold is expected to be a Lie algebra with respect to the commutator bracket. Let us now move onto showing how this set of commutators come about to illustrate the hardship coming from field dependency of the gauge transformations. First relation \eqref{rln1} is trivial by the fact
\begin{equation}
\lie_{[V,W]}h= \lie_V \lie_W h - \lie_W \lie_V h \sgd
\end{equation}
For the field dependent variations one also needs to take into account the fact that generators of the transformations are changing. Thus for the second relation \eqref{rln2} one should consider the following picture:
\begin{equation}
h \xrightarrow{\delta_{\chi}} h+ \lie_{\chi} h \xrightarrow{\delta_f} \lp h + \lie_{\chi}h \rp + \lie_{\xi_f[h+ \lie_{\chi}h]} \lp h + \lie_{\chi} h \rp \sgd
\end{equation}
Note that in the second step it is important to remember the local boost is acting on the varied metric of the first step, thus its dependency is on this varied metric. To get the commutation relation also consider the variation in the reverse order: 
\begin{equation}
h \xrightarrow{\delta_f} h+ \lie_{\xi[h]} h \xrightarrow{\delta_{\chi}}  h + \lie_{\chi}h + \lie_{\xi[h]} h + \lie_{\chi} \lie_{\xi[h]} h
\end{equation}
where local boosts now are functions of the original metric. $\lb \delta_f, \delta_{\chi} \rb h $ is the difference between last two expressions. Here once again remember what $\xi[h]$ is:
\begin{equation}
\xi[h]= \lp f(x), \partial_j f(x) \int_{t_0}^{t} h^{ji}(x,t') dt' \rp = \lp f(x), \zeta^i_f[h]\rp \sgc
\end{equation}
then
\begin{equation} \label{tilde zeta dfn}
\xi[h+\lie_{\chi}h]=\xi[h]- \lp 0, \partial_j f(x)  \int_{t_0}^{t} h^{jl}(x,t') h^{ik}(x,t') \lp \lie_{\chi}h \rp_{lk}(x,t') dt' \rp \sgc
\end{equation}
where we define
\begin{equation}
\tilde{\zeta}_f \lb \lie_{\chi}h \rb=\lp 0, \partial_j f(x)  \int_{t_0}^{t} h^{jl}(x,t') h^{ik}(x,t') \lp \lie_{\chi}h \rp_{lk}(x,t') dt' \rp \sgd
\end{equation}
We put a tilde over the zeta to emphasize that we are not actually using inverse of the $\lie_{\chi}h$ in the argument for the local boost. Thus
\begin{equation}
\lb \delta_f, \delta_{\chi} \rb h= - \lie_{\tilde{\zeta}_f \lb \lie_{\chi}h \rb} h + \lie_{[\xi[h],\chi]} h \sgd
\end{equation}
Then by using
\begin{equation}
\tilde{\zeta}_f \lb \lie_{\chi}h \rb^i - \lb \zeta_f[h], \chi \rb^i = \zeta^i_{\lie_{\chi} f} [h]
\end{equation}
one gets
\begin{equation}
\boxed{\lb \delta_f, \delta_{\chi} \rb h = - \lp \lie_{\chi}f \rp \dot{h}- \lie_{\zeta_{\lie_{\chi} f} [h]} h = - \delta_{\lie_{\chi}f} h} \sgd
\end{equation}
For the third relation note
\begin{equation}
h \xrightarrow{\delta_{f_1}} h+ \lie_{\xi_1[h]} h \xrightarrow{\delta_{f_2}} h + \lie_{\xi_1[h]}h + \lie_{\xi_2[h+ \lie_{\xi_1}h]} h + \lie_{\xi_2[h]} \lie_{\xi_1[h]} h \sgc
\end{equation}
where
\begin{equation}
\xi_2 \lb h+ \lie_{\xi_1}[h] \rb = \xi_2[h]+ \lp 0, - \partial_j f_2(x) \int h^{jk} h^{il} \lp \lie_{\xi_1[h]}h \rp_{kl} \rp
\end{equation}
and
\begin{equation}
- \partial_j f_2(x) \int h^{jk} h^{il} \lp \lie_{\xi_1[h]}h \rp_{kl}= f_1(x) \dot{\zeta}^i_2[h] - f_1(x) \dot{\zeta}^i_2[h(t_0)]- \tilde{\zeta}_2 \lb \lie_{\zeta_1[h]} h \rb \sgc
\end{equation}
where $\tilde{\zeta}$ is defined by \eqref{tilde zeta dfn}. Using this one gets
\begin{equation}
\xi_2 \lb h + \lie_{\xi_1[h]}[h] \rb= \xi_2[h] + f_1(x) \dot{\zeta}_2[h]- f_1(x) \dot{\zeta}^i_2[h(t_0)] - \tilde{\zeta}_2 \lb \lie_{\zeta_1[h]} h \rb \sgd
\end{equation}
Then
\begin{equation}
\lb \delta_{f_1}, \delta_{f_2} \rb h = \lie_{ \lp f_1 \dot{\zeta}_2 - f_1 \dot{\zeta}_2(t_0)-\tilde{\zeta}_2 \lp \zeta_1 \rp \rp} h - \lie_{ \lp f_2 \dot{\zeta}_1 - f_2 \dot{\zeta}_1(t_0)-\tilde{\zeta}_1 \lp \zeta_2 \rp \rp} h + \lie_{\lb \xi_2[h], \xi_1[h] \rb} h \sgc
\end{equation}
where I define
\begin{equation}
\tilde{\zeta}_2 \lb \lie_{\zeta_1[h]} h \rb = \tilde{\zeta}_2 \lp \zeta_1 \rp  \sgd
\end{equation}
Now open up $\lie_{\lb \xi_2[h], \xi_1[h] \rb} h$ and note
\begin{align}
\lie_{f_2 \dot{\zeta}_1 - f_1 \dot{\zeta}_2} h &= f_2  \lie_{\dot{\zeta}_1} h - f_1 \lie_{\dot{\zeta}_2} h \label{frln} \sgc\\
\lie_{\zeta_2} f_1 - \lie_{\zeta_1} f_2 &= 0 \sgc
\end{align}
where for the derivation of first equation one must see
\begin{equation}
\lie_{fX}h= f \lie_X h + (X \inc h) \otimes df + df \otimes ( X \inc h ) \sgc
\end{equation}
where $h$ is a symmetric rank-2 tensor. Using these one gets
\begin{equation}
\lb \delta_{f_1}, \delta_{f_2} \rb h = \lie_{ \lb \zeta_2, \zeta_1 \rb - \tilde{\zeta}_2 (\zeta_1) +\tilde{\zeta}_1 (\zeta_2) } h - \lie_{\lp f_1 \dot{\zeta}_2(t_0) - f_2 \dot{\zeta}_1(t_0) \rp } h \sgd
\end{equation}
The first term is zero since
\begin{align}
\tilde{\zeta}_2(\zeta_1) &= \partial_j f_2 \int h^{jl} h^{mi} \lp \lie_{\zeta_1[h]}h \rp_{ml} \nonumber \\
&= \int \dot{\zeta}^k_2 h^{mi} \lp \zeta^l_1 \partial_l h_{mk} + h_{ml} \partial_k \zeta^l_1 + h_{kl} \partial_m \zeta^l_1 \rp \nonumber \\
&=\int \partial_l \lp \dot{\zeta}^k_2 h_{mk} \rp h^{mi} \zeta^l_1- \zeta^l_1 h^{mi} h_{mk} \partial_l \dot{\zeta}^k_2 + \dot{\zeta}^k_2 \partial_k \zeta^i_1 + \dot{\zeta}^k_2 h_{kl} h^{mi} \partial_m \zeta^l_1 \nonumber \\
&= \int \dot{\zeta}^k_2 \partial_k \zeta^i_1 - \zeta^l \partial_l \dot{\zeta}^i_2 + \int \partial_l \partial_m f_2 h^{mi} \zeta^l_1 + \partial_l f_2 h^{mi} \partial_m \zeta^l_1 \nonumber \\
&= \int \dot{\zeta}^k_2 \partial_k \zeta^i_1 - \zeta^l \partial_l \dot{\zeta}^i_2 + \int h^{mi} \partial_m \lp \zeta^l_1 \partial_l f_2 \rp \sgc
\end{align}
so that
\begin{align}
\tilde{\zeta}_2(\zeta_1)-\tilde{\zeta}_1(\zeta_2)= \int  \lb \dot{\zeta}_2, \zeta_1 \rb -  \lb \zeta_1, \dot{\zeta}_1 \rb + \int h^{mi} \partial_m \lp \zeta^l_1 \partial_l f_2 - \zeta^l_2 \partial_l f_1\rp \sgd
\end{align}
The last term is zero and
\begin{equation}
\tilde{\zeta}_2(\zeta_1)- \tilde{\zeta}_1(\zeta_2)= \int \frac{d}{dt} \lb \zeta_2, \zeta_1 \rb = \lb \zeta_2, \zeta_1 \rb \sgc
\end{equation}
since $\zeta_2(t_0)=\zeta_1(t_0)=0$. So one finally gets 
\begin{equation}
\boxed{ \lb \delta_{f_1}, \delta_{f_2} \rb h =  \lie_{\eta(f_1,f_2,h(t_0))} h \sgc}
\end{equation}
where
\begin{equation}
\eta(f_1,f_2,h(t_0))^i= -\lp f_1 \dot{\zeta}_2(t_0) - f_2 \dot{\zeta}_1(t_0) \rp^i = h^{ij}(x,t_0) \lp f_2(x)  \partial_j f_1(x) - f_1(x)  \partial_j f_2(x)\rp \sgd
\end{equation}
This relations should be true for any tensor on the space-time that does not have a component in the time direction and satisfies \eqref{frln}. 

This completes the proof. We have seen that the algebra of variations of the spatial metric $h$ under the remaining gauge transformations give the Dirac algebra. However the fact that the constraint algebra is Dirac suggests that you shouldn't need to make a gauge choice to get this algebra. In fact one can show that the variations under the full diffeomorphism group written in a comoving basis also gives these commutation relations. This shouldn't be surprising though, choosing Gaussian Normal Coordinates should be somehow equivalent to choosing a comoving basis. One slightly disturbing fact is in the first case you have the full diffeomorphism group whereas in the second case you have some subspace of that group. One might say that remaining gauge transformations of \gnc~form a nice subspace of diffeomorphisms such that it preserves the Dirac algebra.

\chapter{Adiabatic Solutions of General Relativity} \label{ch:adgr}
In the last chapter we have brought general relativity into a form suitable for the Manton approximation. In this chapter we apply the Manton approximation. We start by writing down the form of the solution suggested by Manton approximation in Section \ref{sec:adgr}, discussing what kind of spacetime it is proposing. In Section \ref{sec:constaction} we write down the constraints and action for these set of solutions and we simplify the action by using orthogonal decomposition on the spatial metric. We see that it becomes a theory on the boundary, however still involves information from the bulk, since there will be orthogonal derivatives of the tangent vectors. In Section \ref{sec:boundarydata} however, we show for a tangent vector on the boundary its orthogonal derivative is uniquely given by the momentum constraint. In Section \ref{sec:decconst}, we attempt at solving constraints explicitly for various types of cases by using again the orthogonal decomposition on the spatial manifold.

\section{Manton Approximation for General Relativity} \label{sec:adgr}
In the previous chapter we have concluded that one can bring the \eh~action into the natural form by choosing \gnc . In these coordinates we had
\begin{equation}
S_{EH}=\int dt \lp \frac{1}{2} \metonvacua_{\metsubm}(\dot{\metsubm}_{ij},\dot{\metsubm}_{kl}) - V(\metsubm) \rp
\end{equation}
where
\begin{equation}\label{eq:wdw}
\metonvacua_h(\delta_1 h,\delta_2 h) = \frac{1}{2}\int_{\subm} d^dx \sqrt{\det h}  \left( h^{ik} h^{jl} -  h^{ij} h^{kl} \right) \delta_1 h_{ij} \delta_2 h_{kl} \sgc
\end{equation}
and
\begin{equation}
V(h) =-\int_M d^dx \sqrt{\det h} \, {R(h)} \sgd
\end{equation} 
We see with this choice our theory is fully described by the spatial metric $\metsubm$. Note that we wrote down the kinetic term in the way we did to emphasize that this will define a metric, $\metonvacua$, on the space of vacua. Now we move onto applying the Manton approximation. For this, we again consider the configuration space of spatial metrics that depend only on space, $\consp$. After this we need to identify the space of vacua $\minsp$. We define it as the set of solutions -to $\spmet$- that extremize the potential energy which means for our case,we are looking for solutions such that
\begin{equation}
\delta \pot = \int \sqrt{\det \spmet} ( \rthree_{ij} - \frac{1}{2} \rthree \spmet_{ij} ) \delta \spmet^{ij} =0 \sgc
\end{equation}
i.e.
\begin{equation}\label{eq:VacEins}
\rthree_{ij}(\vacspmet)=0 \sgd
\end{equation}
where we denote elements of $\minsp$ with $\vacspmet$. Note that unlike the ``usual" cases, extremum here will not imply absolute minimum since the potential is not bounded from below. However note that, via the equations of motion, this condition is necessary to have time independent solutions.

The vacua $\minsp$ of the configuration space for this case will be solutions to above. For a four dimensional spacetime these will be the flat Euclidean metric and its gauge transformations since solutions to \eqref{eq:VacEins} are three dimensional Ricci flat spaces and they are purely gauge since in 3d there is no physical degree of freedom. For dimensions bigger than 4, there might be metrics inequivalent to flat metric with zero Ricci tensor, and there we will further say that we focus our attention only to the part of vacua generated from the flat metric. Note that the gauge transformations of the spatial metric will be spatial diffeomorphisms. But what have happened to local boosts? Analysis of the last chapter has told us that under transformations by local boosts our theory of the spacetime should be invariant. However if we take a solution to \eqref{eq:VacEins} and transform it under the local boost it will not be solving the equation again! Where did we lose the local boosts? Of course when imposed a condition on our spacetime that is not ``covariant": extremization of the potential energy. The job of the boosts will be transferring ``some energy" from potential to kinetic energy, while keeping the ``total energy" constant. 

Because of this conceptual difference, it looks like it is not easy to put local boosts into the Manton approximation picture. Here \uline{we will only consider the set of vacua generated by the purely spatial coordinate transformations}, i.e. we will take $\minsp$ to be composed of the metrics written as
\begin{equation}\label{eq:flatspt}
\vacspmet_{ij}(x) = \frac{\partial \tx^\uk(x,z)}{\partial x^i} \frac{\partial \tx^\um(x,z)}{\partial x^j} \refmet_{\uk \um} \sgc
\end{equation}
where $z=\lbr z^a \rbr$ labels all families of spatial gauge transformations, and $\refmet$ is a reference spatial metric that has a vanishing Ricci curvature. Manton approximation then amounts to considering solutions of the form
\begin{equation}\label{eq:nonflatsp}
\vacspmet_{ij}(x,t)= \frac{\partial \tx^\uk(x,z(t))}{\partial x^i} \frac{\partial \tx^\um(x,z(t))}{\partial x^j} \refmet_{\uk \um} \sgd
\end{equation}
Note that with this choice of the spatial metric our spacetime manifold is no longer equivalent to a flat manifold, since spatial coordinate transformations with time dependency were not a part of remaining gauge transformations of \gnc. Spacetimes that first and second spatial metric we wrote down define can be visualized as follows: For the first case in \eqref{eq:flatspt}, we only perform a spatial coordinate transformation of a flat metric in \gnc. In other words each spatial slice we have in the transformed spacetime is obtained by transforming one flat spatial slice with the same coordinate transformation, see Figure \ref{fig:flatspt}.
\begin{figure}[!ht]
\centering
\includegraphics[width=0.9\linewidth]{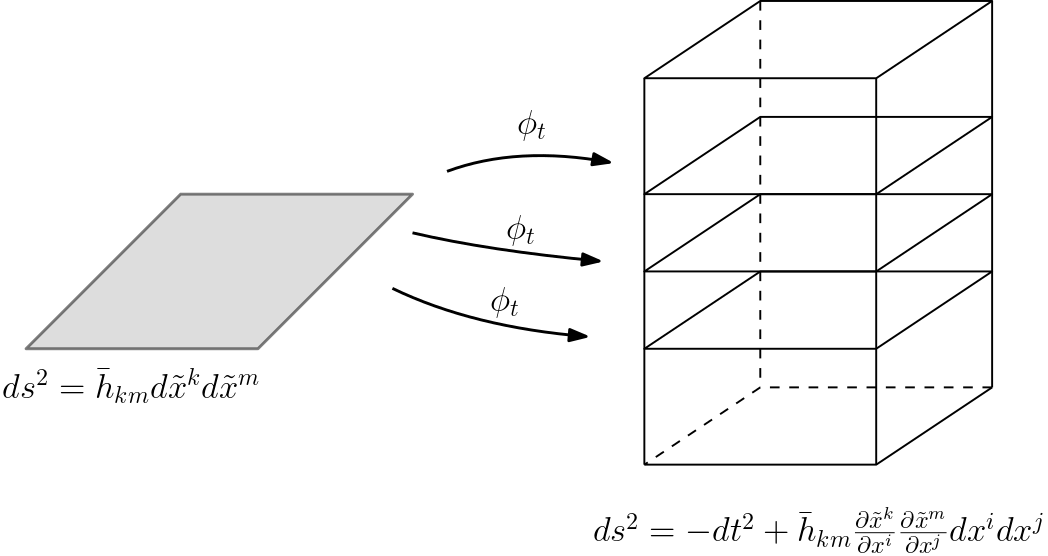}
\caption{Flat spacetime obtained by a unique spatial coordinate transformation}
\label{fig:flatspt}
\centering
\end{figure}
For the second case each spatial slice is obtained by a different spatial coordinate transformation, thus it is will not be diffeomorphic to the flat spacetime, see Figure \ref{fig:nonflatspt}.
\begin{figure}[!ht]
\centering
\includegraphics[width=0.8\linewidth]{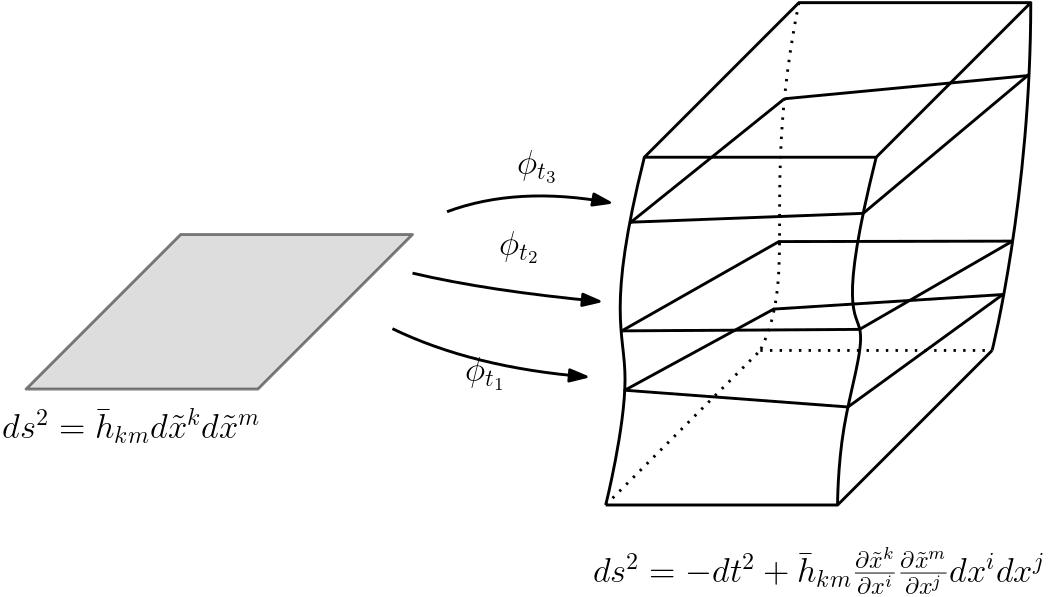}
\caption{Non-flat spacetime obtained by choosing a different spatial coordinate transformation of flat spatial hypersurface for each spatial slice}
\label{fig:nonflatspt}
\centering
\end{figure}
One can rephrase our proposed solutions more geometrically as follows:
\begin{equation} \label{eq:Manapp}
\vacspmet(t)= \sigma_t^* \refmet \sgc
\end{equation}
i.e. take a spatial slice with a reference metric. Consider a flow $\sigma(x,t): \subm \times \R \ra \subm$, where $\subm$ is to be taken as the spatial slice now. Our construction tells us to construct the spacetime manifold $\stm$ as a manifold whose slice at $t$ is given by the pullback of $\refmet$ by $\sigma_t$. Note this spacetime will correspond to an integral curve on the space of vacua $\minsp$, see Figure \ref{fig:adsl}. Our aim will be to identify $\minsp$ via solving for all possible flows on it such that they describe solutions to Einstein's equations (in vacuum).\\
\begin{figure}[!ht]
\centering
\includegraphics[width=\linewidth]{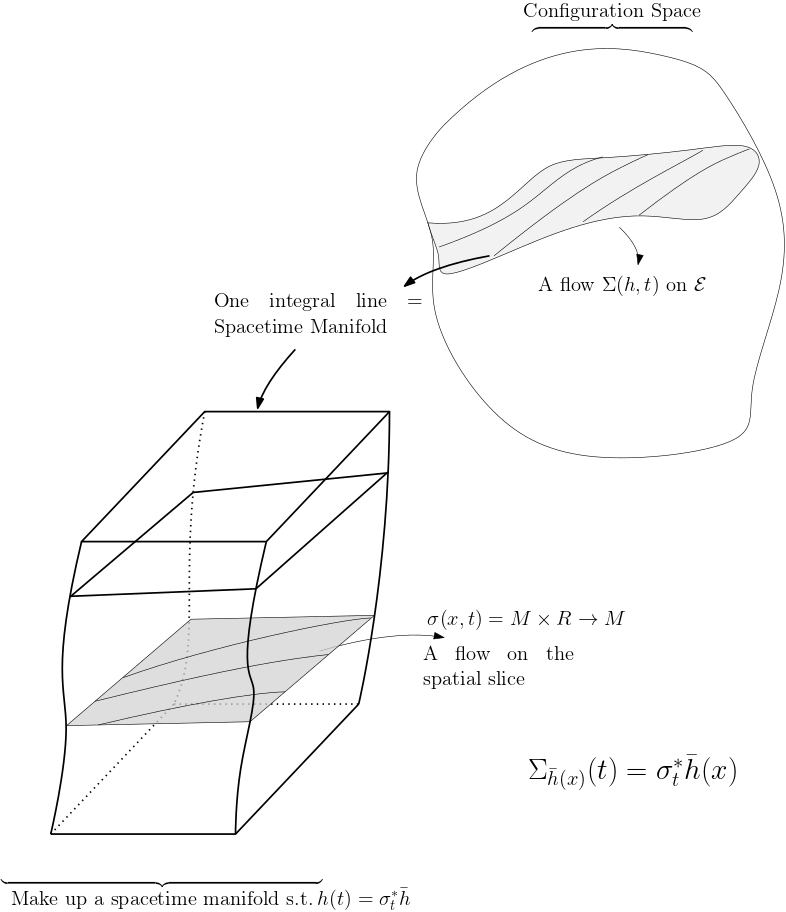}
\caption{An adiabatic spacetime as a flow on $\minsp$}
\label{fig:adsl}
\centering
\end{figure}
After writing down the desired metric we move onto studying the constraint equations and the Lagrangian for this subset of spacetimes. For this we will need to evaluate
\begin{equation}
K_{ij}= \frac{1}{2} \dot{\vacspmet}_{ij} = \frac{1}{2} \frac{d}{dt} \sigma_t^* \bar{h} = \frac{1}{2} \lie_{\chi} \vacspmet
\end{equation}
where
\begin{equation}
\chi= \sigma'_t \sgc
\end{equation}
i.e. $\chi$ is the vector field tangent to flow $\sigma$. So we write
\begin{equation}
\ext_{ij}= \consubm_{(i} \chi_{j)}
\end{equation}
where $\consubm$ is the Riemannian connection of $\vacspmet$. Thus $\chi$ spatial vector fields will fully describe our solutions, and we will need to solve our (vacuum) Einstein equations for $\chi$.
To illustrate the type of solutions we are considering and that they are not physically equivalent to the reference spacetime let us consider the following example.

\begin{example}[spatially flat \frw~from gauge transformations]
Let us consider the flat reference metric in four dimensions and let the transformation $x \ra \tilde{x}$ be given by
\begin{equation}
\tilde{x}^i= a(t) x^i
\end{equation}
at each slice so that 
\begin{equation}
d\tilde{x}^i= a(t) dx^i \sgd
\end{equation}
For these transformations the Manton ansatz will be a spacetime metric:
\begin{equation}
ds^2= -dt^2+ a^2(t) (dx^2+dy^2+dz^2)
\end{equation}
This illustrates very clearly that the Manton ansatz gives us physically new solutions, for this case the spatially flat \frw~spacetime. (This a solution to Einstein equations with matter, but our point is still true.) Note that this would be a coordinate transformation only if we took
\begin{equation}
(t,x,y,z)=(\tilde{t}, e^{-H \tilde{t}} \tilde{x},e^{-H \tilde{t}} \tilde{y},e^{-H \tilde{t}} \tilde{z} )\sgc
\end{equation}
so that
\begin{align}
d\tilde{t} &=dt \sgc \\
dx &= e^{-H \tilde{t} } d\tilde{x} - H e^{-H \tilde{t} } \tilde{x} d\tilde{t} \sgd
\end{align}
\end{example}
\section{Constraints and Action}\label{sec:constaction}

For the solutions we constructed above, our system will be described by the constraint equations and the action in \gnc~coordinates as follows. Momentum constraint becomes
\begin{equation}
\mom^j=\consubm_i \left( \consubm^{(i} \chi^{j)} - h^{ij} \consubm_k \chi^{k} \right)=0 \sgd
\end{equation}
Here note that this equation can also be written as
\begin{equation}
 \consubm^i \consubm_{[i} \chi_{j]} + R_{ij} \chi^i =0 \sgc
\end{equation}
where using the fact $\ric(\vacspmet)=0$ becomes
\begin{equation}\label{eq:momconst}
 \mom_j(\chi) \equiv \consubm^i \consubm_{[i} \chi_{j]} =0 \sgd
\end{equation}
Similarly we see Hamiltonian equation becomes
\begin{equation}\label{eq:hamconst}
\ham(\chi) \equiv (\consubm_i \chi^i)^2-\consubm_{(i} \chi_{j)} \consubm^i\chi^j=0 \sgd
\end{equation}

Now we move onto evaluating the action. Note that as we are on $\minsp$, potential will be zero. Action is
\begin{equation*}
S_{EH}=\int dt  \frac{1}{2} \metonvacua_{\vacspmet}(\delta_{\chi} \vacspmet,\delta_{\chi} \vacspmet) \sgc
\end{equation*}
where
\begin{equation*}
\metonvacua_{\vacspmet}(\delta_{\chi} \vacspmet,\delta_{\xi} \vacspmet) =  2 \int_{\vacspmet} d^dx \sqrt{\det \vacspmet}  \left( \consubm^{(i} \chi^{j)} \consubm_{(i} \xi_{j)} - \consubm \cdot \chi \consubm \cdot \xi \right) \sgd
\end{equation*}
Here note that solutions to the action above will be geodesics on $\minsp$ with respect to the metric $\metonvacua$ and \uline{imposing the Hamiltonian Constraint will mean choosing null $\metonvacua$ geodesics among these solutions.} One can easily see that the above metric can be brought into the form
\begin{equation}
\metonvacua_{\vacspmet}(\delta_{\chi} \vacspmet,\delta_{\xi} \vacspmet) = 2 \int_M d^dx \sqrt{\det \vacspmet} \, \consubm_i \lp \lp \consubm^{(i} \chi^{j)} - \vacspmet^{ij} \consubm \cdot \chi \rp \xi_j \rp - 2 \int d^dx \sqrt{\det \vacspmet} \, \xi^j \mom_j(\xi) \sgd
\end{equation}
Using the fact that we want our solutions to satisfy the momentum constraint, the metric on the space of vacua becomes
\begin{equation}
\metonvacua_{\vacspmet}(\delta_{\chi} \vacspmet,\delta_{\xi} \vacspmet) = 2 \int_{\bnd} d^dx \sqrt{\det \metbnd} \, n_i \lp \lp \consubm^{(i} \chi^{j)} - h^{ij} \consubm \cdot \chi \rp \xi_j \rp \sgc
\end{equation}
where $n$ is the vector field orthogonal to the boundary $\bnd$ of the spatial manifold $\sm$, and $\metbnd$ is the metric induced on $\bnd$. Note that it is not obvious here but $\metonvacua$ is symmetric in its arguments by definition. To further simplify the metric we make a second orthogonal decomposition: split everything into components in the direction of $n$ and orthogonal to it. We simply use the splitting formula we derived in Section \ref{sec:splitting}. i.e. We let
\begin{equation}
\chi = \chip n + \chit \sgc
\end{equation}
where 
\begin{equation}
\metbnd( n , \chit)=0 \sgd
\end{equation}
Now we impose an extra condition on our solutions: we want them to be \hypertarget{bnd prev}{boundary preserving} i.e.
\begin{equation}
\left. \chip \right|_{\bnd}=0 \sgc
\end{equation}
so that the boundary is fixed. This causes the term with divergence to drop. With this split we get 
\begin{equation}
\metonvacua_h(\delta_{\chi} h,\delta_{\xi} h) = \frac{1}{2}\int_{\bnd} d^dx \sqrt{\det \metbnd} \lp \metsubm( \chit , \consubm_n \xi ) + \vacspmet (n , \consubm_{\chit} \xi) \rp + \lp \chi \leftrightarrow \xi \rp \sgd
\end{equation}
Using the fact
\begin{equation}
\left. \xip \right|_{\bnd}=0 \cg \left. \chit(\xip) \right|_{\bnd}=0 \cg h(n,\chit)=0\sgc
\end{equation}
we get
\begin{equation}
\metonvacua_h(\delta_{\chi} h,\delta_{\xi} h) = \frac{1}{2}\int_{\bnd} d^dx \sqrt{\det \metbnd} \lp \vacspmet( \chit , \consubm_n \xit ) + \vacspmet (n , \consubm_{\chit} \xit) \rp + \lp \chi \leftrightarrow \xi \rp \sgd
\end{equation}
Then we define the splitting parameters as before:
\begin{align}
\consubm_n n & = \accbnd^b \eb \sgc\\
\consubm_a n &= \extbnd_a^b \eb \sgc \\
\consubm_n \ea &= - \accbnd_a n + {\twoconnbnd_a}^b \eb  \sgc\\
\consubm_a \eb &= \conbnd_a \eb - \extbnd_{ab} n \sgc
\end{align}
and note
\begin{align}
 \vacspmet( \chit , \consubm_n \xit )= \metbnd_{ab} n({\xit}^a) {\chit}^b + k_{ab} {\twoconnbnd_c}^a {\xit}^c {\chit}^b = \chit_b \ortder {\xit}^b \sgc
\end{align}
where we define the ``orthogonal derivative" as
\begin{equation}
\ortder \zeta^a = n^i \pr_i \zeta^a + {\twoconnbnd_b}^a \zeta^b \sgd
\end{equation}
Also using the definition of extrinsic curvature on the boundary we get
\begin{equation}\label{eq:vacmet1}
\metonvacua_h(\delta_{\chi} h,\delta_{\xi} h) = \int_{\bnd} d^dx \sqrt{\det \metbnd} \lp \chi_a \ortder \xi^a - \extbnd_{ab} \xi^a \chi^b  \rp \sgc
\end{equation}
where now we drop the $\parallel$ demarcation. We see that the action is almost reduced to data on the boundary $\bnd$ of the spatial manifold. The exception is due to the existence of the orthogonal derivative $\ortder$, since it includes information on how vectors tangent to the boundary changes off the boundary.

Before completing this section we note that $\ortder$ is a ``covariant" and metric compatible operator. It satisfies the properties of a connection as given in Definition \ref{def:appconn} in the appendix with proper modification, i.e. $\ortder$ is a map $\vfs(\pr M) \ra \vfs(\pr M)$ such that:
\begin{enumerate}
\item It is linear in its argument over $\R$ i.e. $\ortder( a \zeta + b \xi ) = a \ortder(\zeta) + b \ortder(\xi)$.
\item Satisfies $\ortder(f \zeta)= f \ortder (\zeta) + \ortder(f) \zeta$.
\end{enumerate}
There will be no analogue of the last item as the vector field we are taking the derivative with is a unique normalized vector field, $n$. One can extend the definition of $\ortder$ to arbitrary tensor as
\begin{equation}
\ortder {T^{abc..}}_{ghe..}= n({T^{abc..}}_{ghe..}) + {\twoconnbnd_d}^a {T^{dbc..}}_{ghe..} + ... - {\twoconnbnd_g}^d {T^{abc..}}_{dhe..} - ... \, \sgd
\end{equation}
Under this definition $\ortder$ will be metric compatible i.e.
\begin{equation}
\ortder \metbnd_{ab} =0 \sgd
\end{equation}
\section{Action as Boundary Data} \label{sec:boundarydata}
In this section we show that the action of equation \eqref{eq:vacmet1} is completely made up of the data on the boundary, for bulk vector fields that satisfy the momentum constraint. To prove this we will use the technology of \hmf~decomposition, as momentum constraint can be written in the language of forms:
\begin{equation}
\nabla^i \nabla_{[i} \chi_{j]}=0 \quad \ra \quad \dad d \chi=0 \sgd
\end{equation}
To show that the action is composed of boundary data we claim the following:
\begin{claim}\label{cl:bvp}
Given a boundary one-form $\ta \zeta$ on the boundary of a manifold with boundary and \uline{with trivial homology} there exists a one-form on the manifold that satisfies:
\begin{equation} \label{eq:bvp}
\dad d \chi=0 \cg \no \chi=0 \cg \ta \chi=\ta \zeta \sgc
\end{equation}
and unique up to exact forms that vanish on the boundary.
\end{claim}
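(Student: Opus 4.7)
My plan is to construct $\chi$ in three stages using the Dirichlet problem of Theorem \ref{thm:dir} together with a gauge adjustment, and to establish uniqueness via a Green's identity. The main difficulty is that Theorem \ref{thm:dir} accommodates only a single boundary trace ($\ta$ or $\no$) at a time, whereas the claim demands both simultaneously. This is circumvented by exploiting the fact that exact 1-forms lie in the kernel of $\dad d$, which supplies gauge freedom to adjust the normal trace without disturbing the PDE or the tangential data.

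For existence, first I would extend $\ta\zeta$ to some 1-form $\zeta'$ on $M$ and then, applying Theorem \ref{thm:dir} to 2-forms, produce a harmonic 2-form $\mu$ on $M$ satisfying $d\mu = 0$, $\dad\mu = 0$, and $\ta\mu = \ta d\zeta'$; the theorem's compatibility conditions are trivially met. Second, I would apply Theorem \ref{thm:dir} to 1-forms to solve
\begin{equation*}
d\chi_0 = \mu, \quad \dad \chi_0 = 0, \quad \ta \chi_0 = \ta \zeta,
\end{equation*}
whose compatibility ($d\mu = 0$, $\dad 0 = 0$, $\ta\mu = \ta d\zeta'$) is built in by the first step. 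Then $\dad d\chi_0 = \dad\mu = 0$ and $\ta\chi_0 = \ta\zeta$, though generically $\no\chi_0 \neq 0$. The third step sets $\chi = \chi_0 + d\phi$ where $\phi$ is a scalar with $\phi|_{\pr M} = 0$ and normal derivative $n(\phi)|_{\pr M}$ chosen to cancel the normal component of $\chi_0$ on $\pr M$; such $\phi$ is produced by extending this Cauchy data from a collar of $\pr M$. The modification leaves $\dad d\chi = \dad d\chi_0 = 0$ and $\ta\chi = \ta\chi_0 = \ta\zeta$ intact, while enforcing $\no\chi = 0$, so $\chi$ solves all three conditions.

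For uniqueness, let $\eta = \chi_1 - \chi_2$ be the difference of two solutions, so that $\dad d\eta = 0$ and $\ta\eta = 0 = \no\eta$. Green's formula applied to the pair $(\eta, d\eta)$ gives
\begin{equation*}
\dket d\eta, d\eta \dbra = \dket \eta, \dad d\eta \dbra + \int_{\pr M} \ta\eta \wedge * \no d\eta = 0,
\end{equation*}
forcing $d\eta = 0$. Triviality of homology then yields $\eta = d\phi$ for some scalar $\phi$, and the vanishing of both $\ta\eta$ and $\no\eta$ implies $d\phi|_{\pr M} = 0$, precisely the claimed ambiguity.
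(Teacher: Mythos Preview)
Your proof is correct and follows the same three-stage architecture as the paper's --- first produce a harmonic 2-form with prescribed tangential trace $\ta d\zeta'$, then lift it to a 1-form via Theorem~\ref{thm:dir}, then correct by an exact form --- but the execution differs in two places worth noting. In the second step the paper solves the \emph{Neumann} problem for $\eta=\dad\beta$ (prescribing $\no\eta=0$), so that the normal condition is built in and the third step must produce a closed $d\phi$ with $\no d\phi=0$ and prescribed tangential trace, for which the paper invokes Theorem~3.3.3 of \cite{Gunter}; you instead solve the \emph{Dirichlet} problem (prescribing $\ta\chi_0=\ta\zeta$), so the tangential condition is built in and your third step reduces to a collar extension of Cauchy data for a scalar, which is more elementary. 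For uniqueness, the paper argues piecewise through the Hodge--Morrey--Friedrichs decomposition (uniqueness of $\theta$, then of $\eta$, then non-uniqueness of $d\phi$), whereas your Green's-formula argument on the difference $\eta$ is shorter and avoids the decomposition entirely. Both routes rest on the same two ingredients --- the Dirichlet/Neumann boundary-value theorem and the vanishing of the relevant harmonic fields under trivial homology --- so neither buys substantially more generality, but your uniqueness step is the cleaner of the two.
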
 
The gap in the claim above is that the undetermined exact part will have a normal derivative. However we will also show that this normal derivative is also given by the boundary data, i.e. we will show that
\begin{equation}\label{eq:ortderexact}
\xi^i_e= h^{ij} \pr_j \lambda \quad \mbox{and} \quad \no \xi^i_e=0 \quad \ra \quad \left. \lp \ortder \xi_e^{\parallel a} \rp \right|_{\pr M}=- \extbnd^{ab} \consubm_{e_b} \lambda \sgd
\end{equation}
First let us first prove this claim. Note that
\begin{align}
\no \xi^i_e=0 \quad \ra \quad \left. \consubm_n \lambda \right|_{\bnd} = 0 \quad \ra \quad \left. \consubm_{e_a} \consubm_n \lambda \right|_{\bnd}=0 \sgd
\end{align}
$\ortder \xi_e^a$ is calculated as
\begin{align}
\ortder \xi_e^a &= \consubm_n \lp k^{ab} \consubm_{e_b} \lambda \rp  + {\twoconnbnd_b}^a \metbnd^{bc} \consubm_{e_c} \lambda \sgc \\
&= \consubm_n k^{ab} \consubm_{e_b} \lambda + k^{ab} \lp \lb \consubm_n , \consubm_{e_b} \rb \lambda + \consubm_{e_b} \consubm_n \lambda \rp + {\twoconnbnd_b}^a \metbnd^{bc} \consubm_{e_c} \lambda \sgd
\end{align}
Using the facts
\begin{align}
\consubm_n k^{ab}= -2 \twoconnbnd^{(ab)} \cg \lb \consubm_n , \consubm_{e_b} \rb= - \accbnd_b n + \lp {\twoconnbnd_b}^c - \extbnd_b^c \rp e_c \sgc
\end{align}
one gets
\begin{equation}
\ortder \xi_e^a = - \extbnd^{ab} \consubm_{e_b} \lambda \sgd
\end{equation}

Now we move onto proving Claim \ref{cl:bvp}:\noclub[4]
\begin{proof}
Let $\chi$ be a one form on $\subm$. By \hmf~decomposition we can write it as
\begin{equation}
\chi= d \phi + \psi \gwg \dad \psi=0 \gag \no \psi=0
\end{equation}
Furthermore by Friedrichs decomposition, since we are on a manifold with trivial homology we know
\begin{equation}
\psi=\dad \beta \sgd
\end{equation}
Our problem then reduces to showing that there exists a $\chi$ unique up to exact forms such that
\begin{equation}
\dad d \dad \beta=0 \cg \no \beta=0 \cg \no d\phi=0 \cg \ta d \phi + \ta \dad \beta= \ta \zeta
\end{equation}
for a given $\ta \zeta$ on the boundary. Now let $\theta= d \dad \beta$. Then $\theta$ should satisfy
\begin{equation}
\dad \theta=0 \cg d \theta=0 \gag \ta \theta= \ta d \zeta \sgc
\end{equation}
so $\theta$ should be a harmonic form in $\subm$ with given tangential boundary condition. Since we have trivial homology and $\harmk_D(M)=0$ this $\theta$ will be unique, the existence is guaranteed by the fact that the boundary form that $\theta$ is supposed to be equal to is a closed one form, see theorem \ref{thm:dir}. Now let $\eta=\dad \beta$, then it should satisfy
\begin{equation}
d \eta= \theta \cg \dad \eta=0 \gag \no \eta=0 \sgd
\end{equation}
Again since there exists no Neumann field, such $\eta$ will be unique and existence is again guaranteed by theorem \ref{thm:dir}. Thus we have proven for a $\chi$ satisfying the conditions \eqref{eq:bvp}, the coexact part $\dad \beta$ is unique. Now we move on to the exact part. $\phi$ should be a function that satisfy 
\begin{equation}
\no d \phi=0 \gag \ta d \phi = \ta \delta \beta - \ta \zeta \sgd
\end{equation}
Such $\phi$ exists by the following theorem:
\begin{thm}[Thm 3.3.3 \cite{Gunter}]
Consider a closed boundary form $\ta \psi$ on the boundary of a homologically trivial $M$. Then there exists a closed form $\omega$ such that $\no \omega=0$ and $\ta \omega=\ta \psi$.
\end{thm}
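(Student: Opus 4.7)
The plan is to construct $\omega$ as a correction of an ambient extension of $\ta\psi$, then close the construction using the Hodge--Morrey--Friedrichs machinery developed in Section \ref{sec:hmf}.

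First I would pick any ambient extension of the boundary data: using a collar neighborhood of $\bnd$ and a smooth cut-off function, construct a form $\hat\psi$ on $M$ satisfying $\ta\hat\psi = \ta\psi$ and $\no\hat\psi = 0$. Such an $\hat\psi$ is generally not closed in the bulk, but the obstruction $F := d\hat\psi$ is automatically closed on $M$, and on the boundary one computes
\begin{equation}
\ta F = d\ta\hat\psi = d\ta\psi = 0
\end{equation}
by the hypothesis that $\ta\psi$ is closed. This is exactly the compatibility condition needed later.

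Next I would seek $\omega$ in the form $\omega = \hat\psi - \lambda$, which reduces the three requirements on $\omega$ to the single inhomogeneous boundary value problem
\begin{equation}
d\lambda = F, \qquad \ta\lambda = 0, \qquad \no\lambda = 0.
\end{equation}
Applying the Dirichlet variant of Theorem~\ref{thm:dir} with data $\chi = F$, $\rho = 0$, and prescribed tangential trace $0$ produces a unique $\lambda$ with $d\lambda = F$, $\dad\lambda = 0$, $\ta\lambda = 0$; the required compatibility conditions $dF = 0$, $\dad 0 = 0$, $\ta F = 0$ all hold. To promote this $\lambda$ into one that simultaneously satisfies $\no\lambda = 0$, I would invoke the Hodge--Morrey--Friedrichs decomposition: since $M$ is homologically trivial, both $\harmk_D(M)$ and $\harmk_N(M)$ vanish, and the remaining freedom in $\lambda$ suffices to pin down its normal trace on $\bnd$.

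The principal obstacle is the simultaneous enforcement of $\ta\lambda = 0$ and $\no\lambda = 0$ on the same primitive $\lambda$, since the two variants of Theorem~\ref{thm:dir} treat these boundary conditions separately. Circumventing this requires a careful bookkeeping in the HMF decomposition $\lambda = d\alpha + \dad\beta + \kappa$: assign the tangential condition to the exact piece $d\alpha$ (by choosing $\alpha \in H^1\Omega^{\bullet}_D$), assign the normal condition to the coexact piece $\dad\beta$ (by choosing $\beta \in H^1\Omega^{\bullet}_N$), and use the vanishing of $\harmk_D(M)$ and $\harmk_N(M)$ to kill the harmonic remainder $\kappa$. Once $\lambda$ is produced, the form $\omega = \hat\psi - \lambda$ is closed by construction, $\ta\omega = \ta\psi$, and $\no\omega = 0$, establishing the claim.
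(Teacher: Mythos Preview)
The paper does not give its own proof of this statement; it simply invokes Schwarz's Theorem~3.3.3 as a black box inside the proof of Claim~\ref{cl:bvp}. So there is no ``paper's proof'' to compare against, and I evaluate your argument on its own merits.

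Your construction has a genuine gap at the promotion step. Once you solve the Dirichlet problem of Theorem~\ref{thm:dir} for $\lambda$ with $d\lambda=F$, $\dad\lambda=0$, $\ta\lambda=0$, that $\lambda$ is \emph{unique}; there is no ``remaining freedom'' to additionally force $\no\lambda=0$. Your HMF bookkeeping does not escape this: writing the sought primitive as $\lambda=d\alpha+\dad\beta$ with $\ta\alpha=0$ and $\no\beta=0$, the requirement $d\lambda=F$ becomes $d\dad\beta=F$, and together with $\ta\dad\beta=0$, $\dad(\dad\beta)=0$ this pins $\dad\beta$ down as exactly the unique Dirichlet solution again. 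But $\no\beta=0$ would then force $\no(\dad\beta)=0$, which is precisely the condition you are trying to establish --- the argument is circular. The harmonic piece $\kappa$ does not help either: it lives in the infinite-dimensional space $\harmk(M)$, and the vanishing of $\harmk_D(M)$ and $\harmk_N(M)$ says nothing about a generic harmonic field.

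What actually buys you freedom is dropping the $\dad\lambda=0$ condition: solutions of $d\lambda=F$ with $\ta\lambda=0$ form an affine space over closed forms with vanishing tangential trace, i.e.\ over $\{d\alpha:\ta\alpha=0\}$. You then need a separate collar argument showing that $\no(d\alpha)$ can be prescribed arbitrarily on $\bnd$ subject only to $\ta\alpha=0$, so as to cancel $\no\lambda_0$. That step is the real content and is missing from your sketch. (As a warning that care is needed: for $(n-1)$-forms on $B^n$ with $\ta\psi$ the volume form of $S^{n-1}$, Stokes' theorem obstructs any closed $\omega$ with $\no\omega=0$ and $\ta\omega=\ta\psi$; so either the paper's restatement has dropped a hypothesis from Schwarz, or the intended degree range excludes this case.)
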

\noindent
But this is not unique, since one can add any $d\alpha$ such that $\left. d \alpha \right|_{\partial M}=0$ to $d \phi$. 
\end{proof}
Thus we have shown that for a given one form on $\bnd$ there exists a $\chi$ with zero normal component on $\bnd$ that is unique up to an exact one form that vanishes on the boundary. That such one form exists is guaranteed by the above theorem.

\paragraph{Summary:}Let us summarize our conclusions. We are looking for solutions to Einstein equations in \gnc~that are of the form
\begin{equation}
\vacspmet(t)= \sigma_t^* \refmet \sgc
\end{equation}
so that they are parametrized by flows $\sigma_t$, whose tangent vector field we denote by $\chi$. To have boundary unaltered we impose
\begin{equation}
\left. \chi^{\perp} \right|_{\bnd}=0 \sgd
\end{equation}
To solve Einstein equations, $\chi$ should satisfy the momentum constraint
\begin{equation}
D^{i} \partial_{[i} \chi_{j]}=0 \sgc
\end{equation}
and Hamiltonian constraint
\begin{equation}
(D_i\chi^i)^2-D_{(i}\chi_{j)} D^i\chi^j=0 \sgc
\end{equation}
and extremize
\begin{equation}
L = \frac{1}{2}\int_{\bnd} d^dx \sqrt{\det \metbnd} \lp \chi_a \ortder \chi^a - \extbnd_{ab} \chi^a \chi^b  \rp \sgd
\end{equation}
We have shown the existence of $\chi$ with vanishing normal on the boundary that solves the momentum constraint and for a given value at the boundary, uniqueness up to an exact form that vanish on the boundary. We have also shown that the above action on the boundary is only made up of the boundary data. 

Note that thus far we have not commented on the Hamiltonian constraint except that it forces solutions to the action on the boundary to be null geodesics instead of any geodesic. Given a solution to the momentum constraint, since there will be an arbitrary exact part, Hamiltonian equation will become an equation for this exact part; i.e. let the solution to the momentum constraint be
\begin{equation}
\chi= d \phi + \dad \beta \sgc
\end{equation}
where $\beta$ is completely fixed and $d \phi$ is completely arbitrary in bulk but fixed on the boundary given a boundary value for $\chi$. Then Hamiltonian equation can be written as
\begin{equation}
\lp \consubm^2 \phi \rp - \consubm_i \consubm_j \phi \consubm^i \consubm^j \phi - 2 \consubm^i \consubm_k \beta^{kj} \consubm_{(i} \consubm_{j)} \phi= \consubm_{\left( i \right|} \consubm^k \beta_{k \left| j \right)} \consubm^i \consubm_k \beta^{kj} \sgd
\end{equation}
This equation is to solved for $d \phi$ for given $\dad \beta$ and boundary conditions. Note that this equation is second order in Hessian of $\phi$. Such equations in the literature is known as k-Hessian equations, see \cite{Wang2009}- our case falling in the category of 2-Hessians. Leaving a more specific study to future, we conjecture that solutions $d\phi$ to this equation exist and are unique. Note that we have already completely specified the theory on the boundary $\bnd$ by the momentum constraint, thus the use of Hamiltonian constraint is to provide us the full solution $\chi$ in the bulk.

\section{Orthogonal Decomposition of Constraints} \label{sec:decconst}

Note that even though we have shown the existence and uniqueness of solutions to the momentum constraint up to exact forms that vanish on the boundary, we have not built the explicit solutions. Note that in proving the existence and uniqueness in the previous section, we have used the bulk Hodge decomposition but not orthogonal decomposition. In this section we perform this decomposition for the constraint equations in an attempt to solve them. We will start with arbitrary dimension and arbitrary Ricci flat bulk metric and step by step will simplify the equations by making assumptions, until they come to a form we can easily solve. All the other more general cases we cannot solve, we leave as an open problem.
\subsection{Momentum Constraint}
First we perform the orthogonal decomposition of the momentum constraint. One gets
\begin{align}
d\chi &= \left( \accbnd_b \chi^\perp + D^\perp \chi_b - \conbnd_b \chi^\perp + \extbnd_{ab} \chi^a \right) n \wedge e^b + \conbnd_{[a} \chi_{b]} e^a \wedge e^b \\
&= t_b  n \wedge e^b + \conbnd_{[a} \chi_{b]} e^a \wedge e^b
\end{align}
where we define $t_b$ for simplicity. Then
\begin{align}
\dad d \chi & = \frac{1}{2} \left( D^\perp t_b + \extbnd t_b - \extbnd_{ab} t^a \right) e^b - \frac{1}{2} \conbnd_b t^b n + \left( \conbnd^a \conbnd_{[a} \chi_{b]}- \conbnd_{[a} \chi_{b]} \accbnd^a \right) e^b \sgc
\end{align}
so that the momentum constraint in the split form is
\begin{align}
P_b(\chi) &=\frac{1}{2} \left( D^\perp t_b + \extbnd t_b - \extbnd_{ab} t^a \right) + \conbnd^a \conbnd_{[a} \chi_{b]}-  \conbnd_{[a} \chi_{b]} \accbnd^a = 0 \sgc \\
P^{\perp}(\chi) &=\conbnd_b t^b =0 \sgd
\end{align}
One should in addition take into consideration that the bulk Ricci tensor was set to zero. Also performing the orthogonal splitting for this case will give us
\begin{align}
& -n(\extbnd)-\extbnd^{ab} \extbnd_{ab} + (\conbnd_a \accbnd)^a - \accbnd^a \accbnd_a =0  \sgc \\
& \conbnd_b \extbnd^{ba}- \conbnd^a \extbnd =0 \sgc \\
& ^{(d-1)} R_{ab} - \extbnd \extbnd_{ab} + 2 {\twoconnbnd_a}^c \extbnd_{cb} - n(\extbnd_{ab}) + (\conbnd_a \accbnd)_b- \accbnd_a \accbnd_b =0 \sgd
\end{align}
As can be seen above equations get already complicated at this stage. As a first assumption we restrict ourselves to a simplified basis:
\begin{assumption}
Assume $\lb n,\eb \rb=0$ so that
\begin{equation}
\lb n, \eb \rb = - \accbnd_b n + \lp \twoconnbnd^b_a - \extbnd^b_a \rp \eb =0 \quad  \ra \accbnd_b=0 \gag \twoconnbnd^b_a = \extbnd^b_a  \sgd
\end{equation}
\end{assumption}
\noindent
With this assumption we get
\begin{align}
2 P^\perp (\chi) &=  \conbnd^2 \chip  - 2 \conbnd_b \lp \extbnd^b_a \chia \rp - \conbnd_a \lp \dot{\chi}^a \rp =0 \sgc \\
2 P^{a}(\chi) &= \ddot{\chi}^a + \conbnd^2 \chia + \lp \ext \delta^a_b + 2 \extbnd^a_b \rp \lp \dot{\chi}^b - \conbnd^b \chip \rp - \conbnd^a \dot{\chi}^{\perp} + \prescript{(d-1)}{}{R^a_b} \chib - \conbnd^a \conbnd_b \chib =0, 
\end{align}
where we define the dot derivative as
\begin{equation}
n^i \pr_i f = \dot{f} \sgd
\end{equation}
Now remember that our boundary $\bnd$ will be a closed surface and topologically equivalent to $S^{d-1}$, and no harmonic forms live on it. Since we have also chosen a foliation for the spatial manifold $M$, each integral manifold will also be topologically equivalent to $S^{d-1}$. Because of this at each slice one can decompose $\chit$ as
\begin{equation}
\chi^a= \conbnd^a f + \conbnd_c \omega^{ca} \sgd
\end{equation}
Using this decomposition equations take the form
\begin{align}\label{eq:bndhodge}
2P^\perp(\chi) &= \conbnd^2(\chip - \dot{f}) - 2 \extbnd^{b}_{c} \conbnd_{b} \conbnd_{a} \omega^{a c} - \conbnd_{a} \extbnd \conbnd_{c} \omega^{c a} \sgc \\
2 P^{c}(\chi) &= \conbnd^{c} \left( \ddot{f} - \dot{\chi}^\perp \right) - 2 \extbnd^{c b} \conbnd_{b} \left( \dot{f}- \chip \right) + \extbnd\conbnd^{c} \left( \dot{f}-\chip \right) \nonumber  \\ 
&   + \left( 2 \extbnd^{c}_{b} + \extbnd \delta^{c}_{b} \right) \left( \conbnd_{d} \dot{\omega}^{d b} + \conbnd_{d} \extbnd \omega^{d b} \right)   + 2 \conbnd_{a} \extbnd \dot{\omega}^{a c} + \conbnd_{a}\dot{\extbnd} \omega^{a c} + \conbnd_{d} \ddot{\omega}^{d c} \nonumber \\ &+ \conbnd^2 \conbnd_{d} \omega^{d c} + ^{(d-1)}{R}^{c}_{b} \conbnd_{d} \omega^{d b} \sgd
\end{align}
This splitted form suggest $\chi^{\perp}=\dot{f}$, with $\omega=0$ this would correspond to a bulk solution $\chi^i=\nabla^i f$, the arbitrary exact form we already knew would appear. By letting $\chi^{\perp}=\dot{f}$ anyways, one can get two equations for $\omega$, still nontrivial to solve.  To get some more simplification let us consider a special case:
\begin{case}
Let us consider the case where the metric on the boundary has the form:
\begin{equation}
\metbnd_{ab}(r,y)=\gamma(r) \breve{\metbnd}_{ab}(y^a)
\end{equation}
still in arbitrary dimension. For this case the extrinsic curvature and Ricci tensor will be proportional to the boundary metric i.e.:
\begin{equation}
\extbnd_{ab}= \frac{\dot{\gamma}}{\gamma} \metbnd_{ab} \cg ^{(d-1)}R_{ab}= \frac{(s-1)}{4} \frac{\dot{\gamma}^2}{\gamma^2} \metbnd_{ab} \sgd
\end{equation}
where $s=d-1$.Furthermore imposing the flatness equations enforce:
\begin{equation}
\gamma(r)=(r+c)^2
\end{equation} 
for some constant $c$.
\end{case}
\noindent
Under this assumption orthogonal part of the momentum constraint becomes
\begin{equation}
2P^\perp(\chi) =\conbnd^2 \left( \chi^\perp - \dot{f} \right) =0 \sgd
\end{equation}
Since the boundary is topologically trivial without boundary this mean $\chi^\perp=\dot{f}$. Then the tangential momentum constraint becomes
\begin{equation}
2 P^{c}(\chi) = \conbnd_{d} \ddot{\omega}^{\rho c} + \frac{2}{r+c} \left( 1 + \frac{s}{2} \right) \conbnd_{d} \dot{\omega}^{d c} + \conbnd^2 \conbnd_{a} \omega^{a c} + \frac{(s-1)}{(r+c)^2} \conbnd_{a} \omega^{a c} =0 \sgd
\end{equation}
\expl{simplify further by switching $D^2 D$ ?}Note that this assumption has allowed us to decouple the boundary exact and coexact parts, and reduced the problem to a differential equation for the boundary coexact part. 
Existence of the two form $\omega$ here complicates the matters. In place of the assumption above, starting from \eqref{eq:bndhodge}, if one focuses on $d=3$, since a two form will be hodge dual to a one form things will also simplify in a different way. Let us consider this case:
\begin{case}
Consider the case with coordinate basis and $d=3$ without any other assumption on the metric.
\end{case}
For this case since the boundary will be 2 dimensional, one can use the fact that any antisymmetric rank 2 tensor should be proportional to $ \volk^{a b} $,then $\omega_{ab}= \omega(y^a) {\volk}_{ab} $ and e.g.
\begin{equation}
\extbnd^{c}_{b} \volk^{b a} - \extbnd^{a}_{b} \volk^{b c} = \extbnd \volk^{c a} \sgd
\end{equation}
Then
\begin{align}
2P^\perp(\chi) &= \conbnd^2(\chi^\perp - \dot{f}) - 2 \extbnd^{b}_{c} \volk^{d c} \conbnd_{b} \conbnd_{d} \omega - \volk^{c d}  \conbnd_{d} \extbnd \conbnd_{c} \omega \sgc\\
2 P^{c}(\chi) &= \conbnd^{c} \left( \ddot{f} - \dot{\chi}^\perp \right) - 2 \extbnd^{c b} \conbnd_{b} \left( \dot{f}- \chi^\perp \right) + \extbnd \conbnd^{c} \left( \dot{f}-\chi^\perp \right) \nonumber  \\ 
&   + \left( 2 \extbnd^{c}_{b} - \extbnd \delta^{c}_{b} \right)\left(  \conbnd_{a}  \dot{\omega}  - \extbnd \conbnd_{a} \omega \right) \volk^{a b} + \volk^{a c} \conbnd_{a} \left( \ddot{\omega} + \conbnd^2 \omega \right) \sgd
\end{align}
Note that now boundary coexact parts simplify but we lose the boundary exact - coexact decoupling. An example \expl{and the only?} that satisfy both of these assumptions is actually the case of 3 dimensional ball. For this case we again have
\begin{equation}
2P^\perp(\chi) =\conbnd^2 \left( \chi^\perp - \dot{f} \right) =0 \sgc
\end{equation}
again meaning $\chi^\perp=\dot{f}$. Note that there are no other conditions on $\chi^\perp$ and $f$, and they actually correspond to the arbitrary exact part of the solution, $\chi^i= \metsubm^i f$ we have shown to exist. Tangential equation then simplifies significantly:
\begin{equation}
\vol_k^{ab} \conbnd_b \lp \ddot{\omega} + \conbnd^2 \omega \rp =0 \sgd
\end{equation}
This will be the the laplacian in 3-d flat space. Using spherical harmonics one can solve it as
\begin{equation}
\omega= a_{lm} r^{(l+1)} Y_{lm} \sgc
\end{equation}
resulting in a bulk vector
\begin{equation}
\chi^i= \vol_h^{ijk} x_j \pr_k \omega \sgd
\end{equation}
We will discuss this case of the ball in more detail later.

Other special case we consider is 2 dimensional spatial manifold (but no assumptions about the metric on $\bnd$).
\begin{case}
Let us now focus on the case $d=2$.The spatial metric will be of the form
\begin{equation}
ds^2=dr^2 + \gamma(r,\theta) d \theta^2 \sgd
\end{equation} 
\end{case}
\noindent
For this case the constraint equations become
\begin{align}
2 P^\perp(\tau) &= \frac{1}{\sqg} \pr_{\theta} \lpa \frac{\pr_{\theta} \tau^r}{\sqg} \rpa - \frac{1}{\sqg} \pr_r \lpa \frac{\pr_r \gamma}{\sqg} \rpa \tau^r - \frac{1}{\gamma} \pr_{\theta} \pr_r \lpa \gamma \tau^{\theta} \rpa + \frac{\pr_{\theta} \gamma}{2 \gamma^2} \pr_r \lpa \gamma \tau^{\theta} \rpa \sgc \\
2 P^{\theta}(\tau) &= \pr^2_r \tau^{\theta} + \frac{3}{2} \frac{\pr_r \gamma}{\gamma} \pr_r \tau^{\theta} + \frac{1}{2} \frac{\pr_r \gamma}{\gamma^2} \pr_{\theta} \tau^r - \frac{1}{\gamma} \pr_{\theta} \pr_r \tau^r \sgd
\end{align}
Now let us make a specific choice:
\begin{case}
Let $d=2$, $\gamma=r^2$ such that
\begin{equation}
ds^2=dr^2 + r^2 d \theta^2 \sgd
\end{equation}
\end{case}
\noindent
Then the equations simplify into
\begin{align}
2 P^r(\tau) &= \frac{1}{r^2} \pr_{\theta}^2 \tau^r - \pr_r \lpa r^2 \pr_{\theta} \tau^{\theta} \rpa \sgc \\
2 P^{\theta}(\tau) &= \frac{1}{r^3} \pr_r \lpa r^3 \pr_r \tau^{\theta} \rpa - \frac{1}{r} \pr_r \lpa \frac{\pr_{\theta} \tau^r}{r} \rpa \sgd
\end{align}
Then we perform a Fourier series expansion for both of the components. Solutions to this are vector fields
\begin{equation}
\tau= \frac{\pr}{\pr \theta} \cg \tau= \frac{1}{r^2} \frac{\pr}{\pr \theta} \cg \tau= f(r) \frac{\pr}{\pr r} \sgc
\end{equation}
together with the two infinite towers
\begin{align}
\tau &=\sum_{n=1} -\frac{\dot{g}_n(r)}{n} \cos(n \theta) \frac{\pr}{\pr r} + \frac{g_n(r)}{r^2}  \sin(n \theta)\frac{\pr}{\pr \theta} \sgc \\
\tau &= \sum_{n=1} \frac{\dot{f}_n(r)}{n} \sin(n \theta) \frac{\pr}{\pr r} +  \frac{f_n(r)}{r^2} \cos(n \theta) \frac{\pr}{\pr \theta} \sgd
\end{align}
Note that these resemble, but indeed different than the generators of 2d conformal transformations
\begin{align}
\tau &=\sum_{n=1} r^{n+1} \cos(n \theta) \frac{\pr}{\pr r} + r^n \sin(n \theta)\frac{\pr}{\pr \theta} \sgc \\
\tau &= \sum_{n=1} - r^{n+1} \sin(n \theta) \frac{\pr}{\pr r} +  r^n \cos(n \theta) \frac{\pr}{\pr \theta} \sgd
\end{align}

\subsection{Hamiltonian Constraint}
We also would like to decompose the Hamiltonian Constraint. Since this equation non-linear, to simplify the affair we assume right from the beginning the coordinate basis.
\begin{assumption}
Assume $\lb n,\eb \rb=0$ so that
\begin{equation}
\lb n, \eb \rb = - \accbnd_b n + \lp \twoconnbnd^b_a - \extbnd^b_a \rp \eb =0 \quad  \ra \accbnd_b=0 \gag \twoconnbnd^b_a = \extbnd^b_a \sgd
\end{equation}
\end{assumption}
\noindent
Then the Hamiltonian constraint becomes
\begin{align}
\ham &= \left( \extbnd_{a b} \extbnd^{a b} - \extbnd^2 \right) (\chi^\perp)^2 - 2 \extbnd \dot{\chi}^\perp \chi^\perp + \frac{1}{2} \conbnd_{a} \chi^\perp \conbnd^{a} \chi^\perp \nonumber \\ 
 &+ \conbnd_{a} \chi^\perp \dot{\chi}^{a} - 2 \left( \dot{\chi}^\perp + \extbnd \chi^\perp \right) \conbnd_{a} \chi^{a} + 2 \chi^\perp \conbnd_{a} \chi_{b} \extbnd^{a b} \nonumber \\ 
 & + \frac{1}{2} \gamma_{a b} \dot{\chi}^{a} \dot{\chi}^{b} + \frac{1}{2} \left( \conbnd_{b} \chi^{a} \conbnd_{a} \chi^{b} + \conbnd_{b} \chi_{a} \conbnd^{b} \chi^{a} \right) - \left( \conbnd_{a}\chi^{a} \right)^2
\end{align}
where we again define $n(f)=\dot{f}$. Let us focus on the case where $d=3$ again.
\begin{case}
Let $d=3$ so that on any of the leaf of the foliation, any vector field can be written as
\begin{equation}
\chi^a= \conbnd^a f + \vol_k^{ab} \conbnd_b \omega \sgd
\end{equation}
\end{case}
Let us also take the integration over an arbitrary closed surface within the spatial manifold that is topologically $S^{(d-1)}$. Note that if we take this surface to be the boundary and take the boundary preserving condition $\left. \chi^\perp \right|_{\bnd}=0$ we should get back the action we have derived. Nevertheless on an arbitrary leave of foliation we get \lattech{fix the alignment}
\begin{align}
0 &= \int_{\bnd} \vol_{\metbnd}  \Bigl(  - \breve{R} \dot{f}^2 + 2 \dot{f} \lp -  \extbnd \ddot{f}  -  \conbnd^2 \dot{f} + 3 \extbnd^{a b} \conbnd_{a} \conbnd_{b} f + 2  \conbnd_{a} \extbnd \conbnd^{a} f \rp \nonumber \\  & \quad \quad \quad - 2 (\ddot{f} + \extbnd \dot{f} ) \conbnd^2 f  + \lp 2 \extbnd^{c a} \extbnd_{a}^{d}  - \breve{R}^{cd} \rp \conbnd^{c}f \conbnd^{d}f \Bigr)  \nonumber \\
 &+ \int_{\bnd} \vol_{\metbnd}   \Bigl( 2 \dot{f} \conbnd_{a} \extbnd \conbnd_{d} \omega \epsilon^{d a} + 2 \dot{f} \extbnd^{a b} \conbnd_{b} \conbnd^{c} \omega \epsilon_{c a}  + 2 \extbnd_{a}^{c} \conbnd_{c}f \left( \extbnd \conbnd_{d} \omega - \conbnd_{d} \dot{\omega} \right) \epsilon^{d a} \nonumber \\ & \quad \quad \quad  - 2 \breve{R}_{c d}  \epsilon^{c a} \conbnd^{d} \omega \conbnd_{a}f \Bigr) \nonumber \\
  &+ \int_{\bnd} \vol_{\metbnd}  \Bigl( \frac{1}{2} \extbnd^2 \conbnd_{a} \omega \conbnd^{a} \omega - \extbnd \conbnd^{a} \omega \conbnd_{a} \dot{\omega} + \frac{1}{2} \conbnd^{a} \dot{\omega} \conbnd_{a} \dot{\omega} + \frac{1}{2} \conbnd^2 \omega \conbnd^2 \omega \nonumber \\ & \quad \quad \quad  - \breve{R}_{a b} \conbnd^{a} \omega \conbnd^{b} \omega \Bigr) \sgd
\end{align}
This still rather complicated to solve. We now focus on the case of a ball again.
\begin{case}
Let us further assume that each leaf of foliation is a round 2-sphere with with radius $r$.
\end{case}
\noindent
We get
\begin{align}
0 & = \int d \Omega_r \left( -\frac{2}{r^2} \dot{f}^2 - \frac{4}{r} \ddot{f} \dot{f} - 2 \dot{f} \conbnd^2 \dot{f} + - 2 \ddot{f} \conbnd^2 f + \frac{2}{r} \dot{f} \conbnd^2 f - \frac{1}{r^2} f \conbnd^2 f \right) \\ 
& + \int d\Omega_r \left( -\frac{1}{r^2} \omega \conbnd^2 \omega + \frac{2}{r} \dot{\omega} \conbnd^2 \omega - \frac{1}{2} \dot{\omega} \conbnd^2 \dot{\omega} + \frac{1}{2} \conbnd^2 \omega \conbnd^2 \omega \right)=0 \sgd
\end{align}
Now plug in the solution to $\omega$ coming from the momentum constraint and also expand $f$ in terms of the spherical harmonics, i.e. let 
\begin{align}
\omega &= Y_{lm} c_{lm} r^{(l+1)} \sgc \\
f &= Y_{lm} f_{lm}(r) \sgd
\end{align}
Noting $d \Omega_r= d \Omega_{unit} r^2$ we get
\begin{equation}\label{eq:Hamcons1}
0= \sum_{lm}\frac{d}{dr}  \left( -2 \dot{f}_{lm}^2 r  + l (l+1) \left( 2 \dot{f}_{lm} f_{lm} - \frac{f_{lm}^2}{r} + \frac{1}{2} c_{lm}^2 r^{2l+1} (l-1) \right) \right) \sgd
\end{equation}
Now taking the r integral from 0 to R , should give us back the metric form.  Using $\dot{f}_{lm}(R)=0$ this can easily be seen if we assume
\begin{equation}
\lim_{r \rightarrow 0} \frac{f_{lm}^2(r)}{r}=0 \sgd
\end{equation}
Assume now for simplicity that each term in \eqref{eq:Hamcons1} vanishes identically, and try to see if there exists a solution that satisfies this. Defining $g_{lm}(r)= f_{lm}(r)/ r^{(l+1)}$ equation becomes
\begin{equation}
-2r^2 \dot{g}_{lm}^2 + 2 (l+1)(l-2) \dot{g}_{lm} g_{lm} r + (l+1)(2l^2-l-2) g_{lm}^2 = -l\frac{(l^2-1)}{2} c_{lm}^2 
\end{equation}
with the boundary condition $g_{lm}(R)= \pm \sqrt{\frac{l-1}{2}} c_{lm}$. First few solutions look like this:
\begin{align}
g_{00} &= a_{00} \frac{R}{r} \sgc \\
g_{1m} &= a_{1m} \frac{(2R \pm r)}{r} \sgc \\
g_{2m} &= \mp \frac{a_{2m}}{2( \pm 2 + \sqrt{6})} \left( \left( \frac{r}{R}\right) ^{\sqrt{6}} \left( \pm 5 + 2 \sqrt{6} \right) \mp \left( \frac{R}{r} \right)^{\sqrt{6}} \right) \sgd
\end{align}
However one can check these do not solve the full Hamiltonian equation, thus our assumption fails. We conclude our investigation here. Later on we will take up the case of the 3-ball, since this was the only case we could explicitly solve the momentum constraint and discuss the form of the action. Note that our failure to solve the Hamiltonian equation only causes us not to have the explicit solution in the bulk, but since as we have proved that the action is completely determined by the data on the spatial boundary we can still comment on it.

\chapter{Structure of The Space of Vacua}

In the previous chapter we have investigated the adiabatic solutions to Einstein equations by writing down the equations for a metric obtained by the Manton approximation and bringing them into the a form composed of the constraints together with a reduced action. We showed that the momentum constraint gives a one-to-one relationship between vector fields on the boundary of the spatial slice up to exact forms which we conjectured to be uniquely solved in terms of the coexact part of the solution by the Hamiltonian constraint, and showed what the reduced action is and proved that it is solely composed of the boundary data. As we have pointed out, this action defines a metric on the space of vacua, and the solutions are to be geodesics on the space of vacua with respect to this metric, moreover Hamiltonian constraint forces these solutions to be null geodesics.

In this chapter we investigate the structure of the space of vacua via the information we have gathered. First in Section \ref{sec:wdw} we compare our results with results obtained in the literature on the Wheeler-deWitt (\wdw) metric, more specifically we discuss the signature of the metric on the space of vacua. In the Section \ref{sec:homSpofGR} we discuss the space of vacua in terms of the action of diffeomorphisms on it, and identify it as a (pseudo-)Riemannian homogeneous space.

\section{Signature of the Metric on the Space of Vacua}\label{sec:wdw}
\noindent
Metric on the configuration space $\consp$ we have written down in \eqref{eq:wdw} as
\begin{equation}
\metonvacua_h(\delta_1 h,\delta_2 h) = \int_M d^dx \sqrt{\det h} \lp h^{i[k} h^{j]l} \rp \delta_1 h_{ij} \delta_2 h_{kl} \sgc
\end{equation}
is known as the Wheeler-deWitt (\wdw) metric in the literature \cite{giulini}. Note that this metric is not positive definite, indeed by York's decomposition in \cite{york}, one can separate the tangent space into orthogonal parts with respect to this basis via
\begin{equation}
\delta h_{ij}= \delta h_{ij}^{TT} + \lp \consubm_i \xi_j + \consubm_j \xi_i -  \frac{2}{d} h_{ij} \consubm_i \xi^i \rp + h_{ij} \Omega
\end{equation}
where 
\begin{equation}
\consubm^i \delta h_{ij}^{TT}= 0 \gag h^{ij} \delta h_{ij}^{TT}=0 \sgd
\end{equation}
\wdw~metric is positive on first two parts and negative on the third part. By definition we are interested in the part of the tangent space written as
\begin{equation}\label{eq:diffeo}
\consubm_i \xi_j + \consubm_j \xi_i \sgd
\end{equation}
This will be a mixture of second and third parts and thus its norm will change in accordance with how much of it projected to which part of the tangent space \cite{giulini}.

However the form of \eqref{eq:diffeo} is not the only restriction we have on our tangent space. Remember the reduced form of the action we had in \eqref{eq:vacmet1}:
\begin{equation}
\metonvacua_h(\delta_{\chi} h,\delta_{\xi} h) = \frac{1}{2}\int_{\bnd} d^dx \sqrt{\det \metbnd} \lp \chi_a \ortder \xi^a - \extbnd_{ab} \xi^a \chi^b  \rp \sgd
\end{equation}
Let us think this metric now as an h-dependent inner product on the space of boundary vector fields:
\begin{equation}
\ket \chi,\xi \bra_h = \frac{1}{2}\int_{\bnd} d^dx \sqrt{\det \metbnd} \lp \chi_a \ortder \xi^a - \extbnd_{ab} \xi^a \chi^b  \rp \sgd
\end{equation}
One can check that
\begin{equation}
\ket \chi,\chi \bra_h = \begin{cases}
             \geq 0  & \text{if } \chi=df \sgc\\
             \leq 0  & \text{if } \chi=\dad \beta \quad \mbox{where} \, \beta \in \Lambda^{(d-2)}(\bnd) \sgd
       \end{cases}
\end{equation}
However these parts will not be orthogonal with respect to each other in general. Let us find the orthogonal complement of the exact part to illustrate this, using equation \eqref{eq:ortderexact}:
\begin{equation}
\ket df, \xi \bra=  \int_{\bnd} d^d x \sqrt{\det \metbnd} \lp - \extbnd_{ab} D^a f \xi^b \rp=0
\end{equation}
So vector fields that are orthogonal to exact parts should satisfy $\conbnd_a \lp \extbnd^a_b \xi^b \rp=0$, which will not be in general the case for an coexact part. In the next chapter however we will see that they will for the case of $M=\bar{B}_R^3$.
\section{Homogeneous Space Structure}\label{sec:homSpofGR}
Remember that when we first wrote down the Manton approximation in equation \eqref{eq:Manapp}, we started by the proposed solutions of the form
\begin{equation} \label{eq:Manapp2}
\vacspmet(t)= \sigma_t^* \refmet \sgc
\end{equation}
where $\sigma(t,x)$ is a flow on a spatial slice $\subm$ with the metric $\metsubm$, and $\refmet$ is the spatial part of a given solution of the Einstein equations in \gnc. Thus the space of metrics in the form of \eqref{eq:Manapp2} can be considered as the orbit of $\refmet$ under the \hyperlink{group action}{group action} $\cdot: \bdiffm \times \subm \ra \subm$ of \hypertarget{bnd prev}{boundary preserving} diffeomorphisms $\bdiff(\subm)$ on the configuration space $\consp$ given as (see Appendix \ref{subsec:appGrAct} for definitions related to group actions)
\begin{equation}
\phi \cdot \metsubm= \phi^* \metsubm \sgc
\end{equation}
and note that condition of boundary preservingness is needed since we have a manifold with boundary and would like to keep this boundary fixed. Without going into too much mathematical detail, we take this orbit to be a properly embedded submanifold \expl{elaborate?}. Note that boundary preserving diffeomorphisms do form a group, which we will call $\bdiff(\subm)$. We will show this at the Lie algebra level: we show that vector fields that are boundary preserving, i.e. that have a vanishing normal component on the boundary, form a closed Lie algebra; i.e.
\begin{equation}
\left. \lb \chi, \zeta \rb^\perp \right|_{\bnd}=0 \gig \chi,\zeta \in \vfs(m) \gstg \left. \chi^\perp \right|_{\bnd}=0 \gag \left. \zeta^\perp \right|_{\bnd}=0 \sgd
\end{equation} 
\begin{proof}
We need to see whether
\begin{equation}
\left. \lb \chi, \zeta \rb^\perp \right|_{\bnd}= \left. n_j \lp \chi^i \pr_i \zeta^j - \zeta^i \pr_i \chi^j \rp \right|_{\bnd}
\end{equation}
is equal to zero. Using $\left. \chi^\perp \right|_{\bnd}=0 \gag \left. \zeta^\perp \right|_{\bnd}=0$ one can show that
\begin{equation}
\left. \lb \chi, \zeta \rb^\perp \right|_{\bnd} = \left. \chi^a \zeta^b n_j \lp e^i_a \pr_i e^j_b - e^i_b \pr_i e^j_a \rp \right|_{\bnd} = \left. \chi^a \zeta^b n \lp \lb e_a, e_b \rb \rp \right|_{\bnd} \sgc
\end{equation}
which is equal to zero due to involutivity of the distribution we are using to foliate our spatial manifold $\subm$.
\end{proof}
Now remember that not any metric in the form of \eqref{eq:Manapp2} can solve Einstein equations, as a first step we have seen one needs to impose constraint equations on \eqref{eq:Manapp2}. Let us denote the set of boundary preserving diffeomorphisms that such that \eqref{eq:Manapp2} satisfy the constraint equations as $\bdiffc(\subm)$. The actual space of solutions should actually be the orbit of $\refmet$ under the action of this ``set". However for this to be meaningful one first needs to show that $\bdiffc(\subm)$ is actually a group. 

We do this via the following set of arguments: Let us denote the set of boundary preserving diffeomorphisms that are identity on the boundary as $\bdiff_0(M)$. Note that by definition $\bdiff_0(M)=\diff_0(M)$. This is a normal subgroup in $\bdiffm$. This can be seen as follows: Vector fields corresponding to $\bdiff_0(M)$ are boundary preserving vector fields that vanish on the boundary. Let us denote these by $\abdiffz$. These form an ideal in the set of boundary preserving vector fields, which we will denote as $\abdiff$. For this one should only check if
\begin{equation}
\left. \lb \chi, \xi \rb \right|_{\bnd} = 0 \quad \forall \chi \in \abdiff \gag \forall \xi \in \abdiffz \sgc
\end{equation} 
which can be easily done by again using the orthogonal decomposition on $\subm$. Here we note that this argument will not work for $\diffm$, i.e. $\diff_0(M)$ is not a normal subgroup of $\diffm$.

Since $\bdiff_0(\subm)$ is a normal subgroup of $\bdiff(\subm)$, the quotient $\bdiff(\subm)/\bdiff_0(\subm)$ is a Lie group. Note that this is the space of equivalence classes of boundary preserving diffeomorphisms that are related by a diffeomorphism that is identity on the boundary, and thus equivalent to $\bdiffc(\subm)$. The reason for this is that we have partially proved and partially conjectured in the preceding chapter that for a given boundary vector field, there exists a unique vector field that solves the constraint equations. Note also that $\bdiffc(\subm) \sim \diff(\bnd)$.

So as a result we identify our space of vacua as the orbit of $\refmet$ under the action of $\bdiffc(\subm)$. Then this action is transitive on the orbit of the $\refmet$ which we now declare to be the true space of vacua $\truevac$, so that $\truevac$ is a homogeneous space. Because of this, by Proposition \ref{prop:homsp}
\begin{equation}
\truevac \sim \bdiffc(\subm) / \bdiffc(\subm)_{\refmet} \sim \diff(\bnd) / \bdiffc(\subm)_{\refmet}
\end{equation}
\expl{does this fail for infinite dimensional groups such as our case, see pg 65 of Montogomery}\expl{also make comment that $\bdiffc(\subm)$ is equivalent to boundary diffeo}where $\bdiffc(\subm)_{\refmet}$ is the \hyperlink{isot}{isotropy group} of $\refmet$. By definition these should be the isometries of $\refmet$ that are in $\bdiffc(\subm)$. One can see indeed that all of the isometries of $\refmet$ is in $\bdiffc(\subm)$ since
\begin{equation}
\consubm_{(i} \xi_{j)}=0 \ra \momi(\xi)=0 \gag \ham(\xi)=0 \sgd
\end{equation}
Remember that the reduced action we have obtained is actually a metric on the space of vacua $\truevac$. Now we move onto check whether $\truevac$ is a Riemannian homogeneous space as was the case for Yang-Mills. For this we will use the equivalence of conditions of Proposition \ref{prop:RieHom}. i.e. We will show that the scalar product we have defined is $\Ad^{\bdiffc(\subm) / \bdiffc(\subm)_{\refmet}}$ invariant. For this we retreat back to a bulk form of the scalar product:
\begin{equation}
\ket \zeta, \lambda \bra = \int_{\bnd} n_i \nabla^{(i} \xi_\zeta^{j)} \xi^{\lambda}_j \sgc
\end{equation}
where $\chi^{\zeta}$ should be considered as the solution to constraint equations given a boundary vector field $\zeta$. Then showing  $\Ad^{\bdiffc(\subm) / \bdiffc(\subm)_{\refmet}}$ invariance is equivalent to showing the invariance
\begin{equation}
\ket \zeta, \lambda \bra = \ket \xi_\zeta, \xi_\lambda \bra= \ket \phi_* \xi_\zeta, \phi_* \xi_\lambda \bra \sgc
\end{equation}
where $\phi$ is an isometry that is boundary preserving. We show this at the infinitesimal level, i.e. let
\begin{equation}
\ket \xi_\zeta, \xi_\lambda \bra = \int_{\bnd} \lp \lie_{\xi_\zeta} \metsubm \rp \lp n, \xi_{\lambda} \rp \sgd
\end{equation}
Then we need to show
\begin{equation}
\delta_\chi \ket \xi_\zeta, \xi_\lambda \bra = \ket \lie_\chi \xi_\zeta , \xi_\lambda \bra + \ket \xi_{\zeta}, \lie_\chi \xi_\lambda \bra=0 \sgc
\end{equation}
where $\lie_\chi \metsubm=0$ and $\left. \chi^\perp \right|_{\bnd}=0$. But we have
\begin{align}
\delta_\chi \ket \xi_\zeta, \xi_\lambda \bra &= \int_{\bnd} \lie_{\lie_{\xi_\zeta}} \metsubm (n, \xi_\lambda) + \lie_{\xi_\zeta} \metsubm \lp n, \lie_\chi \xi_\lambda \rp \nonumber \\
&= \int_{\bnd} \lie_\chi \lp \lie_{\xi_\zeta} \metsubm (n, \xi_\lambda) \rp - (\lie_{\xi_\zeta} \metsubm) \lp \lie_\chi n, \xi_\lambda \rp \nonumber \\
&= -\int_{\bnd} \conbnd_a \chi^a \lie_{\xi_\zeta} \metsubm (n, \xi_\lambda) - ( \lie_{\xi_\zeta} \metsubm ) \lp \lie_\chi n, \xi_\lambda \rp \sgd
\end{align}
Now we use the properties of a bulk Killing vector under orthogonal decomposition:
\begin{align}\label{eq:bulkkill}
\consubm_{(i} \chi_{j)}= 0 \ra & \ortder \chi^\perp = \accbnd_a \chi^a \sgc \\
& \acc_a \chi^\perp + \conbnd_a \chi^\perp + \ortder \chi_a - \extbnd_{ab} \chi^b =0  \sgc\\
& \conbnd_{(a} \chi_{b)} + \extbnd_{ab} \chi^\perp=0 \sgc
\end{align}
and also
\begin{align}
\lie_\chi n= \lp \accbnd_a \chi^a - \ortder(\chi^\perp) \rp n - \lp \ortder \chi^b - \extbnd_a^b \chi^a \rp e_b  \sgd
\end{align}
So we get
\begin{equation}
\lie_\chi n=0 \gag \left. \conbnd_a \chi^a \right|_{\bnd}=0 \ra  \delta_\chi \ket \xi_\zeta, \xi_\lambda \bra =0 \sgd
\end{equation}
Thus we conclude that $\truevac$ is a Riemannian homogeneous space. Here we note one subtlety however. Note that the scalar product $\ket,\bra$ is identically zero for vector fields that satisfy
\begin{equation*}
\ortder \zeta^a - \extbnd_b^a \zeta^b= \mathbb{D} \zeta= 0 \sgd
\end{equation*}
From \eqref{eq:bulkkill} we see that the isotropy group is necessarily is in the kernel of $\mathbb{D}$ but not vice verse. If the kernel of $\mathbb{D}$ turns out to be a bigger set then the isotropy group then the metric on $\truevac$ will be degenerate. We will leave further discussion of this point to a future study. \expl{check if you can see this}

\chapter{General Relativity on a Ball} \label{ch:ball}
In the previous chapter we have seen which equations the adiabatic solutions we have proposed needs to satisfy and what the action reduces into. However we have not been able to solve the constraint equations explicitly except the case where we take the spatial slice as the three dimensional ball of radius $R$ and take the flat Euclidean metric as the reference metric. For this case we were able to solve the momentum constraint but not the Hamiltonian. In this chapter we analyze this case in more detail.

Since for each spatial slice, spatial metric is diffeomorphic to Euclidean space, equations for this case can be written in flat coordinates, and usual 3-d vector analysis can be utilized. In fact in this special spacetime, many operators and quantities will simplify; e.g. operations of $\dad$ and $d$ will become divergence and curl, hodge decomposition will become Helmholtz theorem, and by choosing a coordinate system acceleration will vanish and two connection will be equal to extrinsic curvature, while extrinsic curvature itself will be proportional to the flat metric. This analysis will help us better understand the procedure we carried out and results we obtained by turning them into an easier to visualize 3-d vector analysis.
In the first section we discuss Vector Spherical Harmonics as a tool to solve vectorial differential equations. Any vector in three dimensional Euclidean space can be uniquely decomposed into radial, ``longitudinal" and ``transverse" components that are defined through spherical harmonics. These components will be orthogonal. Vectorial differential equations like divergence and curl nicely factorizes in terms of these, and turn into a one dimensional differential equation in terms of functions of the radius as we will explain in the first section. The momentum constraint will become the curl of curl of a vector which we solve. 

In Section \ref{sec:Spheremom} we discuss this solution and what does the action look like in this case. We discuss the isotropy subgroup of the symmetries of this action, and the structure of the space of vacua. We comment on the consequences of the Hamiltonian constraint.
 
\section{Vector Spherical Harmonics}\label{sec:vecsph}
It has been shown in \cite{barrera} that any vector in three dimensional Euclidean space can be expanded in terms of spherical harmonics as
\begin{equation}
\vec{\xi}(r,\theta,\phi)= \sum_{lm} \left( \xr(r) \vec{Y}_{lm} + \xd(r) \vec{\Psi}_{lm} + \xt(r) \vec{\Phi}_{lm} \right) \sgc
\end{equation}
where 
\begin{align}
\vec{Y}_{lm} &= \hat{r} Y_{lm} \sgc \\
\vec{\Psi}_{lm} &= r \vec{\nabla} Y_{lm} \sgc \\
\vec{\Phi}_{lm} &= \vec{r} \times \vec{\nabla} Y_{lm} \sgc
\end{align}
where $Y_{lm}$s represent spherical harmonics. These three quantities are called vector spherical harmonics which we will further name as radial, longitudinal and transverse spherical harmonics respectively. These satisfy some very useful properties such as
\begin{equation}
\vy \cdot \vphi=0 \cg \vy \cdot \vpsi=0 \cg \vphi \cdot \vpsi=0 \sgc
\end{equation} 
and also
\begin{align}
\int \vol_{S^2} \vy \cdot \vec{Y}^*_{l'm'} &= \delta_{ll'} \delta_{mm'} \sgc \\
\int \vol_{S^2} \vpsi \cdot \vec{\Psi}^*_{l'm'} &= \int \vol_{S^2} \vphi \cdot \vec{\Phi}^*_{l'm'} = l(l+1) \delta_{ll'} \delta_{mm'} \sgc \\
\int \vol_{S^2} \vy \cdot \vec{\Psi}^*_{l'm'} &= \int \vol_{S^2} \vy \cdot \vec{\Phi}^*_{l'm'} = \int \vol_{S^2} \vpsi \cdot \vec{\Phi}^*_{l'm'} = 0 \sgd
\end{align}
Now we would like to see how some vector differential equations are solved by using this expansion. We start by writing down some expressions using this expansion, using equations (3.11a-3.14) in \cite{barrera}:
\begin{align}
\vn.\vec{\xi} &= \left( \frac{1}{r^2} \frac{d}{dr} \left( r^2 \xr \right) - \frac{l (l+1)}{r} \xd \right) Y_{lm} \sgc \label{div}\\ 
\vn \times \vec{\xi} &=- \frac{\xr}{r} \vphi + \frac{1}{r} \frac{d}{dr} \left( r \xd \right) \vphi - \frac{l(l+1)}{r} \xt \vy - \frac{1}{r} \frac{d}{dr} \left(r \xt \right) \vpsi \sgc \label{curl}\\
\vn \times \vn \times \vec{\xi} &= \left( \frac{l (l+1)}{r^2} \xt  -  \frac{d^2 \left( r \xt \right)}{r dr^2} \right) \vphi - \frac{l (l+1)}{r} \tilde{\xi}_{lm} \vy - \frac{d \left( r \tilde{\xi}_{lm} \right)}{r dr}  \vpsi \sgc \label{curlofcurl}
\end{align}
where we use Einstein summation convention and define
\begin{equation}
\tilde{\xi}_{lm}= - \frac{\xr}{r} + \frac{1}{r} \frac{d}{dr} \left( r \xd \right) \sgd
\end{equation}
As one can already see using this expansion vectorial differential equations will become one dimensional differential equations in $r$. Lets now investigate the solutions to the equations where the above differential expressions are set to zero.
\paragraph{Solution to divergence equation:} 
Divergence equation $\vn.\vec{\xi}=0$ is solved by setting equation \eqref{div} to zero. This equation gives
\begin{equation}\label{dive}
\xd l (l+1) =2 \xr + r \partial_r \xr \sgc
\end{equation}
thus the general solution to this equation can be written as (except l=0 term)
\begin{align}
\vec{\xi} &= \xt \vphi + \frac{1}{l (l+1)} \partial_r \left(r^2 \xr \right) \vn Y_{lm} + \xr Y_{lm} \hat{r} \\
&= \xt \vphi + \frac{1}{l (l+1)} \left( \hat{r}. \vn \left(r^2 \xr \right) \vn Y_{lm} - r^2 \xr \nabla^2 Y_{lm} \hat{r} \right) \sgc
\end{align}
which upon using (\cite{jackson})
\begin{equation}
\vn \times \left( \hat{r} f(r) \times \vn g \right) = \hat{r} f \nabla^2g - \frac{\partial f}{\partial r} \vn g
\end{equation}
takes the form
\begin{equation}
\vec{\xi}= \xt \vphi - \frac{1}{l (l+1)} \vn \times \left( \hat{r} r^2 \xr \times \vn Y_{lm} \right) + \frac{a}{r^2} \hat{r} \sgc
\end{equation}
where we have also added $l=0$ part.
\paragraph{Solution to curl equation:} 
Curl equation $\vn \times \vec{\xi}=0$ can be solved by setting equation \eqref{curl} to zero. This gives
\begin{equation}
\xr= \frac{d}{dr} \left( r \xd \right) \quad \mbox{and} \quad \xt=0 \sgc
\end{equation}
thus the general solution can be written as
\begin{align}
\vec{\xi} &= \frac{d}{dr} \left( r \xd \right) \hat{r} Y_{lm}+ \xd r \vec{\nabla}Y_{lm} \\
&= \vn \left( r \xd Y_{lm} \right) = \vn f
\end{align}
for arbitrary $f$.
\paragraph{Solution to curl of curl equation:} 
Curl of curl equation $\vn \times \vn \times \vec{\xi}=0$ can be solved by setting \eqref{curlofcurl} to zero. The coefficients of $\vec{\xi}$ should satisfy
\begin{align}
\tilde{\xi}_{lm} \rightarrow \quad & \xr = \frac{d}{dr} \left( r \xd \right) \sgc\label{vhe1}\\
& \frac{l (l+1)}{r^2} \xt  - 2 \frac{\partial_r \xt}{r} - \partial^2_r \xt =0  \sgc\label{vhe2}
\end{align}
where the last equation is solved as
\begin{equation}
\xt= c_{lm} r^l + d_{lm} \frac{1}{r^{(l+1)}} \sgd
\end{equation}
Thus solution has one undetermined function of r, and written as
\begin{equation}
\vec{\xi} = \left(c_{lm} r^l + d_{lm} \frac{1}{r^{(l+1)}} \right) \vphi + \vn \left( r \xd Y_{lm} \right) \sgd
\end{equation} 

We end this section by writing down the vector spherical harmonic in the spherical coordinates:
\begin{align}
\Phi_{lm} &= - \frac{1}{r sin \theta} \partial_{\phi} Y_{lm} \frac{\partial}{\partial \theta} +  \frac{1}{r sin \theta} \partial_{\theta} Y_{lm} \frac{\partial}{\partial \phi} \sgc \\
\Psi_{lm} &=  \frac{1}{r} \partial_{\theta} Y_{lm} \frac{\partial}{\partial \theta} +  \frac{1}{r sin^2 \theta} \partial_{\phi} Y_{lm} \frac{\partial}{\partial \phi} \sgd
\end{align}
Thus a general spatial vector field can be written as
\begin{align}
\xi(r,\theta,\phi)=  \xi^r_{lm}(r) Y_{lm} \frac{\partial}{\partial r} & + \left(  - \frac{\xi^{\bot}_{lm}(r)}{r sin \theta} \partial_{\phi} Y_{lm}  +\frac{\xi^{||}_{lm}(r)}{r} \partial_{\theta} Y_{lm} \right) \frac{\partial}{\partial \theta} \nonumber \\ &+  \left(   \frac{\xi^{\bot}_{lm}(r)}{r sin \theta} \partial_{\theta} Y_{lm}  +\frac{\xi^{||}_{lm}}{r sin^2 \theta} \partial_{\phi} Y_{lm} \right) \frac{\partial}{\partial \phi} \sgd
\end{align}

\section{Momentum Constraint and Action} \label{sec:Spheremom}
Now we turn back to our problem. Recall that we are interested in solving \gr~for Manton approximation, and concluded that in \gnc~they are described by spatial vector fields on each slice, where the metric of each slice is
\begin{equation}
\vacspmet(t)= \sigma_t^* \refmet \sgd
\end{equation}
For the case at hand we will take $\refmet_{ab}= \delta_{ab}$. Here we stress again that each spatial slice is diffeomorphic to flat space, but the spacetime will not be diffeomorphic to a flat spacetime. So for a slice, written in its flat coordinates the momentum constraint is
\begin{equation}\label{eq:flatmomentum}
(\dad d \chi)_i= (*d*d \chi)_i= \epsilon_{ijk} \pr_j \lp \epsilon_{kmn} \pr_m \chi_n \rp =0 \sgd
\end{equation}
Then we switch to the spherical coordinates and let
\begin{equation}
\vacspmet= dr^2 + r^2 \lp d\theta^2 + \sin^2(\theta) d\phi^2 \rp \sgc
\end{equation}
so that
\begin{equation}
n= \frac{\pr}{\pr r} \cg \lbr \ea \rbr = \lbr \frac{\pr}{\pr \theta}, \frac{\pr}{\pr \phi} \rbr \sgc
\end{equation}
and
\begin{equation}
\twoconnbnd^a_b = \extbnd^a_b \cg \accbnd^a=0 \sgd
\end{equation}
Boundary is the 2-sphere with radius R and the induced metric on it is
\begin{equation}
\metbnd= R^2 \lp d\theta^2 + \sin^2 \theta d\phi^2 \rp \sgd
\end{equation}
The metric on the space of vacua becomes
\begin{equation}
\metonvacua_{\delta} \lp \zeta, \xi \rp= \int_{S_R} \vol_{S^2_R} \lp \zeta_a \pr_r \xi^a \rp \sgd
\end{equation}

Now let us explicitly solve the momentum constraint. We will utilize the vector spherical harmonics for this. We know from Euclidean vector calculus equation \eqref{eq:flatmomentum} becomes the curl of curl so that momentum constraint becomes
\begin{equation}
\momi(\chi)= \vn \times \vn \times \vec{\chi}=0 \sgd
\end{equation}
From the previous section we know the solution to this. Inside the ball we require regularity, and for this to be true at zero we set coefficients with inverse power of r to zero. Thus solution to \eqref{eq:flatmomentum} inside the ball is
\begin{equation}\label{eq:sphsol}
\vec{\chi}= a_{lm} r^l \vphi + \vn f(r,\theta, \phi) \sgd
\end{equation}
Upon this one should impose the condition $\no \chi= \left. \chi^r \right|_{r=R} =
0$. Since $\vphi$ is orthogonal to the radial direction this only imposes
\begin{equation}
\left. \pr_r f \right|_{r=R} = 0 \sgd
\end{equation}
Let us expand this function in spherical harmonics as
\begin{equation}
f(r,\theta,\phi)= f_{lm}(r) Y_{lm}(\theta,\phi) \sgc
\end{equation}
then the condition becomes
\begin{equation}
\dot{f}_{lm}(R)=0 \sgd
\end{equation}
Note that equation \eqref{eq:sphsol} reflects the results we have obtained in section \ref{sec:boundarydata}: Solution to the momentum constraint is composed of two parts: one exact, one coexact. This is because
\begin{equation}
\vec{\nabla} \cdot \vphi=0 \sgd
\end{equation}
The coexact part is uniquely determined by its value on the boundary while the exact part is completely arbitrary. This arbitrary part is expected to be fixed by the Hamiltonian constraint. In section \ref{sec:decconst} we have attempted to solve this constraint without success. However since the norm of the vector field corresponding to the solution with respect to the metric on the vacua is Hamiltonian constraint integrated on the boundary, we will be able to gather some information from the expression for the metric on the vacua.To write it down first we note
\begin{align}
\mathbb{D}\chi^a= \left. \pr_r \chi^a \right|_{r=R}= R \left[\sqrt{\metbnd}\epsilon^{ab}\partial_b\left(\sum_{l,m} (l-1) a_{l m} Y_{l m}\right)-2 \metbnd^{ab}\partial_b \lp f_{lm}(R) Y_{lm} \rp \right]\label{dsphere}.
\end{align}
Note that $\mathbb{D}$ has a kernel such that
\begin{equation}
ker(\mathbb{D})= \lbr \chi^a= \sum_{m=-1,01}  \frac{a_{1m}}{R} \volk^{ab} \pr_b Y_{1m}\rbr \sgd
\end{equation}
Note that this exactly corresponds to the set of rotations. For this specific case the isotropy subgroup will be equal to $\ker \mathbb{D}$. The reason for this is that the isotropy group is formed of boundary preserving bulk isometries, and isometries of 3d Euclidean space are rotations and translations, however translations are not boundary preserving. So we conclude that our space of vacua for this case is
\begin{equation}
\boxed{ \truevac= \diff(S^2) / \SO(3) \sgd}
\end{equation}
Using the information above for the metric on the vacua we will finally have
\begin{equation}
\langle\zeta_{(1)},\zeta_{(2)}\rangle=R^3\sum_{l,m} l (l+1)\left( (l-1) a^{(1)}_{l m} a^{(2)}_{l m} - 2 f_{lm}(R)^{(1)} f_{lm}(R)^{(2)} \right) \, .\label{spmetric}
\end{equation}
Hamiltonian constraint forces a solution $\zeta$ to be null with respect to this metric:
\begin{equation}
\langle\zeta,\zeta \rangle=R^3\sum_{l,m} l (l+1)\left( (l-1) a^2_{l m}  - 2 f^2_{lm}(R) \right)=0 \, .
\end{equation}
Note because of this, coexact and exact part are not independently solutions to the constraint equations, a solution needs to have both of these parts.

\chapter{Discussion}\label{ch:discussion}

Having studied a specific example in the last chapter we conclude the thesis by summarizing our results, discussing open issues and further directions for investigation. In this thesis we have studied some adiabatic solutions of the theory of general relativity by using Manton approximation. Starting out with a space-time splitted form of the \eh~action in Gaussian Normal Coordinates where the spacetime is solely defined by the metric of the spatial slices, we proposed solutions that are described as slow motion on the space of vacua or equivalently solutions that are spacetimes whose a spatial slice at an instant ``t" is the pullback of a reference metric by a flow on it at the instant t, i.e.
\begin{equation}
\vacspmet(t)= \sigma_t^* \refmet \sgd
\end{equation}
By studying the form of the constraint equations we have concluded-with arguments that partially relied on a conjecture about the Hamiltonian constraint- that the solutions are completely defined as the null geodesics on the space of vacua with respect to the metric we have defined. As we have discussed this metric is pseudo-Riemannian, and the space of vacua is an infinite dimensional pseudo-Riemannian homogeneous space diffeomorphic to
\begin{equation}
\truevac \sim \diff(\bnd)/\bdiff_{\refmet}(\subm) \sgd
\end{equation}
Note that with this analysis one can also conclude that diffeomorphisms that do not vanish on the boundary are not to be thought as gauge equivalences but rather as global symmetries. This conclusion is somewhat intuitive, though not always emphasized, since one should expect that the coordinate transformations that changes the boundary values of the fields of the theory should not be considered as redundancies in the description. 
\paragraph{Open issues.} Our analysis have resulted in a number of open issues. First we remind the reader that in applying Manton approximation we ignored solutions that could be generated by local boosts. This however was mostly due to the conceptual difference between them. Here we note that just like given a soliton solution, one can get other solutions by making a Lorentz boost \cite{srednicki}; taking a solution we have proposed one might get another solution by making a local boost.

Another open issue is that we have only conjectured that Hamiltonian constraint uniquely fixes the part of solutions to the momentum constraint that were in the form of an exact form and was completely arbitrary. Even though we could use the theory of differential forms and \hmf~decomposition on a manifold with boundary to prove the existence and uniqueness of solutions to the momentum constraint since it could be brought into a exterior  calculus form, doing this does not look to be trivial for the Hamiltonian constraint.

This brings us to the issue of solving the constraints explicitly. Note that we could solve the momentum constraint for 3-dimensional and 2-dimensional Euclidean spaces, we did not get specific solutions for the more general cases. This is closely related to finding explicit harmonic forms for a manifold with general Ricci flat metric, and it might be tackled using the results of \cite{duff}. Note that for our analysis not only solving the constraints in the bulk is important, one also needs to impose boundary conditions such as boundary preservingness. For example instead of a round ball of chapter \ref{ch:ball}, one can consider an ellipsoidal boundary which should be tractable.Here we also remind that we were not able to solve explicitly the Hamiltonian for neither 3d nor 2d Euclidean case yet, but these might also be solvable either by some more guesses or via the tools of mathematical computer programs.

Related to the construction on the configuration space and space of vacua, one question remained to be answered was the equivalence of $\ker \mathbb{D}$ to the isotropy group, or the isometries of the bulk. We have shown that all of the bulk isometries are in $\ker \mathbb{D}$ but it was not clear whether the inverse is also true or not. If it is not true then we would have a degenerate metric on the space of vacua.

We conclude this part by pointing out to potential subtleties related to the fact that the symmetry groups, i.e. diffeomorphism groups, we considered for the study of the homogeneous space structure. We have taken these to be Lie groups and proceeded with the homogeneous space analysis. However in the literature it was pointed out that the proper setting for infinite dimensional groups is what is called ``Lie-Frechet" groups \cite{oblakthesis}. These are infinite dimensional groups that are taken locally to be Frechet spaces. Note that this subtlety is related to differentiability properties and not group properties. Because of this we expect group related properties of our analysis, such as the procedure of double quotient to still hold, but there might be some issues with the smoothness of the resulting spaces. See \cite{khesinwendt} for a discussion of infinite dimensional groups.
\paragraph{Future Directions.} We have discussed the open issues and possible direct extensions of the procedure we have proposed for finding adiabatic solutions of general relativity. Let us now elaborate on connections to the literature and future directions. 

We were motivated by the studies of asymptotic symmetries, but we have studied a theory on finite spatial volume. An obvious question is whether this procedure can be extended to infinite volume case where soft theorems appear by a limiting procedure. One can also do this by reconsidering the procedure on a manifold without a boundary, but a set of asymptotic fall off conditions as was considered in asymptotic symmetry analysis.

Another direction might be the investigation of the specific relationship between our analysis and adiabatic modes in cosmology. As we have mentioned our propositions of solution is very much in line with the proposition of adiabatic solutions in \cite{weinbergadiabatic}. Note that this specific form we have used caused, together with the momentum constraint, the theory to reduce to a theory on the boundary of the spatial slice. One might inquire if something along similar lines happens for adiabatic modes. Note however that one crucial difference is that while we consider slow-time solutions, adiabatic modes are solutions that have small inhomogeneity but that turns our to be slow time as a result of Einstein equations.

\appendix
\part{APPENDIX: MATHEMATICAL BACKGROUND}

\chapter{Introduction}

In this appendix, we summarize the mathematical definitions and theorems one needs to be familiar with for a study of modern theoretical physics, but the focus will be somewhat on the tools needed to study gravitation in some form. To be more precise, we will discuss manifolds, specifically the differentiable ones, and various objects on them such as tensor fields, flows, forms, symplectic forms, metrics and so on. General relativity tells us that one should consider the spacetime as a differentiable manifold and most of the objects aforementioned has a corresponding physical reality or use on that spacetime manifold. However in a study of theoretical physics one also often encounters more abstract manifolds that describe things like a phase space (like a symplectic manifold) or set of symmetry transformations (like a Lie group, which is a differentiable manifold that happens to be a group) or a configuration space (which can be considered as a Riemannian manifold if one can define a metric on it). 

This variety in the applications of the concept of differentiable manifolds to different physical concepts makes it desirable to study each quantity that can be defined on a manifold carefully, paying attention to which prerequisites each needs e.g. it is important to notice even though one can define a Lie derivative in terms of a covariant derivative, existence of a covariant derivative is not necessary actually, one can define the Lie derivative on a manifold without a metric. In this note I try to keep attention to this aspect, and definitions that have more prerequisite concepts are defined later on e.g. Lie derivative is defined before the covariant derivative.

This appendix is thus written with the aim of collecting together the basic definitions we will need during the main text and set the notations used. Most of the following is a form of summary of John M. Lee's books, Introduction to Smooth Manifolds \cite{LeeISM} and Riemannian Manifolds \cite{LeeRM}; however ordering of the topics and examples are original. Other references we use for this includes Robert Wald's General Relativity \cite{wald}, Wu Ki Tung's Group Theory in Physics \cite{wukitung}. I have also utilized unpublished lecture notes of Sadık Değer's differentiable manifolds class and Dieter Van den Bleeken's Group Theory class I have taken, and indebted to them for the knowledge I have of these topics. Other references are cited at the places relevant in the document. Note that we skip certain definitions that are considered self-explanatory and most of the proofs of the theorems and propositions stated, so whenever the need for a more precise definition more rigor is felt, references aforementioned should be consulted.

\chapter{Differentiable Manifolds}\label{ch:mathbackgnd}

In this chapter we start with describing differentiable manifolds: smooth spaces that, in the vicinity of each of its points, look like Euclidean space of the same dimension. These spaces will be our basic construction, and in the following chapters we will build upon them by adding extra structures; note for example that a differentiable manifold does not have a metric on it, so at this stage it does not carry any information of length. After defining them in the first section, we introduce the concepts of vectors, vector fields, tensors, antisymmetric tensors, flows, Lie derivatives, orientation, integration and foliations.

\section{Definitions and Basic Elements}
In this section we first go over the basic definitions of manifolds: topological manifolds and differentiable manifolds-topological manifolds with a differentiable set of maps defined on it. Then we introduce basic elements on a differentiable manifold: tangent vectors, tensors -direct products of tangent vectors- and fields of those quantities. We begin our inspection with definition of certain types of maps, that will be frequently used thereafter.

\subsection{Maps} 

Consider a map $F: M \ra N $, where M and N are two sets, then the following hold:
\begin{itemize}
\item $F$ is surjective, if it covers all of N.
\item $F$ is injective, if no two elements of M are mapped to the same element of N.
\item $F$ is bijective, if it is both surjective and injective.
\item $F$ is continuous at $ p \in M $ if any open set containing $ F(p)$ contains the image of an open set of $M$ containing $p$.
\item Let $ M,N \subset \R^n$ and $F=(F_1,...,F_n)$. If all $F_i$ are k-times differentiable, i.e. each $F_i$ is continuous and all first k derivatives of it exists, then $F$ is called a $C^k$ map. A $C^\infty$ map is also called a smooth map.
\item If $F$ is continuous, and if its inverse exists and also continuous, then $F$ is a homeomorphism.
\item Let $ M,N \subset \R^n$. If $F$ is differentiable, and if its inverse exists and also differentiable, then $F$ is a diffeomorphism.
\end{itemize}

\subsection{Manifolds}

\paragraph{Topological Manifolds}
\begin{defn}
For any set $K$ and any collection $ T = \lbr U_i \rbr  $ of some subsets $U_i$ of $K$, $(K,T)$ called a topology if\noclub[4]
\begin{enumerate}
\item $\emptyset$, $K \in T$,
\item Any union of $U_i$'s is an element of $T$,
\item Any intersection of any finite number of $U_i$ is an element of $T$.
\end{enumerate}
\end{defn}
\noindent
If $(K,T)$ is a topology then $K$ is called a topological space and $U_i$ is called an open set. A subset $V$ of $K$ is called a closed set if $V^c \equiv K-V$ is an open set.



\begin{defn}
A topological manifold $M$ of dimension n is a topological space with the following properties:
\begin{enumerate}
\item $M$ is Hausdorff: $\forall \ p,q \in M$ there exists some open subset $U_p,U_q$ of $M$, where $p \in U_p$ and $ q \in U_q$ such that $U_p \cap U_q = \emptyset$.
\item Each point of $M$ has a neighborhood $U$ homeomorphic to an open set of $\R^n$ with a homeomorphic map $\phi$.
\item $M$ has a countable basis of open sets.
\end{enumerate}
\end{defn}
$(U,\phi)$ is called a coordinate neighborhood and also sometimes a chart. Two coordinate neighborhoods are said to be $C^\infty$ compatible if the change of coordinates from one to the other on an overlapping region is $C^\infty$.
\paragraph{Differentiable Manifolds}
\begin{defn}[Smooth Manifold]
A differentiable (or smooth) manifold $(M,A)$ is a topological manifold $M$ together with a smooth structure $A$ on it which is a family $ A=\lbr (U_\alpha, \phi_\alpha) \rbr $ of coordinate neighborhoods, called an atlas, such that
\begin{enumerate}
\item The union of $U_\alpha$'s cover $M$,
\item For any $\alpha,\beta$ the neighborhoods $\left(U_\alpha,\phi_\alpha\right)$ and $\left(U_\beta,\phi_\beta\right)$ are $C^\infty$ compatible,
\item Any coordinate neighborhood compatible with every coordinate neighborhood in $A$ is itself in $A$ (Maximality).
\end{enumerate}
\end{defn}
If $(M,A)$ is a smooth manifold, a coordinate neighborhood contained in the given smooth atlas is called a smooth chart.\paragraph{Differentiability on Manifolds:}
A function $f:M \ra \R$ is called smooth if for every point $p \in M$, there exists a smooth chart $(U,\phi)$ s.t. $p \in U$ and $ \hat{f} \equiv f \circ \phi^{-1}$ is smooth on $\phi(U)$. Set of all smooth functions on $M$ is a \hyperlink{vectorspace}{vector space} called $C^\infty(M)$. 

Let $M$ and $N$ be smooth manifolds and $F:M \ra N$ be map. $F$ is smooth if $\forall \, p \in M$ there exists smooth charts $(U,\phi)$ containing $p$ and $(V,\psi)$ containing $F(x)$ s.t $F(U) \subset V$ and $\psi \circ F \circ \phi^{-1}$ is smooth from $\phi(U)$ to $\psi(V)$. 

\subsection{Derivation on Manifold and Tangent Space}
\begin{defn}
Let $M$ be a smooth manifold. The map $X: C^\infty (M) \ra \R $ is called a \hypertarget{derivation}{derivation} at $p \in M$ if it satisfies the linearity and Leibniz properties, i.e. if
\begin{enumerate}
\item $ X(\alpha f+\beta g)= \alpha X(f) + \beta X(g)$,
\item $ X(fg)= X(f) \left. g \right|_p + \left. f \right|_p X(g)$,
\end{enumerate}
where $\alpha,\beta \in \R$, $f,g \in C^\infty(M)$. Set of all derivatives of $C^\infty(M)$ at $p$ is called the tangent space $T_pM$ and it is a vector space since
\begin{enumerate}
\item $ (X+Y) (f)= X(f)+Y(f)$,
\item $(\alpha X) (f)= \alpha X(f)$.
\end{enumerate}
An element of $T_pM$ is thus called a tangent vector at $p$.
\end{defn}
\begin{defn}[Pushforward]
Let $F:M \ra N$ be a smooth map, and X be a derivation at $p \in M$. The pushforward of $F$, $F_*:T_pM \ra T_{F(p)}N$ is defined by
\begin{equation}
(F_*X) f \equiv X(f \circ F) \quad \quad f \in C^\infty(N) \sgd
\end{equation}
Note that
\begin{enumerate}
\item $F_*X$ is a derivation at $F(p)$,
\item $F_*$ is also called the differential of $F$.
\end{enumerate}
\end{defn}
\paragraph{Pushforward of Tangent Vectors in $\R^n$ to $M$:}
Note that $\frac{\partial}{\partial x_i}$ at a point in $\R^n$ is a \hyperlink{derivation}{derivation}; and it is well defined on all points in $\R^n$. Let $(U,\phi)$ be a smooth chart on $M$. Since $\phi$ is a diffeomorphism, it has an inverse $\phi^{-1}: \R^n \ra M$. Thus at a point $p \in M$ one can define:
\begin{equation}
\left. \frac{\pr}{\pr x_i} \right|_p \equiv \left( \phi^{-1} \right)_*  \left. \left( \frac{\pr}{\pr x_i} \right) \right|_{\phi(p)} \sgd
\end{equation}
Note the slight abuse of notation here, this notation will be/is used throughout this document/in the literature, i.e. when someone talks about $\frac{\pr}{\pr x_i}$, generally this definition is meant; the cases where it is not should be clear from the context. Note also that for every ``geometric vector" at a point in $\R^n$ there exists a directional derivative and vice versa. Thus the tangent space at a point in $\R^n$ is n-dimensional. Since $\phi$ is a diffeomorphism, $\phi_*$ is an isomorphism, and thus the tangent space at point in $M$ is also n-dimensional.

\subsection{Vector Fields and Tangent Bundles}
\begin{defn}[Tangent Bundle]
Tangent Bundle $TM$ is the \hyperlink{disjointunion}{disjoint union} of tangent spaces at all points of $M$:
\begin{equation}
TM = \coprod_{p \in M} T_pM \sgd
\end{equation}
Elements of $TM$ are denoted as $(p,v)$. $TM$ is a 2n dimensional smooth manifold. (Dimensionality can be seen from $\R^n$ case with an argument close to above, for the proof of the smoothness see \cite{LeeISM}, pg.66.)
\end{defn}
\begin{defn}[Vector Field]
Vector field is a continuous map $Y: M \ra TM$ with the property $ \Pi \circ Y = \1_M$; where $ \mathbb{1}_M: M \ra M $ is the identity map on $M$ and $\Pi:TM \ra M$ is the projection $\Pi(p,v)=p$. $Y$ maps $p \ra Y_p$ s.t. $Y_p \in T_pM$.
\end{defn}
\noindent
Vector fields possess the following properties:
\begin{itemize}
\item Action of a vector field $Y$ on a function $f$ is another function given by
\begin{equation}
Yf(p)=Y_pf \sgd
\end{equation}
\item Set of all vector fields on $M$, denoted by $\vfs(M)$, is a vector space 
\begin{enumerate}
\item linear under pointwise addition and scalar multiplication: $ \left( a Y + Z \right)_p= a Y_p + b Z_p$, where $a,b \in \R$,
\item whose zero element is the vector field which is zero everywhere,
\item and on which a multiplication by a smooth function is defined pointwise: $\left. fY \right|_p= f(p) Y_p $ .
\end{enumerate}
\item $ Y \in \mathfrak{X}(M) $ is a derivation on $M$ since it satisfies the linearity and Leibniz rules:
\begin{equation}
Y(fg)=f Y(g) + g Y(f) \cg
\end{equation}
where $f,g$ functions on $M$.
\end{itemize}
\paragraph{F-related Vector Fields:} Let $F:M \ra N$ be a map and $X$ be a smooth vector field on $M$. If there exists a smooth vector field $\tilde{X}$ on $N$ such that
\begin{equation}
F_*(X_p)= \tilde{X}_{F(p)} \quad \quad \forall p \in M \sgc
\end{equation}
$X$ and $\tilde{X}$ are said to be $F$-related. Existence of $\tilde{X}$ is guaranteed only if $F$ is a diffeomorphism, i.e. if $F$ is a diffeomorphism, then for every smooth vector field on $M$ there is a unique $F$-related smooth vector field on $N$. This vector field is called the \textit{pushforward of vector field X by F}.
\begin{defn}
A vector bundle of rank-k over $M$ is a topological space
that is a collection of k-dimensional real vector spaces at each point of the manifold such that at each point there exists a neighborhood $U$ so that collection of these vector spaces at this neighborhood is diffeomorphic to $U \times \R^k$.
\end{defn}
Note that $TM$ is a smooth vector bundle of rank n over $M$.

\subsection{Covectors and Tensors}

\begin{defn}
Cotangent space at a point $p$ in a manifold $M$ is the dual space of the $T_pM$, denoted by $T_pM^*$. Disjoint union of all cotangent spaces on a manifold is cotangent bundle. An element of the cotangent bundle is called a covector field.
\end{defn}

\begin{defn}[Pullback]
Let $F: M \ra N$ be a smooth map, then $F^*: T_{F(p)}N^* \ra T_pM^*$ is called the pullback of $F$ and its action given by
\begin{equation}
(F^* \omega)X \equiv \omega (F_*X) \sgc
\end{equation}
where $X \in T_pM$ and $\omega \in T_{F(p)}N^*$. Generalization of this to covector field is trivial. Note that for a smooth map $F: M \ra N$ pullback of a covector field on $N$ is always guaranteed to exists, i.e. inverse of $F$ is not needed to exists. 
\end{defn}

\begin{defn}[Differential of a function]
Let $f$ be a smooth function on $M$. Differential of $f$ is a covector field, denoted $df$, defined as
\begin{equation}
df(X_p) = X_p(f) \sgd
\end{equation}
This definition asserts that df is a smooth covector field.
\end{defn}
Let $\phi=( x^1,...,x^n)$ be a coordinate chart of $M$. Then $x^i$ is a smooth function on $M$. $dx^i$ is dual to the basis $\frac{\partial}{\partial x^i}$, i.e. $dx^j \left( \frac{\partial}{\partial x^i} \right)= \delta^j_i$.

\begin{defn}[Tensors]
A type (k,l) tensor on a vector space $ V $ is a \hyperlink{multilinear}{multilinear} map
\begin{equation*}
\underbrace{ V^{*}\otimes \dots \otimes V^{*}}_{k}\otimes \underbrace{ V \otimes \dots \otimes V}_{l} \rightarrow \mathbb{R} 
\end{equation*}
denoted by $ T^{k}_{l} $. The space of type (k,l) tensors on V is defined as
\begin{equation*}
T^{(k,l)}(V)= \underbrace{ V \otimes \dots \otimes V}_{k}\otimes \underbrace{ V^{*} \otimes \dots \otimes V^{*}}_{l} \sgd
\end{equation*}
\end{defn}

\begin{defn}[Tensor Bundles and Tensor Fields]
The bundle of type (k,l) tensors on a smooth manifold $M$ is defined by
\begin{equation}
T^{(k,l)}TM= \prod_{p \epsilon M}{T^{(k,l)}(T_p M)} \sgd
\end{equation}
A section of a type (k,l) tensor bundle is a type (k,l) tensor field.
\end{defn}
\paragraph{Pullback of Tensor Fields:}
Let $F:M \ra N$ be a smooth map and $A$ be a (0,k) type tensor field on $N$. The pullback of $A$ by $F$ is defined by
\begin{equation}
(F^*A)_p(v_1,...,v_k)= A((F_*v_1)_p,...,(F_*v_k)_p) \sgd
\end{equation}
If $F$ is a diffeomorphism then this can be generalized to type (k,l) tensor fields:
\begin{equation}
(F^*A)_p(\omega_1,...,\omega_k,v_1,...,v_l)=A( (F^{-1})^* \omega_1,...(F^{-1})^* \omega_k,F_* v_1) \sgd
\end{equation}

\subsection{p-forms and Exterior Derivative}

\begin{defn}[p-Covectors]
A p-covector is a (0,p) tensor which is completely antisymmetric. A 0-covector is a function and a 1-covector is a (0,1) tensor. The vector space of all p-covectors on $ V $ is denoted by $ \Lambda^{p}\left(V^{*}\right) $.
\end{defn}

\begin{defn}[Wedge Product]
Wedge product of a p-covector A and a q-covector B is defined as
\begin{equation}
(A \wedge B )_{\mu_{1} \dots \mu_{p+q}}= \frac{(p+q)!}{p! q!} A_{[ \mu_1 \dots \mu_p} B_{\mu_{p+1} \dots \mu_{p+q}]} \sgd
\end{equation}
\end{defn}

\begin{defn}
Let $V$ be a finite dimensional vector space. For each $ v \in V $ we define the interior multiplication by $v$ to be the map $ i_v : \Lambda^p(V^{*}) \rightarrow \Lambda^{p-1} (V^{*}) $ such that
\begin{equation}
i_v \alpha( w_1, \dots, w_{p-1})= \alpha(v,w_1, \dots, w_{p-1})
\end{equation}
where $ \alpha \in \Lambda^p(V^{*}) $, and we interpret $ i_v \alpha=0 $ when k=0, i.e. when $\alpha$ is a number. $i_v \alpha$ is also denoted as $ v \inc \alpha $.
\end{defn}

\begin{prop}
Let $\alpha$ be a covariant k-tensor on a finite-dimensional vector space
V . The following are equivalent:
\begin{enumerate}
\item $\alpha$ is a k-covector.
\item $\alpha(v_1,...,v_k)=0$ whenever the k-tuple $\lbr v_1,...,v_k \rbr $ is linearly dependent.
\end{enumerate}
\end{prop}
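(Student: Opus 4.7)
The plan is to prove both directions of the equivalence using the multilinearity of $\alpha$ together with its antisymmetry, exploiting the familiar fact that antisymmetry in two slots is equivalent to vanishing whenever the entries in those two slots coincide.

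For the direction (1)$\Rightarrow$(2), I would start from a linearly dependent tuple $\{v_1,\ldots,v_k\}$ and, without loss of generality, write one vector as a linear combination of the others, say $v_i=\sum_{j\neq i}c_j v_j$. Substituting into $\alpha(v_1,\ldots,v_k)$ and using multilinearity in the $i$-th slot expresses the value as a sum of terms in each of which some vector $v_j$ appears in both the $i$-th and $j$-th slots. Since $\alpha$ is a $k$-covector, swapping those two slots produces a sign and identifies each such term with its own negative, so each vanishes and the whole sum is zero.

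For the direction (2)$\Rightarrow$(1), the key trick is to feed $\alpha$ a deliberately linearly dependent tuple in order to extract antisymmetry. Fix two indices $i<j$ and arbitrary vectors $v,w$, and evaluate $\alpha$ on the tuple that has $v+w$ in both the $i$-th and $j$-th slots and fixed vectors in the remaining slots. This tuple is linearly dependent, so hypothesis (2) gives zero. Expanding by multilinearity yields four terms; the two in which the same vector ($v$ or $w$) appears in both the $i$-th and $j$-th slots are themselves zero by (2), leaving the identity that exchange of $v$ and $w$ in those two slots changes sign. Since $i,j$ were arbitrary, $\alpha$ is antisymmetric under all transpositions, and transpositions generate the full symmetric group $S_k$, so $\alpha$ is totally antisymmetric.

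I don't anticipate a real obstacle here: both directions reduce to short multilinearity calculations, with the only mildly clever step being the ``$v+w$ in two slots'' trick in the second direction. One small care point is that the argument uses only that $\mathrm{char}(\mathbb{R})\neq 2$ so that $\alpha(\ldots,v,\ldots,v,\ldots)=-\alpha(\ldots,v,\ldots,v,\ldots)$ forces vanishing; over $\mathbb{R}$ this is automatic. No structure beyond a finite-dimensional vector space is needed, so no choice of basis or inner product enters the proof.
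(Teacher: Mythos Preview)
Your proof is correct in both directions; the $v+w$ trick for (2)$\Rightarrow$(1) is the standard argument and works exactly as you describe. The paper itself does not give a proof of this proposition at all --- it is stated without proof in the appendix (a summary of definitions and results from \cite{LeeISM}), so there is nothing to compare against.
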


\begin{defn}[p-form]
The subset of $ T^{(0,p)}TM $ consisting of p-covectors is denoted by $ \Lambda^p T^{*}M $:
\begin{equation}
\Lambda^k T^{*}M = \prod_{p \epsilon M}{\Lambda^{k}(T_{p}^{*} M)} \sgd
\end{equation}
A section of $ \Lambda^p T^{*}M $ is called a (differential) p-form: it is a tensor field which is a p-covector at each point.
\end{defn}
Wedge product and interior product are generalized for p-forms as pointwise operators.

The set of smooth p-forms on a manifold $M$ form a vector space denoted by $\pfs(M)$.

\begin{defn}[Exterior Derivative]
Exterior derivative is a map $ d : \pfs(M) \rightarrow \Omega^{p+1}(M)$ given by
\begin{equation}
(dA)_{\mu_1 \dots \mu_{p+1}}= (p+1) \partial_{ [ \mu_1} A_{\mu_2 \dots \mu_{p+1}]} \sgd
\end{equation}
For a 0-form exterior derivative is the differential.
\end{defn}

A p-form A is called
\begin{itemize}
\item  closed if $dA=0$,
\item exact if $A=dw$, for $w$ some p-1 form.
\end{itemize}
Note that every exact form is closed, but not every closed form is exact. Closed forms that differ from each other by a exact form constitutes equivalence classes. Quotient by this equivalence relation is has a special definition:
\begin{defn}[deRham Cohomology Group]
deRham Cohomology Group is the quotient
\begin{equation}
\derhamp(M) \equiv \frac{Z^p(M)}{B^p(M)} \equiv \frac{\mbox{closed p-forms on M}}{\mbox{exact p-forms on M}} \sgd
\end{equation}
\end{defn}

It turns out that deRham cohomology groups are closely related with the topological properties of manifolds. More specifically,
\begin{lemma}[Poincare Lemma]\cite{nakahara}
\hypertarget{poincare}{On a manifold that is contractible to a point any closed form is exact.}
\end{lemma}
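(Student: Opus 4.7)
The plan is to use the homotopy invariance of de Rham cohomology, which for a contractible manifold immediately collapses all positive degree cohomology. Concretely, since $M$ is contractible to a point $p_0 \in M$, by definition there exists a smooth homotopy $H : M \times I \to M$ (where $I = [0,1]$) with $H(\cdot,0) = \mathbb{1}_M$ and $H(\cdot,1) = \mathrm{const}_{p_0}$. Write $i_s : M \to M \times I$ for the slice inclusion $i_s(x) = (x,s)$, so that $H \circ i_0 = \mathbb{1}_M$ and $H \circ i_1$ is the constant map to $p_0$.

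The central construction is a cochain homotopy operator $K : \Omega^p(M) \to \Omega^{p-1}(M)$ defined by fiber integration along $I$. Given $\omega \in \Omega^p(M)$, pull it back to get $H^*\omega \in \Omega^p(M \times I)$. Any such form decomposes uniquely as $H^*\omega = \alpha_s + ds \wedge \beta_s$, where for each $s$, $\alpha_s$ and $\beta_s$ are a $p$-form and a $(p-1)$-form on $M$ respectively (obtained by contracting $H^*\omega$ with $\partial_s$ and pulling back by $i_s$). Then define
\begin{equation}
K\omega \;\equiv\; \int_0^1 i_s^*\bigl(\partial_s \inc H^*\omega\bigr)\, ds \;=\; \int_0^1 \beta_s\, ds \sgd
\end{equation}

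Next I would establish the key cochain homotopy identity
\begin{equation}\label{eq:Kidentity}
dK\omega + K\, d\omega \;=\; (H \circ i_0)^*\omega \;-\; (H\circ i_1)^*\omega \;=\; \omega \;-\; (H \circ i_1)^*\omega \sgd
\end{equation}
This is the main technical step. I would derive it on $M \times I$ by applying Cartan's magic formula to $\partial_s$: on $M \times I$ one has $\mathcal{L}_{\partial_s}(H^*\omega) = d(\partial_s \inc H^*\omega) + \partial_s \inc d(H^*\omega)$, while $\mathcal{L}_{\partial_s}(H^*\omega)$ also equals $\partial_s(\alpha_s) + ds \wedge \partial_s(\beta_s)$ in the above decomposition (so its integral from $0$ to $1$ reproduces $(H\circ i_1)^*\omega - \omega$ on the $\alpha$ part). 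Integrating this identity over $I$, commuting $d_M$ with $\int_0^1 ds$, and using $d \circ H^* = H^* \circ d$, one obtains \eqref{eq:Kidentity}.

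With \eqref{eq:Kidentity} in hand the conclusion is immediate. Suppose $\omega$ is a closed $p$-form with $p \geq 1$. Since $H \circ i_1$ is a constant map, its differential $(H \circ i_1)_*$ is zero at every point, so the pullback of any positive-degree form vanishes: $(H\circ i_1)^*\omega = 0$. Applying \eqref{eq:Kidentity} to a closed $\omega$ thus gives $\omega = d(K\omega)$, exhibiting $\omega$ as exact with primitive $K\omega \in \Omega^{p-1}(M)$. (The case $p=0$ is trivially fine: a closed $0$-form is a locally constant function, hence constant on a contractible, in particular connected, $M$.)

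The main obstacle is really just the bookkeeping in \eqref{eq:Kidentity}: one must carefully separate the $ds$-part from the purely $M$-directional part of $H^*\omega$, commute $d$ past the fiber integral, and handle the sign that appears from $d_{M\times I} = d_M + ds \wedge \partial_s$. A subtler point worth flagging is that strictly speaking the defining homotopy $H$ for contractibility is only assumed continuous; the standard remedy (which I would invoke without proof) is to smooth $H$ via a partition of unity argument, which does not alter the homotopy class and hence does not affect cohomology.
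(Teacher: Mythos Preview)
The paper does not supply its own proof of this lemma; it simply states it and cites \cite{nakahara}, so there is nothing in the paper to compare your argument against. Your approach---constructing the cochain homotopy operator $K$ by fiber integration of $H^*\omega$ over $I$ and invoking Cartan's magic formula to obtain the identity $dK + Kd = (H\circ i_1)^* - (H\circ i_0)^*$---is the standard textbook proof and is correct.

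One small remark on the bookkeeping you already flagged: with your stated definition $K\omega = \int_0^1 i_s^*(\partial_s \inc H^*\omega)\,ds$, the identity comes out with the opposite overall sign, namely $dK\omega + Kd\omega = (H\circ i_1)^*\omega - (H\circ i_0)^*\omega$. This is harmless for the conclusion (for closed $\omega$ of positive degree one then gets $\omega = -d(K\omega)$, still exact), but worth getting right if you write it out in full.
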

This means all deRham cohomology groups of a manifold contractible to a point is trivial. Further results on deRham cohomology groups are as follows \cite{LeeISM},\cite{nakahara}:
\begin{itemize}
\item $\derham^0(M) \cong \mathbb{R}$ if $M$ is connected.
\item $\derham^0(M) \cong \mathbb{R} \oplus ... \oplus \mathbb{R}$ if $M$ has n-connected pieces.
\item $\derham^1(\mathbb{R})=0$
\item for $n \geq 1$ 
\begin{equation}
\derhamp(S^n) \cong \left\{
                \begin{array}{ll}
                  \mathbb{R} \quad \quad \mbox{if} \quad p=0 \quad \mbox{or} \quad p=n\ \sgc \
                  0 \quad \quad \mbox{if} \quad 0<p<n \sgd
                \end{array}
              \right.
\end{equation}
\end{itemize} 

\section{Flows and Lie Derivative}
\subsection{Integral Curves and Flows} 
\begin{defn}[Integral Curve of a Vector Field]
Let $X$ be a smooth vector field on a smooth manifold $M$. Integral curve of $X$ is a smooth curve $ \gamma: J \ra M$, where $J$ is an open subset of $\R$, such that 
\begin{equation}
\gamma'(t_0) \equiv \gamma_{*} \left( \left. \frac{d}{dt} \right|_{t_0}  \right)=X_{\gamma(t_0)} \sgc
\end{equation}
where $X_{\gamma(t)}$ is the vector field evaluated at the point $\gamma(t)$-which is a tangent vector.
\end{defn}
This is a first order differential equation, thus given an initial condition, i.e. an initial point on manifold $\gamma$ maps to, there is a unique solution; thus integral curves of a vector field do not cross. But there is no guarantee that solution exists everywhere on manifold, thus finding a global integral curve may not always be possible. A vector field is called complete if it has a global integral curve.

A collection of integral curves are often called a congruence. We will now define `` flow", which is a particular congruence of curves.

\begin{defn}[Flow]
A \hypertarget{flow}{flow} is a continuous map $\sigma: \R \times M \ra M$ such that
\begin{enumerate}
\item $\sigma(t,\sigma(s,p))=\sigma(t+s,p)$ ,
\item $\sigma(0,p)=p$ .
\end{enumerate}
\end{defn}
Every global flow, also called the one parameter group action is derived from the integral curves of some vector field, i.e. let $\sigma$ be a smooth global flow, and let $V$ the vector field obtained by setting $V_p$ to be the tangent vector of $\sigma(p,t)$, then $V$ will be a smooth vector field. But a flow may not be defined everywhere on the manifold for every vector field. Following can be defined by using flows:
\begin{enumerate}
\item Fixing $t$ define $\sigma_t(p) \equiv \sigma(t,p) : M \ra M$, then $\sigma_t$ is a homeomorphism; if $\sigma$ is smooth then $\sigma_t$ is a diffeomorphism.
\item Fixing $p$ define $\sigma^p(t) \equiv \sigma(t,p) : \R \ra M$, then $\sigma^p$ is an integral curve starting at $p$.
\end{enumerate}

Let $\sigma: \R \times M \ra M$ be a smooth global flow. One can define a tangent vector at each point $p \in M$ by $V_p = {\sigma^{p}}'(0)$. The assignment $p \ra V_p$ is a vector field, called the infinitesimal generator of $\sigma$.

\begin{example}[Active and Passive Transformations, Diffeomorphisms, Infinitesimal Diffeomorphisms]
Let's consider a manifold M with coordinate chart $\phi(p)=(t,x,y,z)$, and let $F: M \rightarrow N$ be a diffeomorphism. Also let $\psi(p')=(\bar{t},\bar{x},\bar{y},\bar{z})$ be a coordinate chart on N so that
\begin{equation}
\psi \circ F \circ \phi^{-1} (t,x,y,z) = (\bar{t}_F(t,x,y,z),\bar{x}_F(t,x,y,z),\bar{y}_F(t,x,y,z),\bar{z}_F(t,x,y,z)) \sgc
\end{equation}
Note that this illustrates what is known as active and passive transformations in physics: One can think at the manifold level and say what we are doing is a diffeomorphic mapping $F$ from one manifold to the other(active) or at the chart level and say we are doing a coordinate transformation $\psi \circ F \circ \phi^{-1}$ (passive). \\
\hfill\\
Let us now pick the coordinate charts such that an identity diffeomorphism does not change the values of coordinates, i.e.
\begin{align}
\psi \circ \1 \circ \phi^{-1} (t,x,y,z) &= (\bar{t}_{\1}(t,x,y,z),\bar{x}_{\1}(t,x,y,z),\bar{y}_{\1}(t,x,y,z),\bar{z}_{\1}(t,x,y,z)) \nonumber \\ &=(t,x,y,z) \sgd
\end{align}
Now I will take $F(p)= \sigma_{\beta}(p) \quad \forall p \in M$ for a given, where $\sigma$ is a flow. Note that $\sigma_{\beta}$ is a one-parameter set of diffeomorphisms. Then
\begin{equation}
\lim_{\beta \ra 0} (\bar{t},\bar{x},\bar{y},\bar{z})=(t,x,y,z)+ \beta \left. \frac{d \sigma_p(\beta)}{d \beta} \right|_{ \beta =0}= (t,x,y,z)+ \beta V_p \sgd
\end{equation}
For example consider the transformation which is known as boost in physics:
\begin{equation}
\psi \circ \sigma_{\beta} \circ \phi^{-1}(t,x,y,z)= ( \cosh(\beta) t + \sinh(\beta) x, \cosh (\beta) x + \sinh( \beta) t, y, z ) \sgd 
\end{equation}
Note that $\beta=0$ corresponds to identity transformation. Corresponding vector field that generates this transformation is then
\begin{align}
\left. \frac{d \sigma_p(\beta)}{d \beta} \right|_{ \beta =0} &= \left. ( \sinh(\beta) t + \cosh(\beta) x, \sinh (\beta) x + \cosh(\beta) t, y, z ) \right|_{ \beta =0} \nonumber \\ &= (x,t,0,0)\sgc
\end{align}
i.e.
\begin{equation}
V= x \frac{\partial}{\partial t} + t \frac{\partial}{\partial x} \sgd
\end{equation}
One can also start with the vector field on $M$ and get a complete flow defined on $M$.
\begin{align}
\sigma_{(t_0,x_0,y_0,z_0)} (\beta ) &= (t(\beta),x(\beta),y(\beta),z(\beta)) \sgc \\
\left. \frac{d \sigma_p(\beta)}{d \beta} \right|_{ \beta =0} &= (t'(\beta),x'(\beta),y'(\beta),z'(\beta))=(x,t,0,0) \sgc
\end{align}
solving $t'=x$ and $x'=t$ with the initial conditions $t(0)=t_0$ and $x(0)=x_0$ will give us back the full flow given above. 
\end{example}

\subsection{Lie Derivatives}
Recall that a vector field $X$ acting on a smooth function $f$ on a manifold $M$ is a derivation at all points on the manifold. One can express this action in terms of the flow $\sigma_X$ of $X$:
\begin{equation}
X(f)= \sigma_* \left( \frac{d}{dt} \right) f = \frac{d}{dt}\left( f \circ \sigma_X \right)= \frac{f(\sigma_X(t+dt))-f(\sigma_X(t))}{dt} \sgd
\end{equation}
This gives us a more familiar expression of the derivation of f. Note that it is also possible to get a similar expression by writing $X$ as a pushforward of $\frac{\partial}{\partial x_i}$.

To take the ``derivative" of a vector field, we can try acting on it by another vector field. i.e. Let us try to define $D_X$ acting on $\mathfrak{X}(M)$ such that for $ Y \in \mathfrak{X}$,
\begin{equation}
D_X Y(f)= X(Y(f)) \sgd
\end{equation}
However $D_X Y$ does not satisfy the rules of derivation:
\begin{equation}
D_X Y(fg)= X(Y(fg))=X(f)Y(g)+X(g)Y(f)+ g XY(f)+f XY(g) \sgd
\end{equation}
If we instead define
\begin{equation}
\lie_X Y(f)= X(Y(f))- Y(X(f)) \sgc
\end{equation}
then $\lie_X Y$ satisfies the rules of derivation and thus an element of $\mathfrak{X}(M)$ if $X,Y \in \mathfrak{X}(M)$.\\
One can also see that an equivalent definition is
\begin{equation}
\left. \lie_X Y \right|_p = \lim_{t \ra 0}{\frac{(\chi_{-t})_{*} Y_{\chi_{t}(p)} - Y_p}{t}} \sgd
\end{equation}
\paragraph{Lie Derivative of Tensors:} Above definition can be generalized to arbitrary (k,l) tensors:
\begin{equation}
\left. \lie_X A \right|_p = \lim_{t \ra 0}{\frac{(\chi_{t})^{*} A_{\chi_{t}(p)} - A_p}{t}} \sgd
\end{equation}
Some properties of the Lie derivative are as follows: 
\begin{enumerate}
\item It is linear in second argument over $\R$ i.e. $\lie_X(aY+bZ)= a \lie_X(Y) + b \lie_Y(Z)$.
\item It is not linear in second argument over $C^{\infty}(M)$ i.e. $\lie_X (f Y)= X(f) Y + f \lie_X(Y)$.
\item It is linear in first argument over $\R$ i.e. $\lie_{aX+bY}(Z)= a \lie_X(Z) + b \lie_Y(Z)$.
\item It is \hypertarget{Lie nonlinear in first}{not linear in first argument} over $C^{\infty}(M)$ i.e. $\lie_{fX} (Y)= -Y(f) X + f \lie_X(Y)$.
\item Let V be a smooth vector field on $M$ and $ \kappa, \eta $ some arbitrary forms on $M$. Then
\begin{equation}
\mathcal{L}_{V}( \kappa \wedge \eta ) = \mathcal{L}_{V}( \kappa) \wedge \eta +  \kappa \wedge \mathcal{L}_{V}( \eta ) \sgd
\end{equation}

\end{enumerate}

\begin{thm}[Cartan's Magic Formula] \hypertarget{cartan}
On a smooth manifold $M$, for any smooth vector field $V$ and any smooth differential form $\alpha$,
\begin{equation}
\mathcal{L}_V \alpha = V \lrcorner \ d\alpha + d(V \lrcorner \ \alpha) \sgd
\end{equation}
\end{thm}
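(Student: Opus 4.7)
My plan is to prove the identity by showing both sides are $\R$-linear derivations on the exterior algebra $\bigoplus_p \pfs(M)$ that agree on a generating set, namely smooth functions and their differentials. Since in any coordinate chart every smooth $p$-form is locally a finite sum of wedge products of $0$-forms and exact $1$-forms $dx^i$, agreement on this generating set plus the Leibniz rule will force agreement on all of $\pfs(M)$.

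First I would verify the derivation property on each side. The left-hand side $\lie_V$ is already listed among the basic properties of the Lie derivative as satisfying $\lie_V(\kappa \wedge \eta) = \lie_V(\kappa) \wedge \eta + \kappa \wedge \lie_V(\eta)$, so that is free. For the right-hand side, let $D_V := d \circ (V \inc \cdot) + (V \inc \cdot) \circ d$. The key input is that $d$ and $V \inc \cdot$ are both anti-derivations on the exterior algebra: for a $p$-form $\alpha$ and any form $\beta$,
\begin{align*}
d(\alpha \wedge \beta) &= d\alpha \wedge \beta + (-1)^p \alpha \wedge d\beta, \\
V \inc (\alpha \wedge \beta) &= (V \inc \alpha) \wedge \beta + (-1)^p \alpha \wedge (V \inc \beta).
\end{align*}
A direct expansion of $D_V(\alpha \wedge \beta)$ using these two identities shows that the sign factors $(-1)^p$ cancel pairwise, leaving $D_V(\alpha \wedge \beta) = D_V \alpha \wedge \beta + \alpha \wedge D_V \beta$, i.e. $D_V$ is an ordinary (even) derivation.

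Next I would check agreement on the generators. For a smooth function $f \in C^\infty(M)$ one has $V \inc f = 0$ by convention, so $D_V f = V \inc df = df(V) = V(f)$, which equals $\lie_V f$ by the flow-based characterization of the Lie derivative. For the differential $df$ of a function, $D_V(df) = V \inc d(df) + d(V \inc df) = 0 + d(V(f)) = d(V(f))$; on the other hand, $\lie_V(df) = d(\lie_V f) = d(V(f))$, where I use the auxiliary fact that $\lie_V$ commutes with $d$. This commutation is itself easily proved from the flow definition since pullback commutes with $d$:
\begin{equation*}
\lie_V d\alpha = \lim_{t\to 0} \tfrac{1}{t}\bigl(\sigma_t^* d\alpha - d\alpha\bigr) = \lim_{t\to 0} \tfrac{1}{t}\bigl(d\sigma_t^* \alpha - d\alpha\bigr) = d\, \lie_V \alpha.
\end{equation*}

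Finally I would close the argument by local induction on form degree: in a coordinate chart any $p$-form is $\alpha = \sum_I \alpha_I\, dx^{i_1} \wedge \cdots \wedge dx^{i_p}$; since $\lie_V$ and $D_V$ are both derivations that coincide on each factor $\alpha_I$ and each $dx^{i_k}$, they coincide on $\alpha$. The expected main obstacle is the sign bookkeeping in step one when showing that the anti-commutator of two anti-derivations is a derivation; once that is correctly set up, the remaining verifications on $f$ and $df$ are immediate and the global conclusion follows from the local coordinate expression together with $\R$-linearity.
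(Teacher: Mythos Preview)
Your argument is correct and is the standard proof of Cartan's formula: both $\lie_V$ and $D_V := V\inc d(\cdot) + d(V\inc\cdot)$ are $\R$-linear derivations of the exterior algebra (the latter because the graded commutator of the anti-derivations $d$ and $V\inc\cdot$ is an even derivation, exactly the sign cancellation you flag), and they agree on $C^\infty(M)$ and on exact $1$-forms $df$, hence on everything by the local coordinate expression.

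The paper, however, does not prove this theorem at all; it simply records the statement in the appendix as background material and moves on to the next section. So there is no ``paper's approach'' to compare against --- your proof supplies what the paper omits, and it is the textbook route (essentially the one in Lee's \emph{Introduction to Smooth Manifolds}, which the appendix cites as its main source).
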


\section{Orientation and Integration}
\subsection{Orientation}
Consider a real vector space V with two ordered bases $\lbr {E}_i \rbr$ and $\{ \tilde{E}_i \}$. These two bases are called consistently oriented if the transition matrix $B$ defined as
\begin{equation}
E_i= B^j_i \tilde{E}_j
\end{equation}
between them has positive determinant. A choice of equivalence classes of consistently oriented bases are called an orientation. Note that by definition only two different choice of orientation is possible for a vector space.
\begin{defn}[Orientation of a Vector space]
A vector space together with a choice of orientation is called an oriented vector space.
\end{defn}
To define an orientation for a manifold one simply chooses a set of ordered vector fields such that at each point they have a positive orientation.
\begin{prop}
Any non-vanishing n-form $\omega$ on $M$ determines a unique orientation of $M$ for which $\omega$ is positively at each point.
\end{prop}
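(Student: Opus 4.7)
The plan is to first use the non-vanishing $n$-form $\omega$ to single out, at each point $p\in M$, an orientation of the tangent space $T_pM$, and then to check that these pointwise orientations piece together consistently into a global orientation of $M$.

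First I would work pointwise. Since $\dim T_pM = n$, the space $\Lambda^n(T_pM^*)$ is one-dimensional, and $\omega_p\neq 0$ spans it. For any ordered basis $(E_1,\dots,E_n)$ of $T_pM$ the number $\omega_p(E_1,\dots,E_n)\in\mathbb{R}$ is nonzero. I would declare $(E_1,\dots,E_n)$ to be positively oriented at $p$ exactly when $\omega_p(E_1,\dots,E_n)>0$. To see this is well-defined on equivalence classes of consistently oriented bases, I would use the standard transformation law: if $E_i=B_i^{\,j}\tilde E_j$ then by multilinearity and antisymmetry of $\omega_p$,
\begin{equation}
\omega_p(E_1,\dots,E_n)=\det(B)\,\omega_p(\tilde E_1,\dots,\tilde E_n),
\end{equation}
so two consistently oriented bases (i.e.\ $\det B>0$) give values of the same sign, while oppositely oriented bases give opposite signs. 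Thus exactly one of the two equivalence classes of bases on $T_pM$ is singled out as positive by $\omega_p$.

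Next I would upgrade this to an orientation of $M$, which amounts to showing that the pointwise choice varies continuously (smoothly). Given any point $p\in M$, pick a smooth chart $(U,\phi)$ around $p$ and consider the associated coordinate frame $\bigl(\tfrac{\partial}{\partial x^1},\dots,\tfrac{\partial}{\partial x^n}\bigr)$ on $U$. The function $f:=\omega\bigl(\tfrac{\partial}{\partial x^1},\dots,\tfrac{\partial}{\partial x^n}\bigr)$ is smooth on $U$ and, by hypothesis, nowhere zero, hence has constant sign on each connected component of $U$. On components where $f>0$ the coordinate frame is positively oriented; on components where $f<0$ I would replace $\tfrac{\partial}{\partial x^1}$ with $-\tfrac{\partial}{\partial x^1}$ (equivalently swap $x^1\leftrightarrow x^2$), producing a smooth local frame that is positively oriented at every point. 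Shrinking $U$ if necessary, this yields a covering of $M$ by charts with positively oriented coordinate frames, and overlapping such charts automatically have transition matrices of positive determinant (otherwise $\omega$ would evaluate positively on one frame and negatively on the other at a common point, a contradiction). This produces a genuine orientation of $M$ in which $\omega$ is positive at each point.

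Finally, uniqueness is essentially tautological: if two orientations both make $\omega_p(E_1,\dots,E_n)>0$ whenever $(E_1,\dots,E_n)$ is positively oriented at $p$, then at each point they distinguish the same equivalence class of ordered bases, so they coincide. I expect no real obstacle in this argument; the only mildly subtle step is the smoothness/continuity of the orientation, which is handled cleanly by the sign-of-$f$ argument on connected components of a chart together with the swap of two basis vectors to flip orientation when needed.
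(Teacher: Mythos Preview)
The paper states this proposition without proof; it appears in the appendix as part of a summary of background material, and the author simply records the statement and moves on. Your argument is the standard one (essentially the proof in Lee's \emph{Introduction to Smooth Manifolds}, which the paper cites as its main source for this appendix): use $\omega_p$ to pick out a class of bases at each point via the sign of $\omega_p(E_1,\dots,E_n)$, check consistency under change of basis by the $\det(B)$ transformation law, and then show continuity by evaluating $\omega$ on a coordinate frame and using that a nowhere-vanishing continuous function has constant sign on connected sets. The proof is correct and complete as written.
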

A manifold for which a choice of orientation cannot be made is called a non-orientable manifold. Following facts are useful:
\begin{itemize}
\item A smooth manifold is called parallelizable, if there exists a global frame on it. Every parallelizable smooth manifold is orientable.
\item Every Lie group-to be defined later- has precisely two left-invariant orientations, corresponding to the two orientations of its Lie algebra.
\end{itemize}

\subsection{Integration}
Only meaningful integration on the manifold is integration over n-forms, if manifold is n dimensional. We begin by introducing line integrals, then we define multidimensional integrals over n-forms.
Covectors give a coordinate independent notion of line integrals as explained in the following.
\paragraph{Line Integrals:} Consider a 1-dimensional manifold $M=\lb a,b \rb$, and let $t$ be coordinates on it. Then a covector on $M$ can be written as $\omega= f(t) dt$, and an integral of $\omega$ can be defined as
\begin{equation}
\int_{\lb a,b \rb} \omega \equiv \int_a^b f(t) dt \sgd
\end{equation} 
\begin{prop}
Let $\phi: \lb c,d \rb \ra \lb a,b \rb$ be an increasing diffeomorphism. Then
\begin{equation}
\int_{\lb c,d \rb} \phi^* \omega = \int_{\lb a,b \rb} \omega \sgd
\end{equation}
\end{prop}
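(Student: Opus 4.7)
The plan is to reduce the statement to the standard single-variable change of variables formula from calculus, which is presumably what motivated this coordinate-free definition in the first place. The proposition is essentially saying that the coordinate-free definition of the line integral (via the integral-over-a-chart definition above) is consistent under orientation-preserving reparametrisations.

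First I would write $\omega$ in coordinates on $[a,b]$ as $\omega = f(t)\, dt$ for a smooth function $f$, using the fact that the cotangent bundle of a 1-manifold is rank 1 and $dt$ trivialises it. Next I would compute $\phi^*\omega$ explicitly. Let $s$ denote the coordinate on $[c,d]$. Using that pullback commutes with the differential of a function (i.e.\ $\phi^*(dt) = d(t\circ \phi) = \phi'(s)\, ds$) and that pullback distributes over the product by a function ($\phi^*(f\, dt) = (f\circ \phi)\,\phi^*(dt)$), I obtain
\begin{equation}
\phi^* \omega = f(\phi(s))\,\phi'(s)\, ds \sgd
\end{equation}
Applying the just-defined notation for line integrals in the coordinate $s$ on $[c,d]$ yields
\begin{equation}
\int_{[c,d]} \phi^* \omega \;=\; \int_c^d f(\phi(s))\,\phi'(s)\, ds \sgd
\end{equation}

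The remaining step is to identify the right-hand side with $\int_a^b f(t)\, dt$. Here I invoke the ordinary change of variables theorem for Riemann integrals on $\mathbb{R}$, with substitution $t = \phi(s)$, $dt = \phi'(s)\, ds$. This gives
\begin{equation}
\int_c^d f(\phi(s))\,\phi'(s)\, ds \;=\; \int_{\phi(c)}^{\phi(d)} f(t)\, dt \sgd
\end{equation}
Finally I use the hypothesis that $\phi$ is an \emph{increasing} diffeomorphism from $[c,d]$ onto $[a,b]$, which forces $\phi(c)=a$ and $\phi(d)=b$, so that the right-hand side is exactly $\int_a^b f(t)\, dt = \int_{[a,b]} \omega$.

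There is no real obstacle here; the only subtle point — and the reason the hypothesis ``increasing'' is stated explicitly — is the orientation. If $\phi$ were decreasing instead, then $\phi(c)=b$ and $\phi(d)=a$, and one would pick up a sign from swapping the limits of integration; equivalently, $\phi'(s)<0$ everywhere and the absolute value in the classical change-of-variables statement would flip a sign. This motivates the need to define integration of forms on \emph{oriented} manifolds more generally, which is presumably the next step in the exposition.
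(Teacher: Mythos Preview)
Your argument is correct and complete; it is the standard reduction to the one-variable change of variables formula, exactly as one finds in Lee's \emph{Introduction to Smooth Manifolds}, which this appendix follows closely. The paper itself states the proposition without proof, so there is nothing further to compare.
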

\begin{defn}
Let $\gamma: \lb a,b \rb \ra M$ be a smooth curve in  $M$. Then line integral of a smooth covector field $\omega$ over $\gamma$ is defined as
\begin{equation}
\int_{\gamma} \omega \equiv \int_{\lb a,b \rb} \gamma^* \omega \sgc
\end{equation}
which also is equal to
\begin{equation}
= \int_a^b \omega_{\gamma(t)} \lp \gamma'(t) \rp dt \sgd
\end{equation}
\end{defn}

\begin{thm}[Fundamental Theorem of Line Integrals]
Let $f$ be a smooth function and $\gamma:\lb a,b \rb \ra M$ be a curve on $M$. Then 
\begin{equation}
\int_{\gamma} df=f(\gamma(b))-f(\gamma(a)) \sgd
\end{equation}
\end{thm}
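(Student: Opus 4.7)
The plan is to unwind the definitions and reduce the statement to the usual Fundamental Theorem of Calculus on $[a,b] \subset \R$. First I would invoke the definition of the line integral of a covector field over a curve that was just given, writing
\begin{equation*}
\int_{\gamma} df = \int_{[a,b]} \gamma^* (df).
\end{equation*}
So the whole question becomes: what is $\gamma^*(df)$ as a $1$-form on the interval?

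Next I would use the naturality of the exterior derivative under pullback, namely the identity $\gamma^* \circ d = d \circ \gamma^*$. Applied to the $0$-form $f$ on $M$, this gives $\gamma^* (df) = d(\gamma^* f) = d(f \circ \gamma)$. Since $f \circ \gamma$ is just a smooth function on the interval $[a,b]$, and on a one-dimensional manifold with coordinate $t$ we have $dh = h'(t)\, dt$ for any smooth function $h$, we obtain
\begin{equation*}
\gamma^*(df) = (f \circ \gamma)'(t)\, dt.
\end{equation*}
Plugging this back into the definition of integration of a $1$-form on $[a,b]$ given just before the statement (namely $\int_{[a,b]} h(t)\, dt$ in the standard Riemann sense) yields
\begin{equation*}
\int_{\gamma} df = \int_a^b (f \circ \gamma)'(t)\, dt,
\end{equation*}
which by the ordinary Fundamental Theorem of Calculus equals $(f \circ \gamma)(b) - (f \circ \gamma)(a) = f(\gamma(b)) - f(\gamma(a))$, as desired.

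The only nontrivial ingredient in this plan is the commutation relation $\gamma^* \circ d = d \circ \gamma^*$. For $0$-forms this follows directly from the definition of pullback of a covector combined with the chain rule: for any tangent vector $X_t = \gamma_*(d/dt)$, one computes $(\gamma^* df)(d/dt) = df(\gamma_* d/dt) = (\gamma_* d/dt)(f) = (d/dt)(f \circ \gamma) = d(f \circ \gamma)(d/dt)$. So this step, while the conceptual core, is immediate in the $0$-form case and is the only place one really has to think. Everything else is bookkeeping with the definitions already established in the preceding subsections on pullbacks, differentials, and line integrals.
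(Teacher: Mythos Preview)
Your argument is correct and is the standard proof: reduce to the ordinary Fundamental Theorem of Calculus on $[a,b]$ via the definition of the line integral and the naturality of $d$ under pullback. Note, however, that the paper itself does not supply a proof of this theorem; it is simply stated as a result in the appendix, so there is no paper proof to compare against.
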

Note that from this theorem we see, integral of an exact one-form over a closed curve is zero. In general a one-form that has an integral over a closed curve that is zero is called a conservative one-form. Next theorem tells us this is an iff statement.
\begin{thm}
A smooth one-form on $M$ is conservative if and only if it is exact.
\end{thm}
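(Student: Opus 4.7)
The forward direction is essentially immediate from the Fundamental Theorem of Line Integrals already stated: if $\omega = df$, then for any closed smooth curve $\gamma:[a,b]\to M$ with $\gamma(a)=\gamma(b)$, we have $\int_\gamma \omega = f(\gamma(b))-f(\gamma(a))=0$. One small technicality is that the theorem was stated for smooth curves, so for a general piecewise-smooth closed curve one would sum over smooth pieces and observe the endpoint contributions telescope.

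For the converse, the plan is to construct a potential $f$ by path-integration. First I would reduce to the case that $M$ is connected, treating each connected component separately (setting $f$ on different components up to additive constants). Fix a basepoint $p_0 \in M$; since $M$ is a smooth manifold it is smoothly path-connected, so for every $p \in M$ we may choose a piecewise-smooth curve $\gamma_p$ from $p_0$ to $p$. Define
\begin{equation}
f(p) := \int_{\gamma_p} \omega \sgd
\end{equation}
The first key step is to show this is independent of the choice of $\gamma_p$: given two curves $\gamma_p,\tilde\gamma_p$ from $p_0$ to $p$, their concatenation $\gamma_p * \tilde\gamma_p^{-1}$ is a piecewise-smooth closed curve, so by the conservativeness assumption the integral over it vanishes, giving $\int_{\gamma_p}\omega = \int_{\tilde\gamma_p}\omega$.

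The second key step is to verify $f$ is smooth and $df = \omega$. Work locally: given $p \in M$, pick a smooth chart $(U,\phi=(x^1,\dots,x^n))$ with $\phi(p)=0$ centered at $p$ and with $\phi(U)$ an open ball. For any $q \in U$, extend the chosen curve $\gamma_p$ by the straight-line segment in coordinates from $p$ to $q$; call this extended curve $\gamma_q$. Then $f(q) = f(p) + \int_{\mbox{segment}} \omega$, and writing $\omega = \omega_i\, dx^i$ in the chart, the segment integral is an ordinary line integral in Euclidean coordinates. Differentiating in the $x^j$ direction and applying the one-dimensional fundamental theorem of calculus gives $\pr f/\pr x^j (q) = \omega_j(q)$, whence $df = \omega$ on $U$. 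This establishes smoothness and exactness locally, and since the formula $df = \omega$ is coordinate-independent it holds globally.

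The main subtle point, and the place where one must be careful, is the independence-of-path argument: we need the loop formed by two smooth paths to itself be admissible for the conservativeness hypothesis, which is why ``conservative'' should really be understood to mean vanishing integral over every piecewise-smooth closed curve (or else one must approximate/smooth corners). Assuming that convention, the argument above is straightforward; the only other care needed is to handle disconnected $M$ by working component by component.
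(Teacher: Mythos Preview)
The paper does not actually prove this theorem; it is stated without proof and followed only by the physical remark about conservative forces. Your argument is the standard one (essentially that of Lee's \emph{Introduction to Smooth Manifolds}, which the appendix cites) and is correct. One minor sharpening: when you verify $\pr f/\pr x^j(q)=\omega_j(q)$, it is cleanest to exploit path-independence once more and choose the path to a nearby point $q+t e_j$ to be your path to $q$ followed by the coordinate segment in the $x^j$ direction, so that the one-variable fundamental theorem applies directly; the straight-line segment from $p$ to $q$ as written does not by itself isolate a single coordinate direction.
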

This is the familiar law in classical mechanics that, work done by a force $\vec{F}$ is path independent, if and only if $\vec{F}=\vec{\nabla} V$, where $V$ denotes the potential.

Thus far we were only able to define an integral of a one-form over a curve on a manifold in a coordinate invariant way. The reason for this was that the one form defines a measure of length for a vector at a point,and does this through the curve: it takes a vector and produces a number. Thus integral over a manifold of n-dimensions can only be done in a coordinate invariant way if a type (0,n) tensor is given. We want these integrals to define volume accurately.

Consider for simplicity the manifold $\R^n$. Note that the property of multilinearity, these integrals over (0,n) tensors on $\R^n$ will satisfy the property that if a vector is scaled by a number, then the volume is scaled by the same number; also if two vectors are added, resulting volume will be the addition of corresponding two volumes.

However (0,n) tensors will lack one important property that an n-volume should satisfy: if the set of n vectors acted by the tensor is linearly dependent then the volume should be zero. Thus instead of (0,n) tensors, n-forms should be used. So a n-dimensional integral over o manifold can only be over n-forms.

The generalization of n-dimensional integrals to arbitrary manifolds is made as follows:
\begin{defn}[Integral over a manifold]
Integral of an n-form over a region U of an n-dimensional manifold is defined as
\begin{equation}
\int_U \omega \equiv= \pm \int_{\phi_i(U)} (\phi^{-1})^* \omega \sgc
\end{equation}
where $+/-$ is for a positively/negatively oriented chart.
\end{defn}

\begin{thm}[Stokes' Theorem]
Let $M$ be a smooth manifold of dimension n, and $\pr M$ its boundary with the map $i: \pr M \ra  M$, then
\begin{equation}
\int_{M}{ d\alpha} = \int_{ \partial M }{i^* \alpha} \sgc
\end{equation}
where $\alpha$ is n-1 form.
\end{thm}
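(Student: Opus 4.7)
The plan is to follow the standard reduction strategy: localize the identity using a partition of unity, then verify it in local coordinate models on $\Rnn$ and $\Rnp$ where it becomes an application of the fundamental theorem of calculus.

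First I would take a smooth partition of unity $\{\rho_i\}$ subordinate to a cover of $M$ by oriented coordinate charts $(U_i,\phi_i)$, each modeled either on an open subset of $\Rnn$ (interior chart) or on a relatively open subset of $\Rnp$ (boundary chart), as guaranteed by the definition of a manifold with boundary. Writing $\alpha = \sum_i \rho_i \alpha$, by linearity of both integrals and the fact that $d(\rho_i\alpha)$ summed over $i$ equals $d\alpha$ (since $\sum_i d\rho_i = d(\sum_i \rho_i) = 0$), the statement reduces to proving Stokes' theorem separately for each compactly supported $(n-1)$-form inside a single chart. Assuming compactness of $\mathrm{supp}\,\alpha$ (or at least that $\alpha$ is compactly supported, which is the standard hypothesis needed) lets me pull everything back via $\phi_i^{-1}$ to the model space and work there.

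Next I would handle the two model computations. In an interior chart, $\alpha$ pulls back to a compactly supported $(n-1)$-form $\omega$ on $\Rnn$, and the boundary term is absent ($i^*\alpha = 0$ since the chart does not meet $\pr M$). Writing $\omega = \sum_{j} f_j\, dx^1\wedge\cdots\wedge \widehat{dx^j}\wedge\cdots\wedge dx^n$, one computes $d\omega = \sum_j (-1)^{j-1}\pr_j f_j\, dx^1\wedge\cdots\wedge dx^n$, and each $\int_{\Rnn} \pr_j f_j\, dx^1\cdots dx^n$ vanishes by the fundamental theorem of calculus applied in the $x^j$ variable, using compact support. In a boundary chart modeled on $\Rnp = \{x^n\ge 0\}$, the same calculation gives zero for $j<n$, while the $j=n$ term produces a single boundary contribution $-\int_{\R^{n-1}}\! f_n(x^1,\dots,x^{n-1},0)\,dx^1\cdots dx^{n-1}$ (after integrating $\pr_n f_n$ from $0$ to $\infty$). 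The matching right-hand side $\int_{\pr M} i^*\alpha$ is exactly this integral, once one verifies that with the induced (boundary) orientation the sign works out; this is the place where the convention relating the orientation on $\pr M$ to that of $M$ (outward-pointing normal first) must be pinned down.

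The main obstacle, really the only subtlety, is the orientation bookkeeping at the boundary: making sure that the induced orientation on $\pr M$ (which for $\Rnp$ corresponds to $-dx^1\wedge\cdots\wedge dx^{n-1}$ when $\pr_n$ is the outward normal) produces precisely the sign $(-1)^n$ needed to cancel the $(-1)^{n-1}$ coming from $d\omega$ and the sign from integrating $\pr_n f_n$. Once that sign is fixed consistently with the convention chosen for oriented integrals in the definition above, the local boundary identity holds. Reassembling via the partition of unity $\sum_i \int_M d(\rho_i\alpha) = \sum_i \int_{\pr M} i^*(\rho_i\alpha)$ and using $\sum_i \rho_i = 1$ on both sides yields the global statement. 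I would also remark that the theorem for forms on chains (e.g.\ $\int_{\gamma} df = f(\gamma(b)) - f(\gamma(a))$, cited earlier as the fundamental theorem of line integrals) is the one-dimensional case of this argument, so the proof is genuinely just an iteration of that idea dimension by dimension.
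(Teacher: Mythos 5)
Your proposal is correct, but there is nothing in the paper to compare it against: Stokes' theorem appears here in the background appendix, whose introduction states explicitly that most proofs are skipped, with the reader deferred to the references, chiefly Lee \cite{LeeISM}. Your argument is precisely the canonical proof from that reference: a partition of unity subordinate to a cover by interior charts (modeled on open subsets of $\mathbb{R}^n$) and boundary charts (modeled on $\mathbb{R}^n_{+}$), reduction by linearity to compactly supported forms in a single chart, and then the two model computations in which everything collapses to the fundamental theorem of calculus applied one variable at a time. In doing so you also repair the statement as printed: without compactness of $M$ or of $\mathrm{supp}\,\alpha$ the theorem is false --- already for $M=(0,1)$, a manifold with empty boundary, $\int_M df = f(1^-)-f(0^+)$ need not vanish --- so your compact-support hypothesis is a genuine correction, not a convenience, and the same goes for the orientability you smuggle in via oriented charts.

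One detail to fix: your parenthetical orientation claim is off. For $\mathbb{R}^n_{+}=\{x^n\ge 0\}$ the outward normal is $-\partial_n$ (the vector $\partial_n$ points \emph{into} the upper half-space), and placing the outward normal first yields the induced orientation form $(-1)^n\, dx^1\wedge\cdots\wedge dx^{n-1}$ on the boundary; this equals $-dx^1\wedge\cdots\wedge dx^{n-1}$ only for odd $n$. Your subsequent accounting is nevertheless correct: the left-hand side carries $(-1)^{n-1}$ from $d\omega$ times $-1$ from integrating $\partial_n f_n$ over $[0,\infty)$ with $f_n$ vanishing at infinity, i.e. $(-1)^n$ overall, which is exactly what the induced orientation supplies on the right-hand side, so the ledger balances once the parenthetical is corrected. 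Lastly, the closing remark that the proof ``iterates'' the fundamental theorem of line integrals dimension by dimension is a fair heuristic but not literally an induction: each coordinate direction is disposed of by a single one-dimensional application of the fundamental theorem of calculus, with only the $x^n$ direction in a boundary chart leaving a surviving term.
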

Note that this is the analogue of fundamental theorem of line integrals for n-dimensional manifolds.

\section{Submanifolds and Foliations}\label{sec:appfol}
\subsection{Submanifolds}
Let $F: M \ra N$ be a smooth map, rank of $F$ at a $p \in M$ is the rank of the linear map $\left( F_{*} \right)_p : T_pM \ra T_{F(p)}N$. If $F$ has the same rank r at every point on $M$, then we say $F$ has a constant rank r. An F that has a full rank everywhere is said to be:
\begin{itemize}
\item a submersion if $ rank(F)=dim(N) $,
\item an immersion if $ rank(F)=dim(M) $,
\item an embedding of $M$ into $N$ if $F$ is a smooth immersion that is also a topological embedding.
\end{itemize}

\begin{defn}[Embedded Submanifold]
An embedded submanifold of $M$ is a subset $S \subseteq M$ that is a manifold in the subspace topology, endowed with a smooth structure with respect to which the inclusion map $i:S \ra M$ is a smooth embedding. 
\end{defn}
\uline{For the purposes of this document, when we say submanifold we always mean an embedded submanifold unless otherwise stated.} A submanifold $S$ is called an hypersurface if it has a dimension 1 less than of the ambient manifold $M$.

\begin{prop}[Tangent Space of a Submanifold] \label{prop:tansubm}
Suppose $M$ is a smooth manifold with or without boundary, $S \subseteq M$ an embedded submanifold, and $p \in S$. Following are two ways of characterizing the tangent space of a submanifold:
\begin{itemize}
\item A \hypertarget{vfonsubman}{vector} $v \in T_pM$ is in $T_pS$ if and only if there is a smooth curve $\gamma:\R \ra M$ whose image is contained in S, and which is also smooth as a map into S, such that $\gamma(0)=p$, and $\gamma'(0)=v$.
\item $T_pS= \left\lbrace v \in T_pM: vf=0 \ \textit{whenever} \ f \in C^{\infty}(M) \ \textit{and} \ \left. f \right|_S=0 \right\rbrace$ .
\end{itemize}
\end{prop}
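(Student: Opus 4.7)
The plan is to prove both characterizations by exploiting the existence of slice charts for an embedded submanifold: around any $p \in S$ there exists a chart $(U,\phi)$ of $M$ with coordinates $(x^1,\dots,x^n)$ such that $S \cap U = \{x^{k+1}=\cdots=x^n=0\}$, where $k = \dim S$. Under this chart, $T_pS$ is naturally identified (via the differential of the inclusion $i: S \hookrightarrow M$) with the subspace $\mathrm{span}\{\partial/\partial x^1,\dots,\partial/\partial x^k\} \subset T_pM$. Throughout, I use that $i_*$ is injective because $i$ is an embedding, so $T_pS$ embeds in $T_pM$ and the two definitions of ``tangent vector to $S$'' agree.

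For characterization (i), the forward direction proceeds as follows: given $v \in T_pS$, write $v = \sum_{a=1}^{k} v^a \partial/\partial x^a$ in the slice chart and define $\gamma(t) = \phi^{-1}(v^1 t,\dots,v^k t, 0,\dots,0)$ for small $t$, extended arbitrarily smoothly to all of $\R$. This curve lies in $S$, is smooth as a map into $S$ (by composition with the chart on $S$ inherited from $\phi$), passes through $p$ at $t=0$, and by construction satisfies $\gamma'(0)=v$. For the converse, if such a $\gamma$ exists then $\gamma = i \circ \tilde{\gamma}$ for a smooth curve $\tilde{\gamma}$ into $S$, and hence $v = \gamma'(0) = i_*(\tilde{\gamma}'(0)) \in i_*(T_pS)$, which is $T_pS$ under our identification.

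For characterization (ii), call the right-hand set $N_p$. To show $T_pS \subseteq N_p$, take any $v \in T_pS$ and apply the curve criterion from (i) to obtain a curve $\gamma$ in $S$ with $\gamma'(0)=v$. Then for any $f \in C^\infty(M)$ with $f|_S = 0$, one has $f \circ \gamma \equiv 0$, and differentiating at $0$ gives $vf = 0$. For the reverse inclusion $N_p \subseteq T_pS$, use the slice chart once more: the coordinate functions $x^{k+1},\dots, x^n$ (possibly multiplied by a bump function to extend globally to smooth functions on $M$ while preserving vanishing on $S \cap U$) are smooth members of $C^\infty(M)$ vanishing on $S$ near $p$. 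Writing $v = \sum_{i=1}^n v^i \partial/\partial x^i$, the hypothesis $v x^j = 0$ for $j > k$ forces $v^j = 0$, so $v \in \mathrm{span}\{\partial/\partial x^1,\dots,\partial/\partial x^k\} = T_pS$.

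The main subtlety, and the step I would take most care with, is the bump-function extension in the second part of (ii): the slice coordinates $x^{k+1},\dots,x^n$ are defined only on the chart domain $U$, not on all of $M$, so strictly one needs a smooth cutoff supported in $U$ and equal to $1$ in a neighborhood of $p$ to produce genuine elements of $C^\infty(M)$ that still vanish on the entire submanifold $S$. The rest of the argument is essentially bookkeeping with the slice chart, but this extension step is where the embedded (as opposed to merely immersed) hypothesis is actually used.
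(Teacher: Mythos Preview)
Your proof is correct and is the standard slice-chart argument one finds in Lee's \emph{Introduction to Smooth Manifolds}, which the paper cites as its source for this material. The paper itself does not supply a proof of this proposition; it is simply stated in the appendix as a known result, so there is no in-text argument to compare against. Your treatment of the bump-function extension in part (ii) is the right place to be careful, and your reasoning there is sound: the product $\psi \cdot x^j$ vanishes on $S\cap U$ because $x^j$ does, and vanishes outside $\operatorname{supp}\psi$ because $\psi$ does, so it vanishes on all of $S$.
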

\noindent
Similarly if $X$ is a smooth vector field on $M$ then $X$ is a vector field of $S$ if and only if $ \left. (Xf) \right|_S =0 $ for every $\ f \in C^{\infty}(M)$ such that $\left. f \right|_S=0 $.

\paragraph{Vector fields along a submanifold:}
Given an embedded submanifold $S \subseteq M$, a section of the ambient tangent bundle $\left. TM \right|_S $ is called a \hypertarget{vector field along}{vector field along} S. It is a map $X: S \ra TM$ such that $X_p \in T_pM$ for each $p \in S$. Note that this is different from a vector field on $S$, where $X_p \in T_pS$ at each point.

\subsection{Orientation of a Submanifold} 
Let $M$ be an oriented n dimensional manifold, and $S$ be a hypersurface in it. Let $N$ be a vector field along $S$ that is nowhere tangent to $S$. Then one can give $S$ a unique orientation such that at every point $p$ on $S$ via this vector field $N$ such that: $\lbr E_1,...E_{n-1}\rbr $ is an oriented basis for $T_pS$ if and only if $\lbr N_p,E_1,...E_{n-1} \rbr$ is an oriented basis for $T_pM$. If $\omega$ is an orientation form for $M$ then inclusion of $N \inc \omega$ is an orientation form for S.

\begin{example}
Consider $S^2$ embedded in $\R^3$:  
\begin{equation}
S^2=\lbr (x,y,z) \st x^2+y^2+z^2 =1 \rbr \sgd
\end{equation}
Normal vector
\begin{equation}
N =r \frac{\partial}{\partial r} = x \frac{\partial}{\partial x}+y \frac{\pr}{\pr y}+ z \frac{\pr}{\pr z} \\
\end{equation} 
is a vector field along $S^2$ that is nowhere tangent to it. Right handed choice of orientation for $R^3$ is provided by the orientation form
\begin{equation}
\omega = dx \wedge dy \wedge dz \sgc
\end{equation}
then
\begin{equation}
N \inc \omega = x dy \wedge dz + y dz \wedge dx + z dx \wedge dy
\end{equation}
is a choice of orientation for $S^2$. Going to the spherical coordinates you get
\begin{equation}
N \inc \omega = \sin \theta d\phi \wedge d\theta \sgc
\end{equation} 
the standard choice of orientation for the sphere.
\end{example}

\begin{remark}[An application of the Stokes' Theorem]
If $S \subseteq M$ is a oriented compact smooth k-dimensional submanifold with or without boundary, and $\omega$ is a closed k-form on $M$ and $\int_S \phi^*\omega \neq 0$, where $\phi$ is the embedding of $S$ in $M$ then
\begin{enumerate}
\item $\omega$ is not exact on $M$,
\item $S$ is not the boundary of an oriented compact smooth submanifold with boundary
in $M$.
\end{enumerate}
\end{remark}
\subsection{Distribution and Integral Manifold}

\begin{defn}[Distribution]
A smooth distribution $D$ on $M$ of rank-k is a rank-k subbundle of $TM$, i.e. it is the smooth disjoint union of set of linear subspaces $D_p \in T_pM$ of dimension k over all points $p \in M$. Smoothness is ensured if around all points on $M$ there exists a coordinate neighborhood such that there exists k vector fields that form a basis for all points in that neighborhood.
\end{defn}

\begin{defn} [Integral Manifold]
A non-empty immersed submanifold $N \subset M$ is called an \hypertarget{integman}{integral manifold} of $D$ if $T_pN=D_p$ for all $p \in N$. 
\end{defn}
\begin{example}
A no-where vanishing smooth vector field $V$ is a smooth rank-1 distribution and image of any integral curve of $V$ is an integral manifold of $V$.
\end{example}
\begin{defn}[Integrable Distribution]
$D$ is called an \hypertarget{integrabledistribution}{integrable distribution} if each point of $M$ is contained in an integral manifold of $D$.
\end{defn}
\begin{defn}[Involutive Distribution]
If for any vector fields that $X,Y$ that are sections of $D$, $[X,Y]$ is also a section of $D$ then D is called an involutive distribution. If $D$ is involutive then space of smooth global vector fields of $D$ is a Lie subalgebra of the space of smooth global vector fields of $M$, $\vfs(M)$.
\end{defn}

\begin{prop}
Every integrable distribution is involutive.
\end{prop}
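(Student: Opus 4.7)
\begin{prf}[Proof sketch]
The plan is to fix an arbitrary point $p \in M$ and two smooth sections $X, Y$ of $D$, and show that $[X,Y]_p \in D_p$. Since $p$ is arbitrary, this proves that $[X,Y]$ is a section of $D$, i.e.\ that $D$ is involutive. The main tool will be the second characterization of the tangent space of a submanifold given in Proposition \ref{prop:tansubm}: a vector $v \in T_p M$ lies in $T_p N$ if and only if $v(f) = 0$ for every $f \in C^\infty(M)$ with $f|_N = 0$.
\end{prf}

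First, using the hypothesis that $D$ is integrable, I would pick an integral manifold $N$ through $p$, so that $T_q N = D_q$ for all $q \in N$, and in particular $T_p N = D_p$. Next I would observe that since $X$ and $Y$ are sections of $D$, for every $q \in N$ one has $X_q, Y_q \in D_q = T_q N$; this is exactly the statement that the vector fields $X$ and $Y$ are tangent to $N$ in the sense discussed right after Proposition \ref{prop:tansubm}.

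The heart of the argument is then to show that $[X,Y]$ is also tangent to $N$ at $p$. Take any $f \in C^\infty(M)$ with $f|_N = 0$. Because $Y$ is tangent to $N$, the function $Y(f) : M \to \mathbb{R}$ satisfies $Y(f)(q) = Y_q(f) = 0$ for every $q \in N$, so $Y(f)|_N = 0$ as well. Applying the same reasoning to $X$ acting on the function $Y(f)$ gives $X(Y(f))|_N = 0$, and similarly $Y(X(f))|_N = 0$. Therefore $[X,Y](f)$ vanishes on $N$, and in particular at $p$, which by the characterization recalled above means $[X,Y]_p \in T_p N = D_p$.

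The step I expect to require the most care is the bookkeeping about whether $X$ and $Y$ are to be viewed as vector fields on the ambient manifold $M$ or as vector fields on $N$: a priori $[X,Y]$ is computed on $M$, and we need to make sure we are allowed to evaluate the ambient commutator at a point of $N$ and identify the result with a tangent vector to $N$. Using the function-based characterization of $T_p N$ avoids having to invoke any restriction/extension arguments or local frame choices, which is why it is the right tool here; the smoothness required to manipulate $X(f)$, $Y(f)$ and their compositions is automatic because $X, Y$ are smooth sections and $f$ is smooth. No further structure on $M$ (metric, orientation, connection) enters the argument.
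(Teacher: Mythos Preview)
Your argument is correct and follows the same line as the paper's proof: pick an integral manifold $N$ through $p$, observe that $X,Y$ are tangent to $N$, and conclude that $[X,Y]$ is tangent to $N$, hence lies in $D_p$. The paper states this in two sentences and leaves the step ``$X,Y$ tangent to $N$ $\Rightarrow$ $[X,Y]$ tangent to $N$'' as understood; you have unpacked that step using the function-based characterization of $T_pN$ from Proposition~\ref{prop:tansubm}, which is exactly the right tool and makes the ambient-versus-submanifold bookkeeping clean.
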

\begin{proof}
Let $X,Y$ be smooth vector fields of $D$, and let $N$ be an integral manifold of $D$ containing $p$. Then $X,Y$ are tangent to $N$, so $[X,Y]$ is tangent to $N$.
\end{proof}

An alternative definition of a smooth distribution is given via forms as follows:
\begin{prop}[Defining Forms]
D is a smooth distribution of rank-k if and only if at each point $p \in M$ has a neighborhood $U$ on which there are smooth 1-forms $\omega_1, ...,\omega_{n-k}$ such that for each $q \in U$,
\begin{equation}
D_q = \left. Ker(\omega_1) \right|_q \cap ... \cap \left. Ker(\omega_{n-k}) \right|_q \sgd
\end{equation}
\end{prop}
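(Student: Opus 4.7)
The plan is to prove both directions by working locally around an arbitrary point $p\in M$, reducing the statement to a linear-algebra fact fiberwise and then using a continuity argument to propagate it to a neighborhood.

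For the forward direction, suppose $D$ is a smooth rank-$k$ distribution. By definition there exist a neighborhood $U$ of $p$ and smooth vector fields $X_1,\dots,X_k$ on $U$ whose values span $D_q$ for every $q\in U$. I would extend these to a smooth local frame of $TM$ as follows: pick vectors $v_{k+1},\dots,v_n\in T_pM$ completing $X_1(p),\dots,X_k(p)$ to a basis, extend each $v_i$ to a smooth vector field $X_i$ on a coordinate chart around $p$ (e.g.\ constant-coefficient extensions in the chart), and shrink $U$ so that $\det\bigl[X_1(q)\,\cdots\,X_n(q)\bigr]\neq 0$ on $U$ (this is an open condition because the determinant depends smoothly on $q$ and is nonzero at $p$). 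Then let $\{\omega^1,\dots,\omega^n\}$ be the dual coframe; smoothness of the $\omega^i$ follows from Cramer's rule applied to the smooth matrix of $X_j$ in a coordinate basis. The forms $\omega^{k+1},\dots,\omega^n$ are the candidate defining 1-forms: a tangent vector $v=\sum a^j X_j(q)$ satisfies $\omega^{i}(v)=0$ for all $i>k$ iff $a^{k+1}=\cdots=a^n=0$, which holds iff $v\in\operatorname{span}\{X_1(q),\dots,X_k(q)\}=D_q$.

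For the converse, assume that around each $p$ we have smooth 1-forms $\omega_1,\dots,\omega_{n-k}$ on $U$ with $D_q=\bigcap_i\ker\omega_i|_q$ for all $q\in U$. Pointwise linear algebra forces the $\omega_i$ to be linearly independent at every $q$: if they were dependent at some $q_0$, the intersection of their kernels would have dimension strictly greater than $k$, contradicting that $D$ has rank $k$. To produce a smooth local frame for $D$, I complete $\{\omega_1|_p,\dots,\omega_{n-k}|_p\}$ to a basis of $T_p^*M$ by picking coordinate 1-forms $\omega_{n-k+1},\dots,\omega_n$, then use the same continuity-of-determinant argument as above to shrink $U$ so that $\{\omega_1,\dots,\omega_n\}$ is a smooth local coframe. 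Let $\{X_1,\dots,X_n\}$ be the dual smooth frame. By duality, $\omega_i(X_j)=\delta_{ij}$, so for $j>n-k$ the vector fields $X_j$ annihilate each $\omega_i$ with $i\le n-k$ and hence lie in $D$. Since there are $k$ such $X_j$ and they are pointwise linearly independent, they form a smooth local frame for $D$, proving that $D$ is a smooth rank-$k$ subbundle of $TM$.

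The only real subtlety, and the step I expect to be the main obstacle to write cleanly, is the smooth extension of a pointwise basis to a local frame in both directions; this is handled by the standard trick of extending to a coordinate chart and using openness of the non-vanishing-determinant condition, but one should be explicit about shrinking the neighborhood $U$ so that smoothness of the extended frame (and, in the forward direction, of the dual coframe) is guaranteed throughout $U$. Everything else is essentially linear algebra done fiberwise.
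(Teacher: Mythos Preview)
Your proposal is correct and, for the forward direction, follows exactly the paper's route: complete a local spanning set $X_1,\dots,X_k$ for $D$ to a smooth frame $X_1,\dots,X_n$ and take the last $n-k$ members of the dual coframe as defining forms. The paper's proof is a single sentence that asserts this without justifying the frame extension or the equality $D_q=\bigcap_{i>k}\ker\omega^i|_q$; your coordinate-extension plus open-determinant argument fills in precisely what the paper leaves implicit. You also supply the converse (complete the $\omega_i$ to a coframe and take the dual frame), which the paper does not prove at all, so your argument is strictly more complete than what appears there.
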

\begin{proof}
Let $ Y_1, ....,Y_k$ span $D$. Complete these to a smooth local frame $ Y_1,...,Y_n $ for $M$. Then there exists a smooth dual coframe $ \omega_1,...,\omega_n$ such that $ \omega_i(Y_j)=\delta_i^j$, then $Y_1,...,Y_k$ will be in $ Ker(\omega_{k+1}) \cap ... \cap Ker(\omega)_{n}$. 
\end{proof}
Such set of $\omega_1, ...,\omega_{n-k}$ are called local defining forms for $D$.

A p-form $\eta$ is said to annihilate $D$ if $\eta(X_1,...,X_p)=0$ whenever $X_1,...,X_p$ are local sections of $D$; and this is the case if and only if $\eta$ can be written in the form
\begin{equation}
\eta =\sum_{i=1}^{n-k}{\omega_i \wedge \beta_i}
\end{equation}
where $\beta_1,...,\beta_{n-k}$ are some smooth (p-1) forms, and $\lbr \omega_i \rbr$ are defining forms.

\begin{prop}\label{prop:annihform}
$D$ is involutive if and only if for any smooth 1-form $\eta$ that annihilates $D$ on a  $U \subset M$, $d \eta$ also annihilates $D$ on $U$.
\end{prop}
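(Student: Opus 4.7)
The plan is to reduce both directions to the following identity for a smooth $1$-form $\eta$ and smooth vector fields $X,Y$:
\begin{equation*}
d\eta(X,Y) = X(\eta(Y)) - Y(\eta(X)) - \eta([X,Y]) \sgd
\end{equation*}
This identity is the engine of the proof, because when $\eta$ already annihilates $D$ and $X,Y$ are local sections of $D$, the first two terms on the right vanish and the equation collapses to $d\eta(X,Y) = -\eta([X,Y])$. So ``$d\eta$ annihilates $D$ on $(X,Y)$'' is logically equivalent to ``$\eta$ vanishes on $[X,Y]$'', and the biconditional in the proposition becomes a statement about whether $[X,Y]$ lies in $\cap_i \mathrm{Ker}(\omega_i) = D$.

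For the forward implication, assume $D$ is involutive. Let $\eta$ be any smooth $1$-form annihilating $D$ on $U$, and let $X,Y$ be arbitrary local sections of $D$ on $U$. Involutivity gives $[X,Y]$ as another local section of $D$, so $\eta([X,Y])=0$. Plugging into the identity shows $d\eta(X,Y)=0$, and since $X,Y$ were arbitrary sections, $d\eta$ annihilates $D$ on $U$.

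For the converse, assume that $d\eta$ annihilates $D$ whenever $\eta$ does. Work in a neighborhood where defining $1$-forms $\omega_1,\dots,\omega_{n-k}$ for $D$ exist (guaranteed by the preceding proposition). Each $\omega_i$ annihilates $D$ by construction, so by hypothesis $d\omega_i$ does too. Given any two local sections $X,Y$ of $D$, the identity forces $\omega_i([X,Y]) = -d\omega_i(X,Y) = 0$ for every $i$. Hence $[X,Y]$ lies in $\bigcap_i \mathrm{Ker}(\omega_i) = D$ at every point of $U$, proving $D$ is involutive.

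The main (minor) obstacle is a bookkeeping one rather than a conceptual one: I should be careful that the hypothesis in the converse is applied not only to the defining forms $\omega_i$ but really to \emph{any} annihilating $1$-form, and conversely that covering $M$ by neighborhoods on which defining forms exist suffices to conclude involutivity globally, since involutivity is a pointwise (local) condition on the bracket. No other subtleties are expected, and no metric or connection is invoked -- only the intrinsic Cartan formula for $d\eta$.
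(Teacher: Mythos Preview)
Your proof is correct and follows essentially the same approach as the paper: both directions are handled via the identity $d\eta(X,Y) = X(\eta(Y)) - Y(\eta(X)) - \eta([X,Y])$, with the converse argued by applying the hypothesis to the local defining forms $\omega_i$. Your version is in fact slightly more careful about the bookkeeping (locality, covering by neighborhoods) than the paper's.
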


\begin{proof}
Let $D$ be involutive, and $ \eta$ annihilates $D$. Then for any smooth sections $X,Y$ of $D$ 
\begin{equation}
d \eta (X,Y) = X(\eta(Y))-Y(\eta(X))- \eta( [X,Y]))=0 \sgd
\end{equation}
Conversely let $\omega_1, ...,\omega_{n-k}$ be local defining forms for $D$, then for each i
\begin{equation}
\omega_i( [X,Y])) = X(\omega_i(Y))-Y(\eta(X))- d \omega_i (X,Y)=0
\end{equation}
thus $D$ involutive.
\end{proof}
A coordinate chart $(U,\phi)$ on $M$ is called flat for $D$ if $\phi(U)$ is a cube in $\R^n$; and at a point that belong to $U$, $D$ is spanned by,lets say, first k coordinate vector fields $\frac{\partial}{\partial x_1}...\frac{\partial}{\partial x_k}$. In any such chart, each slice of the form $x^{k+1}=c_{k+1} \& ...\& x^{n}=c_{n}$ for constants $c_{k+1}... c_n$, is an integral manifold of $D$. We say $D$ is completely integrable, if such chart exists for $D$ in a neighborhood of each point of $M$. Note that every completely integrable distribution is integrable, thus involutive; the theorem we next give without proof shows this relation is an if and only if statement, i.e.

\begin{thm}[Frobenius Theorem]
\hypertarget{frob}{Every involutive distribution is completely integrable}.
\end{thm}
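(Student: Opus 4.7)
The plan is to argue locally: around an arbitrary point $p \in M$, I would construct a chart flat for $D$. The core idea is to replace whatever frame one starts with by a \emph{commuting} frame of $D$, and then appeal to the simultaneous flow-box theorem for commuting vector fields.

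First I would choose any local coordinate system $(U,(x^1,\dots,x^n))$ centered at $p$ such that the covectors $dx^1|_p,\dots,dx^k|_p$ restrict to a basis of the dual of $D_p$. This is possible because $D_p$ is a $k$-dimensional subspace of $T_pM$: pick any complement and choose coordinates with the first $k$ coordinate vectors projecting isomorphically onto $D_p$. By continuity, $dx^1|_q,\dots,dx^k|_q$ remain a basis of $D_q^*$ for $q$ in a smaller neighborhood $U' \subset U$.

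Second, on $U'$ I would define the unique local sections $X_1,\dots,X_k$ of $D$ satisfying $dx^j(X_i) = \delta_i^j$ for $1 \le i,j \le k$. These exist and form a smooth local frame for $D$ because the $k \times k$ matrix $(dx^j(Y_i))$ is invertible on $U'$ for any chosen spanning frame $(Y_i)$ of $D$.

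The main step is then to show $[X_i,X_j] = 0$. Since $D$ is involutive, $[X_i,X_j]$ is a section of $D$, so $[X_i,X_j] = \sum_{l=1}^k a_{ij}^l X_l$ for some smooth functions $a_{ij}^l$. Applying the Cartan identity $d\omega(X,Y) = X(\omega(Y)) - Y(\omega(X)) - \omega([X,Y])$ to the closed one-form $\omega = dx^l$ (for $l \le k$) gives
\begin{equation*}
0 = d(dx^l)(X_i,X_j) = X_i(\delta_j^l) - X_j(\delta_i^l) - dx^l([X_i,X_j]) = -a_{ij}^l,
\end{equation*}
so all brackets vanish. This is the conceptual heart of the proof: the coordinate-adapted normalization $dx^j(X_i) = \delta_i^j$ automatically upgrades involutivity into genuine commutativity.

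Finally I would invoke the standard fact that pairwise commuting complete-enough vector fields have commuting flows $\theta^{X_i}_{t}$, and define
\begin{equation*}
\Phi(t^1,\dots,t^k,y^{k+1},\dots,y^n) := \theta^{X_1}_{t^1} \circ \cdots \circ \theta^{X_k}_{t^k}\bigl(q(y)\bigr),
\end{equation*}
where $q(y)$ is the point with original coordinates $(0,\dots,0,y^{k+1},\dots,y^n)$. Commutativity makes $\Phi$ well-defined independent of composition order, and its differential at the origin sends $\partial/\partial t^i$ to $X_i|_p$ and $\partial/\partial y^a$ to $\partial/\partial x^a|_p$; by the choice of $x^i$ these together span $T_pM$, so $\Phi$ is a local diffeomorphism near $0$. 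Pulling back the standard coordinates on its domain gives a chart on a neighborhood of $p$ in which $X_i = \partial/\partial t^i$, so $D$ is spanned by the first $k$ coordinate vector fields. Shrinking if necessary so that the image is a coordinate cube, this is a flat chart for $D$, proving complete integrability.

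I expect the main conceptual obstacle to be the commuting-frame step: any spanning frame of $D$ will generically have nonzero brackets (still lying in $D$), and one needs the insight that the dual-normalization trick produces a strictly commuting frame. A secondary technical point is verifying that $\Phi$ is a diffeomorphism near the origin, which reduces to checking that its differential is non-singular — this uses precisely the transversality ensured by the initial coordinate choice.
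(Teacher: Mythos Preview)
The paper states this theorem explicitly \emph{without proof} (it says ``the theorem we next give without proof''), so there is no argument to compare against. Your proposal is the standard commuting-frame proof---normalize a local frame of $D$ so that $dx^j(X_i)=\delta_i^j$, use involutivity plus $d^2=0$ to force $[X_i,X_j]=0$, then straighten the commuting vector fields via their flows---and it is correct; this is essentially the proof in Lee's \emph{Introduction to Smooth Manifolds}, which is the paper's main reference for this appendix.
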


\chapter{Lie Groups} \label{ch:applie}
In this chapter we discuss a special type of differentiable manifold: manifolds with group structure. Concept of a group is used to describe symmetries in physics. In the first section we define what a group is and discuss properties of finite groups. Then we move onto Lie groups which has infinite number of elements i.e. groups that have a manifold structure. In gauge theories the solutions that are related to each other via gauge transformations are called physically equivalent. Gauge symmetries are considered as redundancies in the description, and thus sometimes it will be desirable to get rid of them. The procedure to define a space where equivalent elements are represented by a single quantity is called quotienting. We will see how this procedure works for finite and Lie groups. By virtue of group actions, which we define in the following, one can also quotient manifolds with a group action on them, and if the action has appropriate properties the resulting space will be a smooth manifold, as discussed in Section \ref{sec:apphom}. We conclude with homogeneous spaces, the manifolds with a transitive action on them.

\section{Groups}
\subsection{Introduction}
\begin{defn}
A group $ (\G,.) $ is a set $ \G $ with a multiplication law .\,: $ \G \times \G \ra \G $ such that
\begin{enumerate}
\item $\exists$ an identity element of . , called $ \e $;
\item $\exists$ inverse of each element;
\item . is associative.
\end{enumerate}
\end{defn}

\begin{defn}
A group homomorphism $\alpha$ is a map $ (\G_1,.) \ra (\G_2,*) $ such that
\begin{equation}
\alpha( g_1 . g_2)=\alpha(g_1)*\alpha(g_2)  \quad \textit{for} \quad g_1,g_2 \in \G
\end{equation}
a bijective homomorphism called an isomorphism.
\end{defn}

\begin{defn}[Subgroup]
A subgroup $H$ of $\G$ is a subset of $\G$ that is a group under the same multiplication  ``$.$".

\end{defn}

\subsection{Cosets and Normal Subgroups}

\begin{defn}[Conjugacy of group elements]
$g_1 \in \G$ is conjugate to $g_2 \in \G$ if there exists a $h \in \G$ such that $g_1=h g_2 h^{-1}$.Note that conjugacy is an \hyperlink{equivalence}{equivalence relation}.
\end{defn}

\begin{defn}[Coset]
If $H$ is a subgroup of $\G$ it's (left) coset with respect to $g \in \G$ is
\begin{equation}
gH= \left\{ gh_1,...,gh_m \right\} \sgd
\end{equation}
\end{defn}
(Left) Cosets define an equivalence relation, they either completely overlap or are completely disjoint; i.e. they partition $\G$. The set $g_1 H, ...,g_k H$ of all distinct cosets of $\G$ is denoted by $\G/H$. Note that:
\begin{enumerate}
\item A coset of a subgroup might not be a subgroup itself.
\item $\G/H$ might not be a group itself.
\item Left coset partitioning might be different than right coset partitioning.
\end{enumerate}

\begin{defn}[Normal Subgroup]
A subgroup whose left cosets equal to its right cosets is called a normal subgroup, i.e. a subgroup $H$ is normal if $gHg^{-1}=H$ for all $g \in \G$. This means that H is made up of complete conjugacy classes i.e. if $a \in H$ then $gag^{-1} \in H$ for all $g \in \G$. Note that this is not necessarily true for a coset of any subgroup.
\end{defn}

\begin{thm}
If $H$ is a normal subgroup of $\G$ then $\G/H$ is a group. This group is called a quotient group.
\end{thm}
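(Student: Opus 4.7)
The plan is to equip the set of cosets $\G/H$ with a multiplication inherited from $\G$, check that normality makes this operation well-defined, and then verify that the three group axioms transfer from $\G$ to $\G/H$.

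First I would define the candidate multiplication $\ast : \G/H \times \G/H \to \G/H$ by
\begin{equation}
(g_1 H) \ast (g_2 H) \;=\; (g_1 \cdot g_2)\, H \sgd
\end{equation}
The only non-trivial point in the whole argument is that this prescription depends a priori on the representatives $g_1, g_2$ chosen for the cosets, so I must show it is independent of that choice. Concretely, suppose $g_1 H = g_1' H$ and $g_2 H = g_2' H$, so that $g_1' = g_1 h_1$ and $g_2' = g_2 h_2$ for some $h_1, h_2 \in H$. Then
\begin{equation}
g_1' g_2' \;=\; g_1 h_1 g_2 h_2 \;=\; g_1 g_2 \bigl( g_2^{-1} h_1 g_2 \bigr) h_2 \sgd
\end{equation}
Here is where normality enters: because $H$ is normal, $g_2^{-1} h_1 g_2 \in H$, so $(g_2^{-1} h_1 g_2) h_2 \in H$, and hence $(g_1' g_2') H = (g_1 g_2) H$. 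Thus $\ast$ is well-defined on cosets. This is the step I expect to be the ``hard'' one in the sense that it is the only place the hypothesis is used; the rest is bookkeeping.

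Once $\ast$ is well-defined, the three group axioms follow by pulling them back to $\G$. Associativity:
\begin{equation}
\bigl( (g_1 H) \ast (g_2 H) \bigr) \ast (g_3 H) \;=\; ((g_1 g_2) g_3) H \;=\; (g_1 (g_2 g_3)) H \;=\; (g_1 H) \ast \bigl( (g_2 H) \ast (g_3 H) \bigr)
\end{equation}
holds because $\G$ is associative. The coset $eH = H$ is a two-sided identity, since $(eH) \ast (gH) = (eg) H = gH$ and similarly on the right. Finally, every coset $gH$ has $g^{-1} H$ as its inverse, because $(gH)\ast(g^{-1}H) = (g g^{-1}) H = eH$ and symmetrically.

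Thus $\G/H$ equipped with $\ast$ satisfies all the axioms, so it is a group, which justifies calling it the quotient group.
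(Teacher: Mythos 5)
Your proof is correct and complete. The paper states this theorem without proof (its appendix introduction explicitly says most proofs are skipped, with references to be consulted), so there is no in-paper argument to compare against; yours is the standard one, and you correctly isolate the only non-trivial point — well-definedness of the coset multiplication via the conjugation rewriting $g_1 h_1 g_2 h_2 = g_1 g_2 \lp g_2^{-1} h_1 g_2 \rp h_2$ — as the unique place where normality (in the paper's formulation $gHg^{-1}=H$ for all $g \in \G$) enters, with associativity, identity, and inverses then inherited directly from $\G$.
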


\begin{thm}[First Isomorphism Theorem]
Let $f$ be an \hypertarget{firstiso}{homomorphism} on $\G$ then, $Ker(f)$ is a normal subgroup and $\G/Ker(f)$ is isomorphic to $Im(f)$, i.e. $\G/Ker(f) \cong Im(f)$.
\end{thm}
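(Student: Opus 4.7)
The plan is to prove the three claims in sequence: first that $Ker(f)$ is a subgroup, then that it is normal, and finally to construct the isomorphism $\G/Ker(f) \cong Im(f)$ explicitly. The subgroup property follows directly from the homomorphism property and the fact that $f$ maps identity to identity and inverses to inverses; for any $a,b \in Ker(f)$ one would check that $f(ab^{-1}) = f(a) f(b)^{-1}$ equals the identity of the target group, so $ab^{-1} \in Ker(f)$, which is the standard subgroup criterion.

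For normality, I would take arbitrary $g \in \G$ and $k \in Ker(f)$ and compute $f(gkg^{-1}) = f(g) f(k) f(g)^{-1}$, which collapses to the identity since $f(k)$ is the identity. Thus $gKer(f)g^{-1} \subseteq Ker(f)$ for every $g$, giving normality by the definition recalled just above the theorem. At this point the quotient group $\G/Ker(f)$ is well-defined thanks to the preceding theorem.

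The main content is the isomorphism. I would define the candidate map $\tilde{f}: \G/Ker(f) \to Im(f)$ by $\tilde{f}([g]) = f(g)$, and then verify in order: (i) well-definedness, by checking that two representatives $g_1, g_2$ of the same coset differ by an element of $Ker(f)$ and therefore produce the same value under $f$; (ii) the homomorphism property, which drops out immediately from the homomorphism property of $f$ together with the definition of multiplication in $\G/Ker(f)$; (iii) injectivity, via the standard trick that $\tilde{f}([g])$ being the identity forces $g \in Ker(f)$, so $[g] = [\e]$; and (iv) surjectivity, which holds tautologically because the codomain has been restricted to $Im(f)$.

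I do not anticipate any serious obstacle here — this is a foundational algebraic result and the only mild subtlety is ensuring that the map $\tilde{f}$ is well-defined on equivalence classes, which is precisely where the normality of $Ker(f)$ (and hence the coset structure) is used implicitly. The argument is largely a matter of unpacking definitions, and no deeper machinery beyond the definitions of homomorphism, normal subgroup, and quotient group recalled earlier in the appendix is required.
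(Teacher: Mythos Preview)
Your proposal is correct and follows the standard textbook argument for the First Isomorphism Theorem. Note, however, that the paper itself does not supply a proof of this statement: it is stated without proof in the appendix as a background fact, so there is nothing in the paper to compare your argument against. Your outline is complete and would serve perfectly well as the missing proof.
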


\begin{defn}[Center of a group]
The set
\begin{equation}
Z(\G) \equiv \lbr g \in \G : gh=hg \quad \forall h \in \G \rbr
\end{equation}
is called the center of a group. Note that it is a normal subgroup of $\G$.
\end{defn}
\subsection{Representations}
Representations are concrete realizations of groups, more specifically: a representation $T$ of a group $\G$ is a homomorphism to $GL(V,F)$, where $V$ is a vector space over the field $F$.
\begin{itemize}
\item Two representations $T_1$ and $T_1$ are equivalent if there exists a $S \in GL(V,F)$ such that $T_1(g)=ST_2(g)S^{-1}$ for all $g \in \G$.
\item A representation is called faithful if it is injective.
\item dim of representation=dim(V).
\end{itemize}

\begin{defn}
Unitary representation is a representation on the Hilbert space such that $T(g)^{\dagger}=T(g^{-1})=T(g)^{-1}$.
\end{defn}
Every representation of a finite group can be made into a unitary representation, By redefining the inner product.

\begin{defn}
A representation $T$ of a $\G$ on $V$ called reducible iff there exists a nontrivial linear subspace $W \subset V$ such that $T(g)W \subseteq W$ for all $g \in \G$, otherwise $T$ is called irreducible.
\end{defn}

\begin{thm}[Shur's Lemma]
Let $T$ and $U$ be two irreducible representations of a group on vector spaces $V$ and $W$ respectively. If there exists an operator $S: V \ra W$ such that $S T = U S$, then either $S=0$ or $S$ is invertible.
\end{thm}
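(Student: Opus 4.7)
The plan is to exploit the intertwining relation $ST(g) = U(g)S$ (holding for all $g \in \G$) to produce two subspaces, one in $V$ and one in $W$, each of which is invariant under the relevant representation, and then apply irreducibility to each. The two candidate subspaces are the natural ones associated to any linear map: $\ker(S) \subseteq V$ and $\mathrm{im}(S) \subseteq W$.

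First I would check that $\ker(S)$ is $T$-invariant. If $v \in \ker(S)$, then $S(T(g)v) = U(g)(Sv) = U(g)\cdot 0 = 0$, so $T(g)v \in \ker(S)$ for every $g$. Symmetrically, $\mathrm{im}(S)$ is $U$-invariant: any element of $\mathrm{im}(S)$ has the form $Sv$, and $U(g)(Sv) = S(T(g)v) \in \mathrm{im}(S)$. Both invariance checks are one-line computations and use only the intertwining hypothesis.

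Next I would invoke irreducibility twice. Since $T$ is irreducible and $\ker(S) \subseteq V$ is a $T$-invariant subspace, either $\ker(S) = 0$ or $\ker(S) = V$. Since $U$ is irreducible and $\mathrm{im}(S) \subseteq W$ is $U$-invariant, either $\mathrm{im}(S) = 0$ or $\mathrm{im}(S) = W$. If $\ker(S) = V$, then $S \equiv 0$; likewise if $\mathrm{im}(S) = 0$, then $S \equiv 0$. The only remaining case is $\ker(S) = 0$ and $\mathrm{im}(S) = W$, in which case $S$ is simultaneously injective and surjective, hence a linear bijection, i.e.\ invertible. This exhausts the possibilities and yields the dichotomy $S = 0$ or $S$ invertible.

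There is no real obstacle here; the only subtle point is being careful about what ``irreducible'' means in the definition given in the excerpt (no nontrivial proper invariant subspace under $T(g)$ for every $g$), so that ``$T$-invariant'' in the argument really does mean invariant under all operators $T(g)$ simultaneously, which is exactly what the computations above deliver. No assumption on finite-dimensionality, unitarity, or the ground field is needed for this version of the statement, although in the finite-dimensional complex setting one usually goes on to deduce the stronger corollary that an intertwiner from an irreducible representation to itself is a scalar multiple of the identity.
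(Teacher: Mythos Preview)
Your argument is the standard and correct proof of Schur's Lemma: show $\ker(S)$ is $T$-invariant and $\mathrm{im}(S)$ is $U$-invariant via the intertwining relation, then apply irreducibility to force each to be trivial or the whole space. The paper itself does not supply a proof of this theorem; it states the result and immediately passes to the corollary about self-intertwiners being scalars, so there is nothing to compare against. Your write-up would serve perfectly well as the missing proof.
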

For $T=U$ (and thus $V=W$), this will give us that for an irreducible representation $T$ if there exist an operator S such that it commutes with all of matrices of $T$ then $S=c \mathbb{1}$.\\
Using Shur's Lemma one can also show that every unitary representation is completely reducible i.e. can be written as direct sum of irreducible representations. Hence every representation of a finite group is also completely reducible.

\begin{thm}[Fundamental Orthogonality Theorem]
 Take two irreducible representations $T_{\alpha}$,$T_{\beta}$ of a group $\G$ then
\begin{equation}
\sum_{g \in \G}{T_{\alpha}(g)_{ik} T_{\beta}(g^{-1})_{lj}} = \frac{n_{\G}}{dim(T_{\alpha})} \delta_{\alpha \beta} \delta_{ij} \delta_{kl}
\end{equation}
where $n_{\G}$ is the number of elements in $\G$.
\end{thm}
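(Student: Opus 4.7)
The plan is to apply Schur's Lemma to a cleverly constructed intertwining operator. Fix an arbitrary linear map $X: V_\beta \to V_\alpha$ between the representation spaces, and define
\begin{equation}
S_X = \sum_{g \in \G} T_\alpha(g) \, X \, T_\beta(g^{-1}) \sgd
\end{equation}
First I would verify that $S_X$ intertwines the two representations, i.e. $T_\alpha(h) S_X = S_X T_\beta(h)$ for every $h \in \G$. This follows by substituting $g \mapsto h^{-1} g$ (or equivalently $g \mapsto g h$) in the summation and using the homomorphism property together with the fact that summing over the whole group is invariant under left/right translation.

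Next I would split into two cases via Shur's Lemma. If $\alpha \neq \beta$, then the intertwining map $S_X$ between inequivalent irreducibles must vanish, giving $S_X = 0$ for every choice of $X$. Reading off matrix entries with $X$ chosen to be the elementary matrix $E_{kl}$ (a single $1$ in position $(k,l)$) produces the $\delta_{\alpha\beta}$ factor on the right-hand side. If $\alpha = \beta$, Schur forces $S_X = c(X)\, \1$ for some scalar $c(X)$ depending linearly on $X$. Taking the trace of both sides yields
\begin{equation}
c(X) \cdot \dim(T_\alpha) = \tr(S_X) = \sum_{g \in \G} \tr\lp T_\alpha(g) X T_\alpha(g^{-1}) \rp = n_\G \, \tr(X) \sgc
\end{equation}
so $c(X) = \frac{n_\G}{\dim(T_\alpha)} \tr(X)$.

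Finally, I would extract the claimed identity by specialising $X = E_{kl}$, for which $\tr(X) = \delta_{kl}$ and $(S_X)_{ij} = \sum_g T_\alpha(g)_{ik} T_\alpha(g^{-1})_{lj}$. Combining the two cases gives the stated formula. The main conceptual obstacle is really just recognising the right operator to feed into Schur's lemma; once $S_X$ is written down, the rest is bookkeeping, with the trace computation being the one step that fixes the normalisation constant $n_\G / \dim(T_\alpha)$. A minor subtlety worth noting in the write-up is that when $\alpha = \beta$ one must use the \emph{same} representation on both sides of $X$ in order for Schur to apply in its multiplicity-one form; this is why the index structure on the right-hand side pairs $(i,j)$ with $(k,l)$ rather than being fully symmetric.
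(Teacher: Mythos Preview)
The paper states this theorem without proof, so there is no argument to compare against. Your proposal is the standard proof via Schur's Lemma and is correct; the construction of the intertwiner $S_X$, the translation-invariance argument, the case split, and the trace computation fixing the constant are all sound.
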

\begin{cor}
Number of inequivalent irreducible representations are finite.
\end{cor}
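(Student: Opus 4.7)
The plan is to view the matrix elements $T_\alpha(g)_{ik}$ of each irreducible representation as complex-valued functions on the finite group $\G$, i.e.\ as vectors in the finite-dimensional vector space $\mathbb{C}^{n_\G}$ (which has dimension $n_\G = |\G|$ with the natural inner product $\langle f_1, f_2 \rangle = \sum_{g \in \G} f_1(g) \overline{f_2(g)}$). The Fundamental Orthogonality Theorem, stated just above, says precisely that the collection of functions $\{g \mapsto T_\alpha(g)_{ik}\}$, indexed by $(\alpha, i, k)$ where $\alpha$ ranges over (representatives of) inequivalent irreducible representations and $i,k$ range over $1,\dots,\dim(T_\alpha)$, form a mutually orthogonal (hence linearly independent) family in $\mathbb{C}^{n_\G}$. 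Here one should work with unitary representatives so that $T_\alpha(g^{-1})_{lj} = \overline{T_\alpha(g)_{jl}}$, which makes the FOT into an honest orthogonality statement.

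First I would fix a choice of one unitary representative from each equivalence class of irreducible representations; this is legitimate because every representation of a finite group is equivalent to a unitary one, as stated earlier in the text. Second, I would apply the FOT to conclude orthogonality both within a single representation (different index pairs $(i,k)$) and across inequivalent representations (the $\delta_{\alpha\beta}$ factor). Third, linear independence of this orthogonal family inside the $n_\G$-dimensional space $\mathbb{C}^{n_\G}$ forces the cardinality of the family to be at most $n_\G$, yielding the bound
\begin{equation}
\sum_{\alpha} \dim(T_\alpha)^2 \leq n_\G \sgd
\end{equation}
Since each $\dim(T_\alpha) \geq 1$, this in particular bounds the number of inequivalent irreducible representations by $n_\G$, which is finite.

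The argument is essentially a direct corollary, so there is no real obstacle — the only subtlety is making sure the orthogonality statement in the FOT is interpreted as an honest Hermitian inner product statement, which requires passing to unitary representatives so that the factor $T_\beta(g^{-1})_{lj}$ becomes the complex conjugate of a matrix entry of $T_\beta(g)$. Once this identification is made, the finiteness follows immediately from the finite-dimensionality of the ambient function space $\mathbb{C}^{n_\G}$.
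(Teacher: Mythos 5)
Your proof is correct and is precisely the argument the paper intends: the corollary is stated as an immediate consequence of the Fundamental Orthogonality Theorem (the paper itself writes no proof), and your steps — passing to unitary representatives (justified in the text just above), reading the FOT as Hermitian orthogonality of the matrix-element functions $g \mapsto T_\alpha(g)_{ik}$, and deducing $\sum_{\alpha} \dim(T_\alpha)^2 \leq n_{\G}$ from linear independence inside the $n_{\G}$-dimensional space of functions on $\G$ — constitute the standard derivation. Your care with the factor $T_\beta(g^{-1})_{lj} = \overline{T_\beta(g)_{jl}}$ is exactly the one subtlety worth flagging, and you handled it properly.
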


\section{Lie Groups}
\subsection{Lie Groups and Action of Group on a Manifold} \label{subsec:appGrAct}
\begin{defn}
Let $\G$ be a group and a differentiable manifold without boundary, then $\G$ is a Lie group, if group multiplication and the inversion maps are smooth.
\end{defn}

An easy example of Lie Groups are set of invertible matrices of any dimension, called the matrix group, where the elements are square matrices, the group multiplication operation is matrix multiplication, and the group inverse is the matrix inverse. Set if invertible real matrices of dimension n is denoted by $\GL(n,\R)$.

Lie group homomorphism from a Lie group $\G$ to another Lie group $\mathcal{H}$ is a smooth map $F:\G \ra \mathcal{H}$ that is also a group homomorphism. It is called a Lie group isomorphism if it is also a diffeomorphism.

\begin{defn}[\cite{LeeISM} pg.156]
A Lie subgroup of a Lie group $\G$ is a subgroup of $\G$ endowed with a topology and smooth structure making it into a Lie group and an immersed submanifold of $\G$.

\end{defn}

\begin{defn}[Action of a Group on a Manifold]
Let $\G$ be a group with identity element $e$ and let $M$ be a set. (Left) \hypertarget{group action}{Action} of $\G$ on $M$ is a map $\theta: \G \times M \ra M$ : $(g,p) \mapsto g * p$ that satisfies
\begin{enumerate}
\item $g_1*(g_2*p)=(g_1.g_2)*p$ where ``." is the group multiplication of $\G$,
\item $e*p=p$.
\end{enumerate}
If $\G$ is a Lie group, $M$ a smooth manifold and $\theta$ smooth then action is called to be smooth.
\end{defn}
One can also define $\theta_g:M \ra M$ such that $\theta_g(x)=g*x$. Then $(\theta_g)^{-1}= \theta_{g^{-1}}$; thus if $\theta$ is smooth then $\theta_g$ is a diffeomorphism.

Note that a flow on a manifold, as defined in chapter on Differentiable Manifolds, also named as one-parameter group action is an action $\phi$ of the group $(\R,+)$ on $M$. 

If $M$ is also a Lie group $\phi$ is called the one-parameter subgroup of $M$.
\paragraph{Some properties of group action:} Let $\theta:\G \times M \ra M$:$(g,p) \ra g*p$ be an action of a group $\G$ on a set $M$, then 
\begin{enumerate}
\item Orbit of p= $\left\{ g*p | g \in \G \right\}$,
\item If orbit of p=p, then p is a fixed point of $\G$ on $M$,
\item If orbit of p=$M$ for all $p \in M$, then $\theta$ is transitive on $M$,
\item \hypertarget{isot}{Isotropy group} or stabilizer of $p \in M$ is $G_p = \left\{ g \in \G  | g*p=p \right\}$,
\item $\theta$ is said to be free if $G_p={e}$ for all $p \in M$.
\end{enumerate}

\begin{defn}[Maps between manifolds that are Equivariant under some Group action]
Let $\G$ be a group and $\theta$ and $\alpha$ its (left) action on $M$ and $N$ respectively. A map $F:M \ra N$ is said to be equivariant if
\begin{equation}
\alpha(g, F(p))= \theta(g,p)
\end{equation}
for all $g \in \G$, $p \in M$.
\end{defn}

\begin{defn}[Left Translation on Lie groups]
Let $L: \G \times \G \ra \G$ be an action of the Lie group $\G$ on itself so that $L(a,g)=a.g$ where . is the group multiplication of $\G$. Then $L_a$ is a diffeomorphism, called the left translation.
\end{defn}
\begin{enumerate}
\item Every Lie group acts smoothly, freely, transitively on itself by left translation.
\item Every Lie group acts smoothly on itself by conjugation: $\theta(g,h)=ghg^{-1}$.
\end{enumerate}
A right translation can also be defined the same way where $R(a,g)=g.a$ or more suggestively $R_a(g)=g.a$.

\begin{defn} [Left-invariant Vector Field]
A vector field $X$ on $\G$ is left-invariant if it is $L_a$ related to itself for all $a \in \G$; i.e.
\begin{equation}
\left. (L_a)_{*} X \right|_g = \left. X \right|_{a.g} \sgd
\end{equation}
\end{defn}

\subsection{Lie Algebra of a Lie Group}
Set of all smooth left-invariant vector fields is a linear subspace of $\mathfrak{X}(M)$, and it is closed under lie bracket i.e. if $X,Y$ are smooth left-invariant vector fields, then $[X,Y]$ also is; thus they form a \hyperlink{LieA}{Lie algebra} under the usual Lie bracket.
\begin{defn}[Lie Algebra of a Lie Groups]
Lie algebra of all smooth left-invariant vector fields of a Lie group $\G$ with the usual Lie bracket of vector fields is called the Lie algebra of $\G$ and denoted by $Lie(\G)$.
\end{defn}
Note that set of all smooth vector fields is also a Lie algebra, but what we call $Lie(\G)$ is a subset of this.\\
The Lie algebra of the group $\GL(n,\R)$ are set of $n \times n$ real matrices $M(n,\R)$, also denoted by $\gl$.
\begin{thm}
Let $\G$ be a Lie group. The map $\E: Lie(\G) \ra T_e\G$, given
by $\E(X)=X_e$, is a vector space isomorphism. Thus, $\dim(\Lie(\G))= \dim(\G)$. 
\end{thm}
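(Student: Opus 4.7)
The plan is to exhibit a two-sided inverse for $\E$, which simultaneously shows injectivity, surjectivity, and linearity. The underlying idea is that a left-invariant vector field is rigid: once you know its value at a single point, left-translation determines it everywhere. So I will construct, for each tangent vector $v \in T_e \G$, a candidate left-invariant vector field $X^v$ by the rule $X^v_g := (dL_g)_e\, v$, and then check that $v \mapsto X^v$ inverts $\E$.

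First I would verify that $\E$ is linear: since evaluation of vector fields at a point is linear and left-invariant vector fields form a vector subspace of $\vfs(\G)$, this is immediate. Next, injectivity. Suppose $X \in \Lie(\G)$ with $X_e = 0$. By left-invariance, $X_g = (dL_g)_e X_e = 0$ for every $g \in \G$, so $X = 0$; hence $\ker \E = 0$.

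The substantive step is surjectivity. Given $v \in T_e \G$, define $X^v_g = (dL_g)_e v$. I would check left-invariance directly: for any $a \in \G$,
\begin{equation}
(dL_a)_g X^v_g = (dL_a)_g (dL_g)_e v = d(L_a \circ L_g)_e v = (dL_{ag})_e v = X^v_{ag},
\end{equation}
using the chain rule together with $L_a \circ L_g = L_{ag}$, which in turn is the associativity of group multiplication. Then $\E(X^v) = X^v_e = (dL_e)_e v = v$, so $\E$ is surjective once $X^v$ is known to be smooth.

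The main obstacle, and the only nontrivial point, will be the smoothness of $X^v$ as a section of $T\G$. The cleanest way I would handle this is to exhibit $X^v$ as the pushforward, under the multiplication map $\mu:\G\times\G \to \G$, of a smooth vector field on $\G \times \G$: consider the smooth map $\sigma: \G \to \G \times \G$, $g \mapsto (g, e)$, and compute $X^v f = (\mu^* f)$-derivatives applied to a smooth extension of $(0, v)$ along $\sigma(\G)$. Equivalently, pick any smooth curve $\gamma: (-\epsilon, \epsilon) \to \G$ with $\gamma(0) = e$ and $\gamma'(0) = v$; then $X^v f (g) = \left. \tfrac{d}{dt} \right|_{t=0} f(g \gamma(t))$, and smoothness of $(g,t) \mapsto f(g\gamma(t))$ (which follows from smoothness of multiplication in $\G$) implies that $X^v f \in C^\infty(\G)$ for every $f \in C^\infty(\G)$. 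By the standard criterion, this forces $X^v$ itself to be smooth. Having established smoothness, the maps $\E$ and $v \mapsto X^v$ are mutually inverse linear maps, so $\E$ is a vector space isomorphism, and equality of dimensions $\dim \Lie(\G) = \dim T_e \G = \dim \G$ follows immediately from the fact that $\G$ is a smooth manifold of dimension $\dim \G$.
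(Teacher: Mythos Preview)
Your proposal is correct and follows essentially the same approach as the paper, which only gives a two-line sketch (show $\ker\E=0$ and show that every $v\in T_e\G$ arises as $V_e$ for some left-invariant $V$). You have filled in exactly these two steps and, appropriately, identified smoothness of the extension $X^v$ as the only point requiring care---a detail the paper's sketch omits entirely.
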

\begin{proof}[A Skecth of the proof]
Show that 1) $\ker(\E)=0$ i.e. if $\E(X)=0$ then $X=0$ 2) for any $v \in T_e\G$ there exists a $V \in \Lie(\G)$ so that $V_e=v$, i.e. left-invariant vector
fields are uniquely determined by their values at the identity. 
\end{proof}

\subsection{Exponential Map}

\begin{defn}
A one-parameter subgroup of a Lie group $\G$ is a Lie group homomorphism $\gamma: \R \ra \G$ where $\R$ is a Lie group under addition. 
\end{defn}
\begin{thm}
The one-parameter subgroups of $\G$ are precisely the maximal integral curves of left-invariant vector fields starting at the identity.
\end{thm}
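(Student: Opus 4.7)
The plan is to establish the two inclusions separately. Given the preceding theorem identifying $\Lie(\G)$ with $T_e\G$ via evaluation at the identity, the natural bijective correspondence to aim for sends a left-invariant vector field $X$ to the maximal integral curve of $X$ starting at $e$, and conversely sends a one-parameter subgroup $\gamma$ to the left-invariant extension of $\gamma'(0) \in T_e\G$.

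First I would show that if $X \in \Lie(\G)$ and $\gamma \colon J \to \G$ is the maximal integral curve of $X$ with $\gamma(0)=e$, then $\gamma$ is a group homomorphism. The crucial input is that $X$ being left-invariant means $(L_g)_* X = X$, so left translations send integral curves of $X$ to integral curves of $X$. Fix $s \in J$ and compare the two curves $t \mapsto \gamma(s+t)$ and $t \mapsto L_{\gamma(s)}\gamma(t) = \gamma(s)\gamma(t)$. Both are integral curves of $X$, and both equal $\gamma(s)$ at $t=0$, so by the uniqueness of integral curves they agree wherever both are defined. This yields $\gamma(s+t) = \gamma(s)\gamma(t)$ on the overlap.

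Next I would use this homomorphism property to upgrade $J$ to all of $\R$, thereby showing in one stroke that every left-invariant vector field is complete and that $\gamma$ is a one-parameter subgroup in the sense of the definition. If $J = (-a,a)$ with $a<\infty$, one extends $\gamma$ past $a$ by setting $\gamma(t) := \gamma(t/2)\gamma(t/2)$ for $|t|<2a$; the resulting curve still satisfies $\gamma'(t)=X_{\gamma(t)}$ thanks to left-invariance of $X$, contradicting maximality. Hence $J=\R$, and combined with smoothness of multiplication we conclude $\gamma$ is a Lie group homomorphism $\R \to \G$.

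For the reverse direction, let $\gamma \colon \R \to \G$ be a one-parameter subgroup. Set $v := \gamma'(0) \in T_e\G$ and let $X$ be the unique left-invariant vector field with $X_e = v$ provided by the isomorphism $\E$ of the preceding theorem. A direct computation using the homomorphism property gives
\begin{equation}
\gamma'(t) \;=\; \left.\tfrac{d}{ds}\right|_{s=0}\gamma(t+s) \;=\; \left.\tfrac{d}{ds}\right|_{s=0} L_{\gamma(t)}\bigl(\gamma(s)\bigr) \;=\; (L_{\gamma(t)})_* X_e \;=\; X_{\gamma(t)},
\end{equation}
so $\gamma$ is an integral curve of $X$ starting at $e$, and by maximality (already established) it is \emph{the} maximal one. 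The two constructions are mutually inverse since each direction determines the other by $X_e = \gamma'(0)$, completing the correspondence. The main obstacle I anticipate is the completeness step: without the homomorphism property proved in the first paragraph, there is no a priori reason a left-invariant vector field should be complete, so the order of the two arguments matters — the group law is what buys us the global extension, rather than the other way around.
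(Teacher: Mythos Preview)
Your argument is correct and is the standard proof; however, the paper itself does not supply a proof of this theorem. It is stated without proof in the appendix, which explicitly notes that most proofs are omitted and refers the reader to Lee's \emph{Introduction to Smooth Manifolds} \cite{LeeISM} for details. Your write-up is essentially the argument given there, so there is nothing to compare against and nothing to correct.
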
 

\begin{defn}[Exponential Map]
Let $\G$ be a Lie group with Lie algebra $\Lie(\G)$, the exponential map of $\G$, $ \exp: \Lie(\G) \ra \G$, is defined by
\begin{equation}
\exp(X)=\gamma(1) \sgc
\end{equation}
where $X \in \Lie(\G)$,$\gamma$ is the one-parameter subgroup generated by $X$, or equivalently the integral curve of $X$ starting at the identity.
\end{defn}
For any $X \in \Lie(\G)$, $\tilde{\gamma}(s) \equiv \exp(sX)$ is the one parameter subgroup of $\G$ generated by $X$.

If $\phi: \G \ra H$ is a Lie group homomorphism, then $ \phi^*_e: \Lie(\G) \ra \Lie(H)$ is a Lie algebra homomorphism. Moreover
\begin{equation}
\phi(\exp(X))=\exp ( d\phi_e(X)) \sgd
\end{equation}

\subsection{Haar Integral Measure on Lie Groups}
Let $\G$ be a compact Lie group endowed with a left-invariant orientation. Then $\G$ has a unique positively oriented left invariant n-form $\omega_{\G}$ with the property
\begin{equation}
\int_{\G} \omega_{\G}=1 \sgd
\end{equation}
Let $I: C(\G) \ra \R$, where $C(\G)$ is set of all continuous real functions on $\G$, such that
\begin{equation}
I(f) \equiv \int_{\G} f(g) \, \omega_{\G} \sgc
\end{equation}
then $I$ is left and right translation invariant, i.e.
\begin{equation}
I(f) = I(f \circ L_g)=I(R_g \circ f) \quad \forall g \in \G \sgd
\end{equation}
The map I is called the Haar integral.

\subsection{Adjoint Representation} \label{subsec:adrep}

The representation of a Lie Group is defined the same way with finite groups.

Given a representation $\phi: \G \ra \Aut(V)$ on a compact Lie group $\G$, there exists a $\G$ invariant inner product on V i.e. there exists a inner product $(.,.)$ where
\begin{equation}
(u,v) = \int_{\G} \ket \phi(g)u,\phi(g)v \bra \, \omega_{\G} \sgc
\end{equation}
such that 
\begin{equation}
( \phi(h) u, \phi(h) v )=(u,v) \quad \forall h \in \G.
\end{equation}
Here $\ket,\bra$ is an inner product on V.

Now define the conjugation map $C_g: \G \ra \G$ on a Lie group $\G$ such that $C_g(h)=ghg^{-1}$, i.e $C_g = R_{g^{-1}} \circ L_g$. Then $C_g$ is a Lie group homomorphism.
\begin{defn}[Adjoint Representation of a Lie Group and its Lie Algebra]
Let $\G$ be a Lie group, and $\g$ its Lie algebra. The map $\Ad: \G \ra \Aut(\g)$ , where 
\begin{equation}
\Ad(g)= {C_g}_*
\end{equation}
is called the adjoint representation of $\G$. The map $\ad:\g \ra \End(\g)$ ,where
\begin{equation}
\ad= \Ad_*
\end{equation}
is called the adjoint representation of $\g$.
\end{defn}

Several properties are at hand:
\begin{enumerate}
\item $\ad(X)Y=[X,Y]$, for $X,Y \in \g$. 
\item If $\G$ is a matrix group then $\Ad(g) X= g X g^{-1}$ , where on the right hand side $g$ and $X$ should be considered as matrices, and multiplied by matrix multiplication. 
\item $\ker(\Ad)= Z(\G)$, i.e. where $Z(\G)$ is the center of $\G$.
\item $\ker(\ad)= Z(\g)$, i.e. where $Z(\g)$ is the center of $\g$.
\end{enumerate}

\begin{defn}
\hypertarget{killing form}{Killing form} of a Lie algebra $\g$ is a map $B: \g \times \g \ra \R$, such that
\begin{equation}
B(X,Y)= \tr( \ad X \circ \ad Y)
\end{equation}
where $\circ$ is composition map of two endomorphisms. Properties of the Killing form are as follows:
\begin{enumerate}
\item $B$ is symmetric and bilinear.
\item $B(X,Y)=B( \Ad(g) X, \Ad(g) Y)$ i.e. Killing form is $\Ad$ invariant.
\item $B(\ad(Z)X,Y)=-B(X,\ad(Z)Y)$.
\item $\G$ is semisimple if $B$ is non-degenerate.
\item If $\G$ is semisimple then $Z(\g)=0$ and $Z(\G)$ is discrete.
\item If $\G$ is compact semisimple then $B$ is negative definite.
\end{enumerate}
\end{defn}

\section{Homogeneous Spaces}\label{sec:apphom}
Consider a group action $\theta: \G \times M \ra M$. Then one can define a relation $\sim$ on $M$ such that $p \sim q$ if they $p,q$ are in each other's orbits. This relation is an equivalence relation, thus defines a partition. The set of equivalence classes i.e. the set of orbits is denoted by $M/\G$ with the quotient topology (see \cite{LeeISM} for a definition). If the action $\theta$ satisfies certain conditions -smoothness,freeness,properness- that we will not define here, $M/\G$ is a smooth manifold.

This most general definition of quotient manifold will not be of interest to us, instead we will focus on a more special case, the homogeneous spaces. For this we generalize the notion of quotient for finite groups to Lie groups.

\begin{prop}
Let $\G$ be a Lie group and $K$ a closed subgroup of $\G$. Then there is a unique way to make $\G/K$ a manifold so that the projection $\pi: \G \ra \G/K$ is a submersion. Then $\G/K$ is called a coset manifold.
\end{prop}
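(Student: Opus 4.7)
The plan is to construct an explicit smooth atlas on $\G/K$ via the exponential map and then verify the universal property for uniqueness.

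First I would invoke the closed subgroup theorem: a closed subgroup $K$ of a Lie group $\G$ is automatically an embedded Lie subgroup, so $K$ inherits its own smooth manifold structure with Lie algebra $\alk \subset \g$. Choose a vector space complement $\alm$ so that $\g = \alm \oplus \alk$. Equip $\G/K$ with the quotient topology induced by $\pi$. Hausdorffness follows from $K$ being closed: the equivalence relation $R = \{(g,h) \in \G \times \G : g^{-1}h \in K\}$ is the preimage of $K$ under the smooth map $(g,h)\mapsto g^{-1}h$, hence closed, and a standard argument then gives Hausdorffness of $\G/K$. Second-countability is inherited from $\G$.

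Next I would construct charts. Consider the smooth map $\Phi: \alm \times K \to \G$ defined by $\Phi(X,k) = \exp(X)\cdot k$. Its differential at $(0,e)$ is the direct sum decomposition $\alm \oplus \alk \to \g$, hence an isomorphism, so by the inverse function theorem $\Phi$ restricts to a diffeomorphism from some open product $U \times V \subset \alm \times K$ (with $0\in U$, $e \in V$) onto an open subset of $\G$. Shrinking $U$, one verifies that the map $\sigma_e: U \to \G/K$, $X \mapsto \pi(\exp X)$, is a homeomorphism onto an open neighborhood of the identity coset $eK$; this uses the local triviality provided by $\Phi$ to guarantee injectivity of $\sigma_e$ and openness of its image. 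For an arbitrary coset $gK$, translation by $g$ gives a chart $\sigma_g(X) := \pi(g\exp X)$ around $gK$; the collection $\{\sigma_g\}$ covers $\G/K$.

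The key verification is that these charts are smoothly compatible. On an overlap, the transition from $\sigma_{g_1}$ to $\sigma_{g_2}$ amounts to solving $g_2 \exp(X') = g_1 \exp(X)\cdot k'$ for $(X',k')$ as a smooth function of $X$. But after left multiplying by $g_2^{-1}$, this is exactly the $\alm$-component produced by $\Phi^{-1}$ applied to the smooth map $X \mapsto g_2^{-1}g_1 \exp X$, evaluated wherever this lies in the domain of $\Phi^{-1}$. Hence the transition is smooth. In these coordinates, $\pi$ looks locally like the projection $(X,k)\mapsto X$, so it is a surjective submersion.

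Finally, for uniqueness, suppose $\G/K$ carries two smooth structures $\mathcal{S}_1, \mathcal{S}_2$ both making $\pi$ a submersion. Around any point $gK$ one can build a smooth local section $s: W \to \G$ of $\pi$ in either smooth structure (via the submersion theorem applied in that structure). Then the identity map $(\G/K, \mathcal{S}_1) \to (\G/K, \mathcal{S}_2)$ factors locally as $\pi \circ s$, which is smooth; by symmetry it is a diffeomorphism, forcing $\mathcal{S}_1 = \mathcal{S}_2$. The main obstacle I anticipate is the compatibility check for the charts: it is genuinely where the closed-subgroup structure and the local product form $\G \cong \alm \times K$ near $e$ must both be used carefully, and sloppy handling of the $K$-fiber factor easily produces transitions that are only continuous, not smooth.
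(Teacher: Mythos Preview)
The paper does not actually supply a proof of this proposition; it is stated as a background fact in the appendix, which the author describes as a summary of standard material (chiefly from Lee's textbooks and Arvanitogeorgos), with most proofs deliberately omitted. So there is no proof in the paper to compare against.

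Your argument is the standard one and is correct in outline. A couple of points worth tightening: when you assert that $\sigma_e$ is a homeomorphism onto an open set, you are implicitly using that $\pi$ is an open map; this follows because for any open $W\subset\G$ the saturation $WK=\bigcup_{k\in K} Wk$ is open, but it is worth saying so explicitly. In the chart-compatibility step, you need not only that $g_2^{-1}g_1\exp X$ lands in the image of $\Phi$ but that its $\alm$-component stays in the chosen neighborhood $U$; this is arranged by shrinking further, and your phrasing ``wherever this lies in the domain of $\Phi^{-1}$'' hides that small bookkeeping. The uniqueness argument via local sections of a submersion is clean and correct.
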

An analogous definition is in order for manifolds on which a Lie group $\G$ acts transitively.

\begin{prop}\label{prop:homsp} \cite{arvan}
Let $\theta: \G \times M \ra M$ be a transitive action of a Lie group $\G$ on a manifold $M$. Let $K=\G_p$ be the isotropy subgroup of a point p. Then
\begin{enumerate}
\item The subgroup $K$ is a closed subgroup of $\G$,
\item Orbit of $p$ is diffeomorphic to $\G / K$,
\item The dimension of $\G/K$ is $\dim \G - \dim K$.
\end{enumerate}
\end{prop}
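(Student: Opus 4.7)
The strategy is to realize the orbit $M$ as the image of a smooth, transitive map out of $\G$ and then factor this map through the coset manifold $\G/K$ provided by the preceding proposition. First I would address (1): consider the orbit map $\theta_p: \G \to M$, $\theta_p(g) = g \cdot p$. Since $\theta$ is smooth, $\theta_p$ is continuous, and $M$ is Hausdorff, so the singleton $\{p\}$ is closed. Then $K = \theta_p^{-1}(\{p\})$ is closed in $\G$, and it is a subgroup by a direct check using the action axioms.

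For (2), because $K$ is a closed subgroup, the preceding proposition endows $\G/K$ with a smooth manifold structure for which $\pi: \G \to \G/K$ is a smooth submersion. The orbit map descends to a well-defined bijection $\tilde{\theta}_p: \G/K \to M$, $\tilde{\theta}_p([g]) = g\cdot p$: well-definedness uses $g_2^{-1}g_1 \in K \Rightarrow g_1\cdot p = g_2\cdot p$, surjectivity uses transitivity of $\theta$, and injectivity is the converse, which defines $K$. Since $\theta_p = \tilde{\theta}_p \circ \pi$ is smooth and $\pi$ is a surjective submersion, $\tilde{\theta}_p$ is smooth. The main work is to promote this bijective smooth map to a diffeomorphism.

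To do so, I would show that $\theta_p$ has constant rank. This follows by equivariance: left translation $L_a$ on $\G$ and $\theta_a$ on $M$ are diffeomorphisms, and $\theta_p \circ L_a = \theta_a \circ \theta_p$, so the differential at any $g\in \G$ is conjugate via diffeomorphisms to the differential at $e$, forcing constant rank. Since $\theta_p$ is surjective onto $M$, the constant-rank theorem implies its rank equals $\dim M$, i.e.\ $\theta_p$ is a submersion. Consequently $\tilde{\theta}_p$ is a bijective smooth submersion between manifolds of equal dimension (to be confirmed in the next step), hence a local diffeomorphism and thus a diffeomorphism globally. Part (3) then follows directly, since the previous proposition gives $\dim(\G/K) = \dim \G - \dim K$, and by (2) this equals $\dim M$.

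The delicate step will be the rank argument: one must verify equivariance carefully (using associativity of the action and the group law) and invoke the constant rank theorem in the form that a surjective constant-rank map onto a connected manifold is a submersion. The other subtle point — ensuring that $\G/K$ really is a smooth manifold — is already granted to us by the preceding proposition and requires the closedness of $K$ proved in step (1). Everything else is a matter of organizing standard diagram-chasing, so the proof plan is essentially: close $+$ factor $+$ constant-rank theorem.
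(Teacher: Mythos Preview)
The paper does not supply its own proof of this proposition; it is stated with a citation to \cite{arvan} and left unproven. Your proposal is the standard argument one finds in that reference (and in Lee's \emph{Introduction to Smooth Manifolds}): closedness of $K$ as a preimage, factoring the orbit map through the coset manifold via the submersion $\pi$, and the equivariance/constant-rank argument to upgrade the bijection $\tilde\theta_p$ to a diffeomorphism. So there is nothing to compare against, and your plan is correct and in line with the cited source.

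One small remark on presentation: the phrase ``between manifolds of equal dimension (to be confirmed in the next step)'' reads as slightly circular. It is cleaner to note that $\dim(\G/K)=\dim\G-\dim K$ follows directly from $\pi$ being a submersion (the preceding proposition), independently of part (2); then the fact that $\tilde\theta_p$ is a bijective submersion already forces $\dim(\G/K)\le\dim M$, while surjectivity of the constant-rank map $\theta_p$ gives $\mathrm{rank}\,\theta_p=\dim M$, hence $\dim(\G/K)\ge\dim M$. This avoids any appearance of forward reference.
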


\begin{defn}
A homogeneous space is a manifold with a transitive action of a Lie group $\G$.
\end{defn}

\chapter{Riemannian Manifolds}\label{ch:apprie}
In this chapter we continue building extra structures on manifolds. In the last chapter we studied manifolds that has symmetry properties, i.e. a group structure, whereas in this one we will talk about manifolds that has a measure of length and angles is defined via what we call a metric. Einstein's theory tells us that spacetime is a pseudo-Riemannian metric, but use of study of Riemannian manifolds goes beyond that. Most theories can be visualized as taking place on a space of configurations with a metric on it, so that equations of motions turns out to be the geodesics with respect to that metric.

In the following we start by giving basic definitions: after the metric, most basic element on a Riemann manifold, we define connections, which can be independent of the metric but generally taken with such properties that there exists unique connection for a given metric. Concepts of curvature is then built by connection and metric.
\section{Definitions and Basic Elements}
\subsection{Metric on a Manifold}
\begin{defn}[Riemannian metric and manifold]
A Riemannian metric on a smooth manifold $M$ is a (0,2) tensor field g which satisfies the following:
\begin{enumerate}
\item $g_p(U,V)=g_p(V,U)$,
\item $g_p(U,U) \ge 0 $ where the equality holds only when $U=0$,
\end{enumerate}
on all points $p$ on $M$ where $U,V \in T_pM$. The manifold $M$ endowed with such a metric is called a Riemannian Manifold.
\end{defn}

Note that
\begin{enumerate}
\item Every smooth manifold admits a Riemannian metric. 
\item The Riemannian metric determines an inner product on each tangent space.
\item The Riemannian metric is invertible, since it's null space is just $\{0\}$.
\item For every tangent vector (covector) $X$ at a point $p$ on the manifold, one can define a corresponding covector (tangent vector) using the (inverse) metric. In the coordinate representation this is called lowering (raising) indices.
\end{enumerate} 

\begin{defn}[Isometry]
A diffeomorphism $\phi$ from a Riemannian manifold $M_1$ with metric $g_1$ to another Riemannian manifold $M_2$ with metric $g_2$ is called an isometry if $\phi^* g_2 = g_1$. Set of isometries of a metric forms a Lie group.
\end{defn}

\subsection{Connection,Geodesic,Curvature}
\begin{defn}[Connection] \label{def:appconn}
Connection is a map $ (X,Y) \ra \nabla_X Y$ satisfying the following:
\begin{enumerate}
\item It is linear in second argument over $\R$ i.e. $ \nabla_X(a Y+bZ)=a \nabla_X(Y)+ b \nabla_X(Z)$.
\item It is not linear in the second argument over $C^{\infty}(M)$ and satisfies $\nabla_X (fY)= f \nabla_X(Y)+ X(f)Y$. This second equation is called the product rule.
\item It is linear in first argument over $C^{\infty}(M)$ \hyperlink{Lie nonlinear in first}{(as opposed to Lie Derivative)} i.e. $\nabla_{f X} Y= f \nabla_X Y$
\end{enumerate}
\end{defn}
Let $\left\lbrace E_i \right\rbrace$ be a frame for $TM$ and define the Christoffel symbols $\Gamma_{ij}^{k}$ of $\nabla$ by
\begin{equation}
\nabla_{E_i}{E_j}=\Gamma_{ij}^{k} E_k \sgd
\end{equation}
Then
\begin{equation}
\nabla_X Y = \left( X(Y^k)+X^i Y^j \Gamma_{ij}^{k} \right) E_k= X^i \lp E_i(Y^k) + \Gamma_{ij}^k Y^j \rp E_k \sgd
\end{equation}

Some properties of the connection are as follows:
\begin{enumerate}
\item There is a one-to-one relation between $\nabla$ and $\Gamma$.
\item Every manifold admits a linear connection.
\item Because of the product rule, if $\nabla^1$ and $\nabla^2$ are two connections; $ \frac{1}{2} \nabla^1$ or $\nabla^1+\nabla^2$ are not connections.
\end{enumerate}

Definition of connection can be extended to arbitrary (k,l) type tensor fields such that, for type (1,0) it agrees with the given connection and
\begin{enumerate}
\item $\nabla_X f=X(f)$,
\item $\nabla_X (F \otimes G )= (\nabla_X F) \otimes G + F \otimes (\nabla_X G)$,
\item $\nabla_X $ commutes with any contraction of a pair of indices.
\end{enumerate} 
For any given connection for vector fields there will be unique generalization that satisfies the properties above. Any such generalized connection will also satisfy the following rule:
\begin{align} \label{eq:Cov distr over tensor}
 X \lp  F(\omega^1, ...,Y_1,...) \rp &= \nabla_X \lp F(\omega^1, ...,Y_1,...) \rp \nonumber \\
&= (\nabla_X F)(\omega^1, ...,Y_1,...) + \sum_{i=1} F(\omega^1, ...,\nabla_X \omega^i,...,Y_1,...) \nonumber \\
&+ \sum_{i=1} F(\omega^1, ...,Y_1,...,\nabla_X Y^i,...) \sgd
\end{align}

Connections alternatively can be considered as a map that takes (k,l) type tensor fields to (k, l+1) type tensor fields. This map is called the covariant derivative and defined as
\begin{equation}
\nabla F(\omega^1,...,\omega^l,Y_1,...,Y_k,X)= \nabla_X F(\omega^1,...,\omega^l,Y_1,...,Y_k) \sgd
\end{equation}
\begin{remark}
Note that due to the property $\nabla_{fX}Y= f \nabla_X Y$ of a connection,$\nabla F$ is a tensor field, and this is why we cannot define a similar tensor using Lie derivatives.
\end{remark}
\begin{remark}
Note that in a written in a basis $(\nabla_X Y)^{\nu}=X^{\mu}{(\nabla Y)_{\mu}}^{\nu}=X^{\mu} \nabla_{\mu} Y^\nu$.
\end{remark}
Covariant derivative of a vector field $V$ along a curve $\gamma$ is $\nabla_{\gamma'(t)}V$. $V$ is said to be parallel along $\gamma$ with respect to $\nabla$ if, $\nabla_{\gamma'(t)}V=0$. Given a tangent vector at a point on the curve there is a unique vector field along the curve obtained by parallel transporting this tangent vector.

\begin{defn}[Geodesic]
A curve $\gamma$ is called a geodesic with respect to $\nabla$, if 
\begin{equation}
\nabla_{\gamma'(t)}\gamma'(t)=0.
\end{equation}
\end{defn}
Any curve satisfying $\nabla_{\gamma'(t)}\gamma'(t)= \alpha(t) \gamma'(t) $ can be repramatrized to satisfy the geodesic equation. Such parameter called the affine parameter. Given a tangent vector $V_p$ at point $p$ on $M$ there exists a unique (maximal) geodesic $\gamma$ through $p$ such that $\gamma'(t)_p=V_p$; for a given connection.

\begin{defn}[Curvature]
Curvature $\rmt$ of a connection $\nabla$ is a map $ TM \otimes TM \otimes TM \ra TM $ given by
\begin{equation}
\rmt(X,Y)Z=[\nabla_X, \nabla_Y] Z-\nabla_{[X,Y]} Z \sgd
\end{equation}
\end{defn}
\noindent
Note that R is a type $(1,3)$ tensor. If we write it in the coordinate basis
\begin{equation}
\rmt={\rmt_{ijk}}^l dx^i \otimes dx^j \otimes dx^k \otimes \partial_l,
\end{equation}
so that the coefficients are given by
\begin{equation}
{\rmt_{ijk}}^l= dx^l \left( \rmt(\partial_i,\partial_j) \partial_k \right)= dx^l \left( [\nabla_{\partial_i},\nabla_{\partial_j}]\partial_k \right).
\end{equation}
Also a type (0,4) tensor version of the curvature is defined via
\begin{equation}
\rmt(X,Y,Z,W)=g( \rmt(X,Y) Z, W) \sgd
\end{equation}

\subsection{Riemannian Connection, Riemannian Geodesic, Riemannian Curvature}
\begin{defn}[Metric Compatible Connection]
 $\nabla$ is said to be compatible with g if it satisfies the following equivalent conditions:
\begin{enumerate}
\item $\nabla_X g(Y,Z)=g(\nabla_X Y,Z)+g(Y,\nabla_X Z)$.
\item $\nabla g=0$.
\item For any $V,W$ along any curve $\gamma$
\begin{equation}
\frac{d}{dt} g(V,W) = g(\nabla_{\gamma'(t)}V,W)+ g(V,\nabla_{\gamma'(t)}W) \sgd
\end{equation}
Note that this means for the metric compatible $\nabla$ if vector fields are parallel transported along any curve, their inner product will be preserved along that curve.
\end{enumerate}
\end{defn}
\begin{defn}
$\nabla$ is said to be symmetric/torsionless if
\begin{equation}
\nabla_X Y - \nabla_Y X=\left[X,Y\right] \sgd
\end{equation}
\end{defn}
\begin{thm}
There exists a unique linear connection on a Riemannian or (pseudo-Riemannian) Manifold $(M,g)$ that is compatible with metric and symmetric; called the \hypertarget{Rie Conn}{Riemannian connection}.
\end{thm}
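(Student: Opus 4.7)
The plan is the standard Koszul-formula argument, split into a uniqueness derivation followed by an existence construction.

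For uniqueness, suppose $\nabla$ is any connection that is both metric compatible and symmetric. Metric compatibility, applied cyclically to three smooth vector fields $X,Y,Z\in\vfs(M)$, gives the three identities
\begin{align}
X\lp g(Y,Z)\rp &= g(\nabla_X Y,Z) + g(Y,\nabla_X Z) \sgc \\
Y\lp g(Z,X)\rp &= g(\nabla_Y Z,X) + g(Z,\nabla_Y X) \sgc \\
Z\lp g(X,Y)\rp &= g(\nabla_Z X,Y) + g(X,\nabla_Z Y) \sgd
\end{align}
Adding the first two and subtracting the third, and then eliminating $\nabla_Y X$, $\nabla_Z X$, and $\nabla_Z Y$ in favor of $\nabla_X Y$, $\nabla_X Z$, $\nabla_Y Z$ using the symmetry condition $\nabla_A B - \nabla_B A = [A,B]$, I obtain the Koszul formula
\begin{equation}
2\,g(\nabla_X Y,Z) = X g(Y,Z) + Y g(Z,X) - Z g(X,Y) + g([X,Y],Z) - g([Y,Z],X) + g([Z,X],Y) \sgd
\end{equation}
Since $g$ is non-degenerate and $Z$ is arbitrary, this determines $\nabla_X Y$ uniquely, establishing uniqueness.

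For existence, I reverse the construction: define $\nabla_X Y$ implicitly by declaring the right-hand side of the Koszul formula, viewed as a function of $Z$, to equal $2g(\nabla_X Y,Z)$. The nontrivial step is to check that this right-hand side is $C^\infty(M)$-linear in $Z$; a direct computation shows the terms involving derivatives of a scalar coefficient on $Z$ cancel against the bracket terms, so the expression is tensorial in $Z$ and thus identifies a unique vector field $\nabla_X Y$ via the inverse metric. I then verify the three defining properties of a connection in turn: $\R$-linearity in $X$ and $Y$ is immediate from the formula, $C^\infty(M)$-linearity in $X$ follows again from the cancellation of scalar derivatives, and the Leibniz rule $\nabla_X(fY) = f\nabla_X Y + X(f)Y$ is the content of the remaining surviving scalar-derivative terms.

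Finally, I confirm that the connection so defined indeed has the two desired properties. Symmetry is verified by writing out $2g(\nabla_X Y - \nabla_Y X,Z)$ using the Koszul formula twice: most terms either cancel or combine into $2g([X,Y],Z)$. Metric compatibility is verified by computing $g(\nabla_X Y,Z)+g(Y,\nabla_X Z)$ from the formula; after the cyclic terms cancel, what remains is exactly $Xg(Y,Z)$.

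The main obstacle is the verification that the Koszul formula actually defines a connection rather than merely a bilinear map; the tensorial-in-$Z$ check and the Leibniz rule in $Y$ are where the bookkeeping is heaviest, and where one must carefully track how the Lie brackets absorb the scalar-derivative terms that would otherwise spoil $C^\infty$-linearity or inhomogeneity. Everything else — uniqueness, symmetry, compatibility — is essentially forced by the construction.
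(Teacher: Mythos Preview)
Your proof is correct and is precisely the standard Koszul-formula argument. The paper itself does not supply a proof of this theorem: it is stated in the appendix as background material without proof, with the surrounding exposition drawn from Lee's textbooks \cite{LeeISM,LeeRM}, so there is nothing to compare against beyond noting that your argument is the one found there.
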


\begin{remark}
A useful property follows from the definitions above: All Riemannian geodesics are constant speed curves:
\begin{equation}
\frac{d}{dt} g(\gamma'(t),\gamma'(t)) =  g(\nabla_{\gamma'(t)}\gamma'(t),\gamma'(t))+ g(\gamma'(t),\nabla_{\gamma'(t)}\gamma'(t))=0 \sgd
\end{equation}
\end{remark}

\begin{remark}[Naturality of Riemannian Connection] If there is an isometry between two Riemannian manifolds then this isometry takes Riemannian connection to Riemannian connection; Riemannian geodesic to Riemannian geodesic and so on; see \cite{LeeISM}.
\end{remark}

\subsection{Killing vectors}
\begin{defn}[Killing Vectors]
A vector field on a Riemannian manifold called a Killing vector field if the metric is invariant under its flow, i.e. $\xi$ is a Killing vector field if $\lie_{\xi}(g)=0$. Note that Killing vectors generates one parameter group of isometries.
\end{defn}

Killing vectors have the following properties:
\begin{enumerate}
\item Let $\xi$ be a Killing vector field and $\gamma$ be a geodesic, then $g(\gamma'(t),\xi)$ is constant along $\gamma (t)$.
\item For a manifold of n dimensions, there can be at most $n (n+1)/2$ linearly independent Killing vector fields.
\item Killing vectors form a Lie subalgebra of set of smooth vector fields.
\end{enumerate}

\section{Riemannian Volume form and Hodge Star}
\begin{defn}[Riemannian Volume form] 
For a (pseudo-) Riemannian Manifold there exists  a unique smooth orientation form $\volg$ that satisfies $\volg(E_1,...,E_n)=1$  for every local oriented orthonormal frame $ \lbr E_i \rbr$ for $M$, called the \hypertarget{rievol}{ Riemmanian volume form}. In any oriented smooth coordinates $ \lbr x^i \rbr$ it has the local coordinate expression 
\begin{equation}
\volg = \sqrt{\pm \det(g_{ij})} \, dx^1 \wedge ... \wedge dx^n
\end{equation}
where $+,-$ is for Riemannian and Lorentzian signature respectively.
\end{defn}
This tensor is sometimes also called Levi-Civita Tensor. It also satisfies the following equation written in a coordinate basis:
\begin{equation}
\volg^{\m_1 ...\m_k \m_{k+1}... \m_n} {\volg}_{ \m_1 ... \m_k \n_{k+1} ... \n_n} = \pm (n-k)! k! \delta^{\left[\m_{k+1}\right.}_{\n_{k+1}} ...\delta^{\left. \m_{n} \right]}_{\n_{n}} \sgd
\end{equation}
Using Riemannian Volume form one also defines the following:
\begin{defn}[Hodge Star Operator]
Hodge star operator is a map $*: \Omega^k(M) \ra \Omega^{n-k}(M)$ for any $0 \leq k \leq n$, where n is the dimension of the space, such that 
\begin{equation}
\left( * \omega \right)_{\n_{k+1} ... \n_n}= \frac{1}{k! } {\volg^{\m_1 ... \m_k}}_{ \n_{k+1} ... \n_n} \omega_{\m_1 ... \m_k} \sgd
\end{equation}
\end{defn}
Note also that Hodge star defines a natural inner product, i.e. let $\omega,\eta$ be k-forms, define inner product $\dket , \dbra$ as
\begin{equation}
\dket \omega, \eta \dbra = \int_{M} \omega \wedge *  \eta \sgd
\end{equation}
Some properties of the Hodge star are
\begin{itemize}
\item $ \nabla^{\mu} \omega_{\mu}= \pm * d * \omega$,
\item $ curl X = 2 ( * dX^ {\flat} )^ {\#} $,
\item $ \nabla f = (df)^{\#} $,
\item $ \nabla^2 f = \pm *d*df $,
\item $ \nabla^{\mu} F_{\mu \nu} = \pm * d * F$ ,
\end{itemize}
where $\#$ is raising and $\flat$ is lowering operator.
\paragraph{Stokes' Theorem on a Riemannian Manifold.}
Remember the Stokes' theorem on a manifold, if $\omega$ is an (n-1)-form on an n-dimensional manifold $M$ then:
\begin{equation}
\int_M d\omega= \int_{\pr M} \omega \sgc
\end{equation}
Now let M be a Riemannian manifold with metric $g$ and let $\omega= v \inc \volg$, one can show Stokes's theorem gives:
\begin{equation}
\int_{M}{ \left( \nabla_{\mu} v^{\mu} \right) \volg }=\int_{M}{ \left( *d*v \right) \volg } = \int_{ \partial M }{ v \inc \volg } = \int_{\pr M} g(v,n) \mathlarger{\epsilon}_{\tilde{g}}
\end{equation}
where $\tilde{g}$ is the induced metric on $\pr M$.
\begin{example}
Consider $\R^3$ with the Euclidean metric. Riemannian volume form for it will be
\begin{equation}
\omega_{\R^3}=dx \wedge dy \wedge dz \sgd
\end{equation}
Apply the above equation for the vector
\begin{equation}
v= r \frac{\partial}{\partial r}= x \frac{\partial}{\partial x}+ y \frac{\partial}{\partial y}+z \frac{\partial}{\partial z} \sgd
\end{equation}
Then the divergence will be
\begin{equation}
*d*v=\nabla_a v^a=3 \sgd
\end{equation}
Thus on a sphere with radius $R$,
\begin{equation}
\int_{R}{ \left( *d*v \right) \epsilon } = 4 \pi R^3 \sgd
\end{equation}
On the other hand using the Stokes' theorem,
\begin{align}
\int_{R}{ \left( *d*v \right) \epsilon } = \int_{ \partial R }{ v \lrcorner \epsilon } &= \int_{ \partial R }{\epsilon_{ba_1 a_2... } v^b} \nonumber \\
& = \int_{ \partial R }{ \epsilon_{123} v^1 dy \wedge dz + \epsilon_{231} v^2 dz \wedge dx + \epsilon_{312} v^3 dx \wedge dy } \nonumber \\
&= \int_{ \partial R }{ x dy \wedge dz +...} \nonumber \\
& =\int_{ \partial R }{R^3 sin \theta d\phi \wedge d\theta}= 4 \pi R^3 \sgc
\end{align}
confirming our results.
\end{example}

\section{Laplace de Rham Operator and Hodge Decomposition}

\begin{defn}
On an n-dimensional Riemannian manifold $M$, adjoint exterior derivative is a map $\dad: \Omega^k(M) \rightarrow \Omega^{k-1}(M)$ defined as
\begin{equation}
d^{\dagger} \equiv (-1)^{nk+n+1} *d* \sgd
\end{equation}
\end{defn}
Note that
\begin{align}
\mbox{(i)}\,\,& (d^{\dagger})^2=0, \\
\mbox{(ii)}\,\,& \left( d^{\dagger} \omega \right)_{\m_1 ... \m_{k-1}} = \pm \left(-1 \right)^{nk+n+1} \nabla^{\alpha} \omega_{\alpha \m_1 ... \m_{k-1}} \sgc \label{d-adj} 
\end{align}
where $\nabla$ is the covariant derivative belonging to $g$.

\begin{defn}[Laplace de Rham Operator and Harmonic Form]
Laplace de Rham operator is a map $\Delta:\Omega^k(M) \ra \Omega^k(M)$ where
\begin{equation}
\Delta \omega= (d d^{\dagger} + d^{\dagger} d) \omega \sgd
\end{equation}
A k-form $\alpha$ is called a harmonic form if $\Delta \alpha=0$.
\end{defn}
\begin{thm}[Hodge Decomposition]
A p-form on a compact, oriented Riemannian manifold can be uniquely decomposed as
\begin{equation}
\omega=\alpha + d \beta + d^{\dagger} \gamma \sgc
\end{equation}
where $\alpha$ is a harmonic form. 
\end{thm}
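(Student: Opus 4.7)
The plan is to prove Hodge decomposition by combining a straightforward orthogonality/uniqueness argument with the more substantial analytic fact that the image of the Laplace–de~Rham operator is closed. First I would fix the inner product $\dket\omega,\eta\dbra=\int_M\omega\wedge*\eta$ on $\Omega^p(M)$ and observe that on a compact manifold \emph{without boundary} Green's formula (from Section~\ref{sec:manbnd}, with the boundary integral vanishing) gives $\dket d\omega,\eta\dbra=\dket\omega,\dad\eta\dbra$, so $d$ and $\dad$ are honestly adjoint. I would then establish uniqueness by checking that the three summands are pairwise orthogonal: $\dket d\beta,\dad\gamma\dbra=\dket d^2\beta,\gamma\dbra=0$, $\dket\alpha,d\beta\dbra=\dket\dad\alpha,\beta\dbra=0$, and $\dket\alpha,\dad\gamma\dbra=\dket d\alpha,\gamma\dbra=0$, using that $\alpha$ is harmonic and hence simultaneously closed and coclosed (on a closed manifold the identity $\dket\Delta\lambda,\lambda\dbra=\dket d\lambda,d\lambda\dbra+\dket\dad\lambda,\dad\lambda\dbra$ shows $\Delta\lambda=0\Leftrightarrow d\lambda=\dad\lambda=0$). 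If two decompositions existed, subtracting would yield $\alpha_1-\alpha_2=d\beta+\dad\gamma$ with the right hand side orthogonal to the left hand side, forcing each piece to vanish.

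Next I would attack existence. The cleanest route is to argue that the orthogonal complement of $\harmk(M)$ in $\Omega^p(M)$ is precisely the image of $\Delta$. The inclusion $\text{Im}\,\Delta\subseteq\harmk(M)^\perp$ is immediate from self-adjointness of $\Delta$. The reverse inclusion is the analytic core: one shows $\Delta$ is an elliptic, formally self-adjoint second-order differential operator on the space of smooth $p$-forms on a compact manifold, hence Fredholm in suitable Sobolev completions, with finite-dimensional kernel and closed range equal to $(\ker\Delta)^\perp$. Elliptic regularity then lifts any weak solution of $\Delta\eta=\omega-\alpha$ back to a smooth $p$-form $\eta$. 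Granting this, every $\omega$ can be written $\omega=\alpha+\Delta\eta$, and then setting $\beta=\dad\eta$ and $\gamma=d\eta$ gives
\begin{equation}
\omega=\alpha+d\dad\eta+\dad d\eta=\alpha+d\beta+\dad\gamma,
\end{equation}
as required.

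The main obstacle, and the step I would not try to grind through, is exactly the closed-range/Fredholm statement for $\Delta$ on a compact Riemannian manifold. This rests on Gårding's inequality for elliptic operators, a standard Sobolev-space setup, and interior elliptic regularity (the Weitzenböck identity is useful here to rewrite $\Delta$ in a form where its principal symbol is manifestly the metric). I would invoke these as a black box from elliptic PDE theory (e.g.\ the treatment in Warner or Griffiths–Harris) rather than reproving them, and concentrate the written proof on the inner-product identities, the orthogonality of the three subspaces, and the short deduction $\omega-\alpha\in(\ker\Delta)^\perp=\mathrm{Im}\,\Delta$ once the analytic black box is in place. A minor but worth-mentioning bookkeeping point is that one should note $\alpha$, $\beta$, $\gamma$ are only unique up to the obvious ambiguities ($\beta\to\beta+\dad\mu$, $\gamma\to\gamma+d\nu$, harmonic shifts absorbed by the kernel of $d$ or $\dad$), whereas the three \emph{summands} $\alpha$, $d\beta$, $\dad\gamma$ are genuinely unique.
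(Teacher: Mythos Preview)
The paper does not actually prove this theorem: it is stated without proof in the appendix as part of the mathematical background summary, with the appendix explicitly noting that ``we skip \ldots\ most of the proofs of the theorems and propositions stated'' and referring the reader to standard references. So there is no paper proof to compare against.

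Your proposal is the standard textbook argument and is correct. The orthogonality computations are fine, the equivalence $\Delta\lambda=0 \Leftrightarrow d\lambda=\dad\lambda=0$ on a closed manifold is exactly the point the paper itself stresses (contrasting it with the boundary case in Section~\ref{sec:hmf}), and reducing existence to $(\ker\Delta)^\perp=\mathrm{Im}\,\Delta$ via ellipticity and Fredholm theory is the right move. Invoking the elliptic package as a black box is appropriate at this level and is precisely what Warner, Griffiths--Harris, and the paper's own cited sources do. Your remark that only the summands $\alpha$, $d\beta$, $\dad\gamma$ are unique (not the potentials $\beta,\gamma$ themselves) is a useful clarification that the paper's statement glosses over.
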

\begin{lemma}
$\derhamp(M) \cong Harm^p(M)$, where $Harm^p(M)$ is set of harmonic p-forms.
\end{lemma}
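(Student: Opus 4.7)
The plan is to construct an explicit isomorphism $\Phi : Harm^p(M) \to \derhamp(M)$ by sending a harmonic $p$-form $\alpha$ to its de Rham cohomology class $[\alpha]$, and then to verify well-definedness, injectivity and surjectivity using the orthogonality properties of the inner product $\dket \,, \dbra$ together with the Hodge Decomposition Theorem stated in the previous section.

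First I would check that $\Phi$ is well-defined, i.e.\ that every harmonic form is closed. On a compact oriented Riemannian manifold without boundary, $d$ and $\dad$ are formal adjoints with respect to $\dket \,, \dbra$, so for any $p$-form $\omega$ one computes
\begin{equation}
\dket \Delta \omega, \omega \dbra = \dket d\omega, d\omega \dbra + \dket \dad \omega, \dad \omega \dbra \sgd
\end{equation}
Hence $\Delta \omega = 0$ forces both $d\omega = 0$ and $\dad \omega = 0$, so in particular $\alpha \in Harm^p(M)$ defines a class $[\alpha] \in \derhamp(M)$.

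Next I would establish injectivity. If $\alpha \in Harm^p(M)$ is exact, say $\alpha = d\beta$, then using the adjointness of $d$ and $\dad$ on a closed manifold
\begin{equation}
\dket \alpha, \alpha \dbra = \dket d\beta, \alpha \dbra = \dket \beta, \dad \alpha \dbra = 0 \sgc
\end{equation}
because $\dad \alpha = 0$; positive definiteness of the inner product then gives $\alpha = 0$, so $\Phi$ has trivial kernel.

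Finally, for surjectivity I would take an arbitrary closed $p$-form $\omega$ and apply the Hodge decomposition to write $\omega = \alpha + d\beta + \dad \gamma$ with $\alpha$ harmonic. Using $d\omega = 0$, $d\alpha = 0$ and $d^2\beta = 0$ one gets $d\dad \gamma = 0$, and then
\begin{equation}
\dket \dad \gamma, \dad \gamma \dbra = \dket d \dad \gamma, \gamma \dbra = 0 \sgc
\end{equation}
forcing $\dad \gamma = 0$, so $\omega = \alpha + d\beta$ and $[\omega] = [\alpha] = \Phi(\alpha)$. The three pieces of the Hodge decomposition being mutually orthogonal is what makes each step go through; the main (and non-trivial) input one is really relying on is the existence of the Hodge decomposition itself, which is taken as given here but is the analytic heart of the result.
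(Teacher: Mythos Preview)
The paper does not actually supply a proof of this lemma; it is stated immediately after the Hodge Decomposition theorem and left unproved. Your argument is the standard one and is correct: you use adjointness of $d$ and $\dad$ on a compact manifold without boundary to show that harmonic implies closed and co-closed, then injectivity and surjectivity follow exactly as you wrote from the orthogonality of the three pieces in the Hodge decomposition. There is nothing to compare against, but your proof is precisely what the paper would need to fill in here.
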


\section{Riemannian Submanifolds}
For many reasons it will be useful to consider tensors along a submanifold of a given Riemannian manifold and split these into parts tangent to the submanifold and orthogonal to submanifold. In the main text we will perform a similar analysis for a foliation, and will see additional structure; here we treat a single submanifold and see how structure imposed on it by its ambient manifold.

Let $(\amb,\metamb)$ be a Riemannian Manifold and $(\subm, \metsubm)$ an embedding.Let $X,Y \in T\subm$ and extend them to vector fields on $\amb$ such that they are also defined off $\subm$. Note that on $\subm$, extension of a vector field should be equal to that vector field. We use same letters for extensions.

Let $\consubm$ and $\conamb$ be Riemannian connections on $\subm$ and $\amb$ respectively. Now $\conamb_X Y$ will, in general, have components in $T\subm$ and $N\subm$, the orthogonal complement of $T\subm$ in $\left. T\amb \right|_M$. Our claim is that the tangential component will be equal to $\consubm_X Y$, i.e. :

\begin{thm}[Gauss Formula]
\begin{equation}\label{eq:gauss}
\tilde{\nabla}_X Y = \nabla_X Y + \left( \tilde{\nabla}_X Y \right)^{\bot} \sgd
\end{equation}
\end{thm}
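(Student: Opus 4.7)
The plan is to define $\nabla'_X Y \equiv (\tilde{\nabla}_X Y)^T$, the tangential component of the ambient connection applied to tangential vector fields, and show that this operation coincides with $\nabla$, the unique Riemannian connection on $(M,h)$. Once this is established, the Gauss formula follows trivially by rewriting $\tilde{\nabla}_X Y = (\tilde{\nabla}_X Y)^T + (\tilde{\nabla}_X Y)^\perp$ and substituting $\nabla_X Y$ for the tangential part.

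First, I would verify that $\nabla'$ is well-defined: although $X,Y$ must be extended off $M$ to make sense of $\tilde{\nabla}_X Y$, the tangential component at $p \in M$ is independent of the extension. This follows from the first two defining properties of a connection in Definition~\ref{def:appconn}, namely $C^\infty$-linearity in $X$ (which ensures dependence only on the value $X_p$) and the Leibniz rule in $Y$ (which, combined with the fact that any two extensions of $Y$ agree on $M$, shows that the values of $\tilde{\nabla}_X Y$ on $M$ depend only on $Y|_M$ up to terms in $N M$, thanks to Proposition~\ref{prop:tansubm}). Next I would check that $\nabla'$ satisfies the three axioms of a connection; these follow immediately from the corresponding axioms for $\tilde\nabla$ together with the fact that projection onto $TM$ is $C^\infty(M)$-linear.

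The key step is to verify that $\nabla'$ is torsion-free and compatible with the induced metric $h$. For torsionlessness, I would compute
\begin{equation}
\nabla'_X Y - \nabla'_Y X = (\tilde\nabla_X Y - \tilde\nabla_Y X)^T = ([X,Y])^T = [X,Y],
\end{equation}
using the torsionlessness of $\tilde\nabla$ and the fact that $[X,Y]$ is tangent to $M$ when $X,Y$ are (since $M$ being a submanifold gives an involutive distribution, cf.\ the commutator characterization of tangency in Proposition~\ref{prop:tansubm}). For metric compatibility, given $X,Y,Z \in \tansubm$, I would write
\begin{equation}
X(h(Y,Z)) = X(\metamb(Y,Z)) = \metamb(\tilde\nabla_X Y, Z) + \metamb(Y, \tilde\nabla_X Z),
\end{equation}
where the first equality uses that $h$ is the pullback of $\tilde g$ and the second uses compatibility of $\tilde\nabla$ with $\tilde g$. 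Since $Y,Z$ are tangent to $M$, only the tangential components of $\tilde\nabla_X Y$ and $\tilde\nabla_X Z$ contribute to the inner products, giving $X(h(Y,Z)) = h(\nabla'_X Y, Z) + h(Y, \nabla'_X Z)$.

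Having established that $\nabla'$ is a torsionless, metric-compatible connection on $(M,h)$, the uniqueness of the Riemannian connection on $(M,h)$ (quoted in the text right before this statement) forces $\nabla' = \nabla$, which is precisely the Gauss formula. The main obstacle — more conceptual than technical — is the well-definedness check at the start, since extending $X$ and $Y$ off $M$ requires a careful argument that the tangential projection is extension-independent; the metric compatibility step is then the longest calculation but essentially mechanical.
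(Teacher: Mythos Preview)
Your proposal is correct and follows exactly the approach sketched in the paper's proof: define the tangential projection $(\tilde\nabla_X Y)^\top$, verify it is a symmetric (torsionless) $h$-compatible connection, and conclude by uniqueness of the Riemannian connection. Your write-up is simply a fleshed-out version of the paper's one-line sketch, with the well-definedness check and the torsion/compatibility computations made explicit.
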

\noindent
The map $\Pi: \tansubm \times \tansubm \ra \orthsubm $ such that   $\Pi(X,Y) \equiv \left( \tilde{\nabla}_X Y \right)^{\bot}$ is called the second fundamental form.

\begin{proof}
This is shown by defining a map $D: \tansubm \times \tansubm \ra \tansubm$ such that
\begin{equation}
D(X,Y)= (\conamb_X Y)^\top
\end{equation}
and showing that it is a symmetric and $\metsubm$ compatible connection.
\end{proof}
\noindent
The second fundamental form is
\begin{enumerate}
\item independent of extensions of $X,Y$,
\item bilinear over $C^\infty(\subm)$,
\item symmetric.
\end{enumerate}

\begin{thm}[The Weingarten Equation]\label{eq:wein}
Let $X,Y \in \tansubm$ and $Q \in \orthsubm$ (i.e. $Q$ is a vector field \hyperlink{vector field along}{along} $\subm$ that happens to be orthogonal to its tangent space at every point on $\subm$). Then following holds at points of $\subm$:
\begin{equation}
\tilde{g}( \tilde{\nabla}_X Q, Y)=- \tilde{g}(Q,\Pi(X,Y)) \sgd
\end{equation}
\end{thm}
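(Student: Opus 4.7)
The plan is to exploit metric compatibility of $\tilde{\nabla}$ together with the Gauss formula \eqref{eq:gauss}, using the single geometric fact that $Y$ and $Q$ are pointwise orthogonal along $\subm$.

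First I would observe that since $Y \in \tansubm$ and $Q \in \orthsubm$, the scalar function $\tilde{g}(Q, Y)$ vanishes identically on $\subm$. Because $X$ is tangent to $\subm$, the derivative $X\bigl(\tilde{g}(Q,Y)\bigr)$ is well-defined intrinsically on $\subm$ and must equal zero. Applying metric compatibility of the ambient Riemannian connection $\tilde{\nabla}$ then yields
\begin{equation}
0 \;=\; X\bigl(\tilde{g}(Q,Y)\bigr) \;=\; \tilde{g}(\tilde{\nabla}_X Q, Y) + \tilde{g}(Q, \tilde{\nabla}_X Y) \sgd
\end{equation}

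Next I would insert the Gauss formula \eqref{eq:gauss}, $\tilde{\nabla}_X Y = \nabla_X Y + \Pi(X,Y)$, into the second term. Since $\nabla_X Y \in \tansubm$ while $Q \in \orthsubm$, the orthogonality of these two subspaces gives $\tilde{g}(Q, \nabla_X Y) = 0$, so only the normal piece survives:
\begin{equation}
\tilde{g}(Q, \tilde{\nabla}_X Y) \;=\; \tilde{g}(Q, \Pi(X,Y)) \sgd
\end{equation}
Substituting back into the previous display produces exactly the claimed identity.

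There is essentially no obstacle here — the argument is a one-line consequence of metric compatibility plus the already established Gauss decomposition, and it does not depend on how $X$, $Y$, or $Q$ are extended off $\subm$ (one should remark that the left-hand side is independent of the extension of $Q$ because $X$ is tangential, mirroring the analogous independence property listed for $\Pi$). The only care needed is to make sure the differentiation $X\bigl(\tilde{g}(Q,Y)\bigr)$ is performed along $\subm$, where $\tilde{g}(Q,Y) \equiv 0$; if one naively differentiated an arbitrary extension off $\subm$ one would pick up spurious terms, but tangentiality of $X$ rules this out.
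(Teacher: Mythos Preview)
Your argument is correct and is exactly the standard proof: metric compatibility applied to the identically vanishing function $\tilde{g}(Q,Y)$ along $\subm$, followed by the Gauss decomposition to kill the tangential piece $\tilde{g}(Q,\nabla_X Y)$. The paper itself states the Weingarten equation without proof, so there is nothing further to compare; your remark about extension-independence is a nice addition and matches the analogous properties the paper lists for $\Pi$.
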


\begin{thm}[Gauss Equation]
Let $X,Y,Z,W \in \tansubm$ with extensions. Then
\begin{equation}
\rieamb(X,Y,Z,W)=\riesubm(X,Y,Z,W)- \metamb(\Pi(X,W),\Pi(Y,Z))+\metamb(\Pi(X,Z),\Pi(Y,W)) \sgd
\end{equation}
\end{thm}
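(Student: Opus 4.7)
The plan is to unwind both sides using the definition of the Riemann curvature tensor together with the Gauss formula (Theorem before, equation \eqref{eq:gauss}) and the Weingarten equation (Theorem \ref{eq:wein}), then compare tangential components.

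First I would extend $X,Y,Z,W$ to vector fields on a neighbourhood of $\subm$ in $\amb$ (which is allowed since the quantities in the Gauss formula and second fundamental form are independent of the extension). Starting from
\begin{equation*}
\rieamb(X,Y)Z \;=\; \conamb_X \conamb_Y Z - \conamb_Y \conamb_X Z - \conamb_{[X,Y]} Z,
\end{equation*}
I apply the Gauss formula to $\conamb_Y Z = \consubm_Y Z + \Pi(Y,Z)$, and then to $\conamb_X$ of each piece. This yields
\begin{equation*}
\conamb_X \conamb_Y Z \;=\; \consubm_X \consubm_Y Z \;+\; \Pi(X,\consubm_Y Z) \;+\; \conamb_X \Pi(Y,Z),
\end{equation*}
where the last term a priori has both a tangential and a normal piece.

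The key step is to pair with $W \in \tansubm$ and extract only the tangential content. The term $\Pi(X,\consubm_Y Z)$ lies in $\orthsubm$, so it drops. For $\conamb_X \Pi(Y,Z)$, since $\Pi(Y,Z)$ lies along $\subm$ in $\orthsubm$, the Weingarten equation gives
\begin{equation*}
\metamb\bigl(\conamb_X \Pi(Y,Z),\,W\bigr) \;=\; -\,\metamb\bigl(\Pi(Y,Z),\,\Pi(X,W)\bigr).
\end{equation*}
Doing the analogous computation for the $X \leftrightarrow Y$ term and noting that $[X,Y] \in \tansubm$ so $\metamb(\conamb_{[X,Y]}Z,W)=\metsubm(\consubm_{[X,Y]}Z,W)$, the three pieces assemble into $\riesubm(X,Y,Z,W)$ plus the two Weingarten contributions, with the required signs.

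I do not expect any real obstacle: the whole argument is a bookkeeping exercise in separating tangential and normal components, and the only substantive input is the Weingarten identity, which converts derivatives of a normal field into pairings of second fundamental forms. The minor care point is ensuring that each time I invoke Gauss or Weingarten, the vector being differentiated is genuinely tangential (or normal) along $\subm$, which is why the extensions off $\subm$ are irrelevant to the final formula.
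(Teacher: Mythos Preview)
The paper states the Gauss Equation without proof, so there is nothing to compare against directly. Your argument is the standard one and is correct: apply the Gauss formula to expand $\conamb_X\conamb_Y Z$, pair with the tangential $W$ so that the $\Pi(X,\consubm_Y Z)$ terms drop, use Weingarten to convert $\metamb(\conamb_X\Pi(Y,Z),W)$ into $-\metamb(\Pi(Y,Z),\Pi(X,W))$, and note $[X,Y]\in\tansubm$ so the bracket term reduces to $\consubm_{[X,Y]}Z$. The bookkeeping and sign tracking you describe are exactly what is needed; there is no gap.
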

\paragraph{Induced volume form of a Riemannian submanifold:} As already have been used in Stokes' theorem for Riemannian manifolds, given a submanifold of codimension-1 one has the relationships
\begin{equation}\label{eq:appvol}
\vol_{\metamb}= n \wedge \vol_{\metsubm} \gag \vol_{\metsubm}= n \inc \vol_{\metamb} \sgc
\end{equation}
where $n$ is the unique normalized one form along $\subm$ that is normal to it. Note that this relation will not be applicable to the cases of a submanifold with a degenerate induced metric in a pseudo-Riemannian manifold.

\chapter{Symplectic Manifolds}

\section{Symplectic Form}
\begin{defn}[Symplectic Tensor]
A 2-covector $ \omega $ on a finite dimensional vector space V is said to be nondegenerate if for each nonzero $ v \in V $, there exists a $ w \in V $ such that $ \omega(v,w)\neq 0 $. A non-degenerate 2-covector is called a symplectic tensor.
\end{defn}

\begin{prop}
Let $ \omega $ be a symplectic tensor on an m-dimensional vector space $V$. Then m is even and there exists a basis for $V$ in which $\omega$ has the form
\begin{equation}
\omega= \sum_{i=1}^{n}{\alpha^i \wedge \beta^i} \sgc
\end{equation}
where n=m/2 and $ \left\lbrace \alpha^1,\beta^1, ..., \alpha^n, \beta^n \right\rbrace $ is the a basis of dual space $V^{*}$.
\end{prop}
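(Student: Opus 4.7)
\begin{prf}[Proof Sketch]
The plan is to construct a Darboux basis by induction on $\dim V$, using a symplectic analogue of the Gram--Schmidt procedure; the evenness of $m$ will then fall out automatically because the inductive step always strips off a two-dimensional piece.

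First I would handle the base case and set up the inductive step. Pick any nonzero $e_1 \in V$. By nondegeneracy of $\omega$ there exists $v \in V$ with $\omega(e_1, v) \neq 0$, so rescaling $v$ yields $f_1 \in V$ with $\omega(e_1, f_1) = 1$. Antisymmetry ($\omega(e_1,e_1)=0$) forces $e_1, f_1$ to be linearly independent, so they span a 2-dimensional subspace $W = \mathrm{span}(e_1, f_1)$ on which $\omega$ restricts to the standard symplectic form.

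The key step is the decomposition $V = W \oplus W^{\omega}$, where the symplectic complement is
\begin{equation*}
W^{\omega} = \lbr v \in V \, : \, \omega(v, w) = 0 \,\, \forall w \in W \rbr \sgd
\end{equation*}
For $W \cap W^{\omega} = 0$, if $v = a e_1 + b f_1$ lies in $W^{\omega}$ then $\omega(v, f_1) = a$ and $\omega(v, e_1) = -b$ both vanish. For the dimension count, the map $V \to W^*$ sending $v \mapsto \omega(v, \cdot)|_W$ has kernel $W^{\omega}$; nondegeneracy of $\omega$ on $W$ (already verified) makes this map surjective, so $\dim W^{\omega} = \dim V - 2$. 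Hence $V = W \oplus W^{\omega}$. I must also check that $\omega|_{W^{\omega}}$ remains nondegenerate: if $v \in W^{\omega}$ annihilates all of $W^{\omega}$ under $\omega$, then, since $v$ already annihilates $W$, it annihilates the whole $V$, contradicting nondegeneracy of $\omega$ on $V$.

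By the inductive hypothesis applied to $(W^{\omega}, \omega|_{W^{\omega}})$, there exist $e_2, f_2, \ldots, e_n, f_n \in W^{\omega}$ with $\omega(e_i, f_j) = \delta_{ij}$ and $\omega(e_i, e_j) = \omega(f_i, f_j) = 0$ for $i, j \ge 2$. Concatenating with $(e_1, f_1)$ yields a basis of $V$ in which $\omega$ has block-diagonal standard form; in particular $\dim V = 2n$ is even. Taking $\lbr \alpha^i, \beta^i \rbr$ to be the dual basis to $\lbr e_i, f_i \rbr$, a direct evaluation gives
\begin{equation*}
\omega = \sum_{i=1}^{n} \alpha^i \wedge \beta^i \sgd
\end{equation*}
The main obstacle I anticipate is the $V = W \oplus W^{\omega}$ decomposition --- specifically, that one must invoke nondegeneracy twice (once for the surjectivity argument giving the dimension of $W^{\omega}$, and once to ensure the induction can actually proceed on the complement). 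Everything else is linear-algebraic bookkeeping.
\end{prf}
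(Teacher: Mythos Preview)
Your argument is correct and is exactly the standard symplectic Gram--Schmidt induction one finds in the textbook the paper itself cites as its primary reference for this appendix material (\cite{LeeISM}). The paper, however, does not supply its own proof of this proposition; it simply states the result and moves on, as with most of the appendix content. So there is nothing to compare against --- your write-up would serve perfectly well as the missing proof. One small remark: you might state the base case explicitly (the zero-dimensional vector space, where the claim is vacuous), just so the induction is formally complete.
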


The subset of $ T^{(0,k)}TM $ consisting of k-covectors is denoted by $ \Lambda^k T^{*}M $:
\begin{equation}
\Lambda^k T^{*}M = \prod_{p \in M}{\Lambda^{k}(T_{p}^{*} M)} \sgd
\end{equation}
A section of $ \Lambda^k T^{*}M $ is called a (differential) k-form: it is a tensor field which is a k-covector at each point.

\begin{defn}[Symplectic form and Symplectic Manifold]
A non-degenerate 2-form on a smooth manifold $M$ is a 2-form $\omega$ such that $\omega_p$ is a nondegenerate 2-covector for each $ p \in M $. A symplectic form ( or structure) on $M$ is a closed nondegenerate 2-form. A manifold endowed with a symplectic form is called a symplectic manifold.
\end{defn}

\begin{thm}[Darboux Theorem]
 Let $M$ be a 2n-dimensional manifold and $\omega$ be a nondegenerate 2-form on $M$.Then  $d\omega=0$ if and only if at each point $p \in M$ there are coordinates $(U,\psi)$ where $p \in M $ and $ \psi: U \rightarrow \left\{x_1, \dots, x_n,y_1, \dots , y_n \right\} $ satisfies $\psi(p)=0$ and
\begin{equation}
(\psi^{-1})^{*} \omega= \sum_{j=1}^{n}{dx_j \wedge dy_j} \sgd
\end{equation}
Such coordinates are called symplectic or Darboux coordinates.
\end{thm}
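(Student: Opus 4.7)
The plan is to prove the two directions separately. The backward direction is immediate: if such coordinates exist around every point, then locally $\omega = \sum_j dx_j \wedge dy_j$ in $\psi$-coordinates, which is manifestly closed and nondegenerate (the latter already assumed). So the content of the theorem is the forward direction, which I would establish by Moser's trick.

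First I would use the linear-algebra result cited just before (every symplectic tensor on a vector space has a standard form in some basis) applied to $\omega_p$ on $T_pM$ to get a basis $\{\partial_{x_1}|_p,\dots,\partial_{x_n}|_p,\partial_{y_1}|_p,\dots,\partial_{y_n}|_p\}$ in which $\omega_p = \sum_j dx_j|_p \wedge dy_j|_p$. Extending this linear data to a chart $(U,\psi)$ centered at $p$ produces a closed nondegenerate 2-form $\omega_0 \equiv \sum_j dx_j \wedge dy_j$ on $U$ that agrees with $\omega$ at $p$ but generically not away from it. The problem is thus reduced to finding a diffeomorphism $\phi$ fixing $p$ and satisfying $\phi^*\omega = \omega_0$ on a possibly smaller neighborhood.

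To construct $\phi$ I would interpolate: set $\omega_t = (1-t)\omega_0 + t\omega$ for $t\in[0,1]$. Both forms are closed, so $\omega_t$ is closed for all $t$. Since $\omega_0|_p = \omega|_p$ is nondegenerate and nondegeneracy is an open condition, by shrinking $U$ I may assume $\omega_t$ is nondegenerate on $U$ for every $t\in[0,1]$. Next, since $U$ can be taken contractible (e.g.\ a coordinate ball), the \hyperlink{poincare}{Poincaré lemma} gives a 1-form $\alpha$ on $U$ with $\omega - \omega_0 = d\alpha$, and by subtracting a constant I arrange $\alpha_p = 0$. Nondegeneracy of $\omega_t$ lets me define a smooth time-dependent vector field $X_t$ uniquely by $\iota_{X_t}\omega_t = -\alpha$, which satisfies $X_t|_p = 0$. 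This last property means the flow $\phi_t$ of $X_t$ exists for $t\in[0,1]$ on a still smaller neighborhood of $p$ (integral curves starting near a fixed point stay close to it for bounded time) and fixes $p$.

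Now I would compute using the \hyperlink{cartan}{Cartan magic formula}:
\begin{equation}
\frac{d}{dt}\phi_t^*\omega_t = \phi_t^*\bigl(\mathcal{L}_{X_t}\omega_t + \tfrac{\partial}{\partial t}\omega_t\bigr) = \phi_t^*\bigl(d\iota_{X_t}\omega_t + \iota_{X_t}d\omega_t + (\omega-\omega_0)\bigr) = \phi_t^*(-d\alpha + 0 + d\alpha) = 0 \sgd
\end{equation}
Hence $\phi_1^*\omega = \phi_0^*\omega_0 = \omega_0$, which by composing $\psi$ with $\phi_1$ gives the desired Darboux chart.

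The main obstacle is the Moser step itself: one must simultaneously ensure that (i) $\omega_t$ stays nondegenerate along the whole interpolation, (ii) the primitive $\alpha$ can be chosen to vanish at $p$ so that $X_t$ vanishes at $p$, and (iii) the flow of the time-dependent vector field $X_t$ exists up to $t=1$ on a neighborhood of $p$. Points (i) and (iii) both rely on shrinking $U$ and using continuity/openness arguments carefully. Aside from this technical care, the remaining computation is the Cartan-formula manipulation above, which works out cleanly precisely because $\omega_t$ is closed and the defining equation of $X_t$ was chosen to cancel the $t$-derivative of $\omega_t$.
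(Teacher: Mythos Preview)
The paper states the Darboux theorem without proof, so there is nothing to compare against. Your Moser-trick argument is correct and is the standard modern proof of the result. One small remark: the phrase ``by subtracting a constant I arrange $\alpha_p=0$'' is slightly imprecise for a 1-form; what you need is that the Poincar\'e-lemma homotopy operator applied to $\omega-\omega_0$ (which vanishes at $p$) automatically produces a primitive $\alpha$ with $\alpha_p=0$, or equivalently you may subtract the constant-coefficient 1-form $\alpha_p$ in the chart. With that clarification the argument goes through as written.
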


\section{Hamiltonian Vector Fields and Poisson Bracket}
\begin{defn}
Hamiltonian vector field of a smooth function f on a symplectic manifold $(M,\omega)$ is the smooth vector field $X_f$ defined by the property
\begin{equation}
\omega(X_f,Y)=df(Y)=Y(f)
\end{equation}
for $Y$ any smooth vector field.
\end{defn}
\noindent
One can show that in symplectic coordinates $\left\lbrace x^i,y^i \right\rbrace$ the Hamiltonian vector field $X_f$  is written as
\begin{equation}
X_f= \sum_{i=1}^{n}{\frac{\partial f}{\partial y^i} \frac{\partial}{\partial x^i}-\frac{\partial f}{\partial x^i} \frac{\partial}{\partial y^i}}
\end{equation}
In conjunction with a symplectic form one can define what is called Poisson bracket on smooth functions.

\begin{defn}
Given smooth functions $f,g$ on a symplectic manifold $(M,\omega)$ we define their Poisson bracket $\left\{ f,g \right\} \in C^{\infty}(M)$ by:
\begin{equation}
\left\{ f,g \right\} = \omega(X_f,X_g) = df(X_g) = X_g f
\end{equation}
Note that this gives the rate if change of f along the Hamiltonian flow of g.
\end{defn}

Some properties of Poisson bracket are as follows:
\begin{enumerate}
\item $\left\{ .,. \right\}$ is bilinear over $\mathcal{R}$,
\item $\left\{ f,g \right\} = -\left\{ g,f \right\}$,
\item $\left\{ \left\{ f,g \right\} ,h \right\}+\left\{ \left\{ g,h \right\} ,f \right\}+\left\{ \left\{ h,f \right\} ,g \right\} =0 $,
\item $ \lb X_f, X_g \rb= - X_{\lbr f,g \rbr }$
\end{enumerate}
for $f,g,h \in C^{\infty}(M)$.

\section{Hamilton's Equations from Hamiltonian Vector Fields}
Symplectic manifolds are a very nice and abstract way of describing the Hamiltonian prescription of physical systems. A classical system in physics is described as follows:

\uline{A classical dynamical system is described by a symplectic manifold $(M,\omega)$ and a function $H$ called the Hamiltonian. Each integral curve of $X_H$, the Hamiltonian vector field of $H$, describes a solution to the system with a different initial condition. We will call these systems Hamiltonian systems.}

Now we move onto illustrating this statement. A physical system is described by a set of ``generalized coordinates", called $\left\{ Q^i \right\}$, $i=1, \dots,n$. A Hamiltonian description of system is given by an even dimensional space described with set $\left\{ Q^i,P^i \right\}$ , where $P^i$ is the corresponding ``canonical momentum", which can be considered as new coordinates. Now this is an $\mathbb{R}^{2n}$ manifold with the Darboux coordinates $\left\{ Q^i,P^i \right\}$. The Hamiltonian $H$, which basically gives the energy of the system is a smooth function on this manifold. Let us consider the Hamiltonian vector field belonging to $H$, called $X_H$. Let it have a flow $\gamma(t)=(Q^i(t),P^i(t))$. Then by the definition of flows we must have
\begin{equation}
X_H= \frac{d \gamma(t)}{dt} \sgd
\end{equation}
Written in Darboux coordinates,
\begin{equation}
X_H= \frac{d Q^i(t)}{dt} \frac{\pr}{\pr Q^i} + \frac{d P^i(t)}{dt} \frac{\pr}{\pr P^i} \sgd
\end{equation}
However by the definition of a Hamiltonian vector field we also know
\begin{equation}
X_H= \frac{\partial H}{\partial P^i} \frac{\partial}{\partial Q^i}-\frac{\partial H}{\partial Q^i} \frac{\partial}{\partial P^i} \sgd
\end{equation}
Matching the coefficients will give us \uline{Hamilton's equations}:
\begin{equation}
\frac{dQ^i}{dt} = \frac{\partial H}{d P^i} \cg \frac{dP^i}{dt} = -\frac{\partial H}{d Q^i} \sgd
\end{equation}

\section{Symmetries and Noether's Theorem}
\begin{defn}
A smooth vector field $X$ on a symplectic manifold $(M,\omega)$ is said to be a symplectic vector field if $\omega$ is invariant under the flow of X, i.e. if $\lie_X \omega=0$. It is said to be globally Hamiltonian if there exists a smooth function on $M$ such that $X=X_f$, and locally Hamiltonian if at each point on $M$, $X=X_{f_i}$ on a neighborhood $U_i \cap M$.
\end{defn}

\begin{thm}
Let $(M,\omega)$ be a symplectic manifold. A smooth vector field on $M$ is symplectic if and only if it is locally Hamiltonian. Every locally Hamiltonian vector field on $M$ is globally Hamiltonian if and only if $\derham^1(M)=0$.
\end{thm}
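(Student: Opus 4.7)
The plan is to reduce both statements to the elementary fact that, by nondegeneracy of the symplectic form, contraction with $\omega$ gives a $C^{\infty}(M)$-linear isomorphism $\Phi: \vfs(M) \to \Omega^1(M)$, $\Phi(X) = X \inc \omega$. Under this isomorphism, $X = X_f$ globally means exactly $\Phi(X) = df$, while $X = X_{f_i}$ locally on a neighborhood $U_i$ means $\Phi(X)|_{U_i} = df_i$. So ``(locally / globally) Hamiltonian'' translates via $\Phi$ into ``(locally / globally) exact''.

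For the first equivalence, I would apply Cartan's magic formula together with the closedness of $\omega$:
\begin{equation}
\lie_X \omega = X \inc d\omega + d(X \inc \omega) = d\Phi(X),
\end{equation}
so $X$ is symplectic if and only if $\Phi(X)$ is closed. The \hyperlink{poincare}{Poincar\'e lemma} then says that any closed $1$-form is locally exact on a contractible neighborhood, which, through $\Phi$, says precisely that $X$ is locally Hamiltonian. The converse direction is immediate: if $X = X_{f_i}$ on $U_i$, then on $U_i$ one has $\Phi(X) = df_i$, which is closed, hence $\lie_X \omega = d\Phi(X) = 0$ everywhere.

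For the second equivalence, the forward direction is the definition of vanishing de Rham cohomology applied to $\Phi(X)$: if $X$ is locally Hamiltonian then, as above, $\Phi(X)$ is closed, and $\derham^1(M) = 0$ implies $\Phi(X) = df$ for some global $f \in C^{\infty}(M)$, which by nondegeneracy means $X = X_f$. For the converse, given a closed $1$-form $\alpha$, the nondegeneracy of $\omega$ produces a unique smooth vector field $X = \Phi^{-1}(\alpha)$, and $d\alpha = 0$ gives $\lie_X \omega = d(X \inc \omega) = d\alpha = 0$, so $X$ is symplectic; by the first part $X$ is locally Hamiltonian, hence globally Hamiltonian by hypothesis, hence $\alpha = \Phi(X) = df$ is exact. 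Thus every closed $1$-form is exact, i.e.\ $\derham^1(M) = 0$.

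The only subtlety — the closest thing to an obstacle — is checking that $\Phi$ really is a $C^{\infty}(M)$-linear bundle isomorphism. This follows pointwise from nondegeneracy of $\omega_p$ on $T_pM$, which gives a linear isomorphism $T_pM \to T_p^*M$; smoothness of $\Phi^{-1}$ can be verified in Darboux coordinates, where $\omega = \sum_i dx^i \wedge dy^i$ makes the matrix of $\Phi$ constant and manifestly invertible. Once this is in hand, both equivalences are essentially applications of Cartan's magic formula and the definition of $\derham^1(M)$.
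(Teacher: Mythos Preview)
Your proof is correct and follows essentially the same route as the paper: Cartan's magic formula plus closedness of $\omega$ reduces ``symplectic'' to ``$X\inc\omega$ closed'', and then the Poincar\'e lemma (locally) or the hypothesis $\derham^1(M)=0$ (globally) finishes it. Your framing via the bundle isomorphism $\Phi$ is a clean way to organize this, and you actually go further than the paper by supplying the converse of the second equivalence (that if every locally Hamiltonian field is globally Hamiltonian then $\derham^1(M)=0$), which the paper's proof omits.
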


\begin{proof}
Let us first show that $X$ being symplectic means $X$ is locally Hamiltonian.\\
A smooth vector field $X$ is symplectic if $ \mathcal{L}_X \omega =0$. Now use Cartan's Magic formula:
\begin{equation}
\mathcal{L}_X \omega = d ( X \lrcorner \ \omega) + X \lrcorner (d\omega)= d ( X \lrcorner \ \omega)=0 \sgc
\end{equation}
where we used $d\omega=0$ that comes from the definition of the symplectic form. By \hyperlink{poincare}{Poincare's lemma} then $( X \lrcorner \ \omega)$ is locally exact i.e. there exists a function $f$ such that $( X \lrcorner \ \omega)=df$, thus $X$ is locally Hamiltonian.\\
The fact that $X$'s local Hamiltonianness implies that $X$ is symplectic can be trivially shown by following the steps above backwards.\\
If $\derham^1(M)=0$, then same arguments hold globally.
\end{proof}

A smooth function $h$ on a Hamiltonian system $(M,\omega,H)$ is said to be conserved in time if it is constant on every integral curve of $X_H$ i.e. if $X_H(h)=0$. Note that $h$ is conserved in time if and only if $ \left\{ h,H \right\} = 0 $; by the definition of Poisson bracket. 

\begin{defn}[Infinitesimal Symmetry]
A smooth vector field $V$ on a Hamiltonian system $(M,\omega,H)$ is called an infinitesimal symmetry of the system if both $\omega$ and $H$ are invariant under the flow of $V$. Note that if $V$ is an infinitesimal symmetry then it is a symplectic vector field satisfying $V(H)=0$.
\end{defn}

Let $(M,\omega,H)$ be a Hamiltonian system. If $\theta: \R \times M \ra M$ is the flow of an infinitesimal symmetry and $\gamma: \R \ra M$ is a solution to the system then for each $ s \in \R $, $ \theta_s \circ \gamma$, where $\theta_s(p)=\theta(s,p)$ for all $p \in M$, is also a solution on its domain of definition.

\begin{thm}[Noether's Theorem]
Let $(M,\omega,H)$ be a Hamiltonian system. If f is conserved in time, then its Hamiltonian vector field is an infinitesimal symmetry. Conversely, if $\derham^1(M)=0$,i.e. de Rham cohomology group of degree 1 is zero, then each infinitesimal symmetry is the Hamiltonian vector field of a conserved quantity, which is unique up to addition of a function that is constant on $M$.
\end{thm}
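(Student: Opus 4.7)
The plan is to prove the two implications separately, leaning entirely on machinery already developed in this section: the definition of a Hamiltonian vector field via $\omega(X_f,\cdot)=df$, the Poisson bracket identity $\{f,g\}=X_g(f)$, Cartan's magic formula, and the earlier theorem characterizing symplectic vector fields as (locally) Hamiltonian.

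First, for the forward direction, suppose $f$ is conserved in time, i.e.\ $X_H(f)=0$. I need to verify the two defining conditions for $X_f$ to be an infinitesimal symmetry. Invariance of $\omega$ under the flow of $X_f$ follows from Cartan's formula: $\lie_{X_f}\omega = d(X_f\inc\omega)+X_f\inc d\omega = d(df)+0 = 0$, since $X_f\inc\omega=df$ by definition of the Hamiltonian vector field and $d\omega=0$ by the definition of a symplectic form. For invariance of $H$ I compute $X_f(H)=dH(X_f)=\omega(X_H,X_f)=-\omega(X_f,X_H)=-df(X_H)=-X_H(f)=0$, which is exactly the antisymmetry of the Poisson bracket applied to conservation of $f$.

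For the converse, let $V$ be an infinitesimal symmetry. Since $V$ is symplectic, the earlier theorem guarantees it is locally Hamiltonian, but to get a global potential I invoke Cartan's formula once more: $0=\lie_V\omega = d(V\inc\omega)+V\inc d\omega = d(V\inc\omega)$, so $V\inc\omega$ is a closed one-form. The hypothesis $\derham^1(M)=0$ means every closed one-form is exact, so there exists $f\in C^\infty(M)$ with $V\inc\omega=df$; by nondegeneracy of $\omega$ this forces $V=X_f$. Conservation of $f$ follows from the symmetry condition: $0=V(H)=X_f(H)=\{H,f\}$, so $\{f,H\}=0$.

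Finally, for uniqueness, if $f$ and $g$ both satisfy $X_f=X_g$, then $d(f-g)=(X_f-X_g)\inc\omega=0$, so $f-g$ is locally constant; if $M$ is connected this means globally constant, and in general it is constant on each connected component, i.e.\ locally constant on $M$, matching the stated uniqueness. I do not foresee any significant obstacle: every step is a short manipulation with Cartan's formula, the defining identity $\omega(X_f,\cdot)=df$, and the Poincar\'e-type cohomological hypothesis; the only subtlety is being careful with the sign conventions in the Poisson bracket so that $X_H(f)=0$ matches $\{f,H\}=0$, which is just antisymmetry.
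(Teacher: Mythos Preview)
Your proof is correct and follows essentially the same approach as the paper: Cartan's magic formula to show Hamiltonian vector fields are symplectic, antisymmetry of the Poisson bracket to handle invariance of $H$, and the cohomological hypothesis to upgrade a closed one-form $V\inc\omega$ to an exact one. The only difference is that you spell out the conservation of $f$ in the converse direction and re-derive the ``symplectic $\Rightarrow$ globally Hamiltonian'' step via Cartan's formula rather than citing the preceding theorem, but these are minor expository choices rather than a different strategy.
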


\begin{proof}
\hfill\\
Let us first show f being conserved implies $X_f$ is an infinitesimal symmetry:\\
If f is conserved in time, then $ \left\{f,H\right\}=0$ which implies $ \left\{f,H\right\}= -\left\{H,f\right\}= X_f H=0$, thus $H$ is constant along the flow of $X_f$. It can be easily shown by using Cartan's magic formula that a globally Hamiltonian vector field is a symplectic vector field. Because of this $\omega$ is also conserved under the flow of $X_f$. Thus $X_f$ is an infinitesimal symmetry.\\
\hfill \\ 
Now let's show that if $V$ is an infinitesimal symmetry and $\derham^1(M)=0$ then $V=X_f$ where f is conserved:\\
Since V infinitesimal symmetry, it is a symplectic vector field, and since $\derham^1(M)=0$, it is globally Hamiltonian i.e. $V=X_f$ for some f. f is not exactly unique: Let $V=X_g=X_f$ for some $g \neq f$. Then considering the definition of $X_f$, $df=dg$ and thus $g=f+c$ where $c$ is a constant on each component of $M$.
\end{proof}

\begin{example}[Free Particle in 3D]
A free particle in 3D is described by the Hamiltonian
\begin{equation}
H=\frac{p_x^2}{2}+\frac{p_y^2}{2}+\frac{p_z^2}{2} \sgd
\end{equation}
Symplectic form in Darboux coordinates is
\begin{equation}
\omega= \sum_i dx_i \wedge dp_i \sgc
\end{equation}
where Hamiltonian vector field for H is
\begin{equation}
X_H= \frac{\partial H}{\partial p_i} \frac{\partial}{\partial x_i} - \frac{\partial H}{\partial x_i} \frac{\partial}{\partial p_i}=p_i \frac{\partial}{\partial x_i} \sgd
\end{equation}
Conserved charges should satisfy
\begin{equation}\label{eq:charge}
\lbr Q,H \rbr=0 \quad \ra \quad \frac{\pr Q}{\pr x_i} p_i=0 \sgd
\end{equation}
For the solution to this make a simple quess:
\begin{equation}
\vec{\nabla}Q_1=  \lp -p_y , p_x, 0 \rp \sgd
\end{equation}
Remember that $Q$ is a function on the symplectic manifold-i.e. it depends on the 6-dimensional phase space coordinates. Then solving the above differential equation one finds
\begin{equation}
Q_1= -p_y x + p_x y \sgd
\end{equation}
Note that this is the $z$ component of the angular momentum. Other conserved quantities will be other components of the angular momentum, i.e.
\begin{align}
\vec{\nabla}Q_2 &=  \lp -p_z ,0, p_x\rp \sgc\\
\vec{\nabla}Q_3 &=  \lp 0,-p_z , p_y\rp \sgd
\end{align}
Note that other solutions to \eqref{eq:charge} are arbitrary functions of the momentum: this is clearly expected, as for a free particle momentum is also conserved. Since we take our symplectic manifold to be trivial topologically, all deRham cohomology groups are trivial, and thus all symplectic vector fields will be Hamiltonian vector fields.
\end{example}
\section{Constrained Systems}\label{sec:appconstr}

In the previous section we have seen how dynamical systems are described by a symplectic manifold with a Hamiltonian function. However in physics one encounters cases where the system has to satisfy additional conditions. These are called constraints and in a Hamiltonian system will be simply described as functions $C_a$ of phase space that are set to zero. Then the actual dynamics will take place on the submanifold defined by $C_a=0$ of the original phase space. To have consistent system \expl{???} we will also impose that constraints are time independent.
 
To describe the dynamics on the submanifold one needs to have a symplectic form on this submanifold. However it is not clear that the pullback of a symplectic form on the submanifold is itself a symplectic manifold on the submanifold. \uline{Thus the problem of constrained systems in physics can actually be considered as the problem of having a well defined symplectic form on the constraint submanifold.}

The degeneracy properties of the induced symplectic manifold is determined by the Poisson brackets of the constraint functions. Let us consider the two extreme cases first:
\paragraph{First class only case:}
Let us consider a symplectic manifold $(M,\omega)$ of dimension $m$. Let us consider that there exists $n$ functions $C_a$ on this manifold and let the us call the submanifold such that $C_a=0$,$a=1,...,n$ as $\Sigma$. Let $i:\Sigma \ra M$ be the embedding map. Assume that 
\begin{equation}
\left. \{ C_a, C_b \} \right|_{\Sigma}=0 \sgd
\end{equation}
Constraints that has zero Poisson bracket with all other constraints on $\Sigma$ are called first-class constraints. Note that here we assumed all of constraints to be first class. Let the Hamiltonian vector field on $M$ corresponding to function $C_a$ be called $X_a$. Then the following holds:
\begin{enumerate}
\item $\left. X_a \right|_{\Sigma}=0$ are tangent to $\Sigma$. 

This can be shown easily by using proposition \ref{prop:tansubm}: $X_a$ is tangent to $\Sigma$ if and only if 
\begin{equation}
\left. X_a(C_b) \right|_{\Sigma}= 0 \quad b=1,...,n \sgd
\end{equation} 
However by definition of Hamilton vector fields this is equivalent to demanding
\begin{equation}
\left. \{ C_a, C_b \} \right|_{\Sigma}=0 \sgc
\end{equation}
which was exactly our assumption for the constraints.
\item $X_a$ are degenerate vector fields of the induced symplectic form $i^* \omega$.

Let $Y$ be a vector field on $M$ that is tangent to $\Sigma$ on $\Sigma$. By definition of Hamiltonian vector fields
\begin{equation}
\left. \omega( X_a, Y) \right|_{\Sigma}= \left. Y(C_a) \right|_{\Sigma} = 0 \sgc
\end{equation}
where the last equality again follows from proposition \ref{prop:tansubm}. Then
\begin{equation}
i^* \omega(X_a,Y)=\left. \omega(X_a,Y) \right|_{\Sigma}=0 \quad \forall Y \in T\Sigma \sgd
\end{equation}
\item $X_a$ are involutive on $\Sigma$.

This directly follows from the properties of the Poisson bracket.
\end{enumerate}
Since $X_a$ are involutive on $\Sigma$ by Frobenius theorem they foliate $\Sigma$ into $n$ dimensional integral manifolds. Integral manifolds do not intersect and thus forms equivalence classes.\expl{this is vague?} By taking the quotient of $\Sigma$ by this equivalence relation, one can have a non-degenerate symplectic form. This quotient space is known as the reduced phase space. \expl{first class constr. are gauge freedom}
\paragraph{Second Class only case:}
This will be the case where non of the constraints has zero Poisson bracket on $\Sigma$. For this case $X_a$ will not be tangential to $\Sigma$, and the induced symplectic form $i^* \sigma$ will be non-degenerate.\\
\hfill\\
The equivalence relation defined by the first-class constraints are known in physics as gauge equivalence i.e. any two points on the phase space is called physically equivalent if they are on the same integral line of a flow for first-class constraints.
\expl{Gauge choice}

\printbibliography


\end{document}